  \theoremstyle{definition}
  \newtheorem{defn}{\protect\definitionname}[section]
  \theoremstyle{plain}
  \newtheorem{prop}{\protect\propositionname}[section]
  \theoremstyle{plain}
  \newtheorem{thm}{\protect\theoremname}[section]
  \theoremstyle{plain}
  \theoremstyle{plain}
  \newtheorem{lem}{\protect\lemmaname}[section]
\def\@biblabel#1{\hspace*{-\labelsep}}
\providecommand{\definitionname}{Definition}
\providecommand{\lemmaname}{Lemma}
\providecommand{\propositionname}{Proposition}
\providecommand{\corollaryname}{Corollary}
\providecommand{\theoremname}{Theorem}
\newcommand{\com}[1]{}
\def\@biblabel#1{\hspace*{-\labelsep}}
\newcommand{\nocontentsline}[3]{}
\newcommand{\tocless}[2]{\bgroup\let\addcontentsline=\nocontentsline#1{#2}\egroup}
\begin{document}
\begin{frontmatter}{}

\title{Positive feedback in coordination games: Stochastic evolutionary dynamics and the logit choice rule\tnoteref{t1}}

\tnotetext[t1]{\today.
\setstretch{0.8}
The research of S.-H. H. was supported by
the Ministry of Education of the Republic of Korea and the National Research Foundation of Korea (NRF-2019S1A5A8035341). The research of L. R.-B. was supported
by the US National Science Foundation (DMS-1515712). We greatly appreciate comments by the advisory editor and two anonymous referees. Especially, we would like to express special thanks to the late Bill Sandholm who carefully read the early version
(\url{http://arxiv.org/abs/OOO}) of this paper and generously offered us many helpful suggestions.}

\author[A1]{Sung-Ha Hwang\corref{cor1}}

\ead{sungha@kaist.ac.kr}

\author[A2]{Luc Rey-Bellet}

\ead{luc@math.umass.edu }

\cortext[cor1]{Corresponding author.}

\address[A1]{ Korea Advanced Institute of Science and Technology (KAIST), Seoul, Korea}

\address[A2]{Department of Mathematics and Statistics, University of Massachusetts
Amherst, MA, U.S.A.}

\singlespace
\begin{abstract}
We study the problem of stochastic stability for evolutionary dynamics under the logit choice rule.
We consider general classes of coordination games, symmetric or asymmetric, with an arbitrary number of strategies, which satisfies the marginal bandwagon property (i.e., there is positive feedback to coordinate). Our main result is that the most likely evolutionary escape paths from a status quo convention consist of a series of identical mistakes.
As an application of our result, we show that the Nash bargaining solution arises as the long run convention for the evolutionary Nash demand game under the usual logit choice rule. We also obtain a new bargaining solution if the logit choice rule is combined with intentional idiosyncratic plays. The new bargaining solution is more egalitarian than the Nash bargaining solution, demonstrating that intentionality implies equality under the logit choice model.

\com{We study the problem of stochastic stability for evolutionary dynamics under the logit dynamics.
For paths escaping from a given equilibrium---also called a convention--- positive feedback implies that along
the most likely escape paths 
the agents always deviate from the status quo
convention strategy.  In addition, the relative strengths of the positive feedback effects imply that deviations
from the status quo convention to another convention must occur consecutively in the most likely escape paths.
By combining these two effects, we  prove that the most likely escape paths can consist only of repeated
identical mistakes of agents switching from the status quo convention to some other convention.
%
As an application of our result, we determine stochastically stable conventions for evolutionary bargaining models of the Nash demand game. We show that the Nash bargaining solution arises as the long run convention under the usual logit choice rule (called here the unintentional logit dynamic). We also obtain a new bargaining solution if the logit choice rule is combined with intentional idiosyncratic (non-best response)
plays (called here the intentional logit dynamic).  This new bargaining solution under the intentional logit dynamic
is  more egalitarian than the Nash bargaining solution, demonstrating that intentionality implies equality under
the logit  choice model.}
\end{abstract}
\begin{keyword}
Evolutionary Games, Logit Choice Rules, Positive Feedback, Marginal Bandwagon Property, Exit Problems, Stochastic Stability, Nash demand games, Nash bargaining solution

\medskip{}
\textbf{JEL Classification Numbers: \textbf{C73}, \textbf{C78}}
\end{keyword}
\end{frontmatter}{}

\thispagestyle{empty}

\newpage{}
\setstretch{1.1}

\tocless \section{Introduction\setcounter{page}{1}}\label{sec:intro}

Conventions and customs are sometimes determining factors for formal contracts. For example, \citet{Young01} show that local custom is a driving force in setting up the crop sharing contract terms in the state of Illinois. Customary patterns of behaviors such as asymmetric norms between racial groups and genders can also produce a mechanism by which inequalities persist for a long period of time \citep{Naidu17}. Changes in informal convention sometimes induce formal institutional changes which may contribute to long-run economic growth \citep{HNB2016, Robinson2005}. Thus, understanding both the disruption and emergence of conventions can shed light on problems such as economic incentives, inequality, and long-run growth. Conventions or social norms that typically form over time and last for a long period of time are frequently modeled as long run equilibria in stochastic evolutionary dynamics, in which agents play myopic best responses subject to mistakes, errors, or idiosyncratic plays \citep{Young93, Kandori93, Bowles04}. Therefore identifying the most likely evolutionary paths  escaping from an existing convention or  transitioning between conventions is a key step in studying the disruption and emergence of conventions.

Recently, one of the behavioral rules of myopic agents, the logit choice model, has become popular among researchers because of its analytic convenience (see Section \ref{sec:lit} for existing studies using the logit choice rule). Under the logit choice rule, the probability of an agent's mistake decreases log-linearly in the payoff losses incurred by such a mistake. Recent experimental literature also supports the hypothesis that mistake probabilities decrease in payoff losses (see  Section \ref{sec:lit}). In the widely used uniform mistake model, in which all possible mistakes are equally likely, the more likely path can be easily determined  by comparing the number of mistakes involved and, hence, the lengths of paths (e.g., \citet{Kandori93}). However, under the logit choice rule, determining the most likely path is far from obvious, because the probability of a path depends on the kinds of mistakes involved as well as on the length of the path. For example, \citet{HandN2016} provide an example in which two different kinds of mistake plays are involved in the most likely escape path from a convention under a finite population logit model (see Example 1 in \citet{HandN2016}).


Because of the complexity of the logit choice rule, there has been, so far, no general way to analyze the optimal evolutionary paths from one convention to another. As a concrete example, it is unknown which kind of contract conventions will emerge and persist when agents play the evolutionary version of the familiar Nash bargaining game \citep{Nash53} under the logit rules(see Section \ref{sec:application}). The goal of this paper is to fill in this gap in the literature for general  classes of games. Two recent studies address similar questions. \citet{HandN2016} study two-population coordination evolutionary models with \emph{zero off-diagonal payoffs} and an arbitrary number of strategies, for  both finite and infinite populations. \citet{SS2014} study a one-population coordination evolutionary model with \emph{three strategies} in the infinite population limit. These two  papers are discussed in more detail in  Section \ref{sec:lit} but we note here that their results are limited to specific classes of games (either games with zero off-diagonal payoffs or games with three strategies).


One of the main novelties of this paper is a new method---we call it  ``comparison principles''---which can be applied to  one- or  two-population games with an arbitrary number of strategies and which allows greatly reducing the complexity of finding the most likely evolutionary paths under the logit choice rule.
Specifically, we find that under the logit choice rule, the positive feedback of agents (to coordinate) plays a key role. \citet{Kandori98} introduce the ``marginal bandwagon property'' to capture the positive feedback aspect of network externality, requiring that the advantage of strategy $i$ over $j$ is greater when the other player is playing strategy $i$. In stochastic evolutionary game theory, the (un-)likeliness of a path is measured by a quantity called the ``cost'':  the less costly a path is, the more likely is the transition induced by that path. We show that for \emph{finite} population models with the logit choice rule, \textbf{(i)} positive feedback (defined by the marginal bandwagon property) implies that along the minimum cost escape paths from a status quo convention, agents always deviate first from the status quo convention strategy before deviating from other strategies (Lemma \ref{lem:Pos-feedbacks} (i)), and \textbf{(ii)} the relative strength of the positive feedback effects implies that the transitions from the status quo convention to another convention must  occur consecutively in the cost optimal escape paths (Lemma \ref{lem:Pos-feedbacks} (ii)).

We then apply these results to the exit problem---the problem of finding a cost minimum path escaping from a convention---under finite population models and characterize the candidates for the cost minimizing paths, as follows. The candidates consist of (possibly) different kinds of repeated identical mistakes deviating from the status quo convention strategy (Proposition \ref{prop:red}). Finally, to pin down the exact minimum cost escaping path, we consider the infinite population limit as in \citet{SS2014} (Proposition \ref{prop:main-approx}) and show that the most likely escape paths from the status quo convention involve only \emph{one kind} of repeated identical mistakes of agents (Proposition \ref{prop:main-f-red}, Proposition \ref{prop:main-binding-con}, and Theorem \ref{thm:escape}). This result holds for any coordination games satisfying the marginal bandwagon properties and some regularity conditions with {\em an arbitrary number of strategies}, regardless of symmetric or asymmetric games (hence, one- or two-population models; Theorems \ref{thm:escape} and \ref{thm:main-2p}). To the best of our knowledge, this is a novel result.

As an application of our main results, we study the evolutionary bargaining convention for the Nash demand games under the logit choice rule and show that the Nash bargaining solution arises as the stochastically stable convention under the usual logit choice rule (called the unintentional logit dynamic). We also obtain a new bargaining convention when the logit choice rule is combined with intentional idiosyncratic (non-best response) plays (called the intentional logit dynamic). By \emph{intentional idiosyncratic plays}, we mean that agents always experiment with  strategies under which they would do better, should that strategy induce a convention \citep{Naidu10, HLNN2016}. We show that the new solution under intentional logit dynamics is more egalitarian than the Nash bargaining solution, hence intentionality implies equality (Proposition \ref{prop:intent}). The reason for equality is as follows: under the unintentional logit rule, some transitions from the egalitarian convention (the equal division convention) to the Nash bargaining convention (the unequal division convention) are driven by a population who stands to lose by such transitions. Under the intentional logit rule, every transition is driven by the population who stands to benefit. Thus, some unfavorable transitions leading to the Nash bargaining convention are replaced by favorable transitions to the deviant population, leading to a more equal convention than the Nash bargaining convention.
It can be easily seen that our comparison principle as well as our remaining arguments for the logit choice rule hold for the uniform mistake models under the assumption of the marginal bandwagon property. Thus, our comparison principles provide a unified framework for analyzing evolutionary dynamics, including the uniform mistake and logit models.

%

This paper is organized as follows. Section \ref{sec:lit} discusses the related literature. Section \ref{sec:setup} introduces the basic setup and discusses, in some detail, an example illustrating our methods. We present our main results for the exit problem for one population models in Section \ref{sec:escape}.   In Section \ref{sec:application}, we present results for two population models and analyze the Nash demand game. In the appendix, we show that our result for the exit problem can be used to study the stochastic stability problem. The appendix also provides the technical details and proofs of the paper's results.

\com{and that our comparison method can also be directly applied to the stochastic stability problem for three strategy games (Appendix \ref{sec:direct-sse}). }

\tocless \section{Related Literature \label{sec:lit}}
There are many recent contributions to the analysis of stochastic evolutionary dynamics\footnote{Among them, \citet{Sawa18} study stochastic evolutionary dynamics of loss-averse agents who compares each strategy to a reference point (symmetric 2x2 coordination games) and show that a loss-dominant convention emerges in the long-run (see also \citet{Nax19}). \citet{Bilancini2020} study the stag-hunt game (the symmetric two strategy game) to study the emergence of a convention and transitions between conventions. They introduce condition-dependent mistakes in which errors converge to zero at a rate that is positively related to the payoff earned in the past and show that the payoff-dominant convention emerges when interactions are sufficiently persistent, while the maximin convention can emerge when interactions are volatile. See also \citet{Maruta02}, \citet{Peski10}, \citet{Sandholm10}. See \citet{Newton18} for an extensive survey.
}. Here, we will focus on the following topics which relate most directly to our study: payoff-dependent mistake models, logit choice rules and related experimental evidence, and evolutionary bargaining.

\tocless \subsection{Stochastic stability of payoff-dependent mistake models}
As mentioned earlier, when mistake probabilities depend on payoffs, determining the most likely path seems \emph{a priori} a daunting task, because the probability of a path depends on the kinds of mistakes involved as well as on the length of the path. Indeed, theoretical results for the exit and stochastic stability problems in the literature are limited to symmetric three strategy games in the one population setting or asymmetric two strategy games in the two population setting, except for only a few works.

There are two distinctive approaches in studying these questions. The first one is the so-called small noise double limit approach \citep{Sandholm08, Staudigl2012, SS2014, Srinivas20}.\footnote{
While \citet{Staudigl2012} and \citet{SS2014} study the logit choice rule,  \citet{Srinivas20}  studies the exit problem of the symmetric three strategy coordination game under the probit choice rule and identifies conditions under which the solution to the exit problem under probit choice is qualitatively similar to the logit choice. For the probit models, see \citet{Myatt03}, \citet{Dokumaci11}.}
 This approach takes a zero error rate limit first, as in the standard literature, and then taking an infinite population limit. Via these double limits, they obtain an optimal control problem over all escaping paths.  Then, to find solutions to the optimal control problem, they solve the Hamilton-Jacobi equation associated with the obtained optimal control problem. This method, while providing a systematic approach, requires solving a (possibly challenging)  Hamilton-Jacobi equation associated with the optimal control problem; hence, the results are limited to symmetric games with three strategies (see the discussion section in \citet{SS2014}; \citet{Srinivas20}) or asymmetric games with two strategies \citep{Staudigl2012}.

The second approach is to analyze minimum cost paths for the finite population model obtained by a zero-error limit and reduce the complex finite population problem into a lower dimension problem as in \citet{HandN2016}. Our first step (Lemma \ref{lem:Pos-feedbacks} \textbf{(i)}, Proposition \ref{prop:red} \textbf{(i)}) of comparing paths to show that agents switch first from the status quo convention strategy generalize the approach in  \citet{HandN2016}. Similarly to the current study, \citet{HandN2016} exploit, though implicitly, cost comparison arguments by constructing a lower bound function (see Section 5 in the cited paper). However, the arguments presented by \citet{HandN2016} differ from the current ones, as follows. First, in \citet{HandN2016}, cost estimations of the constructed lower bound functions are possible only because all off-diagonal payoffs are zeros and because the cost functions are (multi-) linear with respect to the populations' states. Second, the lower bound function constructed in \citet{HandN2016} does not correspond to an \emph{actual} path and, thus, the method does not provide guidance for how to construct a similar lower bound function for games other than those with zero off-diagonal payoffs. Their arguments thus cannot be applied to games with nonzero off-diagonal payoffs (e.g., the Nash demand game) or single population models in which the cost function of a path is quadratic with respect to population states. In contrast, under the condition of the \emph{marginal bandwagon property}, our comparison methods are used to construct a lower cost path for a given arbitrary path in the state space (i.e., the simplex) and our arguments can be applied to any coordination games with an arbitrary number of strategies satisfying the marginal bandwagon property.

%
%
%
%
%
%
%

\tocless \subsection{The logit choice rule and experimental evidence for payoff dependent mistake models}
The logit choice rule, introduced by \citet{Blume93} to evolutionary game theory, has been widely used in stochastic evolutionary dynamics. Among them, \citet{Young01}, mentioned in the introduction, adopted the logit choice model to study the contractual custom of cropsharing. \citet{KY1} show that fast convergence can occur when the error rate is small but non-vanishing even in a large population under logit dynamics. \citet{BandB13} also use the logit choice rule to study the effect of endogenous preferences and institutions on trade liberalization\footnote{Also, \citet{AlosNetzer10}, and \citet{Okada12} studied problems related to various revision rules and local potentials, respectively, under the logit choice rule.}.

The logic choice rule also receives special attention from the literature on random utility models and stochastic choices \citep{McKelvey95}. \citet{Hofbauer02} derive the logit choice rule in two different ways: one from the random utility model and another from the optimization of perturbed expected utility. Recently, \citet{Fudenberg15} provide two easily understood axioms  under which stochastic choice corresponds to the maximization of the perturbed expected utility, and when the perturbation cost function is the commonly used one (namely, an entropy function), the optimal stochastic choice rule becomes the logit choice rule. Relatedly, in the literature on rational inattention or information acquisition, \citet{Matejka15} show that the decision maker’s optimal information-processing strategy results in probabilistic choices that follow a logit model, where the parameter of perturbation is interpreted as the cost of information.

Experimental evidence for the class of payoff dependent mistake models to which the logit choice rule belongs  is as follows. \citet{MasNax2016} find in their experiment that a payoff decrease in the previous period would induce higher deviation rates from myopic best response behavior, which indicates that subjects’ choices are sensitive to past payoff losses. \citet{LimNeary2016} find that individual mistakes depend on the payoff of the myopic best-response payoff. \citet{HLNN2016} also provide experimental evidence suggesting non-best response play depends on payoffs: higher rates of non-best response play from subjects for whom the expected payoff from the best response strategy is lower (in Fig. 5 of the cited paper). Moreover, they conducted a logistic regression on payoff differences, providing further support for payoff dependence mistake models.


%
%
%
%
%

\tocless \subsection{Evolutionary bargaining}
This paper also adds to the literature on evolutionary bargaining \citep{Young93JET, Young98Res, BSY1, Naidu10, HLNN2016}. In particular, \citet{Young93} shows that the Nash bargaining solution emerges under the unintentional uniform deviation model of the Nash demand game and similarly, \citet{Naidu10} find that the Nash bargaining solution emerges as well under the intentional uniform deviation model. In this paper, we find  that  the Nash bargaining solution is again stochastically stable under the unintentional logit choice rule, while a new bargaining convention which is more egalitarian than the Nash bargaining solution arises under the intentional logit choice rule, as explained. This shows quite well how evolutionary bargaining approaches can complement and extend the existing axiomatic bargaining approaches, hence contribute to the understanding of a  certain bargaining convention in a society.

\tocless \section{Stochastic Evolutionary Dynamics: Setup and Example \label{sec:setup}}

\tocless \subsection{Basic setup: one population model}

Consider a population of $n$ agents who play a symmetric coordination
game with strategy set $S=\{1,2,\cdots,$ $|S|\}$ and payoff matrix $A$. The population state is described as a vector of fractions of agents using each strategy; that is, the state of the population is $x\in\Delta^{(n)}$, where $\Delta^{(n)}$ is the simplex
\[
    \Delta^{(n)}:=\{(x_{1},\cdots,x_{|S|}) \in \frac{1}{n} \mathbb{Z}^{|S|}:\,\sum_{i}x(i)=1,\, x(i) \ge0\,\,\textrm{for all }\}.
\]
The expected payoff  to an agent who chooses  strategy $i$ at population
state $x$ is given by $\pi(i,x):=\sum_{j \in S}A_{ij} x(j)$.

We consider a discrete time strategy updating process, defined as follows. At each period, a randomly chosen agent selects a new strategy. The new population state induced by the agent's switching from strategy $i$ to $j$ is denoted by $x^{i,j}$ and the state induced by two agents' transitions, first from $i$ to $j$ and then from $k$ to $l$, is denoted by $x^{(i,j)(k,l)}$. More precisely,
\begin{equation}\label{eq:x-ij}
  x^{i,j}:=x +\frac{1}{n}(e_{j}-e_{i}),\,\,\text{and } x^{(i,j)(k,l)}:=x +\frac{1}{n}(e_{j}-e_{i}) + \frac{1}{e}(e_l - e_k),
\end{equation}
where $e_{h}$ is the $h$-th element of the standard basis for $\mathbb{R}^{|S|}$. The conditional probability that an agent with strategy $i$ chooses new strategy $j$ given population state $x$ is specified by the logit choice rule \citep{Blume93},
\begin{equation}
\text{Logit choice rule: }p^{(n)}_{\eta}(j|i,x)=\frac{\exp(\eta^{-1}\pi(j,x))}{\sum_{l}\exp(\eta^{-1}\pi(l,x))},
\label{eq:logit}
\end{equation}
where $\eta > 0$ is a positive parameter interpreted as the degree of (ir)rationality (or noise level). That is, as $\eta$ decreases to $0$, equation \eqref{eq:logit} converges to the so-called best-response rule, whereas as $\eta$ increases to $\infty$, equation \eqref{eq:logit} converges to a choice rule that assigns equal probabilities to each strategy---namely, a pure randomization rule. \com{For finite $\eta$, the probability of choosing strategy $j$ increases because the agent expects a higher payoff from strategy $j$ than she does from other  strategies.}

The transition probabilities for the updating dynamics are
\[
P^{(n)}_\eta(x,x^{i,j})=x(i) p^{(n)}_\eta(j|i,x)
\]
for $i\neq j$, where  factor $x{(i)}$ accounts for the fact that at each period, one agent is randomly chosen
to revise her strategy. The unlikeliness of a transition in stochastic evolutionary game theory is measured by the cost, $c^{(n)}(x,y)$, between two states, $x, y$:
\[
c^{(n)}(x,y):=\begin{cases}
-\lim_{\eta \rightarrow 0} \eta \ln P^{(n)}_{\eta}(x,y) & \text{ if } y=x^{i,j}\,\text{for some}\, i,j,\, i\neq j\\
\text{0}  & \text{ if } y=x \\
\infty & \,\,\text{otherwise}
\end{cases}
\]
which becomes
\begin{align}
c^{(n)}(x,x^{i,j}) & =\max\{\pi(l,x):l\in S\}-\pi(j,x) \label{eq: l-cost}
\end{align}
under the logit choice rule \eqref{eq:logit}. When $\eta$ is sufficiently small, we have $P^{(n)}_\eta(x,y) \asymp e^{-\eta^{-1} c^{(n)}(x,y)}$ and thus the cost between $x$ and $y$, $c^{(n)}(x,y)$,  is  the exponential rate of decay of the probability of transition from $x$ to $y$. In equation (\ref{eq: l-cost}), the first term $\max\{\pi(l,x):l\in S\}$ is equal to the payoff  $\pi({\bar m},x)$  for agents playing a best response $\bar{m}$ in the population state $x$, and the second term is the payoff to agents playing the new strategy, $j$. Thus, when a strategy-revising agent adopts the best response $\bar{m}$, the cost of such an action is zero. However, when she adopts a sub-optimal strategy $j \neq \bar m$, the cost is the payoff loss due to choosing this strategy instead of the best response. Under the uniform mistake model, $c^{(n)}(x,x^{i,j}) = 1$ if $j \neq \bar m$ and  $c^{(n)}(x,x^{i,j}) = 0$ if $j =\bar m$;  that is the cost in the uniform mistake model is state independent.

We consider a symmetric coordination game $A$ in which every symmetric strategy profile (i.e. the strategy profile in which row and column players choose the same strategy)  is a strict Nash equilibrium (see Condition A).
A convention is defined as a state in which every agent plays the same strategy which is a strict Nash equilibrium of $A$. Thus, $x$ is a convention if $x= e_{i}$ for some strategy $i$, which is a strict Nash  equilibrium strategy (recall that $e_i$ is the $i$-th element of the standard basis of $\mathbb{R}^{|S|}$). Focusing our attention on one such convention, we refer to convention $\bar m$ as a status quo convention.
A path $\gamma$ is a sequence of states, $\gamma:=(x_{1},x_{2},\cdots,x_{T}),$ such
that $x_{t+1}=(x_{t})^{i,j}$ for some $i,j$ and for all $t$, and we define the cost of a path as the sum of the costs of the  transitions between the states in equation (\ref{eq: l-cost}), i.e.
\begin{equation}
I^{(n)}(\gamma):=\sum_{t=1}^{T-1}c^{(n)}(x_{t},x_{t+1}).
\label{eq: cost-path}
\end{equation}

Next, recall {\em the marginal bandwagon property} (\textbf{MBP}) introduced by  \citet{Kandori98}. A symmetric game with payoff matrix $A$ satisfies the \textbf{MBP} if
\begin{equation}
A_{ii}-A_{ji}>A_{ik}-A_{jk}\textrm{\,\,\ for\,\ all\,\ distinct\,}\, i,j,k.\label{eq:MBP}
\end{equation}
\noindent The condition in \eqref{eq:MBP} says that the advantage of playing strategy $i$ over strategy $j$ is greater when the other player plays strategy $i$ rather than another strategy $k$. In our population dynamic model, this implies a positive feedback effect in which the marginal advantage of switching into strategy $i$ increases in the number of agents adopting strategy $i$. We also consider coordination games in which $A_{ii} > A_{ji}$ for all $i, j$, and  assume the existence of  mixed-strategy Nash equilibria supported on any arbitrary subset of the strategy set $S$.

\bigskip
\noindent
\textbf{Condition A}:
(i) A game with payoff matrix $A$ is a coordination game (i.e., $A_{ii} > A_{ji}$ for all $i, j$),  satisfying  the \textbf{MBP} (see equation \eqref{eq:MBP}), \\
(ii) Suppose that
for any $T = \{i_1, \cdots, i_K \} \subset S$, there exists a unique $q \in \Delta$ with support $T$ such that
\[
    \pi(i_1, q) = \cdots= \pi(i_K, q)
\]
and $q$ is a mixed strategy Nash equilibrium. $\square$
\bigskip

We define the basin of attraction of $e_{\bar{m}}$, $D^{(n)}(e_{\bar{m}})$, and its boundary, $\partial D_{\bar m, j}^{(n)}$, as follows:
\begin{align*}
D^{(n)}(e_{\bar{m}}): & =\{x\in\Delta^{(n)}:\pi(\bar{m},x)\geq\pi(k,x)\text{\,\ for all }\, k\,\}\\
\partial D_{\bar m, j}^{(n)} : & = \{ x \in D^{(n)}(e_{\bar{m}}): x^{\bar m, j} \in D^{(n)}(e_j) \}
\end{align*}
In fact, if $x$ belongs to $D^{(n)}(e_{\bar{m}})$, the cost of a transition from $i$ to $j$ is
\begin{equation}
    c^{(n)}(x,x^{i,j})=\pi(\bar{m},x)-\pi(j,x)\label{eq:cost-boa}
\end{equation}
for $i \neq j$. If $j=\bar m$, the cost in equation \eqref{eq:cost-boa} is zero, and the \textbf{MBP} implies that convention $\bar m$ can be reached from any $x \in D^{(n)}(e_{\bar m})$ at no cost; thus, $D^{(n)}(e_{\bar m})$ is indeed the basin of attraction  of convention $\bar m$. Observe that the requirement for the existence of mixed-strategy Nash equilibria with arbitrary support implies the existence of the distinctive basins of attraction for all pure strategies. Using this setup, we next present a simple example of a three-strategy game to illustrate the main ideas and results of the paper.

\tocless \subsection{Illustration of the main results\label{subsec:example}}

%
%
%

\begin{figure}
\centering
\includegraphics[scale=0.65]{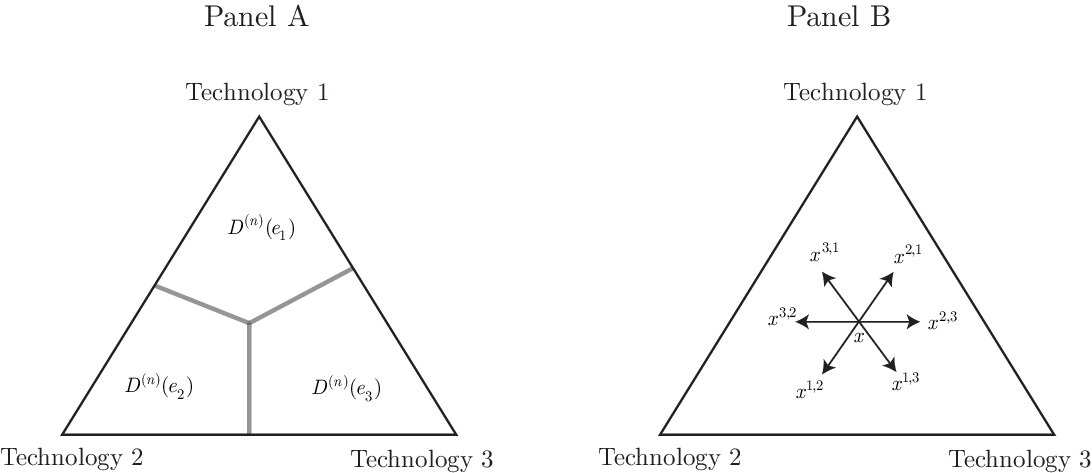}

\caption{\textbf{ Basins of attraction and paths involving a single agent's transitions}. Panel A shows the basins of attractions of conventions. Panel B illustrates new states induced by a single agent's switching from state $x$. For example, $x^{1,2}$ is the new state induced by a single agent's switching from strategy 1 to strategy 2. }

\textbf{\label{fig:bofa}}
\end{figure}

Consider a technology choice game consisting of three technologies indexed by 1, 2, and 3, respectively. For example, for PC operating systems, consider Windows, OSX, and Linux. Let $b_{i}$ be the benefit of technology $i$ that the user obtains when interacting with another user of the same technology; thus, $b_{i}$ is related to the inherent quality of technology $i$. Suppose that the user of technology $i$ experiences some utility or disutility when interacting with users of a different technology. For the sake of simplicity, the users of technologies $2,3$, and $1$ derive utility $d$ when interacting with the users of technologies $1,2$, and $3$, respectively. By the same token, the users of technologies $1,2$, and $3$ experience disutility $d$ when interacting with the users of technologies $2,3$, and $1$, respectively. In summary, the payoff matrix is given by
\begin{equation}
A=\begin{pmatrix}b_{1} & -d & d\\
d & b_{2} & -d\\
-d & d & b_{3}
\end{pmatrix}.\label{eq:ex-game}
\end{equation}

\noindent In the context of the technology choice game, we are interested in the positive feedback effects, where the advantages of a technology increase as the number of users of that technology increases. If
\begin{equation}
3 d<\min_{i} b_{i} \label{eq:con1}
\end{equation}
holds, all pure strategies 1, 2, and 3 are strict Nash equilibria, and the \textbf{MBP} condition in (\ref{eq:MBP}) is satisfied.

Suppose that the agents' strategy revision rule is the logit choice
rule. Our example is the exit problem from a convention (see \citet{FW98}). Specifically, given the status quo convention of technology $1$, what is the most likely way to upset this convention? Panel A of Figure \ref{fig:bofa} shows the basins of attraction of conventions. Thus, our problem can be stated succinctly as
\begin{equation}
\min\{I^{(n)}(\gamma):\gamma\textrm{\,\ escapes }D^{(n)}(e_{1})\}. \label{eq:min-1st}
\end{equation}
We now explain how the \textbf{MBP} in \eqref{eq:MBP} significantly reduces the complexity of solving the minimization problem in  \eqref{eq:min-1st}.


\subsection*{Comparison principle 1: Lemma \ref{lem:Pos-feedbacks} (i), Proposition \ref{prop:red} (i)}
Our idea is to develop systematic ways of comparing the costs of various paths and to reduce the number of candidate solutions for the minimization problem in \eqref{eq:min-1st} under the logit choice rule. First, consider the two paths in Panel A in Figure \ref{fig:comparison-1}:
\begin{align}
    & x\rightarrow x^{2,3}\rightarrow x^{(2,3)(1,3)} \label{eq:comp1-1st}\\
    & x\rightarrow x^{1,3}\rightarrow x^{(1,3)(2,3)} \label{eq:comp1-2nd},
\end{align}
where $x^{(2,3)(1,3)} = x^{(1,3)(2,3)}$(see the definitions of $x^{i,j}$ and $x^{(i,j)(k,l)}$ in equation \eqref{eq:x-ij}). The cost difference between these two paths is
\begin{equation}
\Delta_1 c:=c^{(n)}(x,x^{2,3})+c^{(n)}(x^{2,3},x^{(2,3)(1,3)})-[c^{(n)}(x,x^{1,3})+c^{(n)}(x^{1,3},x^{(1,3)(2,3)})].
\label{eq:comp1}
\end{equation}

\begin{figure}
\centering
\includegraphics[scale=0.7]{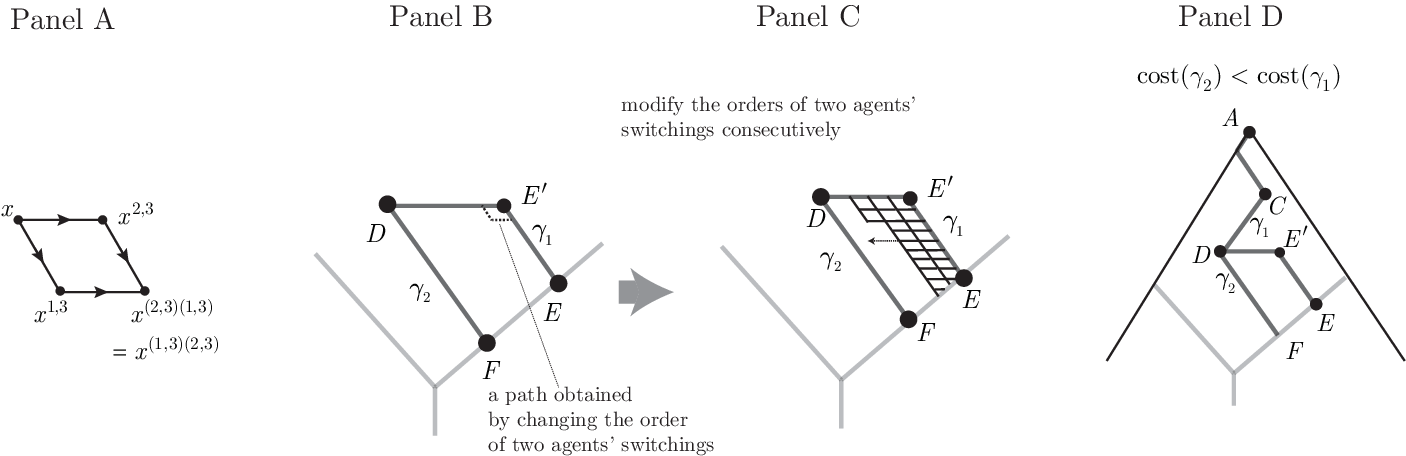}

\caption{\textbf{ Comparison Principle 1}. Panel A shows two paths in equations \eqref{eq:comp1-1st} and \eqref{eq:comp1-2nd}. Panels B and C show how we can obtain a lower cost path by modifying the orders of two agents' switching from one strategy to another. In Panel D, we show that the cost of $\gamma_2$ in \eqref{eq:gamma_2} is lower than that of $\gamma_1$ in \eqref{eq:gamma_1} by applying this procedure repeatedly.}

\textbf{\label{fig:comparison-1}}
\end{figure}

Note that under the logit choice rule, an agent switching from
strategy $2$ to strategy $3$ compares the expected payoff of strategy $3$ with the best response (strategy $1$) rather than with strategy 2, implying that the costs of transitions from strategy $2$ to strategy $3$ and from strategy $1$ to strategy $3$ are the same. Indeed using equation (\ref{eq: l-cost}) we verify that
\begin{equation}
c^{(n)}(x,x^{2,3}) = c^{(n)}(x,x^{1,3}), \, \, \,  c^{(n)}(x^{1,3}, x^{(1,3),(2,3)})= c^{(n)}(x^{1,3}, x^{(1,3),(1,3)})
\label{eq:cost-1}
\end{equation}
Thus, using equation \eqref{eq:cost-1}, we can simplify \eqref{eq:comp1} as follows:
\[
	\Delta_1 c = c^{(n)}(x^{2,3},x^{(2,3)(1,3)}) -c^{(n)}(x^{1,3}, x^{(1,3),(1,3)})
\]
Note that $x^{2,3}$ is the state where exactly one more agent than in $x^{1,3}$ plays strategy 1, because $x^{2,3} - x^{1,3} = \frac{1}{n}(e_1 - e_2)$. The positive feedback effect of strategy 1 over strategy 3, induced by the \textbf{MBP}, means that the payoff advantage of strategy 1 over strategy 3 is greater when the other player uses strategy 1. Thus, the marginal advantage of switching to strategy 1 is greater at state ($x^{2,3}$) where one more agent than in the other state ($x^{1,3}$) plays strategy 1. This means that the payoff loss due to the mistake of not playing strategy 1 is greater at $x^{2,3}$ than it is at $x^{1,3}$. Thus, we expect that under the logit dynamic, the transition from strategy 1 to strategy 3 will be more costly at $x^{2,3}$ than it will be at $x^{1,3}$. Indeed, we find that
\begin{align}
	\Delta_1 c = c^{(n)}(x^{2,3},x^{(2,3)(1,3)})-c^{(n)}(x^{1,3},x^{(1,3)(1,3)}) & =\pi(1,x^{2,3})-\pi(3,x^{2,3})-[\pi(1,x^{1,3})-\pi(3,x^{1,3})]\nonumber \\
 & =\frac{1}{n}(A_{11}-A_{31}-(A_{12}-A_{32})) >0 \label{eq:cost-2}
\end{align}
which is positive from the marginal bandwagon property (condition \eqref{eq:MBP}). Equation \eqref{eq:cost-2} also shows that when agents make the same mistake (switching from 1 to 3), the cost becomes cheaper ($c^{(n)}(x^{1,3},x^{(1,3)(1,3)})< c^{(n)}(x^{2,3},x^{(2,3)(1,3)})$). In sum, under the assumption of the marginal bandwagon property, the cost of path
$x\rightarrow x^{1,3}\rightarrow x^{(1,3)(2,3)}$ is cheaper than
that of $x\rightarrow x^{2,3}\rightarrow x^{(2,3)(1,3)}$ (see Lemma \ref{lem:Pos-feedbacks} (i)).

Now, consider the new path (shown as a dotted line) obtained by altering
a single agent's switching in Panel B of Figure \ref{fig:comparison-1}.
The cost difference between the original and new paths in
Panel B of Figure \ref{fig:comparison-1} is precisely the cost difference
between the two paths in Panel A. Thus, if equation (\ref{eq:cost-2}) is positive, the cost of
the new path in Panel B is strictly lower than that of the original path. Then, by successively altering a single agent's switching, we can apply
the same arguments repeatedly as in Panel C of Figure
\ref{fig:comparison-1} (Proposition \ref{prop:red}, (i)). In this way, we find that the cost of path $D\rightarrow F$ is cheaper than that of path $D\rightarrow E'\rightarrow E$ and finally the cost of path $\gamma_2$ is cheaper than that of path $\gamma_1$ (Panel D in Figure \ref{fig:comparison-1}), where
\begin{align}
  \gamma_1 : \,\,\, & A \rightarrow C \rightarrow D \rightarrow E' \rightarrow E  \label{eq:gamma_1} \\
  \gamma_2 : \,\,\, & A \rightarrow C \rightarrow D \rightarrow F \label{eq:gamma_2}.
\end{align}

\subsection*{Comparison principle 2: Lemma \ref{lem:Pos-feedbacks} (ii), Proposition \ref{prop:red} (ii) }


\begin{figure}
\centering
\includegraphics[scale=0.7]{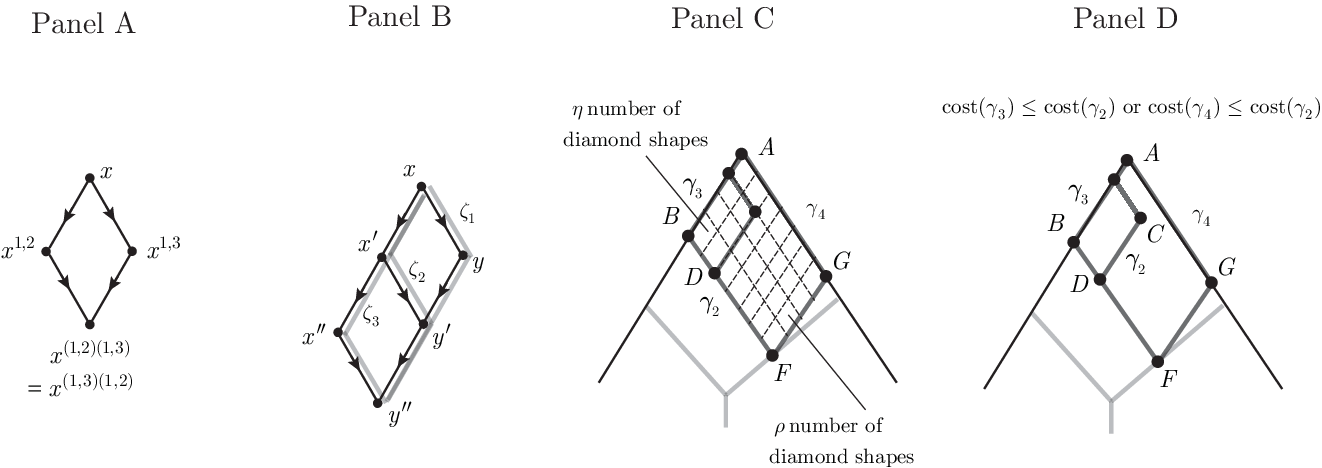}

\caption{\textbf{ Comparison Principle 2}. Panel A shows two paths in \eqref{eq:comp2-1st} and \eqref{eq:comp2-2nd}. Panel B illustrates equations in \eqref{eq:comp-3-diff}, \eqref{eq:sum}, and \eqref{eq:alt}, showing that the cost of $\zeta_2$ is no less than that of $\zeta_1$ or $\zeta_3$. Panel C illustrates the argument in equations \eqref{eq:rho-comp}, \eqref{eq:rho-eta-comp}, and \eqref{eq:alt2}. Thus, in Panel D, we obtain that the cost of $\gamma_2$ is again no less than that of $\gamma_3$ or $\gamma_4$. }

\textbf{\label{fig:comparison-2}}
\end{figure}

Next, we explain the second comparison principle. Similarly, we first consider the two paths in Panel A of Figure \ref{fig:comparison-2}:
\begin{align}
   & x\rightarrow x^{1,3}\rightarrow x^{(1,3)(1,2)} \label{eq:comp2-1st} \\
   & x\rightarrow x^{1,2}\rightarrow x^{(1,2)(1,3)} \label{eq:comp2-2nd}
\end{align}
where $x^{(1,3)(1,2)}=x^{(1,2)(1,3)}$. We find
\begin{align}
 	\Delta_2 c := & c^{(n)}(x,x^{1,3})+c^{(n)}(x^{1,3},x^{(1,3)(1,2)})-[c^{(n)}(x,x^{1,2})+c^{(n)}(x^{1,2},x^{(1,2)(1,3)})] \nonumber \\
 =	&\underbrace{[ c^{(n)}(x, x^{1,3}) - c^{(n)}(x^{1,2}, x^{(1,2)(1,3)}) ]}_{\textrm{(i) positive feedback of 1 over 3}} - \underbrace{[ c^{(n)}(x, x^{1,2}) - c^{(n)}(x^{1,3}, x^{(1,3)(1,2)}) ]}_{\textrm{(ii) positive feedback of 1 over 2}}.  \label{eq:comp-2}
\end{align}
If we let $x=y^{2,3}$, then $x^{1,2}=y^{1,3}$ and, as in equation \eqref{eq:cost-2}, (i) in equation \eqref{eq:comp-2} becomes
\begin{align*}
	c^{(n)}(x, x^{1,3}) - c^{(n)}(x^{1,2}, x^{(1,2)(1,3)}) & = c^{(n)}(y^{2,3}, y^{(2,3)(1,3)})-c^{(n)}(y^{1,3}, y^{(1,3)(1,3)}) \\
	&=\frac{1}{n}( A_{11} - A_{31} - (A_{12} - A_{32})).
\end{align*}
Furthermore, if we let $x = z^{3, 2}$, then $x^{1,3} = z^{1,2}$ and (ii) in equation \eqref{eq:comp-2} becomes
\begin{align*}
	c^{(n)}(x, x^{1,2}) - c^{(n)}(x^{1,3}, x^{(1,3)(1,2)}) & = c^{(n)}(z^{3,2}, z^{(3,2)(1,2)})-c^{(n)}(z^{1,2}, z^{(1,2)(1,2)}) \\
	&= \frac{1}{n}( A_{11} - A_{21} - (A_{13} - A_{23}))\,,
\end{align*}
which is the positive feedback effect of strategy 1 over 2. Thus,
\begin{equation} \label{eq:MS-con}
 	\Delta_2 c=\frac{1}{n}( -A_{12}+A_{13}+A_{21}-A_{23}-A_{31}+A_{32})=\frac{6d}{n}.
\end{equation}
Here, $6d$ in equation \eqref{eq:MS-con} can be positive, negative, or zero. When $6d =0$, the game is a potential game---this is the well-known test for potential games by \citet{Hofbauer85}. \citet{SS2014} also define \eqref{eq:MS-con} by ``skew'' and use it to compare the costs of paths in the infinite population model.

Next, using this result, we compare the three paths in Panel B of Figure \ref{fig:comparison-2}, defined as follows:
\begin{align*}\label{eq:comp-3}
  \zeta_1:  x \rightarrow y \rightarrow y' \rightarrow y'', \,\,
  \zeta_2:  x \rightarrow x' \rightarrow y' \rightarrow y'',\,\,
  \zeta_3:  x \rightarrow x' \rightarrow x'' \rightarrow y''
\end{align*}
\noindent Then, applying equation \eqref{eq:MS-con}, we find that
\begin{equation}\label{eq:comp-3-diff}
  I^{(n)}(\zeta_1) - I^{(n)}(\zeta_2) = \frac{6d}{n} \text{ and } I^{(n)}(\zeta_2) - I^{(n)}(\zeta_3) =  \frac{6d}{n}
\end{equation}
which shows that
\begin{equation}\label{eq:sum}
   [I^{(n)}(\zeta_2) - I^{(n)}(\zeta_1)] + [I^{(n)}(\zeta_2)-I^{(n)}(\zeta_3)]=0.
\end{equation}
This, in turn, implies that either
\begin{equation}\label{eq:alt}
  I^{(n)}(\zeta_2) \geq  I^{(n)} (\zeta_1) \text{ or } I^{(n)}(\zeta_2) \geq I^{(n)} (\zeta_3)
\end{equation}
holds. Thus, from the inequalities in \eqref{eq:alt}, either $\zeta_1$ or $\zeta_3$ costs less than (or is equal to) $\zeta_2$ and we can remove $\zeta_2$ from the candidate paths minimizing the problem in equation \eqref{eq:min-1st}.

Next, we compare the costs of the three paths, $\gamma_2: A \rightarrow C \rightarrow D  \rightarrow F$ in \eqref{eq:gamma_2}, $\gamma_3$, and $\gamma_4$ (Panels C and D of Figure \ref{fig:comparison-2}), where
\begin{align}
  \gamma_3 : & \,\,\, A \rightarrow B \rightarrow F \label{eq:gamma_3} \\
  \gamma_4 : & \,\,\, A \rightarrow G \rightarrow F \label{eq:gamma_4}.
\end{align}
For the purpose of exposition, assume that there are $\rho$ diamond shapes in the area between $\gamma_4$ and $\gamma_2$, and $\eta$ diamond shapes between $\gamma_2$ and $\gamma_3$ (see Panel C of Figure \ref{fig:comparison-2}). Now, by applying the comparison results in equation \eqref{eq:comp-3-diff} successively, we find that
\begin{equation}\label{eq:rho-comp}
  I^{(n)}(\gamma_4) - I^{(n)}(\gamma_2) = \rho \frac{6d}{n}, \qquad I^{(n)}(\gamma_2) - I^{(n)}(\gamma_3) = \eta  \frac{6d}{n}.
\end{equation}
which yields
\begin{equation}\label{eq:rho-eta-comp}
  \eta [I^{(n)}(\gamma_2) - I^{(n)}(\gamma_4)] + \rho [I^{(n)}(\gamma_2) - I^{(n)}(\gamma_3)] =0.
\end{equation}
This, in turn, implies that either
\begin{equation}\label{eq:alt2}
  I^{(n)}(\gamma_2) \geq  I^{(n)}(\gamma_4) \text{ or  } I^{(n)}(\gamma_2) \geq  I^{(n)} (\gamma_3)
\end{equation}
holds. Using this, we can also remove $\gamma_2$ from the minimum cost candidate paths (see Panel D of Figure \ref{fig:comparison-2}).

\begin{figure}
\centering
\includegraphics[scale=0.7]{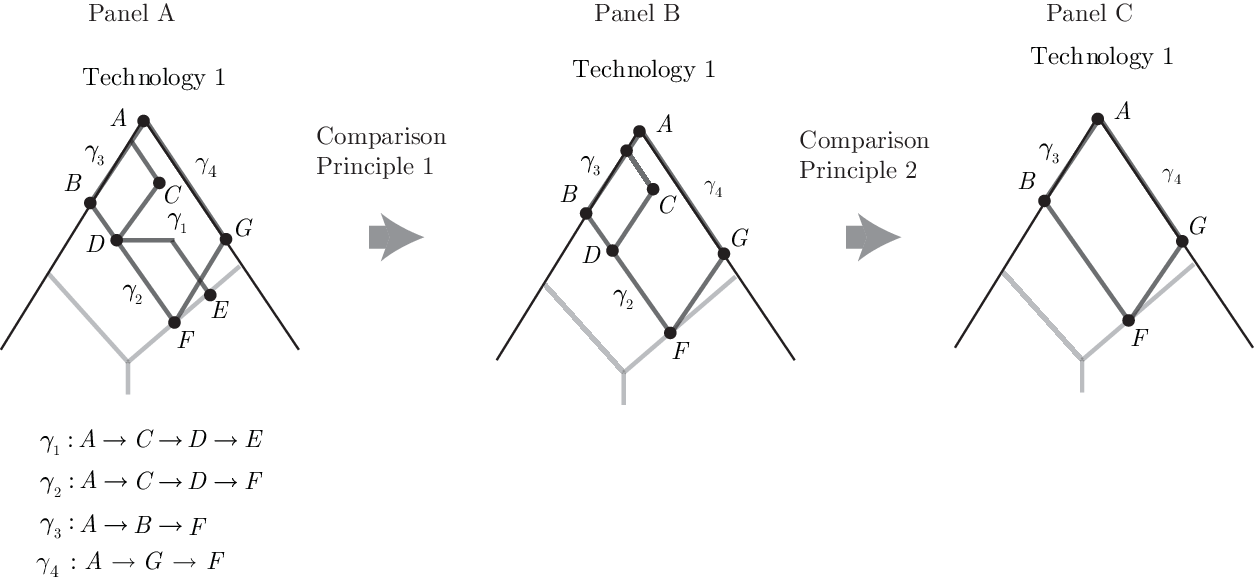}

\caption{\textbf{Implications of Comparison Principles 1 and 2}. Panel A shows four different pathes, $\gamma_1, \gamma_2, \gamma_3$ and $\gamma_4$. By applying comparison principle 1, we eliminate $\gamma_1$ from the set of candidate solution to the exit problem in \eqref{eq:min-1st} as in Panel B. Then by applying comparison principle 2, we eliminate $\gamma_2$ as in Panel C. }

\textbf{\label{fig:comp-combined}}
\end{figure}

Finally, we apply these two comparison principles, to obtain a class of paths comprising the candidate solutions to the cost minimization problem in equation \eqref{eq:min-1st} (see Figure \ref{fig:comp-combined}). One of these paths consists of consecutive transitions first from technology 1 to technology 2 and then from technology 1 to technology 3 (see $\gamma_3$ in Panel C in Figure \ref{fig:comp-combined}). Alternatively, there could be another path consisting of consecutive transitions first from technology 1 to technology 3 and then from technology 1 to technology 2 (see $\gamma_4$ in  Panel C in Figure \ref{fig:comp-combined}). Thus, we can reduce the complicated objective function in equation (\ref{eq:min-1st}) to a function of two variables (i.e., the number of transitions from technology 1 to technology 2, and those from technology 1 to technology 3) and easily study the minimization problem of this simple objective function using the \textbf{MBP} again. In general, we reduce the objective function with an arbitrary number of variables in equation \eqref{eq:min-1st} to an objective function with $|S|-1$ variables, where $|S|$ is the number of strategies of the underlying game. We then prove that the lowest cost transition path to escape convention 1 involves the repetition of the same kinds of mistakes in the infinite population limit. That is, graphically, these paths lie on the edges of the simplex from strategy 1 to strategy 2 and from strategy 1 to strategy 3.

\tocless \section{Exit from the basin of attraction of a convention \label{sec:escape}: one population models}



We now present our comparison principles for games with an arbitrary number of strategies for a finite population. Our first comparison principle shows that given two paths, $x \rightarrow x^{(i,k)} \rightarrow x^{(i,k)(\bar m, l)}$ and $x \rightarrow x^{(\bar m,k)} \rightarrow x^{(i,k)(\bar m, l)}$, it  always costs less (or the same) to first switch away from strategy $\bar{m}$ and then to switch away from the other strategies, as already explained in Section \ref{sec:setup} (Lemma \ref{lem:Pos-feedbacks}). Our second comparison principle is based on the fact that the sizes of the two different positive feedback effects (\textbf{MBP}) can be compared as follows (see equation \eqref{eq:MS-con}):
\begin{align}\label{eq:rel-str}
    [A_{ii} - A_{ji} - A_{ik} + A_{jk}]-[A_{ii} - A_{ki} - A_{ij} + A_{kj}]
  =  [A_{ij}-A_{ji}]+[A_{jk}-A_{kj}]+[A_{ki}-A_{ik}]
\end{align}
\

\begin{lem}\label{lem:Pos-feedbacks} The following statements hold.  \\
(i)\textbf{(Comparison principle 1).} Suppose that the \textbf{MBP} holds. Consider two paths $\gamma_{1}$ and $\gamma_{2}$ (Panel A, Figure \ref{fig:comparison-1}) in $D^{(n)}(e_{\bar{m}})$:
\begin{align*}
\gamma_{1}:  \, x\rightarrow x^{\bar{m},k}\rightarrow x^{(\bar{m},k)(i,l)}, \quad \gamma_{2}:  \, x\rightarrow x^{i,k}\rightarrow x^{(i,k)(\bar{m},l)},
\end{align*}
where $i \neq \bar m, k \not= i, \bar m $, $l \not= i, \bar m$.
Then,
\[
I^{(n)}(\gamma_{2}) \geq I^{(n)}(\gamma_{1}) \,.
\]
(ii) \textbf{(Comparison principle 2).} Consider three paths
$\zeta_{1}$, $\zeta_{2}$ and , $\zeta_{3}$ (Panel B, Figure \ref{fig:comparison-2}) in $D(e_{\bar{m}})$:
\begin{align*}
\zeta_{1}: & \, x \rightarrow x^{\bar{m},j}\rightarrow  x^{(\bar{m},j)(\bar m, i)} \rightarrow x^{(\bar{m},j)(\bar m, i)(\bar m, i)} \\
\zeta_{2}: & \, x\rightarrow x^{\bar{m}, i}\rightarrow x^{(\bar{m},i)(\bar{m},j)} \rightarrow x^{(\bar{m},i)(\bar{m},j)(\bar m, i)} \\
\zeta_{3}: & \, x \rightarrow x^{\bar m, i} \rightarrow x^{(\bar m, i)(\bar m, i)} \rightarrow x^{(\bar m, i)(\bar m, i)  (\bar m, j )}
\end{align*}
where $i \neq j$. Then, we have either
\[
I^{(n)}(\zeta_2) \geq I^{(n)}(\zeta_1) \text{ or } I^{(n)}(\zeta_2) \geq I^{(n)}(\zeta_3)
\]
\end{lem}

\begin{proof}
See Appendix \ref{appen:one-pop}.
\end{proof}

Next, we present our main result on the exit problem from convention $\bar m$ for finite population $n$.  Let $\mathcal{G}_{\bar{m}}^{(n)}$ be the set of all paths escaping the basin of attraction of convention
$\bar{m}$ at some arbitrary time $T$; that is,
\begin{align}\label{eq:path-G}
  \mathcal{G}^{(n)}_{\bar{m}}: =\{ \gamma=(x_{0},\cdots,x_{T})  \,:\, &  x_0= e_{\bar{m}}\,, x_{t}\in D^{(n)}(e_{\bar{m}}) \text{~for~} 0<t <T-1 \notag  \\
  & \text{~and~}  x_{T} \notin   D^{(n)}(e_{m}) \text{ for some } T>0 \}.
\end{align}
Thus, our exit problem can be written formally as
\begin{equation}\label{eq:min-n}
\min \{ I^{(n)}(\gamma):  {\gamma\in\mathcal{G}^{(n)}_{\bar{m}}} \}.
\end{equation}

Using Lemma \ref{lem:Pos-feedbacks}, we significantly reduce the number of candidate solutions to the problem in \eqref{eq:min-n} as explained in Section \ref{subsec:example} in the special case of three strategies. First, the comparison principle 1 in Lemma \ref{lem:Pos-feedbacks} implies that we can focus on the class of paths for which all transitions are from $\bar{m}$ to some other strategy $i$ (i.e., all paths consisting of straight lines, parallel to the edges of the simplex) (see Figure \ref{fig:path})---the set of paths defined as follows:
\begin{align} \label{eq:path-J}
{\mathcal J}^{(n)}_{\bar{m}} : & =\{\gamma =(x_{1},x_{2},\cdots,x_{T}) \in {\mathcal G}^{(n)}_{\bar{m}} \,;\,  x_{t+1} = (x_t)^{\bar{m},i} \text{ for some } i, \text{ for all } t \le T-1\}
\end{align}

Then, using the comparison principle 2 in Lemma \ref{lem:Pos-feedbacks}, we can further reduce the number of candidate solutions to \eqref{eq:min-n} and hence consider a subset of $\mathcal{J}^{(n)}_{\bar m}$ in which identical transitions from $\bar m$ occur consecutively:
\[
\gamma:\,\, x\substack{\textrm{from}\,\bar{m}\,\textrm{to}\, i_{1}\\
\longrightarrow\\
t_{1}\textrm{\,\ times}
}
y\substack{\textrm{from}\,\bar{m}\,\textrm{to}\, i_{2}\\
\longrightarrow\\
t_{2}\textrm{\,\ times}
}
z\,\,\,\cdots\substack{\textrm{from}\,\bar{m}\,\textrm{to}\, i_{K}\\
\longrightarrow\\
t_{K}\textrm{\,\ times}
}
w\,
\]
for some given (distinct) $i_1, i_2, \cdots, i_K$.
That is, $\gamma$ consists of a series of consecutive transitions first from $\bar{m}$ to $i_{1}$, then from $\bar{m}$ to $i_{2}$, ..., and finally from $\bar{m}$ to $i_{K}$. More precisely, we write $\gamma$
as
\begin{equation}
\gamma=(x_{1},x_{2},\cdots,x_{T})=\begin{cases}
x_{t+1}=(x_{t})^{\bar{m,}i_{1}} & \textrm{if}\,1\le t\leq t_{1}\\
x_{t+1}=(x_{t})^{\bar{m,}i_{2}} & \textrm{if}\, t_{1}+1\le t\leq t_{1}+t_{2}\\
\,\,\,\,\,\vdots\\
x_{t+1}=(x_{t})^{\bar{m},i_{K}} & \textrm{if}\,\sum_{l=1}^{K-1}t_{l}+1\leq t\leq\sum_{l=1}^{K}t_{l}=:T-1
\end{cases}\label{eq: ent_path}\,
\end{equation}
and define
\begin{align} \label{eq:path-K}
\mathcal{K}^{(n)}_{\bar{m}}:  = & \{\gamma:\gamma=(x_{1},x_{2},\cdots,x_{T})\in\mathcal{J}^{(n)}_{\bar{m}},\,\, \gamma\,\ \text{ is given by (\ref{eq: ent_path}) }\\
    & \text{ for some } t_{1},t_{2},\cdots,t_{K}, \text{ for some distinct } i_1, i_2, \cdots, i_K \} \nonumber.
\end{align}
For example, consider the following three paths:
\begin{align*}
    \gamma_1:  & \,\, x ^{(1)} \xrightarrow{\textrm{from}\,1 \,\textrm{to}\, 2} \,\,  x ^{(2)}  \,\, \xrightarrow{\textrm{from}\,1 \,\textrm{to}\, 3} \,\, x ^{(3)} \,\,  \xrightarrow{\textrm{from}\,1 \,\textrm{to}\, 2} \,\, x ^{(4)} \\
    \gamma_2:  & \,\, y ^{(1)} \xrightarrow{\textrm{from}\,1 \,\textrm{to}\, 2} \,\,  y ^{(2)}  \,\, \xrightarrow{\textrm{from}\,1 \,\textrm{to}\, 3} \,\, y ^{(3)}  \\
    \gamma_3:  & \,\, z ^{(1)} \xrightarrow{\textrm{from}\,1 \,\textrm{to}\, 3} \,\,  z ^{(2)}  \,\, \xrightarrow{\textrm{from}\,1 \,\textrm{to}\, 2} \,\, z ^{(3)}
\end{align*}
Then $\gamma_1, \gamma_2, \gamma_3 \in \mathcal{J}^{(n)}_1$ and $\gamma_2, \gamma_3 \in \mathcal{K}^{(n)}_1$, but $\gamma_1 \not \in \mathcal{K}^{(n)}_1$ since $\mathcal{K}^{(n)}_1$ contains only paths in which identical transitions occur consecutively (see Panels B and C in Figure \ref{fig:path}).
\noindent The main idea of the following Proposition \ref{prop:red} is illustrated in Section \ref{subsec:example}.
\begin{prop}[\textbf{Finite populations}]\label{prop:red}  Suppose that Condition $\textbf{A}$ holds. Then, we have the following characterizations: \\ \smallskip
(i) {\bf Comparison principle I}
\[
\min \{I^{(n)}(\gamma): {\gamma\in\mathcal{G}^{(n)}_{\bar{m}}} \}=\min \{I^{(n)}(\gamma): {\gamma\in\mathcal{J}^{(n)}_{\bar{m}}} \}.
\] \\ \smallskip
(ii) {\bf Comparison principle II}
\[
\min \{I^{(n)}(\gamma): {\gamma\in\mathcal{G}^{(n)}_{\bar{m}}} \}=\min \{I^{(n)}(\gamma): {\gamma\in\mathcal{K}^{(n)}_{\bar{m}}} \}
\]

\end{prop}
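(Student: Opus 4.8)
The plan is to prove both identities by establishing only the nontrivial inequalities, since the reverse ones are immediate from the inclusions $\mathcal{K}^{(n)}_{\bar m}\subseteq \mathcal{J}^{(n)}_{\bar m}\subseteq \mathcal{G}^{(n)}_{\bar m}$, which give $\min_{\mathcal G}I^{(n)}\le \min_{\mathcal J}I^{(n)}\le \min_{\mathcal K}I^{(n)}$. Thus for part (i) it suffices to attach to every $\gamma\in\mathcal{G}^{(n)}_{\bar m}$ a path $\tilde\gamma\in\mathcal{J}^{(n)}_{\bar m}$ with $I^{(n)}(\tilde\gamma)\le I^{(n)}(\gamma)$, and for part (ii), granting (i), to attach to every $\gamma\in\mathcal{J}^{(n)}_{\bar m}$ a path $\hat\gamma\in\mathcal{K}^{(n)}_{\bar m}$ with $I^{(n)}(\hat\gamma)\le I^{(n)}(\gamma)$; then $\min_{\mathcal K}\le\min_{\mathcal J}=\min_{\mathcal G}$ closes the loop.

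For part (i) I would use a global \emph{straightening} map, which is the aggregate of the single-agent alterations illustrated in Panel~B of Figure~\ref{fig:loc}. Writing $\gamma=(x_0,\dots,x_T)$ with steps $x_{t+1}=x_t+\tfrac1n(e_{j_t}-e_{i_t})$ (source $i_t$, destination $j_t$), every cost is $c^{(n)}(x_t,x_{t+1})=\pi(\bar m,x_t)-\pi(j_t,x_t)$ by \eqref{eq:cost-boa} because $x_t\in D^{(n)}(e_{\bar m})$ for $t<T$. I define $\tilde\gamma$ by keeping the same sequence of destinations but replacing every source by $\bar m$ (omitting any step with $j_t=\bar m$), i.e. $\tilde x_{t+1}=\tilde x_t+\tfrac1n(e_{j_t}-e_{\bar m})$, truncated at the first exit from $D^{(n)}(e_{\bar m})$; by construction every transition of $\tilde\gamma$ is of the form $\bar m\to j_t$, so $\tilde\gamma\in\mathcal J^{(n)}_{\bar m}$ once admissibility is checked. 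The key is the telescoping $\tilde x_t-x_t=\tfrac1n\sum_{r<t}(e_{i_r}-e_{\bar m})$, which by linearity of $\pi$ gives, step by step,
\[
c^{(n)}(x_t,x_{t+1})-c^{(n)}(\tilde x_t,\tilde x_{t+1})
=\frac1n\sum_{r<t}\bigl[(A_{\bar m\bar m}-A_{j_t\bar m})-(A_{\bar m i_r}-A_{j_t i_r})\bigr].
\]
Each summand is nonnegative: it is exactly the \textbf{MBP} inequality \eqref{eq:MBP} with $(i,j,k)=(\bar m,j_t,i_r)$ when $\bar m,j_t,i_r$ are distinct, it equals $0$ when $i_r=\bar m$, and it is a sum of two coordination gaps when $i_r=j_t$. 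Summing over $t$ up to the truncation time yields $I^{(n)}(\tilde\gamma)\le I^{(n)}(\gamma)$.

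For part (ii) I would reduce any $\gamma\in\mathcal J^{(n)}_{\bar m}$ to $\mathcal K^{(n)}_{\bar m}$ by repeatedly coalescing equal-destination transitions via Proposition~\ref{prop:straight}, using the number of maximal constant-destination runs of $\gamma$ as a monovariant. If $\gamma\notin\mathcal K^{(n)}_{\bar m}$, some destination $k$ occurs in two adjacent runs (of lengths $\eta,\rho>0$) separated only by transitions to other strategies; extracting the sub-path from the start of the first $k$-run to the end of the second (the intervening transitions playing the role of ``$\cdots$'', the common prefix and suffix cancelling by additivity of $I^{(n)}$), Proposition~\ref{prop:straight} produces $\gamma'$ and $\gamma''$ moving both $k$-blocks to the front, resp. the back, and asserts $\eta[I^{(n)}(\gamma')-I^{(n)}(\gamma)]+\rho[I^{(n)}(\gamma'')-I^{(n)}(\gamma)]=0$. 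Hence at least one of $\gamma',\gamma''$ has cost $\le I^{(n)}(\gamma)$; choosing it merges the two $k$-runs, stays in $\mathcal J^{(n)}_{\bar m}$, and strictly lowers the run count. Iterating to the minimal count of one run per distinct destination lands in $\mathcal K^{(n)}_{\bar m}$ at no greater cost, which gives (ii).

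I expect the main obstacle to be \emph{admissibility}: guaranteeing that every constructed path remains in $D^{(n)}(e_{\bar m})$ until its last step, so that the state-dependent formula \eqref{eq:cost-boa} is the one actually being summed. In part~(i) this has two components, both handled by \textbf{MBP}. That $\tilde\gamma$ must leave the basin no later than $\gamma$ follows by applying the displayed identity to the violated constraint $\pi(\bar m,x_T)<\pi(k,x_T)$ at the exit point: the correction term is a sum of \textbf{MBP} quantities of the correct sign, so the strict inequality is inherited by $\tilde x_T$. That $\bar m$ stays populated up to exit follows from the exit-timing bound together with the fact, which I would have to extract from Condition~\textbf{A} (existence of mixed equilibria with arbitrary support, hence basins bounded away from the opposite face), that a state with $x(\bar m)=0$ cannot lie in $D^{(n)}(e_{\bar m})$; this is the delicate geometric point. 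In part~(ii) the corresponding admissibility of $\gamma',\gamma''$ is inherited from Proposition~\ref{prop:straight}. The only remaining fuss is the bookkeeping of the coincidence cases ($i_r\in\{\bar m,j_t\}$ and steps with $j_t=\bar m$) in the \textbf{MBP} sum, which the coordination hypothesis in Condition~\textbf{A} absorbs.
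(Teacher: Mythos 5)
Your part (ii) is essentially the paper's own argument: the paper also passes from $\mathcal{J}^{(n)}_{\bar{m}}$ to $\mathcal{K}^{(n)}_{\bar{m}}$ by applying Proposition \ref{prop:straight} repeatedly to merge two runs of identical transitions, keeping whichever of the two extreme rearrangements is no more costly; your run-count monovariant just makes the ``apply repeatedly'' termination explicit. Part (i), however, takes a genuinely different route. The paper argues by backward induction on the \emph{last} transition whose source is not $\bar{m}$, in each step either deleting a cancelling pair, collapsing two transitions into one, or swapping two adjacent transitions via Proposition \ref{prop:Pos-feedbacks}, truncating when the swapped state already leaves the basin. You instead build in one shot a globally straightened path: same destination sequence, every source replaced by $\bar{m}$, steps with destination $\bar{m}$ omitted, truncation at first exit. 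Your telescoping identity, its sign analysis (including the coincidence cases $i_r=\bar{m}$ and $i_r=j_t$), and your exit-inheritance argument (the summed analogue of the paper's step showing $z\notin D^{(n)}(e_{\bar{m}})$ implies $z^{\bar{m},i}\notin D^{(n)}(e_{\bar{m}})$) are all correct, and the construction is more economical than the paper's four-case induction.

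The point you yourself flag as delicate is, however, a genuine gap, and it is the price of globalizing. Your straightened path loses one $\bar{m}$-agent per retained step, so it is feasible only while $\tilde{x}_t(\bar{m})\ge 1/n$; since the given path may contain far more than $n$ transitions with destination other than $\bar{m}$ (an agent can shuttle between two non-$\bar{m}$ strategies arbitrarily often while the path stays deep in the basin), you need exactly the claim you state but do not prove: no state with $x(\bar{m})=0$ lies in $D^{(n)}(e_{\bar{m}})$. This is not a formality, because coordination together with the \textbf{MBP} does \emph{not} imply it. For instance,
\[
A=\begin{pmatrix}10 & 0 & 5 & 5\\ 0 & 10 & 0 & 0\\ -4 & 0 & 10 & -8\\ -4 & 0 & -8 & 10\end{pmatrix}
\]
satisfies both, yet at $x=(0,0,\tfrac{1}{2},\tfrac{1}{2})$ strategy $1$ is the strict best response, so $x\in D^{(n)}(e_1)$ with $x(1)=0$; a destination sequence alternating $3,4,3,4,\dots$ would march your straightened path straight into this point and strand it there without ever producing an escape path. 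What excludes such games is precisely the remaining clause of Condition \textbf{A} (this game admits no mixed equilibrium with support $\{3,4\}$), and converting that clause into your geometric claim is a real argument the proposal does not contain: for three strategies it can be done using the pairwise equilibrium on the opposite edge plus the fact that \textbf{MBP} makes the interior indifference point unique, but the extension to an arbitrary number of strategies is not automatic. By contrast, the paper's local swaps inherit feasibility from the given path (the swapped $\bar{m}$-sourced transition is performable because the original path performs a $\bar{m}$-sourced transition one step later at an unchanged $\bar{m}$-coordinate); only its terminal case $t=T-1$ touches the same issue, where the paper is silent too. Since your construction needs the claim at every step, part (i) of your proof is incomplete as written.
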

\begin{proof}
  See Appendix \ref{appen:one-pop}.
\end{proof}

\begin{figure}[t]
\centering\includegraphics[scale=0.6]{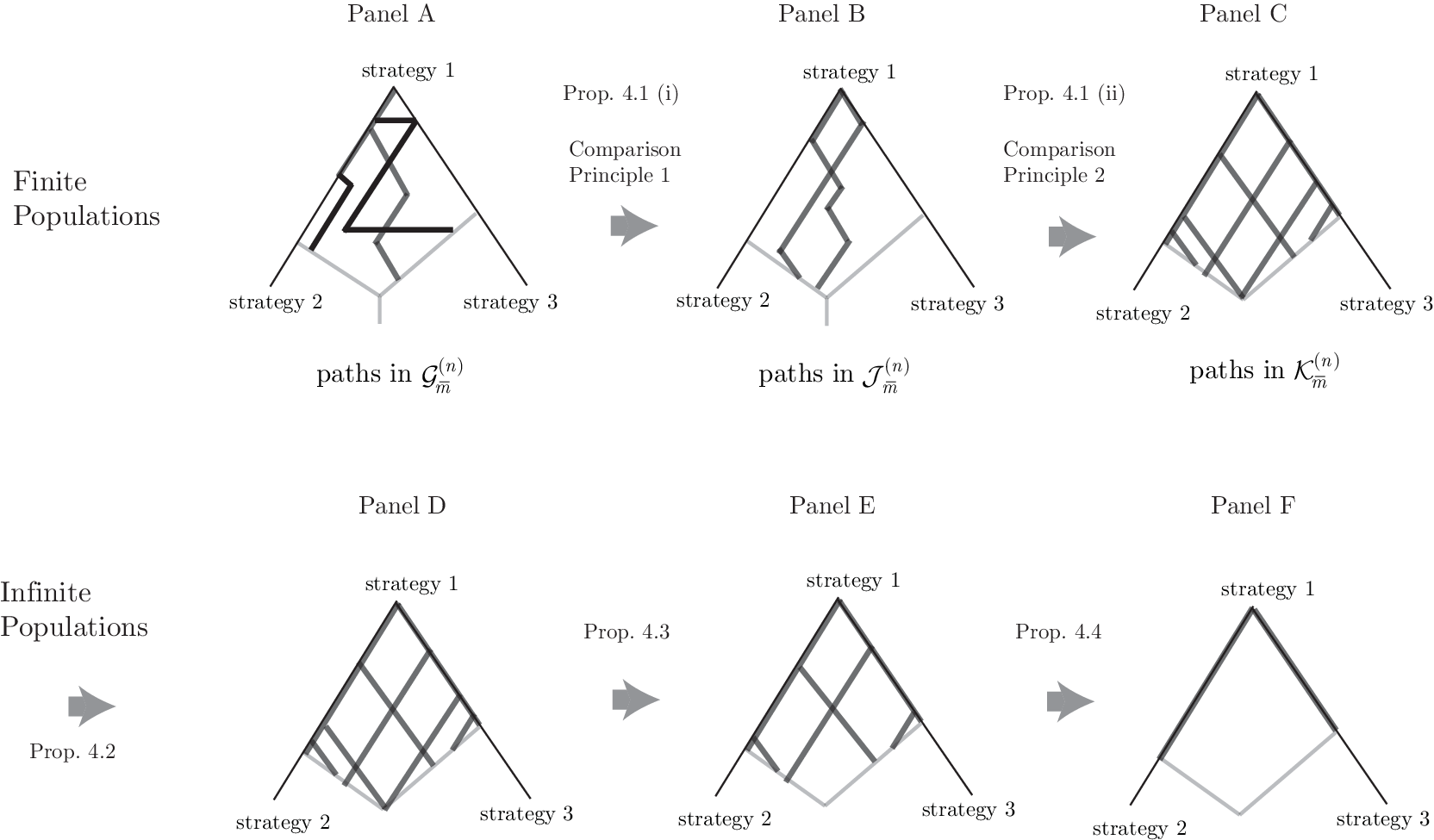}

\caption{\textbf{Outline of the proof of Theorem \ref{thm:escape}}. Each panel shows how each proposition reduces the number of candidate solutions to the exit problem.}
\label{fig:path}
\end{figure}


Even after eliminating a substantial number of irrelevant paths from the set of candidate solutions to the problem in \eqref{eq:min-n} in the setting of the finite population using Proposition \ref{prop:red} (Panels A, B, and C in Figure \ref{fig:path}), function $I^{(n)}$ still remains complicated, with negligible terms (in the order of $n$) when the population is large.
Thus, we consider an infinite population limit as in \citet{SS2014}. However, unlike \citet{SS2014} who solve the Hamilton Jabcobi equation to find the optimal path, we directly show that the minimizing escaping paths lie on the boundaries of the simplex using the properties of \textbf{MBP} (Panels D, E and F in Figure \ref{fig:path}). This shows that the \textbf{MBP} is also the key condition for comparing two paths in the infinite population model. Figure \ref{fig:path} illustrates the outline of the proof of the main theorem in this section, Theorem \ref{thm:escape}.


Specifically, to study the infinite population problem, we need to first find the continuous version of $c^{(n)}(x,x^{i,j})$ in \eqref{eq: l-cost}. For this, we denote by $\bar D(e_{i})$ and $\partial \bar D_{i,j}$ the continuous versions of $D^{(n)}(e_{i})$ and $\partial D^{(n)}_{i,j}$, respectively (in an appropriate convergence sense; see equation \eqref{eq:cont-basin}). Suppose that  $p,q \in \Delta$  with $q=p + \alpha(e_i-e_j)$ for some $\alpha >0$.  For $p,q \in \bar D(e_{\bar{m}})$,
define
\begin{equation}
\bar{c}(p,q):=\frac{1}{2}(p_{j}-q_{j})(\pi (\bar m,p+q)- \pi(i,p+q)).
\label{eq:main-con-cost}
\end{equation}
\noindent We show that the definition of a continuous cost function in \eqref{eq:main-con-cost} is precisely the limit of the discrete cost function at $n = \infty$ (see Lemma \ref{lem:con_cost} in Appendix \ref{appen:one-pop}).
We will denote the cost of a continuum path $\zeta$ as $\bar I(\zeta)$.

Next, we define a continuum analogue of the set of paths  ${\mathcal K}^{(n)}_{\bar{m}}$ (equation \eqref{eq:path-K}) in the limit of $n \to \infty$ and define an associated cost function as follows.  Roughly speaking $\bar{\mathcal{K}}_{\bar{m}}$ is the set of paths consisting  of a collection of piecewise straight lines. More precisely, for $t=((t_1, \cdots, t_K); (i_1, \cdots, i_K))$ where $t_l \in [0,1]$ for all $l$,
\begin{align}
p^{(0)}=e_{\bar{m}}, \,\,  p^{(1)}= e_{\bar{m}}+ t_{1}(e_{i_{1}}-e_{\bar{m}}), \cdots,\,\, p^{(K)}\,=\,  e_{\bar{m}}+\sum_{l=1}^{K}t_{l}(e_{i_{l}} - e_{\bar{m}}),\label{eq:con-transition}
\end{align}
where $p^{(l)} \in \bar D(e_{\bar{m}})$ for all $l$ and $p^{(K)} \in \partial \bar D(e_{\bar{m}}) := \bigcup_{l \neq \bar m} \partial \bar D_{\bar m , j}$. We also set
\begin{equation}\label{eq: endpoint}
    q(t)\,:= p^{(K)}, \zeta(t)\, := (p^{(0)}, p^{(1)}, \cdots, p^{(K)}),\text{ and }\quad \omega(t):= \bar{I} (\zeta(t)) = \sum_{t=0}^{K-1}{\bar c}(p^{(t)},p^{(t+1)})
\end{equation}
\noindent Note that path $\zeta=\zeta(t) \in \bar{\mathcal{K}}_{\bar m}$ is uniquely determined by the vector $t = ((t_1, \cdots, t_K);(i_1, \cdots, i_K))$. We then have the following infinite population limit result.

\begin{prop}[\textbf{Infinite Population Limit}]\label{prop:main-approx}
\[
\lim_{n\rightarrow\infty} \frac{1}{n} \min
\{  I^{(n)}(\gamma) \,:\, \gamma \in\mathcal{K}_{\bar{m}}^{(n)} \}=\min \left\{ \omega(t)\,:\,  \zeta(t) \in \bar{\mathcal{K}}_{\bar{m}}  \right\}.
\]
%
%
\end{prop}
\begin{proof}
    See Appendix \ref{appen:one-pop}.
\end{proof}
\noindent Proposition \ref{prop:main-approx} can be seen as  a special case of Theorem 9 in \citet{SS2014}, which shows the convergence over all possible paths. 
For completeness, we provide our own proof of Proposition \ref{prop:main-approx} in Appendix \ref{appen:one-pop}. Proposition \ref{prop:main-approx} leads us to consider the following minimization problem over the set  of continuous paths in $ \bar{\mathcal{K}}_{\bar{m}}$:

\begin{equation}\label{eq:con-exit-prob}
  \min \left\{ \omega(t)\,:\,  \zeta(t) \in \bar{\mathcal{K}}_{\bar{m}}  \right\}.
\end{equation}

\begin{figure}
\centering\includegraphics[scale=0.6]{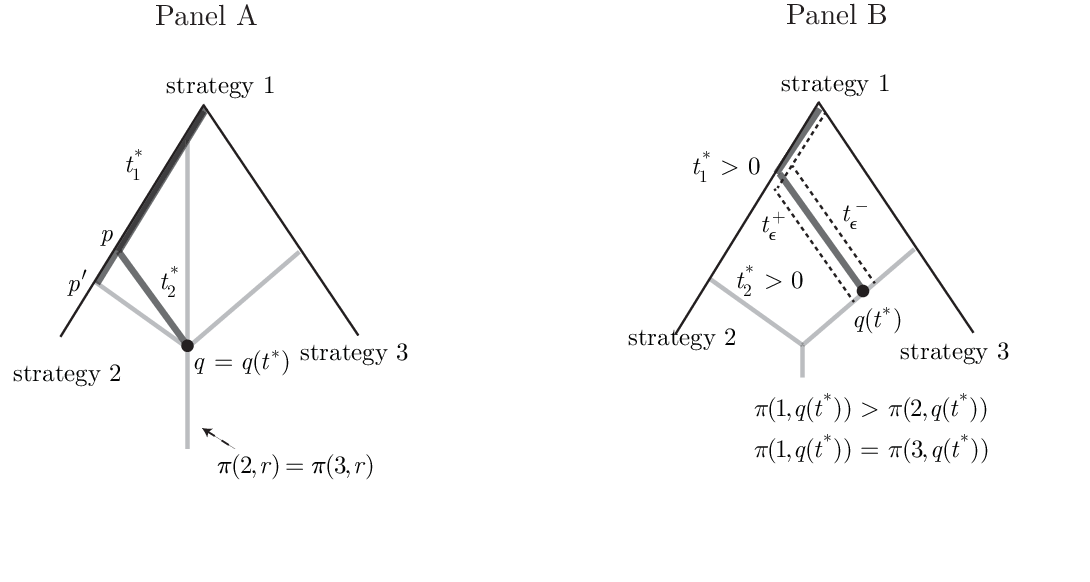}

\caption{\textbf{Proofs of Propositions \ref{prop:main-f-red} and \ref{prop:main-binding-con}.} Panel A illustrates Proposition \ref{prop:main-f-red}; Panel B illustrates Proposition \ref{prop:main-binding-con}. Panel A shows that if $t_1^* > 0$, $t_2^* >0$, $\pi(1,q(t^*))>\pi(2, q(t^*))$, and $\pi(1,q(t^*))=\pi(3, q(t^*))$, then either the path $t_\epsilon^+$ or the path $t_\epsilon^-$ can be lower than or equal to the cost of the path $(t_1^*, t_2^*)$. Panel B illustrates two paths---one involving $p$ and $p'$, and the other involving $p$ and $q$. Proposition \ref{prop:main-binding-con}  shows that the path through $p$ to $p'$ has a lower cost than the path through $p$ to $q$.  }
\label{fig:main-con}
\end{figure}

Next, we further reduce the number of candidate solutions to \eqref{eq:con-exit-prob} by comparing the costs of paths in the set of $\bar{\mathcal{K}}_{\bar m }$ (see Panels D, E, and F in Figure \ref{fig:path}). Note that at the end point $q$ of the escape path,
the following constraints must be satisfied:
\begin{equation}\label{eq:var-constraints}
  \pi (\bar m, q) \geq \pi(l, q) \text{ for all } l  \text{ with at least one constraint binding}
\end{equation}
Proposition \ref{prop:main-f-red} shows  that at the end point of the minimal escape path, \emph{only one} constraint is binding (i.e., $\pi(\bar m ,q) = \pi(k,q)$ for some $k$) and all other constraint are non-binding (i.e., $\pi(\bar m ,q) > \pi(l,q)$ for some $l \neq k$). To explain why, let  us  consider a three-strategy game with the path $t^{*}=((t_{1}^{*},t_{2}^{*});(2,3))$ passing through points $p$ and $q=q(t^*)$ in Panel A of Figure \ref{fig:main-con}, where $\pi(1,q(t^{*}))=\pi(2,q(t^{*}))=\pi(3,q(t^{*}))$. Furthermore, consider an alternative path exiting directly along the boundary between strategies 1 and 2 at $p'$; hence, $\pi(1,p')=\pi(2, p'), \, \pi(1, p')> \pi(3, p')$.  To compare the costs of these two paths, using the definition in \eqref{eq:main-con-cost}, we find that
\begin{align*}
\bar c(p,q) & =\frac{1}{2}t_{2}^{*}(\pi(1,p+q
)-\pi(3,p+q))\,, \quad \bar c(p,p')  =\frac{1}{2}(p_{2}'-p_{2})( \pi(1,p+p')- \pi(2,p+p')).
\end{align*}
Thus, from $t_2^* > p_2' -p_2$ (see Panel A of Figure \ref{fig:main-con}) and the fact that $q$ and $p'$ are mixed-strategy Nash equilibria,
\begin{align*}
\bar c(p,q)- \bar c(p,p')  \geq\frac{1}{2}t_{2}^{*}[ \pi(1,p)- \pi(3,p)-( \pi(1,p)- \pi(2,p))]
 =\frac{1}{2}t_{2}^{*}( \pi(2,p)-\pi(3,p)).
\end{align*}
Clearly, $ \pi(2,p)> \pi(3,p)$ from Panel A of Figure \ref{fig:main-con}, because $p$ is located to the left of the line $\pi(2, r) = \pi(3,r)$. Indeed, we confirm that
\begin{align*}
 \pi(2,p)- \pi(3,p) & = \pi(2,q+t_2^{*}(e_{1}-e_{3}))- \pi(3,q+t_2^{*}(e_{1}-e_{3}))\\
 & =t_2^{*}(A_{21}-A_{23}-A_{31}+A_{33}) > 0
\end{align*}
using the \textbf{MBP}, where we again use the fact that $q$ is the complete mixed strategy Nash equilibrium. Thus, we find that $\bar c(p,q) \geq \bar c(p, p')$. Once again, the underlying principle is that the \textbf{MBP} implies that a path with the same consecutive transitions is cheaper than those with different transitions.

\begin{prop} \label{prop:main-f-red}
  Suppose that Condition \textbf{A} holds. Suppose that $t^*$ such that $\omega(t^*)=\min \{\omega(t): \zeta(t) \in \mathcal{K}_{\bar m} \}$. Then for some $k$,
  \begin{equation}\label{eq:binding-con}
    \pi(\bar m , q(t^*)) = \pi(k, q(t^*)) \text{ and }  \pi(\bar m , q(t^*)) > \pi(l, q(t^*)) \text{ for all } l \neq k.
  \end{equation}
\end{prop}
\begin{proof}
  See Appendix \ref{appen:one-pop}.
\end{proof}

Our final step shows that the optimal path always exit through the edge of the simplex (i.e., the mixed strategy Nash equilibrium involving only two strategies).  To explain the idea behind Proposition \ref{prop:main-binding-con}, consider again the three-strategy game in Panel B  of Figure \ref{fig:main-con}. Suppose
that the optimal solution is $t^{*}=((t_{1}^{*},t_{2}^{*});(2,3))$ and that $t_{1}^{*},t_{2}^{*}>0$, $\pi(1,q(t^{*}))>\pi(2,q(t^{*}))$ and $\pi(1,q(t^{*}))=\pi(3,q(t^{*}))$ (see the dotted line in Panel B of Figure \ref{fig:main-con}).
Then, using the linearity of the
payoff functions, we consider $t^+_\epsilon := (t_{1}^{*}+\epsilon_{1},t_{2}^{*}-\epsilon_{2})$
and $t^-_\epsilon:=(t_{1}^{*}-\epsilon_{1},t_{2}^{*}+\epsilon_{2})$ which \emph{still}
satisfy the constraint of escaping from the basin of attraction. Then, by direct computation, we have
\begin{align*}
      & (\omega(t^+_\epsilon) - \omega(t))+(\omega(t^-_\epsilon)-\omega(t))
     =       - H_{1,2;2} \epsilon_1^2 + 2  H_{1,3; 2}  \epsilon_1 \epsilon_2- H_{1,3;3} \epsilon_2^2 \leq - (\sqrt{H_{1,2; 2}}\epsilon_1 -
     \sqrt{H_{1,3; 3}} \epsilon_2)^2 < 0
  \end{align*}
where $H_{i, j;k} := (A_{ii}- A_{ji}) - (A_{ik} - A_{jk})$
\com{
\begin{align*}
      & (\omega(t^+_\epsilon) - \omega(t))+(\omega(t^-_\epsilon)-\omega(t))
     =       - H^{(1)}_{2 2} \epsilon_1^2 + 2  H^{(1)}_{3 2}  \epsilon_1 \epsilon_2- H^{(1)}_{3 3} \epsilon_2^2 \\
      & \leq - H^{(1)}_{2 2} \epsilon_1^2 + 2 \sqrt{H^{(1)}_{2 2}}\sqrt{H^{(1)}_{3 3}} \epsilon_1 \epsilon_2  - H^{(1)}_{ 3 3} \epsilon_2^2 \leq - (\sqrt{H^{(1)}_{2 2}}\epsilon_1 -
     \sqrt{H^{(1)}_{3 3}} \epsilon_2)^2 < 0
  \end{align*}
}
Thus, either $\omega(t^+_\epsilon) < \omega(t)$ or $\omega(t^-_\epsilon) < \omega(t)$ holds. Again, this shows that the \textbf{MBP} plays an important role in determining the minimum cost escape path under the infinite population model. The proof of Proposition \ref{prop:main-binding-con} generalizes these arguments for an arbitrary $n$ strategy game.

\begin{prop} \label{prop:main-binding-con}
   Suppose that Condition \textbf{A} holds.   Let $t^*$ be the solution to the minimization problem in \eqref{eq:con-exit-prob}.
  Suppose that for some $k$ and $l$,
  \[
    \pi(\bar m , q(t^*)) = \pi (k, q(t^*)) \text{ and } \pi(\bar m, q(t^*)) > \pi(l, q(t^*))
  \]
  Then $t_l^* =0$.
\end{prop}
\begin{proof}
    See Appendix \ref{appen:one-pop}.
\end{proof}


Finally, by applying Propositions \ref{prop:red}, \ref{prop:main-approx}, we obtain 
\begin{align} \label{eq:thm-step1}
\lim_{n\rightarrow\infty} \frac{1}{n} \min
\{  I^{(n)}(\gamma) \,:\, \gamma \in\mathcal{G}_{\bar{m}}^{(n)} \}
     =\min \left\{ \omega(t)\,:\,  \zeta(t) \in \bar{\mathcal{K}}_{\bar{m}}  \right\}.
\end{align}
Now let $t^*$ be the solution to $\min \left\{ \omega(t)\,:\,  \zeta(t) \in \bar{\mathcal{K}}_{\bar{m}}  \right\}$.  Then, from Proposition \ref{prop:main-f-red},  there exists $k$ such that equations in \eqref{eq:binding-con} hold. By applying Proposition \ref{prop:main-binding-con}, we conclude that $t^*$ involves only transitions from $\bar m$ to $k$ and this path exits at the mixed strategy involving only $\bar m$ and $k$. Then, we find the cost of $t^*$ as follows (see  Lemma \ref{lem:con_cost})\
\[
    \omega(t^*) = \frac{1}{2} (1- \frac{A_{kk}-A_{\bar m k}}{(A_{\bar m \bar m}-A_{k \bar m }) + (A_{kk}-A_{\bar m k})})(\pi(\bar m, e_{\bar m})- \pi(k, e_{\bar m})) = \frac{1}{2}\frac{(A_{\bar m \bar m} - A_{k \bar m})^2}{(A_{\bar m \bar m} - A_{k \bar m})+(A_{k k } - A_{\bar m k})}.
\]
Thus, we obtain 
\begin{equation}\label{eq:thm-ineq}
  \min \left\{ \omega(t)\,:\,  \zeta(t) \in \bar{\mathcal{K}}_{\bar{m}}  \right\}=\omega(t^*) \geq \min_{j \neq \bar m} \left \{ \frac{1}{2} \frac{(A_{\bar m \bar m} - A_{j \bar m})^2}{(A_{\bar m \bar m} - A_{j \bar m })+(A_{jj} - A_{\bar m j})} \right \}.
\end{equation}
Since the expression in curly braces of the right hand side of \eqref{eq:thm-ineq} is the cost of the straight line escape path from $\bar m$ to $j$ and these paths belong to $\bar{\mathcal{K}}_{\bar m}$, \eqref{eq:thm-ineq} holds as an equality.  Thus, \eqref{eq:thm-step1} and \eqref{eq:thm-ineq} (as an equality) yield the following theorem.

\begin{thm}[\bf Exit problem: one population model]\label{thm:escape}
Assume that Condition \textbf{A} holds.  Then, we have
\begin{equation}\label{eq:logit-cost}
\lim_{n\rightarrow\infty} \frac{1}{n} \min  \{  I^{(n)}(\gamma) \,:\, \gamma \in\mathcal{G}_{\bar{m}}^{(n)} \} = \min_{j \neq \bar m} \left \{ \frac{1}{2} \frac{(A_{\bar m \bar m} - A_{j \bar m})^2}{(A_{\bar m \bar m} - A_{j \bar m })+(A_{jj} - A_{\bar m j})} \right \}
\end{equation}
\end{thm}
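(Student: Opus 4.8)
The plan is to combine the finite-population reduction already proved with a passage to the infinite-population limit. By Proposition~\ref{prop:red}(ii), $\min_{\gamma\in\mathcal G^{(n)}_{\bar m}}I^{(n)}(\gamma)=\min_{\gamma\in\mathcal K^{(n)}_{\bar m}}I^{(n)}(\gamma)$, so throughout I may restrict to paths that make consecutive identical mistakes away from $\bar m$. Write $a_j:=A_{\bar m\bar m}-A_{j\bar m}$ and $b_j:=A_{jj}-A_{\bar m j}$; Condition~\textbf{A} gives $a_j>0$ and a $\{\bar m,j\}$-supported mixed equilibrium located at the displacement $s_j^\ast=a_j/(a_j+b_j)$ along the edge, which is exactly where a straight path toward $e_j$ leaves $D^{(n)}(e_{\bar m})$.

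For the upper bound I would exhibit the single-edge competitor that switches from $\bar m$ to a fixed $j$ at every step. After $t$ such switches the state is $e_{\bar m}+\tfrac{t}{n}(e_j-e_{\bar m})$, and by \eqref{eq:cost-boa} the $t$-th switch costs $\pi(\bar m,x)-\pi(j,x)=a_j-(a_j+b_j)\tfrac{t}{n}$. Dividing by $n$ turns the sum of these costs into a Riemann sum, so $\tfrac1n I^{(n)}\to\int_0^{s_j^\ast}\bigl(a_j-(a_j+b_j)s\bigr)\,ds=\tfrac{a_j^2}{2(a_j+b_j)}$, which is the bracketed quantity in \eqref{eq:logit-cost}; minimizing over $j$ yields ``$\le$''.

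The lower bound is the substance of the theorem, and it is what forces the ``only one kind of mistake'' conclusion. Setting $\sigma_l=t_l/n$, I would write the cost of a $\mathcal K$-path as a line integral of the affine one-form $\omega=\sum_{i\neq\bar m}\bigl(\pi(\bar m,x)-\pi(i,x)\bigr)\,dx_i$ along the monotone polygonal curve it traces from $e_{\bar m}$ to $\partial D^{(n)}(e_{\bar m})$. The integrand is nonnegative on the basin, and the ``curvature'' $d\omega$ has constant coefficients equal precisely to the relative-strength terms in \eqref{eq:rel-str}; this is what makes the limiting cost an explicit quadratic $F(\sigma_1,\dots,\sigma_K)$ to be minimized subject to the exit constraint.

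The crux, which I expect to be the main obstacle, is to show that this quadratic is minimized when all but one $\sigma_l$ vanish. I would argue in two stages. First, Proposition~\ref{prop:straight} shows that interleaving two kinds of mistakes is never strictly cheaper than performing them in one of the two pure orders, which removes all genuinely mixed orderings and leaves finitely many pure-order polygonal paths. Second, a direct analysis of $F$ on the exit boundary shows its minimum over these paths is attained when only one segment is nonempty, with the sign of the skew in \eqref{eq:rel-str} selecting the surviving vertex; evaluating $F$ there returns $\tfrac{a_j^2}{2(a_j+b_j)}$, and the outer minimum over $j$ gives \eqref{eq:logit-cost}. The remaining technical points to discharge are the uniform control of the $O(1/n)$ per-step discretization errors so that the Riemann sums converge to the stated integrals, and the verification---using the mixed equilibria of Condition~\textbf{A} together with the \textbf{MBP}---that the exit face reached by the optimal straight path is the $\bar m$-$j$ edge rather than some higher-dimensional face.
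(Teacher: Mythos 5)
Your architecture matches the paper's: Proposition \ref{prop:red}(ii) for the finite-$n$ reduction to $\mathcal{K}^{(n)}_{\bar m}$, a single-edge path whose Riemann-sum cost gives the upper bound $\tfrac12 a_j^2/(a_j+b_j)$ (this is the paper's Lemma \ref{lem:con_cost} computation), and a passage to a continuum quadratic minimization over segment lengths, which is the paper's Proposition \ref{prop:approx}. The gap is your ``stage 2.'' You assert that ``a direct analysis of $F$ on the exit boundary shows its minimum over these paths is attained when only one segment is nonempty, with the sign of the skew in \eqref{eq:rel-str} selecting the surviving vertex.'' That sentence is the entire hard half of the theorem, and the mechanism you name is the wrong one. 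The skew \eqref{eq:rel-str} only compares different \emph{orderings} of the same multiset of mistakes --- that is exactly what Proposition \ref{prop:straight} uses, and its force is already spent in reducing to $\mathcal{K}^{(n)}_{\bar m}$; it can be zero (potential games) or of either sign, and it cannot show that a path mixing two kinds of mistakes costs more than a path using one kind. Nothing in your proposal excludes an optimizer with, say, $\sigma_2,\sigma_3>0$ exiting through a face of the basin where several payoff constraints bind; nor is $F$ concave, so no generic ``minimum at a vertex'' argument applies.

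The paper closes this gap with two further results, both resting on the strict \textbf{MBP} rather than the skew, and in an order opposite to the one you sketch. First, Proposition \ref{prop:binding-con} (via the geometric Lemmas \ref{lem:exit} and \ref{lem:mid}) shows that an optimal continuum path can be taken to exit at a point where \emph{exactly one} constraint $\pi(\bar m,q)=\pi(i_k,q)$ binds; this is not the ``remaining technical point'' you defer to the end but a prerequisite, because it is what makes two-sided perturbations of the segment lengths feasible while keeping the endpoint on the exit boundary. Second, Proposition \ref{prop:f-red} perturbs two nonzero segments in opposite directions, $t^\pm_\epsilon$, preserving the single binding constraint, and shows that the sum of the two cost changes equals $-H_{kk}\epsilon_k^2+2H_{lk}\epsilon_k\epsilon_l-H_{ll}\epsilon_l^2\le -\bigl(\sqrt{H_{kk}}\,\epsilon_k-\sqrt{H_{ll}}\,\epsilon_l\bigr)^2<0$, where the $H$'s are \textbf{MBP} differences of the form $(A_{\bar m\bar m}-A_{j\bar m})-(A_{\bar m k}-A_{jk})>0$; hence one of the two perturbations strictly lowers the cost, contradicting optimality of any solution with two nonzero segments. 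Your quadratic $F$ is the right object and your upper bound is fine, but without these two steps (or substitutes for them) the lower bound --- and with it the theorem --- remains unproven.
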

\begin{proof}
  See Appendix \ref{appen:one-pop}.
\end{proof}

%

In the much-studied uniform mistake model, the probabilities of mistakes are identical for all states and, hence, are independent of the states (see, e.g, \citet{BSY1}). Therefore, the threshold number of deviant agents who induce other agents to change their best responses is the only determinant of the expected escape time and stochastic stability. The number $n \frac{(A_{\bar{m}\bar{m}}-A_{i\bar{m}})}{(A_{\bar{m}\bar{m}}-A_{i\bar{m}})+(A_{ii}-A_{\bar{m}i})}$ is the threshold number of agents deviating from strategy $\bar m$ to strategy $i$ and inducing others to best respond with strategy $i$. Obviously, our arguments for Theorem \ref{thm:escape} can be applied to the uniform mistake model, as explained in the introduction. In particular, comparison principe 1 holds as an equality, which can still be used to find a minimum cost path and comparison principle 2 holds without any modification. Thus, when the $\textbf{MBP}$ holds,
\begin{equation} \label{eq:uniform-cost}
\lim_{n \rightarrow \infty}   \frac{1}{n}  \min_{\gamma \in\mathcal{G}_{\bar{m}}^{(n)}}I_{uniform}^{(n)}(\gamma) = \min_{j \neq \bar m} \left \{  \frac{(A_{\bar{m}\bar{m}}-A_{j \bar{m}})}{(A_{\bar{m}\bar{m}}-A_{j \bar{m}})+(A_{jj}-A_{\bar{m}j})} \right \}
\end{equation}
where $I_{uniform}$ is the cost function under the uniform mistake model (see also \citet{BSY1, Kandori98}).

Comparing \eqref{eq:logit-cost} and \eqref{eq:uniform-cost} reveals that the logit rule cost also accounts for the opportunity cost of mistakes ($A_{\bar m \bar m} - A_{j \bar m}$) adopting  sub-optimal strategy $j$ instead of the best response strategy $\bar m$ as well as the threshold fraction of the population inducing a new best response. Specifically, under the logit model, the threshold fraction of agents inducing a new best response is weighted by individuals' average cost of mistakes, $\frac{1}{2}(A_{\bar m \bar m} - A_{j \bar m})$ (equation \eqref{eq:logit-cost}). This (plausible) modification of costs creates a different prediction for the exit path and time, as well as a stochastically stable state from the uniform model (see Section \ref{sec:ndg}).


\tocless \section{Two population models and the application: the logit solutions for the Nash demand game \label{sec:application}}

\tocless \subsection{Two population models}

For the application of our results to the Nash demand game, we briefly introduce the two population setup and state the result for two population models (Theorem \ref{thm:main-2p}), which is proved in a similar way to the one population model result(see Appendix \ref{appen:2p-exit}). Consider two populations denoted by $\alpha$ and $\beta$, consisting of the same number of agents $n$ and a bimatrix game $(A^{\alpha},A^{\beta})$, where $A^{\kappa}$ is an $|S|\times|S|$ matrix for $\kappa =\alpha,\beta$. An $\alpha$-agent playing $i$ against $j$ obtains a payoff $A_{ij}^{\alpha}$, while a $\beta$-agent playing $j$ against $i$ obtains $A_{ij}^{\beta}.$
We introduce the following definition.
\begin{defn}
\label{def:2p-MBP}We say that $(A^{\alpha},A^{\beta})$ is a coordination
game if $A_{ii}^{\alpha}>A_{ji}^{\alpha}$ and $A_{ii}^{\beta}>A_{ij}^{\beta}$
for all $i,j$. We also say that $(A^{\alpha},A^{\beta})$ satisfies
the \emph{weak} marginal bandwagon property (\textbf{WBP}) if
\begin{equation}\label{eq:wmbp}
  A_{\bar{m}\bar{m}}^{\alpha}-A_{i\bar{m}}^{\alpha} \geq A_{\bar{m}j}^{\alpha}-A_{ij}^{\alpha}\mbox{ \text{and }}A_{\bar{m}\bar{m}}^{\beta}-A_{\bar{m}i}^{\beta} \geq A_{j\bar{m}}^{\beta}-A_{ji}^{\beta}
\end{equation}
for all distinct $i,j,\bar{m}$.
\end{defn}
Note that condition \eqref{eq:wmbp} is weaker than the \textbf{MBP} which requires strict inequalities. We relax the \textbf{MBP}, because we would like to study a broader class of games including Nash demand games (\citet{Nash53})\footnote{It is straightforward to check that a discrete Nash demand game satisfies \eqref{eq:wmbp}. See Appendix \ref{appen:sss-NDG}.}. Let $x=(x_{\alpha},x_{\beta})\in \Delta_\alpha^{(n)} \times \Delta_\beta^{(n)}$, where $\Delta_\kappa^{(n)}:=\{ x_\kappa: \sum x_\kappa(i) = 1, \text{ and } x_\kappa(i) \geq \text{ for all } i \}$ for $\kappa = \alpha, \beta$. Then, the expected payoffs are similarly given by
\[
\pi_{\alpha}(i,x)=\pi_{\alpha}(i,x_{\beta})=\sum_{j=1}^{n} x_\beta (j) A_{ij}^{\alpha},\,\,\pi_{\beta}(j,x)=\pi_{\beta}(i,x_{\alpha})=\sum_{i=1}^{n} x_\alpha (i) A_{ij}^{\beta}
\]

Similarly to Condition \textbf{A}, we make the following assumptions:\medskip

\noindent \textbf{Condition B}  \\
(i) A game $(A^\alpha, A^\beta)$ is a coordination game, satisfying the \textbf{WBP}. \\
(ii) Suppose that \\
for any $T=\{i_1, \cdots, i_K\} \subset S$, there exists  a unique $q \in \Delta_\beta$ with support $T$ such that
\[
    \pi_\alpha(i_1, q) = \cdots =\pi_\alpha(i_K, q),
\]
for any $T=\{i_1, \cdots, i_K\} \subset S$, there exists a unique $p \in \Delta_\alpha$ with support $T$  such that
\[
   \pi_\beta(i_1, p) =\cdots =\pi_\beta(i_K, p).
\]
$\square$


We denote by  $x^{\alpha,i,j}$ (or $x^{\beta,i,j}$)
the state induced by an $\alpha$-agent's (or $\beta$-agent's) switching from $i$
to $j$ from $x$. We also denote by $x^{\alpha,i,j,\rho}$ (or $x^{\beta,i,j,\rho}$)
the state induced from $x$ by $\alpha$-agents' $\rho$-times consecutive
switchings (or $\beta$-agents $\rho$-times consecutive transitions)
from $i$ to $j$. We write $e_{i}^{\alpha}$ and $e_{j}^{\beta}$ as the $i$-th and $j$-th standard basis for $\mathbb{R}^{|S|}$ and
thus $(e_{i}^{\alpha},e_{i}^{\beta})\in \Delta_\alpha^{(n)} \times \Delta_\beta^{(n)}$ is a convention.
We similarly define a basin of attraction of convention $i$ as follows:
\[
D^{(n)}(e_{\bar{m}}):=\{x=(x_{\alpha},x_{\beta})\in \Delta_\alpha^{(n)} \times \Delta_\beta^{(n)}:\,\pi_{\alpha}(\bar{m},x_{\beta})\geq\pi_{\alpha}(k,x_{\beta}), \,\pi_{\beta}(\bar{m},x_{\alpha})\geq\pi_{\beta}(k,x_{\alpha})\,\mbox{for all \ensuremath{k}}\}
\]
and compute the cost functions between states:
\begin{equation}\label{eq:2p-cost}
  c^{(n)}(x,x^{\alpha,i,j}):=\pi_{\alpha}\mbox{(\ensuremath{\bar{m}},}x)-\pi_{\alpha}(j,x),\,\,c^{(n)}(x,x^{\beta,i,j}):=\pi_{\beta}\mbox{(\ensuremath{\bar{m}},}x)-\pi_{\beta}(j,x)
\end{equation}
for $x\in D^{(n)}(e_{\bar{m}})$. We similarly define the cost function of a path,
$I^{(n)}$, as in equation \eqref{eq: cost-path} and the set of all paths escaping the basin
of attraction of convention $\bar{m}$ as $\mathcal{G}_{\bar{m}}^{(n)}$, as in equation \eqref{eq:path-G}.

For two population models, two kinds of deviant behaviors naturally arise depending on the intentionality of deviant plays \citep*{Naidu10, HLNN2016}, namely an unintentional or intentional logit choice rule. The unintentional logit choice rule refers to the standard logit rule in the context of the two population model, defined as follows:
\begin{equation} \label{eq:appen-2p-logit}
\textbf{Unintentional logit choice rule: } \,\, p^{U, (\kappa)}_\eta (l|x):= \frac{\exp(\eta^{-1} \pi_\kappa (l, x))}{\sum_{l'} \exp(\eta^{-1} \pi_ \kappa(l', x)} \text{ for } \kappa = \alpha, \beta.
\end{equation}

By the intentional choice rule, we mean that agents choose a non-best-response strategy  from the set of strategies that would give a higher payoff than the payoff at the status quo convention when adopted as a convention. Thus, the intentional logit choice rule means that agents play a non-best-response strategy among the set restricted by ``intentional'' behaviors, with probabilities specified by the logit choice rule. Experimental evidence for intentional as well as payoff-dependent behaviors captured by the logit rule is provided in \citet{MasNax2016}, \citet{LimNeary2016}, and \citet*{HLNN2016}(see Section \ref{sec:lit}).\footnote{\citet{LimNeary2016} find that individual mistakes are directed in the sense that they are group-dependent. The directed mistakes in their paper are intentional behaviors of deviant agents; for example, they find that 2.25\% of subjects play mistakes when the best response is the preferred strategy, whereas 20.85\% of subjects play mistakes when the best response is the less preferred strategy (Figure 5 (a) on p .19). }

More precisely, to define the intentional logit choice rule, we introduce the sets of the permissible suboptimal strategies under the intentional dynamic:
\begin{equation}\label{eq:s-int-dyn}
  \tilde S_m(\kappa) := \{ l : A_{ll}^\kappa \geq A^\kappa_{mm} \}, \quad
  \hat S_x (\kappa):= \{ l : A_{ll}^\kappa \geq A_{m m}^{\kappa} \text{ for } m \in \arg \max_i \{ \pi_\kappa (i, x) \} \}
\end{equation}
That is,  $\tilde S_m(\kappa)$ is the set of all strategies which yield payoffs higher than (or equal to) strategy $m$ if  adopted as a convention, while $\hat S_x(\kappa)$ is the set of all strategies which yield payoffs higher than (or equal to) the current convention at state $x$ if adopted as a convention.  We make the following assumption for conflict of interests between the two populations, $\alpha$ and $\beta$:
\begin{equation}\label{eq:conflict}
          \tilde S_m(\alpha) \cup \tilde S_m(\beta) = S, \quad \tilde S_m(\alpha) \cap \tilde S_m(\beta) = m
\end{equation}
Thus, except for the current convention strategy $m$, the set of strategies that $\alpha$ population prefers is different from the set of strategies that $\beta$ population prefers. Then under the intentional logit choice rule, the conditional probability that a  $\kappa = \alpha, \beta$ agent chooses strategy $l$ at a given state $x$ is given by
\begin{equation} \label{eq:appen-2p-int-logit}
  \textbf{Intentional logit choice rule: }   p^{I, (\kappa)}_\eta (l|x):= \begin{cases}
                                 \frac{\exp(\eta^{-1} \pi_\kappa (l, x))}{\sum_{l' \in \hat S_\kappa(x)} \exp(\eta^{-1} \pi_ \kappa(l', x)} & \mbox{if } l \in \hat S_\kappa (x) \\
                                 0  & \mbox{if } l \not \in \hat S_\kappa (x).
                               \end{cases}
\end{equation}
\com{Then the costs for the unintentional and intentional logit choice rules are given as follows:
\begin{align*}
 \text{Unintentional: }\,\, c^U(x, x^{\kappa, i, j}) & = \max \{ \pi_\kappa (l, x) : l \in S \} - \pi_\kappa (j, x) \\
 \text{Intentional:  }\,\,  c^I(x, x^{\kappa, i, j}) & = \begin{cases}
                                 \max \{ \pi_\kappa (l, x) :  l \in S \} - \pi_\kappa(j, x) & \mbox{if } j \in \hat S_\kappa(x) \\
                                 \infty & \mbox{if } j \not \in \hat S_\kappa(x).
                               \end{cases}
\end{align*}
}
To state our result for the two population model, we let $\zeta_\alpha$ (or $\zeta_\beta$)  the threshold fractions of $\beta$-population (or $\alpha$-population) inducing a new best response in $\alpha$-population (or $\beta$-population) be
\[
    \zeta^\alpha_{m, j} := \frac{(A_{m m}^{\alpha}
-A_{j m }^{\alpha})}{(A_{m m}^{\alpha}
-A_{j m }^{\alpha})+(A_{jj}^{\alpha}-A_{mj}^{\alpha})},
\qquad
    \zeta^\beta_{m, j} := \frac{(A_{mm}^{\beta}-A_{mj}^{\beta})}
{(A_{mm}^{\beta}-A_{mj}^{\beta})
+(A_{jj}^{\beta}-A_{jm}^{\beta})}.
\]

In Theorem \ref{thm:main-2p} below, we show that the minimum cost escaping path from convention $m$ is similarly given by the threshold fraction weighted by the individual's opportunity cost ($A^\beta_{mm}-A^\beta_{mj}$ for $\beta$-population agents and $A^\alpha_{mm}-A^\alpha_{jm}$ for $\alpha$-population agents). To state this, we let
\begin{equation}\label{eq:cost-min}
  R^U_{mj} := (A_{mm}^\beta - A_{mj}^\beta) \zeta^\alpha_{mj} \wedge (A_{mm}^\alpha - A_{jm}^\alpha) \zeta^\beta_{mj}, \, \qquad \,
    R^I_{mj} := \begin{cases}
                  (A^\alpha_{m m} - A^\alpha_{jm}) \zeta_{mj}^\beta, & \mbox{if } j \in \tilde S_m(\alpha) \\
                  (A^\beta_{m m} - A^\beta_{mj}) \zeta_{mj}^\alpha, & \mbox{if } j \in \tilde S_m(\beta).
                \end{cases}
\end{equation}
\noindent Note that under the intentional logit choice rule ($R^I_{mj}$ in \eqref{eq:cost-min}), the sets of strategies for which minimum cost transitions occur are precisely the sets of strategies that the transition driving population prefers. This is because under the intentional dynamic, the deviant plays always involve strategies that the deviant population prefers. For the two population models, the minimum cost escape path from convention $m$ again occurs at the boundary of the simplex (the state space), as is the case for the one population model:
\begin{thm} \label{thm:main-2p}
Suppose that Condition \textbf{B} and equations \eqref{eq:conflict} hold. Then
 \begin{align*}
 \textbf{Unintentional:} &  \lim_{n\rightarrow\infty}\frac{1}{n}\min_{\gamma \in\mathcal{G}_{\bar{m}}^{(n)}}I^{(n)}(\gamma) = \min_{j \neq m} R^U_{mj}   \\
  \textbf{Intentional:} & \lim_{n\rightarrow\infty}\frac{1}{n}\min_{\gamma \in\mathcal{G}_{\bar{m}}^{(n)}}I^{(n)}(\gamma) = \min_{j \neq m} R^I_{mj}
\end{align*}

\end{thm}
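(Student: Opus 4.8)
The plan is to run the one-population argument of Propositions~\ref{prop:Pos-feedbacks}--\ref{prop:red} and Theorem~\ref{thm:escape} in parallel for the two populations, with \textbf{MBP} replaced by the two halves of \textbf{WBP}. The structural fact I would exploit first is that $\pi_\alpha$ depends only on $x_\beta$ and $\pi_\beta$ only on $x_\alpha$, so the basin factorizes: $(x_\alpha,x_\beta)\in D^{(n)}(e_m)$ iff $x_\beta$ keeps $m$ a best response for the $\alpha$-population and $x_\alpha$ keeps $m$ a best response for the $\beta$-population. Consequently one escapes $D^{(n)}(e_m)$ in exactly one of two ways: push $x_\beta$ across the $\alpha$-indifference surface using $\beta$-mistakes, or push $x_\alpha$ across the $\beta$-indifference surface using $\alpha$-mistakes. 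The two \textbf{WBP} inequalities are precisely the positive-feedback statements for the two populations, so Propositions~\ref{prop:Pos-feedbacks}, \ref{lem:Rel-feedbacks} and \ref{prop:straight}---applied separately to $\alpha$- and to $\beta$-transitions, now with weak inequalities---carry over and yield the two-population analogue of Proposition~\ref{prop:red}: the minimiser of the escape problem over $\mathcal{G}^{(n)}_{\bar m}$ may be sought among paths in which each population only switches out of $m$, with identical mistakes grouped consecutively, i.e.\ along single edges of the two simplices.

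Fixing a target $j\neq m$, I would then analyse single-edge escape in the two coordinates $q=x_\alpha(j)$ and $p=x_\beta(j)$. Using $\pi_\kappa$ along the edges, the coordination conditions give the per-mistake costs in closed form: an $\alpha$-mistake $m\to j$ costs $((A^\alpha_{mm}-A^\alpha_{jm})+(A^\alpha_{jj}-A^\alpha_{mj}))(\zeta^\alpha_{mj}-p)$ and a $\beta$-mistake $m\to j$ costs $((A^\beta_{mm}-A^\beta_{mj})+(A^\beta_{jj}-A^\beta_{jm}))(\zeta^\beta_{mj}-q)$, each affine in the opposing population's fraction and vanishing exactly at the threshold. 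Passing to the infinite-population limit as in Appendix~\ref{appen:infty}, $\frac1n I^{(n)}$ converges to a line integral over the rectangle $[0,\zeta^\beta_{mj}]\times[0,\zeta^\alpha_{mj}]$; the two axis-hugging (pure) paths evaluate to $(A^\beta_{mm}-A^\beta_{mj})\zeta^\alpha_{mj}$ and $(A^\alpha_{mm}-A^\alpha_{jm})\zeta^\beta_{mj}$, which are exactly the two terms of $R^U_{mj}$. Since the integrand is affine, the cost along the staircase family is linear in the crossover point, so its minimum sits at an endpoint; together with the monotonicity forced by the comparison principles this identifies the cheapest escape through $j$ with $R^U_{mj}=R^U_{mj}$, and minimising over $j$ gives the unintentional formula.

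For the intentional dynamic the same reduction applies, but the admissible sets $\hat S_\kappa(x)$ forbid a population from mistaking toward a strategy it does not prefer. Because $\tilde S_m(\alpha)\cap \tilde S_m(\beta)=\{m\}$ and $\tilde S_m(\alpha)\cup \tilde S_m(\beta)=S$, every $j\neq m$ is preferred by exactly one population; only that population may drive the transition to $j$, so exactly one of the two routes of the previous paragraph survives, and its cost is the corresponding branch of $R^I_{mj}$. Grouping, the limit, and the minimisation over $j$ are then identical, giving the intentional formula.

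The main obstacle is the genuine cross-coupling of the costs: the escape $1$-form fails to be exact---its curl is a two-population skew of the type in \eqref{eq:rel-str}, measuring the relative strength of the two populations' positive feedback---so the escape cost is path-dependent, and I must show, by a Green's-theorem and monotonicity argument over the rectangle, that the minimum is attained on a pure axis path rather than an interior diagonal. A secondary point is justifying the limit itself, namely that the $O(1/n)$ per-step corrections and boundary effects in $\frac1n I^{(n)}$ vanish, which I would handle exactly as in the proof of Theorem~\ref{thm:escape}. Finally, for the intentional case one must verify that the driving strategy $j$ remains in $\hat S_\kappa(x)$ along the entire candidate path, so that no transition on the minimiser is ever assigned infinite cost.
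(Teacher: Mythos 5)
Your skeleton—two-population comparison principles at finite $n$, then a continuum limit, then showing that axis (single-edge) paths are optimal—matches the paper's, and your rectangle analysis for a \emph{fixed} target $j$ is sound: writing $K_\alpha=(A^\alpha_{mm}-A^\alpha_{jm})+(A^\alpha_{jj}-A^\alpha_{mj})$ and $K_\beta=(A^\beta_{mm}-A^\beta_{mj})+(A^\beta_{jj}-A^\beta_{jm})$, the escape form $K_\alpha(\zeta^\alpha_{mj}-p)\,dq+K_\beta(\zeta^\beta_{mj}-q)\,dp$ dominates, on monotone paths, the exact form $\min(K_\alpha,K_\beta)\,d\bigl[\zeta^\alpha_{mj}q+\zeta^\beta_{mj}p-pq\bigr]$, which evaluates to $\zeta^\alpha_{mj}\zeta^\beta_{mj}\min(K_\alpha,K_\beta)$ on both escape edges—exactly the smaller term of $R^U_{mj}$. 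This is the same structural fact the paper exploits as affineness of the limiting cost in each population's variables (Lemma \ref{lem:2p-affine}). The genuine gap is the sentence ``Fixing a target $j\neq m$.'' Your finite-$n$ reduction (the analogue of Proposition \ref{prop:red}, i.e.\ Proposition \ref{prop:2p-fromm}) only delivers paths made of consecutive blocks of identical mistakes; it does \emph{not} deliver paths in which all mistakes, by either population, aim at one common strategy. A candidate minimizer may still consist of $\alpha$-blocks toward $j_1$ interleaved with $\beta$-blocks toward $j_2\neq j_1$, and these live outside your rectangle entirely. The cross-strategy coupling is real, not a formality: the \textbf{WBP} gives $A^\alpha_{mm}-A^\alpha_{j_1m}\geq A^\alpha_{mj_2}-A^\alpha_{j_1j_2}$, so $\beta$-mistakes toward $j_2$ weakly cheapen later $\alpha$-mistakes toward $j_1$, the same kind of discount you worry about inside the rectangle. (Reading the grouping step as already giving paths ``along single edges of the two simplices,'' as your first paragraph suggests, would simply beg this question—grouping gives segments \emph{parallel} to edges.)

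Eliminating these mixed-target paths is where the paper's proof actually does its work, and your proposal has no counterpart for it. The paper argues: if $t^{\alpha*}_i>0$ in an optimal continuum path, then the $\beta$-population's indifference constraint for strategy $i$ must bind at the exit point (Proposition \ref{prop:2p-binding}, proved by perturbing along the binding-constraint manifold and using that $\omega(\cdot,\bar t^\beta)$ is affine together with Condition \textbf{B}'s existence of mixed equilibria with arbitrary support); then at most one population's constraints can bind at the exit point—otherwise deleting one population's transitions gives a strictly cheaper exit—so an optimal path uses mistakes of only \emph{one} population (Proposition \ref{prop:2p-1p}); and finally, once only one population moves, that population's per-mistake costs are state-independent (the other population stays at $e_{\bar m}$), and \textbf{WBP} forces the optimum onto a single edge (Proposition \ref{prop:reduction}). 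Only after this three-step elimination does the problem collapse to comparing the two axis costs $(A^\beta_{mm}-A^\beta_{mj})\zeta^\alpha_{mj}$ and $(A^\alpha_{mm}-A^\alpha_{jm})\zeta^\beta_{mj}$ and minimizing over $j$. The same gap recurs in your intentional argument: intentionality forbids a population from targeting strategies it disprefers, but it does not prevent $\alpha$ from pushing toward its preferred $j$ while $\beta$ simultaneously pushes toward a different strategy that $\beta$ prefers, so the reduction to a single driving population and single target must still be proved, not assumed.
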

\begin{proof}
See Appendix \ref{appen:2p-exit}.
\end{proof}

\tocless \subsection{The logit solutions for the Nash demand game \label{sec:ndg}}

In this section, we present the application of our results to the Nash demand games \citep{Nash53}, using Theorem \ref{thm:main-2p}. Consider a bargaining set, $S \subset \mathbb{R}^2$, consisting of payoffs to two populations when they agree to split. We normalize the ``disagree'' point to $(0,0)$.  We describe the bargaining frontier by the function $f$ that is decreasing, differentiable and strictly concave. That is,
\[
    S = \{ (x, y) : y \leq f(x) \}
\]
where $f(x) \geq 0$, $f'(x)<0$, and $f''(x)<0$ for all $x$. We let $\bar s$ be the maximum payoff to populations $\alpha$: i.e., $\bar s := \max \{x: (x,y) \in S\}$.
 Bargaining solutions dictate how to divide the surpluses (defined by the bargaining set) between two populations\footnote{Three axiomatic bargaining solutions are most commonly used: the Nash bargaining solution \citep{Nash1}, the Kalai-Smorondinsky bargaining solution \citep{KS1}, and the egalitarian bargaining solution \citep{Kalai1}.}.

Which is the bargaining norm arising through decentralized evolutionary bargaining processes under the logit choice rule? To answer this question, following the standard literature on evolutionary bargaining \citep{Young93JET, Young98Res, BSY1, HLNN2016}, we discretize the bargaining set as follows and consider a Nash demand game. More specifically, let $L$ be a positive integer and $\delta := \frac{\bar s}{L}$. Define a Nash demand game:
\begin{equation}\label{eq:main-ndg}
  (A^\alpha_{ij}, A^\beta_{ij}) :=
  \begin{cases}
    (\delta i, f(\delta j)), & \mbox{if } i \leq j \\
    (0, 0) , & \mbox{if } i > j,
  \end{cases}
\end{equation}
where $i \in \{ 1, 2, \cdots, L-1 \}$ and we will let $\delta \rightarrow 0$ (or $L \rightarrow \infty$)  eventually.  Then, we can easily check that the game in \eqref{eq:main-ndg} satisfies Condition \textbf{B} (See Appendix \ref{appen:sss-NDG}).

\begin{table}\label{tab:solutions}
\setstretch{1.5}
\centering
	\begin{tabular}{c|c|c}
    \toprule

          &  Unintentional & Intentional \\
	\hline
	Uniform & $\frac{f(s^{NB})}{s^{NB}} = -f'(s^{NB})$ & $\frac{f(s^{NB})}{s^{NB}} = -f'(s^{NB})$ \\
	\hline
    Logit & $\frac{f(s^{NB})}{s^{NB}} = -f'(s^{NB})$ &  $\left(\frac{ f(s^I)}{s^I}\right)^2=- f'(s^I)$\\
    \bottomrule

	\end{tabular}
		\caption{\textbf{Bargaining solutions for the Nash demand game.} The bargaining solutions for the Nash demand game under the logit choice rule are new.} \label{tab:solutions}
\end{table}

The assumption for conflict in interests in \eqref{eq:conflict} also holds for the game defined in \eqref{eq:main-ndg}. Thus, while the $\alpha$-population prefers strategy $m$ to strategy $m'$ (for $m>m'$), the $\beta$-population prefers strategy $m'$ to strategy $m$ (for $m>m'$). This is because at a higher (or lower) index convention, an $\alpha$-population agent (or a $\beta$-population agent) claims a large share. Thus, under the intentional logic dynamic, an $\alpha$ population agent idiosyncratically plays a suboptimal strategy from the set of strategies with higher indices than the current strategy, whereas a $\beta$ population agent does the opposite.

Next, we find a stochastic stable state using Theorem \ref{thm:main-2p} (see Appendix \ref{appen:suff} and Theorem \ref{thm:SSE}). To explain this, first consider the intentional logit choice rule for its simplicity. We would like to find the minimum cost of transitions from convention $m$ ($R^I_{mj}$ in \eqref{eq:cost-min}); the costs of transition driven by each population are given as follows:
\begin{align}
\alpha \textbf{-pop.:  } & \min_{j \in \tilde S_m(\alpha)} \{ (A_{m m}^{\alpha}-A_{j m}^{\alpha})
\frac{(A_{m m}^{\beta}-A_{m j}^{\beta})}
{(A_{m m}^{\beta}-A_{m j}^{\beta})
+(A_{jj}^{\beta}-A_{j m}^{\beta})}\}
= \min_{m < j} m \delta \frac{f(\delta m) - f(\delta j)}{f(\delta m)}
\label{eq:r-bargain2}
\\
\beta \textbf{-pop.:  } & \min_{j \in \tilde S_m(\beta)} \{(A_{m m}^{\beta}
-A_{m j}^{\beta})\frac{(A_{m m}^{\alpha}
-A_{j m}^{\alpha})}{(A_{m m}^{\alpha}
-A_{j m}^{\alpha})+(A_{jj}^{\alpha}-A_{m j}^{\alpha})}\}
=\min_{m > j} f(m \delta) \frac{(m-j) \delta}{m \delta}
 \label{eq:r-bargain1}
\end{align}
for a given $m$.
In the second equation in \eqref{eq:r-bargain2}, the minimum is taken over $\{j:  m<j \}$ since the $\alpha$-population prefers the strategy with a higher index, while the opposite holds true for the $\beta$-population in \eqref{eq:r-bargain1}. We then find the minimum cost of transitions, also called the radius for convention $m$, as follows:
\begin{equation} \label{eq:bargain-res}
	\min_j R^I_{m j} = \min \{ f(m \delta) \frac{\delta }{m \delta} , m \delta \frac{ f(\delta m ) - f(\delta (m+1))}{f(\delta m)} \}
\end{equation}
It can be shown that a convention $m$ which maximizes $\min_j R^I_{m j}$ is a stochastically stable state for the game in \eqref{eq:main-ndg} under the unintentional or intentional logit choice rule (see Appendix \ref{appen:sss-NDG}).
Note that the first term in the minimum of equation \eqref{eq:bargain-res} is decreasing in $m$ and the second term in the minimum of equation \eqref{eq:bargain-res} is increasing in $m$. Thus, the maximum of \eqref{eq:bargain-res} is achieved where the gap between the two terms in the minimum of equation \eqref{eq:bargain-res} is smallest. Heuristically, we find that
\[
 f( \delta m) \frac{\delta }{ \delta m} = \delta m \frac{ f(\delta m ) - f(\delta (m+1))}{f(\delta m)} \iff
 \left (\frac{f( \delta m)}{ \delta m} \right)^2  =\frac{ f(\delta m ) - f(\delta (m+1))}{\delta} \rightarrow \left  (\frac{f(x)}{x} \right)^2 = - f'(x)
\]
if $ \delta m \rightarrow x$ and $\delta \rightarrow 0$

More precisely, we define
\begin{equation} \label{eq:l-solution}
    \left (\frac{f(s^I)}{s^I} \right )^2 = -f'(s^I) \text{ and }  \frac{f(s^{NB})}{s^{NB}} = -f'(s^{NB}).
\end{equation}
and $s^{NB}$ is the familiar Nash bargaining solution and $s^I$ is a new solution under the intentional logit choice rule. We also let $s^E$ be the egalitarian solution:
\[
    f(s^E) = s^E.
\]
We then find
\begin{equation}\label{eq:min}
	(\min_j R^I_{m j},  \arg \min_j R^I_{m j})=\begin{cases}
		(m \delta \frac{f(\delta m) - f(\delta (m+1))}{f(\delta m)}, \,\, m +1)
		&\text{ if } \delta m  < s^I \\
		(\frac{f(m \delta)}{m \delta} \delta, \,\, m -1)
		& \text{ if } \delta m  > s^I
	\end{cases}
\end{equation}

Let $m^I \in \arg \max_m \min_j R^I_{mj}$. Then it is straightforward to show that $m^I$ is stochastically stable and
\[
	\delta m^I \rightarrow s^I \text{ as } \delta \rightarrow 0
\]
(see Appendix \ref{appen:sss-NDG}); thus, the solution $s^I$ defined in \eqref{eq:l-solution} is the stochastically stable bargaining convention under the intentional logit dynamic. In addition, the following theorem shows that the Nash bargaining solution defined in \eqref{eq:l-solution} is the stochastically stable bargaining norm under the unintentional logit rule (See Table \ref{tab:solutions}).

\begin{thm} \label{thm:NB}
    Suppose that $f(x) \geq 0$, $f'(x)<0$, and $f''(x)<0$ for all $x$. Let $m^*$ and $m^{I}$ be the stochastically stable states under the unintentional  and intentional logit models, respectively. Then we have
    \[
        \delta m^* \rightarrow s^{NB} \text{ and }  \delta m^I \rightarrow  s^{I}
    \]
    as $\delta \rightarrow 0$.
\end{thm}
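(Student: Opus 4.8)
The plan is to reduce everything to the stochastic-stability criterion recorded just before the theorem: by Theorem~\ref{thm:main-2p} the scaled minimal exit cost from convention $m$ equals the radius $\rho_\bullet(m):=\min_j R^{\bullet}_{mj}$ ($\bullet\in\{U,I\}$), and a convention is stochastically stable iff it maximizes this radius (Appendices \ref{appen:suff}, \ref{appen:sss-NDG}). So the whole task is to locate $\arg\max_m\rho_\bullet(m)$ for the game \eqref{eq:ndg} and track it as $\delta\to0$. Write $h(x):=f(x)/x$ and $g(x):=-f'(x)$; since $f>0$, $f'<0$, $f''<0$, the map $g$ is strictly increasing and $h$ strictly decreasing. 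Substituting the payoffs of \eqref{eq:ndg} into $\zeta^\alpha,\zeta^\beta$ gives $\zeta^\alpha_{m,m+1}=m/(m+1)$, $\zeta^\beta_{m,m+1}=(f(\delta m)-f(\delta(m+1)))/f(\delta m)$ together with the mirror expressions for $j=m-1$, exactly as in \eqref{eq:r-bargain2}--\eqref{eq:r-bargain1}. Because leaving the basin of $m$ means crossing into an adjacent basin and $R^{\bullet}_{mj}$ only grows with the step $|m-j|$ (monotonicity of $f$), the radius is attained at a neighbour, so $\rho_\bullet(m)=\min\{R^{\bullet}_{m,m-1},R^{\bullet}_{m,m+1}\}$.

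For the intentional rule this is immediate. Scaling by $\delta$ and letting $x=\delta m$ with $\delta\to0$ yields the continuum radius $r^I(x)=\min\{h(x),\,g(x)/h(x)\}$, the two branches being the downward ($\beta$-driven) and upward ($\alpha$-driven) single-step costs. As $h$ is strictly decreasing and $g/h$ strictly increasing, they cross once and transversally, at the point where $h^2=g$, i.e.\ $(f/x)^2=-f'$, which is precisely $s^I$. Hence $\arg\max_m\rho_I$ is unique up to $O(\delta)$ and $\delta m^I(\delta)\to s^I$.

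The delicate part, and where I expect the main obstacle, is the unintentional rule. Scaling $R^U_{m,m\pm1}$ by $\delta$ gives the leading-order radius
\[
r^U(x)=\min\bigl\{\,g(x)\min\{1,\tfrac{x}{f(x)}\},\ \min\{h(x),1\}\,\bigr\}.
\]
When $s^{NB}\ge s^E$ one checks that this has a genuine transversal peak at the crossing $g=h$, i.e.\ $-f'=f/x$, which is $s^{NB}$, finishing the argument as in the intentional case. The trouble is the complementary regime $s^{NB}<s^E$: a short case analysis shows $r^U\equiv 1$ on the whole interval $[s^{NB},s^E]$ and $r^U<1$ strictly outside it, so the leading-order radius is \emph{flat} and the continuum limit alone cannot select a single convention. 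Breaking this degeneracy is the crux.

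To resolve it I would keep the $O(\delta^2)$ term on the flat interval. There the binding neighbour is the downward step, and within it the threshold factor binds, so
\[
R^U_{m,m-1}=\delta\,\frac{f(\delta m)}{f(\delta(m-1))}=\delta\Bigl(1-\delta\,\frac{-f'(x)}{f(x)}+o(\delta)\Bigr),
\]
whence maximizing $\rho_U$ over $[s^{NB},s^E]$ is the same as minimizing $-f'(x)/f(x)$. A one-line computation using $f''<0$ gives $\bigl(-f'/f\bigr)'=\bigl((f')^2-f''f\bigr)/f^2>0$, so $-f'/f$ is strictly increasing and its minimum over $[s^{NB},s^E]$ sits at the left endpoint $s^{NB}$. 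On the left, $x<s^{NB}$, the radius reverts to the strictly increasing leading term $\delta\,xg(x)/f(x)<\delta$, which cannot beat the value $\delta(1-O(\delta))$ near $s^{NB}$; on the right, $x>s^E$, it is $\delta h(x)<\delta$ and again smaller. Thus the discrete maximizer lies within $O(\delta)$ of $s^{NB}$ in every regime, giving $\delta m^*(\delta)\to s^{NB}$. The only points needing care are the two boundary comparisons at $s^{NB}$ and $s^E$, where a pair of neighbour costs coincide to leading order, and the verification that non-neighbour $j$ never lower the radius; both are routine given the established monotonicity of $g$, $h$, and $-f'/f$.
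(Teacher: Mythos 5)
Your proposal is correct and follows essentially the same route as the paper's proof in Appendix D: reduce to maximizing the radius $\min_j R^{\bullet}_{mj}$, restrict to the four neighbour costs ($r_1,r_2,l_1,l_2$ in the paper's notation), and locate the maximizer at the crossing of an increasing (upward) branch and a decreasing (downward) branch, which converges to $s^{NB}$ in the unintentional case and to $s^I$ in the intentional case. In particular, your resolution of the degenerate regime $s^{NB}<s^E$ --- the flat region $[s^{NB},s^E]$ where the limiting radius equals $1$, broken by the second-order term of $R^U_{m,m-1}=\delta f(\delta m)/f(\delta(m-1))$ together with the monotonicity of $-f'/f$ --- is exactly the paper's mechanism: there the binding pair is $r_2$ (increasing) against $l_2$ (decreasing, which is the same fact as $-f'/f$ being increasing), and the maximizer is their crossing point $\mu^{**}(\delta)$ with $\delta\mu^{**}(\delta)\to s^{NB}$ (Lemmas \ref{lem:conv}, \ref{lem:com}, and \ref{lem:s-fin}).
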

\begin{proof}
  See Appendix \ref{appen:sss-NDG}.
\end{proof}

\com{
\begin{table}\label{tab:solutions}
\setstretch{1.2}
\scalefont{0.9}
\centering
	\begin{tabular}{c|c|c}
           & Unintentional &\\
    \hline
	       & Nash Demand Game & Contract Game \\
	\hline
	Uniform & $\frac{f(s^{NB})}{s^{NB}} = -f'(s^{NB})$ & $\frac{f(s^{KS})}{s^{KS}} = \frac{\bar s_\beta}{\bar s_\alpha}$ \\
	\hline
    Logit & $\frac{f(s^{NB})}{s^{NB}} = -f'(s^{NB})$ & Logit bargaining  \\
    \hline
	\end{tabular}
	\begin{tabular}{c|c}
            & Intentional  \\
            \hline
	        Nash Demand Game & Contract Game \\
	\hline
	$\frac{f(s^{NB})}{s^{NB}} = -f'(s^{NB})$ & $\frac{f(s^{KS})}{s^{KS}} = \frac{\bar s_\beta}{\bar s_\alpha}$ \\
	\hline
     $\frac{f(s^*)}{s^*} =- \frac{s^*}{f(s^*)} f'(s^*)$& $\frac{f(s^{E})}{s^{E}} =1$  \\
	\hline
	\end{tabular}

	\caption{\textbf{Bargaining solutions}}
\end{table}
}

\com{
\begin{table}\label{tab:solutions}
\setstretch{1.2}
\scalefont{0.9}
\centering
\subfloat[Unintentional Model]{
	\begin{tabular}{c|c|c}
                & Nash Demand Game & Contract Game \\
	\hline
	Uniform & NB & K-S  \\
	\hline
    Logit & NB & Logit bargaining  \\
    \hline
	\end{tabular}
}
\subfloat[Intentional Model]{
	\begin{tabular}{c|c}
                 Nash Demand Game & Contract Game \\
	\hline
	NB & K-S\\
	\hline
     $\frac{f(s^*)}{s^*} =- \frac{s^*}{f(s^*)} f'(s^*)$& Egalitarian  \\
	\hline
	\end{tabular}
}
	\caption{\textbf{Bargaining solutions}}
\end{table}
}
\com{
\begin{table}
  \centering
  \scalefont{0.8}
  {\renewcommand{\arraystretch}{1.5}
  \begin{tabular}{c|c|c|c|c|c}
    \toprule

      &  \multicolumn{2}{c|}{$\alpha$ favored transition } & \multicolumn{2}{c|}{$\beta$ favored transition } & \multicolumn{1}{c}{$\substack{\text{Stochastic} \\ \text{stability}}$} \\
     \midrule

     & $\beta$ mistake ($A$) & $\alpha$ mistake ($B$) & $\beta$ mistake ($C$) & $\alpha$ mistake ($D$)\\
     \midrule
    Uniform & $\frac{\delta m}{\bar s - \alpha}$  & $\frac{\Delta f(\delta m)}{f(\delta m)}$ & $\frac{\delta}{\delta m }$& $\frac{f(\delta m)}{f (\delta)}$ \\
     $\substack{\text{Unintentional} \\ \text{Intentional}}$& & $ \bigcirc $ & $\bigcirc$&   & $ \frac{\Delta f(\delta m)}{f(\delta m)} \approx \frac{\delta}{\delta m }$ \\
         \midrule
    $\substack{\text{Logit} \\ \text{Unintentional}}$  & $\Delta f(\delta m)\frac{\delta m}{\delta (m+1)}$ & $\delta m \frac{\Delta f(\delta m)}{f(\delta (m))}$ & $f(\delta m) \frac{\delta }{\delta m}$ & $\delta  \frac{f(\delta m)}{f(\delta (m-1))}$ \\
   $s^{NB} > s^E$   & $\bigcirc $ & $\bigtriangleup$ & $\bigcirc$ & $ $ & $ \Delta f(\delta m)\frac{\delta m}{\delta (m+1)} \approx f(\delta m) \frac{\delta }{\delta m}$ \\
     $s^{NB} < s^E$ & $ $ & $\bigcirc $ & $\bigtriangleup$ & $ \bigcirc$ & $\delta m \frac{\Delta f(\delta m)}{f(\delta (m))} \approx \delta  \frac{f(\delta m)}{f(\delta (m-1))} $  \\
     $\substack{\text{Logit} \\ \text{Intentional}}$ &  & $\bigcirc $ & $\bigcirc $ & $  $ &  $\delta m \frac{\Delta f(\delta m)}{f(\delta (m))} \approx f(\delta m) \frac{\delta }{\delta m}$ \\

    \bottomrule
  \end{tabular}
  }
  \caption{Comparison. $\Delta f(\delta m) :=f(\delta m)-f(\delta (m+1))$. Resistances are determined by the minimum of $A,B,C$, and $D$. In the rows tilted with ``unintentional'', ``intentional'', $s^{NB} > s^E$, $s^{NB} < s^E$, and  ``logit intentional'' show the smaller ones. Thus under the logit unintentional dynamic, when $s^{NB} > s^E$, the transition always occurs by $\beta$ population, while $s^{NB} < s^E$, the transition always occurs by $\alpha$ population.  Entries marked by $\bigtriangleup$ and $\bigcirc$ occurs in the minimal tree, but entries marked by $\bigcirc$ are only binding and hence determining the stochastic stable convention. }\label{tab:comp}
\end{table}
}

\com{

\begin{table}
  \centering\
  \scalefont{0.9}
  {\renewcommand{\arraystretch}{1.5}
  \begin{tabular}{c|c|c|c|c}
    \toprule

      &  \multicolumn{2}{c|}{$\alpha$ favored transition } & \multicolumn{2}{c}{$\beta$ favored transition }\\
     \midrule

     & $\beta$ mistake ($A$) & $\alpha$ mistake ($B$) & $\beta$ mistake ($C$) & $\alpha$ mistake ($D$)  \\
     \midrule
    Uniform & $\frac{\delta m}{\bar s - \alpha}$  & $\frac{\Delta f(\delta m)}{f(\delta m)}$ & $\frac{\delta}{\delta m }$& $\frac{f(\delta m)}{f (\delta)}$ \\
     unintentional & & $ \frac{\Delta f(\delta m)}{f(\delta m)}$ & $\frac{\delta}{\delta m }$& $ $ \\
     intentional  &  & $\frac{\Delta f(\delta m)}{f(\delta m)}$ & $\frac{\delta}{\delta m }$& \\
        \midrule
    Logit unintentional  & $\Delta f(\delta m)\frac{\delta m}{\delta (m+1)}$ & $\delta m \frac{\Delta f(\delta m)}{f(\delta (m))}$ & $f(\delta m) \frac{\delta }{\delta m}$ & $\delta  \frac{f(\delta m)}{f(\delta (m-1))}$ \\
   $s^{NB} > s^E$   & $\Delta f(\delta m)\frac{\delta m}{\delta (m+1)}$ & & $f(\delta m) \frac{\delta }{\delta m}$ & $ $ \\
     $s^{NB} < s^E$ & $ $ & $\delta m \frac{\Delta f(\delta m)}{f(\delta (m))}$ & $$ & $ \delta  \frac{f(\delta m)}{f(\delta (m-1))}$ \\

     Logit intentional &  & $\delta m \frac{\Delta f(\delta m)}{f(\delta (m))}$ & $f(\delta m) \frac{\delta }{\delta m}$ & $ $\\
    \bottomrule
  \end{tabular}
  }
  \caption{Comparison. $\Delta f(\delta m) :=f(\delta m)-f(\delta (m+1))$. Resistances are determined by comparing $A \lessgtr B$ and $C \lessgtr D$. In the rows tilted with ``unintentional'', ``intentional'', $s^{NB} > s^E$, $s^{NB} < s^E$, and  ``logit intentional'' show the smaller ones. Thus under the logit unintentional dynamic, when $s^{NB} > s^E$, the transition always occurs by $\beta$ population, while $s^{NB} < s^E$, the transition always occurs by $\alpha$ population.
  \textbf{This table is a bit misleading.}
  }\label{tab:comp}
\end{table}
}

\com{
\begin{table}
  \centering
  \scalefont{0.9}
  {\renewcommand{\arraystretch}{1.8}
  \begin{tabular}{c|c|c|c|c}
    \toprule

      &  \multicolumn{2}{c}{$\alpha$ favored transition} & \multicolumn{2}{c}{$\beta$ favored transition}\\
     \midrule

     & $\beta$ mistake & $\alpha$ mistake & $\beta$ mistake & $\alpha$ mistake  \\
     \midrule
    Uniform & $\frac{\delta m}{\bar s - \alpha}$  & $\frac{\Delta f(\delta m)}{f(\delta m)}$ & $\frac{\delta}{\delta m }$& $\frac{f(\delta m)}{f (\delta)}$ \\
     unintentional & $\frac{\delta m}{\bar s - \alpha}$  & $\boxed{ \frac{\Delta f(\delta m)}{f(\delta m)}}$ & $\boxed{ \frac{\delta}{\delta m }}$& $\frac{f(\delta m)}{f (\delta)}$ \\
     intentional  &  & $\frac{\Delta f(\delta m)}{f(\delta m)}$ & $\frac{\delta}{\delta m }$& \\
        \midrule
    Logit unintentional  & $\Delta f(\delta m)\frac{\delta m}{\delta (m+1)}$ & $\delta m \frac{\Delta f(\delta m)}{f(\delta (m))}$ & $f(\delta m) \frac{\delta }{\delta m}$ & $\delta  \frac{f(\delta m)}{f(\delta (m-1))}$ \\
   $s^{NB} > s^*$   & $\boxed{\Delta f(\delta m)\frac{\delta m}{\delta (m+1)}}$ & $\delta m \frac{\Delta f(\delta m)}{f(\delta (m))}$ & $\boxed{f(\delta m) \frac{\delta }{\delta m}}$ & $\delta  \frac{f(\delta m)}{f(\delta (m-1))}$ \\
     $s^{NB} < s^*$ & $\Delta f(\delta m)\frac{\delta m}{\delta (m+1)}$ & $\boxed{\delta m \frac{\Delta f(\delta m)}{f(\delta (m))}}$ & $f(\delta m) \frac{\delta }{\delta m}$ & $ \boxed{\delta  \frac{f(\delta m)}{f(\delta (m-1))}}$ \\

     Logit intentional &  & $\delta m \frac{\Delta f(\delta m)}{f(\delta (m))}$ & $f(\delta m) \frac{\delta }{\delta m}$ & $ $\\
    \bottomrule
  \end{tabular}
  }
  \caption{Comparison. $\Delta f(\delta m) :=f(\delta m)-f(\delta (m+1))$}\label{tab:comp}
\end{table}
}
\com{
By rearranging the first equation in \eqref{eq:int}, we easily see that the first equation \eqref{eq:int} is the first-order condition maximizing $x y$. Thus, under the unintentional logit choice model, the Nash bargaining solution emerges as a bargaining norm through decentralized evolutionary processes.

}

Figure \ref{fig:tree} compares cost minimizing paths under the various mistake models for the Nash demand game (see also Table \ref{tab:solutions}). Under the unintentional logit choice rule, the Nash bargaining solution arises as a bargaining norm, as in the uniform model (Theorem \ref{thm:NB} and see \citet{BSY1}). However, the underlying mechanism is dramatically different from the uniform model as Figure \ref{fig:tree} shows. Under the unintentional logit choice rule, unlike the uniform mistake model, when $\alpha$ population's share at the Nash bargaining solution ($s^{NB}$) is greater than the equal division ($s^E$), transitions from the egalitarian solution to the Nash bargaining solution are induced by $\beta$-population agents even if these transitions are $\alpha$ -population's favorite transitions (transitions to a higher index convention). This is because under the logit dynamic, this opportunity cost of the deviant play (as well as the threshold fraction)  matters and when the $\alpha$ -population claims a larger share than the $\beta$-population, the opportunity cost can make transitions induced by the $\alpha$-population more costly than those induced by the $\beta$-population. Thus, under the unintentional logit choice model, when the Nash bargaining solution favors the $\alpha$-population, transitions from the egalitarian solution are driven by the $\beta$-population (see $x_1, x_2, x_3, x_4$ in Figure \ref{fig:tree}), leading to the Nash bargaining solution in which the $\alpha$-population claims a larger share than the $\beta$-population.

\begin{figure}
  \centering
  \includegraphics[scale=0.65]{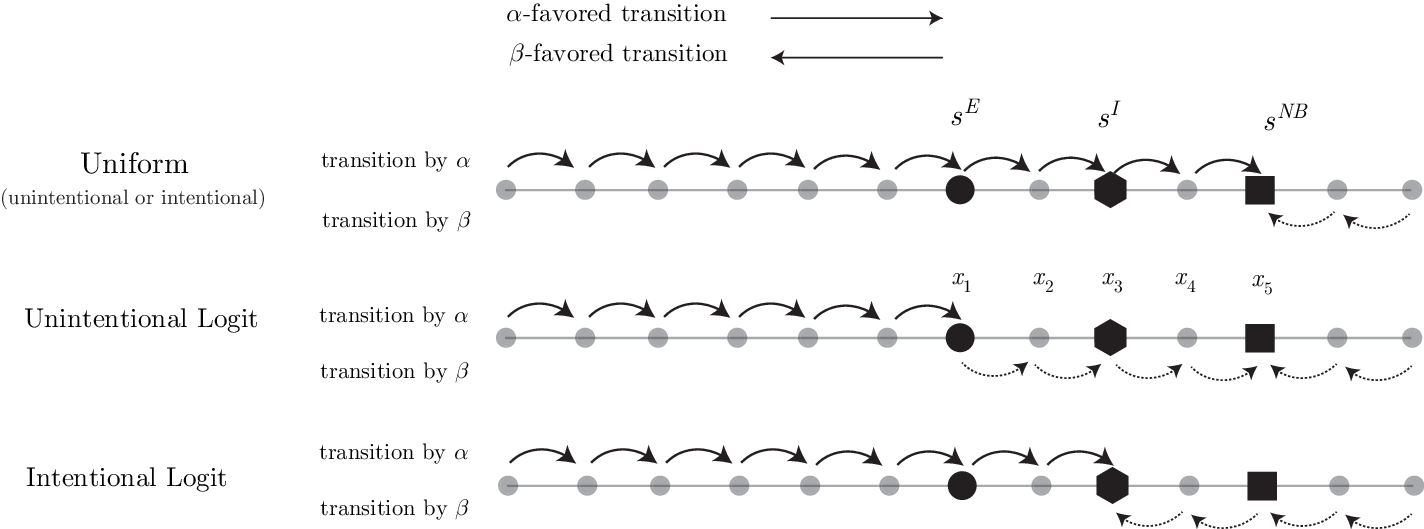}
  \caption{\textbf{Minimal trees}. The closed circle, hexagon, and square correspond to the egalitarian solution ($s^E$), the intentional logit solution ($s^I$), and the Nash bargaining solution ($s^{NB}$), respectively.  Each small dot represents a convention. The regular arrows show transitions induced by the $\alpha$-population, while the dotted arrows show transitions induced by the $\beta$-population. Under the unintentional logit choice rule, when $\alpha$'s share of Nash bargaining solution $s^{NB}$ is greater than that of the egalitarian solution $s^E$, the transitions from the egalitarian solution to the Nash bargaining solution are driven by $\beta$-population.}\label{fig:tree}
\end{figure}

Under the intentional logit choice rule, transitions always occur by those who stand to benefit from the transitions. That is, every transition to a higher index convention is driven by the $\alpha$-population, while every transition to a lower index convention is driven by the $\beta$-population. Thus, some transitions from the egalitarian solution to a higher index convention driven by the $\beta$-population under the unintentional model \com{ (transition OOO in Figure OOO)} are now replaced by  transitions to a lower index convention by the $\beta$-population and this pushes the stochastically stable state toward a more equal convention (see $x_4, x_5$ in Figure \ref{fig:tree}). In this way,  intentionality equalizes the division of the surplus.

   \begin{prop} \label{prop:intent}
    The logit intentional bargaining solution is more equal than the logit unintentional bargaining solution: i.e., either
    \[
        s^{NB} > s^I > s^{E} \text{ or }  s^{E} > s^I > s^{NB}
    \]
   holds.
   \end{prop}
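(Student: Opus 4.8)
The plan is to reduce all three solutions to statements about two monotone functions and then locate $s^I$ by a sign argument. Write $g(x) := f(x)/x$ and $h(x) := -f'(x)$. Since $f'<0$ and $f(x)\ge 0$ on the relevant range, $g'(x) = (f'(x)x - f(x))/x^2 < 0$, so $g$ is strictly decreasing and strictly positive (wherever $f>0$); since $f''<0$, we have $h'(x) = -f''(x) > 0$, so $h$ is strictly increasing. Consequently $g^2$ is strictly decreasing. In this notation the three defining relations in \eqref{eq:l-solution} together with $f(s^E)=s^E$ read $g(s^{NB}) = h(s^{NB})$, $g(s^I)^2 = h(s^I)$, and $g(s^E) = 1$.

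Next I would introduce $\Phi(x) := g(x)^2 - h(x)$, which is strictly decreasing (a decreasing function minus an increasing one) and therefore has $s^I$ as its unique zero; thus the sign of $\Phi$ at any point tells us on which side of $s^I$ that point lies. The whole argument then comes down to evaluating $\Phi$ at $s^{NB}$ and at $s^E$. At $s^{NB}$, substituting $h(s^{NB}) = g(s^{NB})$ gives $\Phi(s^{NB}) = g(s^{NB})\big(g(s^{NB})-1\big)$, whose sign is that of $g(s^{NB})-1$. Because $g$ is strictly decreasing and $g(s^E)=1$, the sign of $g(s^{NB})-1$ is exactly the sign of $s^E - s^{NB}$.

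I would then split into the two cases $s^{NB} > s^E$ and $s^{NB} < s^E$ (the degenerate case $s^{NB}=s^E$ collapses all three points to one). In the first case $g(s^{NB})<1$, so $\Phi(s^{NB})<0$ and hence $s^I < s^{NB}$; to obtain the other inequality I evaluate $\Phi(s^E) = 1 - h(s^E)$ and use the chain $h(s^E) < h(s^{NB}) = g(s^{NB}) < g(s^E)=1$ (the first inequality from $h$ increasing and $s^E<s^{NB}$, the last from $g$ decreasing), which gives $\Phi(s^E)>0$ and hence $s^E < s^I$. Combining, $s^{NB} > s^I > s^E$. The case $s^{NB}<s^E$ is entirely symmetric: $g(s^{NB})>1$ forces $\Phi(s^{NB})>0$ and $s^I>s^{NB}$, while $h(s^E) > h(s^{NB}) = g(s^{NB}) > 1$ forces $\Phi(s^E)<0$ and $s^I<s^E$, yielding $s^E > s^I > s^{NB}$.

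I do not expect a genuine obstacle here; the argument is purely a monotonicity/sign computation. The only care required is the bookkeeping of the two monotonicity directions when chaining $h(s^E)$, $g(s^{NB})$ and $g(s^E)=1$ to pin down the sign of $\Phi(s^E)$, and checking that $g>0$ on the relevant interval so that $g^2$ is genuinely strictly decreasing (which holds wherever $f>0$, i.e., strictly inside the bargaining frontier, where all three solutions lie).
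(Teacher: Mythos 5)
Your proof is correct, and it is organized differently from the paper's. The paper (Lemma \ref{lem:comp}, with the monotonicity facts collected in Lemma \ref{lem:der}) anchors its argument at $s^{NB}$: it uses the \emph{increasing} function $x \mapsto -f'(x)-f(x)/x$, whose zero is $s^{NB}$ (written $s^*$ there), evaluates it at $s^I$, and derives a contradiction twice — first assuming $s^I \geq s^{NB}$, then assuming $s^I \leq s^E$ — each time converting $-f'(s^I)$ into $(f(s^I)/s^I)^2$ via the definition of $s^I$ and concluding that $f(s^I)/s^I$ sits on the wrong side of $1$. You do the mirror image: you anchor at $s^I$, using the \emph{decreasing} function $\Phi = (f/x)^2 + f'$ whose unique zero is $s^I$, and you read off the position of $s^I$ directly from the signs $\Phi(s^{NB}) = g(s^{NB})\left(g(s^{NB})-1\right)$ and $\Phi(s^E) = 1 + f'(s^E)$, with no contradiction needed. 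The two proofs consume the same primitive facts ($f/x$ decreasing, $-f'$ increasing, both following from $f'<0$, $f''<0$, $f\geq 0$), and indeed the quantity $g(g-1)$ appears in both; but your version buys a cleaner logical structure — uniqueness of $s^I$ comes free from strict monotonicity of $\Phi$, and both bounds $s^E < s^I < s^{NB}$ (or reversed) fall out of one device. What the paper's version buys is the extra output it needs downstream: its contradiction argument simultaneously establishes the auxiliary inequalities $-f'(s^I)\,s^I/f(s^I) < 1$ and $-f'(s^{NB}) < 1$ (respectively $>1$ in the other case), which are then used in Lemma \ref{lem:com} to determine which of the four transition costs $r_1, r_2, l_1, l_2$ bind near the stochastically stable convention; your argument would need a small supplement (immediate from $s^I < s^{NB}$, $h$ increasing, and $h(s^I) = g(s^I)^2$ with $g(s^I) > 1$ impossible there) to recover those. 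Your caveat about working where $f>0$ so that $g^2$ is strictly decreasing is the right one, and it holds on the interval spanned by the three solutions since $f$ is decreasing and strictly positive at each of them.
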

   \begin{proof}
     See Lemma \ref{lem:comp}.
   \end{proof}

\citet{HLNN2016} also find that for the contract game (the coordination game in which all off-diagonal payoffs are zeroes), intentionality makes the stochastically stable  division of surpluses between the two populations more equal under the logit choice rule. However, the
mechanism under which equality is achieved is again different. Under the contract game,
intentionality equalizes the threshold fractions of one population inducing a new
best response for the other population (at $1/2$), and equality is achieved. For the
Nash demand game, intentionality replaces unfavorable transitions for the deviant population
by the favorable transitions and a more equal convention arises.

\tocless \section{Summary \label{sec:sum}}

Relying on positive feedback conditions and the relative strengths of these effects we developed methods to identify the most likely paths for evolutionary population dynamics under the logit rule. We identified two main factors determining the minimum cost path escaping from a convention: (1) the existence of  positive feedback effects, and (2) the relative strengths of positive feedback effects. This leads us to simple but powerful comparison principles that drastically reduce the number of candidate paths for minimizing the escaping cost from a convention. To summarize, we showed that the path with minimal cost involves only the repeated identical mistakes of the agents. We also applied our finding to the bargaining problem to find the stochastically stable states and obtain a new bargaining convention.


\tocless
\newpage{}

\bibliographystyle{chicago}
\bibliography{evolutionary_games}

\newpage{}
\appendix


\scalefont{0.8}

\section*{\large{Appendix: Only for online publication}}

\tableofcontents

\newpage

\renewcommand{\thesection}{A}
\section{Exit problem: one population models \label{appen:one-pop}}


\begin{proof}[\textbf{Proof of Lemma \ref{lem:Pos-feedbacks}}]
(i) Since $c^{(n)}(x, x^{i, k}) = \pi(\bar{m},x)-\pi(k,x)$, we obtain
\begin{align*}
I^{(n)}(\gamma_{2})-I^{(n)}(\gamma_{1})
 & =[\pi(\bar{m},x)-\pi(k,x)+\pi(\bar{m},x^{i,k})-\pi(l,x^{i,k})]\\
 & -[\pi(\bar{m},x)-\pi(k,x)+\pi(\bar{m},x^{\bar{m},k})-\pi(l,x^{\bar{m},k})]\\
 & =\frac{1}{n} \left( [-A_{\bar{m}i}+A_{\bar{m}k}+A_{li}-A_{lk}]-[-A_{\bar{m}\bar{m}}+A_{\bar{m}k}+A_{l\bar{m}}-A_{lk}]\right)\\
 & =\frac{1}{n} \left( A_{\bar{m}\bar{m}}-A_{l\bar{m}}- A_{\bar{m}i}+ A_{li}\right) > 0
\end{align*}
from the \textbf{MBP}. \\
(ii) We find  that
\begin{align*}
   & [I^{(n)}(\zeta_2) - I^{(n)}(\zeta_1)] + [I^{(n)}(\zeta_2) - I^{(n)}(\zeta_3)] \\
  = & [\pi(\bar m, x) - \pi(i, x) + \pi(\bar m, x^{\bar m, i}) - \pi(j, x^{\bar m, i}) + \pi(\bar m, x^{(\bar m,i)(\bar m, j)}) - \pi(i, x^{(\bar m, i)(\bar m, j)})] \\
   - &  [\pi(\bar m, x) - \pi(j, x) + \pi(\bar m, x^{\bar m, j}) - \pi(i, x^{\bar m, j}) + \pi(\bar m, x^{(\bar m, j)(\bar m, i)}) - \pi(i, x^{(\bar m, j)(\bar m, i)}))] \\
   + & [ \pi(\bar m, x) - \pi(i, x) + \pi(\bar m, x^{\bar m, i}) - \pi(j, x^{\bar m, i}) + \pi(\bar m, x^{(\bar m, i)(\bar m, j)}) - \pi(i, x^{(\bar m, i)(\bar m, j)})] \\
    -& [ \pi(\bar m, x) - \pi(i, x) + \pi(\bar m, x^{\bar m, i}) - \pi(i, x^{\bar m, i}) + \pi(\bar m, x^{(\bar m, i)(\bar m, i)}) - \pi(j, x^{(\bar m, i)(\bar m, i)})] \\
   = & [ A_{\bar m i} - A_{\bar m j} + A_{j \bar m } - A_{ji} - A_{i \bar m} + A_{ij}] +  [A_{i \bar m } - A_{i j} + A_{\bar m j} - A_{\bar m i} - A_{j \bar m} + A_{ji}] \\
   = & 0
\end{align*}
From this we obtain the desired results.
\end{proof}

\begin{proof}[\textbf{Proof of Proposition \ref{prop:red}}]
\smallskip

\textbf{Part (i).} In the proof, we suppress the superscript $(n)$. Let  $\gamma=(x_{1},x_{2},\cdots,x_{T})$ be a path in
$\mathcal{G}_{\bar{m}}  \setminus \mathcal{J}_{\bar{m}}$.  We recursively construct a  new path $\tilde{\gamma} \in \mathcal{J}_{\bar{m}}$ with a cost lower than or equal to the cost of $\gamma$.

For this, let  $t$ be the greatest number such that $x_{t+1}=(x_{t})^{i,l}$ with $i \neq \bar{m}, l$. We distinguish several cases.
If $t=T-1$, we consider a new path $\tilde{\gamma}$ obtained by modifying the last transition as follows:
\[
\tilde{\gamma}:=(x_{1},x_{2},\cdots,x_{T-1},(x_{T-1})^{\bar{m},l}).
\]
Then, we have $I(\tilde{\gamma})= I(\gamma)$, and show that the path still exits $D(e_{\bar{m}})$.  To prove this, we only need to show that if $z \notin D(e_{\bar{m}})$ then $z^{\bar{m}, i} \notin D(e_{\bar{m}})$, because this implies that if $(x_{T-1})^{i,l} \notin D(e_{\bar m})$, then $(x_{T-1})^{\bar m, l} \notin D(e_{\bar m})$. Now, suppose that $z \notin D(e_{\bar{m}})$ and that there exists $k$ such that $\pi(\bar{m},z) < \pi(k, z)$. Then, we have
\[
[\pi(k,z^{\bar{m}, i}) - \pi(\bar{m}, z^{\bar{m}, i})] - [\pi(k,z) - \pi(\bar{m}, z)]
=\frac{1}{n}\left( A_{ki} -A_{k\bar{m}} - A_{\bar{m}, i} + A_{\bar{m}, \bar{m}}\right) \ge 0
\]
by Condition \textbf{A}.  Thus, we have $[\pi(k,z^{\bar{m}, i} ) - \pi(\bar{m}, z^{\bar{m}, i})] \ge  [\pi(k,z) - \pi(\bar{m}, z)]  >0$
and so $z^{\bar{m}, i} \notin D(e_{\bar{m}})$.

Now, suppose that $t<T-1$. Then we have $x_{t+1}=(x_{t})^{i,l}$ and $x_{t+2}=(x_t)^{(i,l)(\bar{m},k)}$ for $k \neq \bar m$. Note that $k \neq \bar m$ and $l \neq i$. Now we need to distinguish four cases. \\
Case 1: If $k=i, l = \bar m$, then $x_{t+1} = (x_t)^{i, \bar m}, x_{t+2} = x_t$. Thus, we consider $\tilde{\gamma} =(x_{1},\cdots,x_t, x_{t+2}, \cdots, x_{T})$; clearly, $I(\tilde{\gamma}) \leq I(\gamma)$, since $c(x_{t}, x_{t+1})=0,  c(x_{t+1}, x_{t+2})\geq 0, \textrm{  and } c(x_t, x_{t+2}) =0$. \\
Case 2:  If $k=i, l \neq \bar m$ then  $x_{t+2}=(x_t)^{(i,l)(\bar{m},k)}=(x_t)^{\bar{m},l}$. Again, we consider the path $\tilde{\gamma} =(x_{1},\cdots,x_t, x_{t+2}, \cdots, x_{T})$ and find that $I(\tilde{\gamma}) \leq I(\gamma)$  because we have $c(x_t, x_{t+1}) = c(x_t, x_{t+2}) = \pi(m, x_t) - \pi(l, x_t)$ and $c(x_{t+1}, x_{t+2}) \geq 0 $.  \\
Case 3: If $k \neq i, l = \bar m$, then $x_{t+2} = x_t^{(i, \bar m)(\bar m, k)} = (x_t)^{i, k}$. Again, let $\tilde{\gamma} =(x_{1},\cdots,x_t, x_{t+2}, \cdots, x_{T})$. Then we have $c(x_t, x_{t+1}) = 0$ and
\begin{align*}
	c(x_{t+1}, x_{t+2}) - c(x_t, x_{t+2}) & =c(x_t^{i,l}, x_t^{(i,l)(\bar m ,k)}) - c(x_t, x_t^{(i,k)}) \\
	&= \pi(\bar m, x_t^{i,\bar m}) - \pi(k, x_t^{i,\bar m}) -[ \pi(\bar m,  x_t) - \pi(k, x_t)] \\
	& = \frac{1}{n} (A_{\bar m \bar m } -A_{k \bar m}-[A_{ \bar m i}-  A_{k i}] ) \geq 0
\end{align*}
from the \textbf{MBP}, implying that $ I(\tilde \gamma) \leq I(\gamma)$. \\
Case 4: If $k \neq i,\bar m$ and $l \neq i, \bar m$, then we can apply Lemma \ref{lem:Pos-feedbacks}.  We modify the path by considering  the alternative
transitions, ${\tilde x}_{t+1}= (x_{t})^{\bar{m},l}$ and ${\tilde x}_{t+2} =(x_{t})^{(\bar{m},l)(i,k)}$. If $(x_{t})^{\bar{m},l} \notin D(e_{\bar{m}})$, then we define
\[
\tilde{\gamma}:=(x_{1},x_{2},\cdots,x_{t},(x_{t})^{\bar{m},l})
\]
and because $c(x_t, (x_t)^{\bar m, l})= c(x_t, (x_t)^{i,l})$ and $c(x_{t+1}, x_{t+2}) \geq 0$, we obtain $I(\tilde{\gamma}) \leq I(\gamma)$.
If $(x_{t})^{\bar{m},l}\in D(e_{\bar{m}})$, then we define
\[
\tilde{\gamma}:=(x_{1},x_{2},\cdots,x_{t},(x_{t})^{\bar{m},l},(x_{t})^{(\bar{m},l)(i,k)},\cdots,x_{T}).
\]
to find that $I(\tilde{\gamma}) \leq I(\gamma)$ from Lemma \ref{lem:Pos-feedbacks}.
Proceeding inductively we construct a path $\tilde{\gamma} \in\mathcal{J}_{\bar{m}}$
with a cost lower than or equal to the cost of $\gamma$.

\smallskip
\noindent \textbf{Part (ii).}
We denote by $c(a,a^{i,j,\rho})$ be the cost of a path from $a$ to $a^{i,j,\rho}$ in which agents switch from $i$ to $j$, $\rho$-times consecutively and let $\pi(k, x-y) := \pi(k, x) - \pi(k, y)$ and $\gamma_{a \rightarrow b}$ be a path from $a$ to $b$. We first show the following lemma.
\begin{lem}\label{lem:1st}
    We have the following results.
    \begin{align*}
        & (i) \quad  c(a,a^{\bar m, k, \rho}) - c(b, b^{\bar m, k, \rho})= \rho[(\pi(\bar m, a) -\pi(k, a)) - (\pi(\bar m, b)-\pi(k,b))] \\
       & (ii) \quad \eta[c(a,a^{\bar m, k, \rho}) - c(b, b^{\bar m, k, \rho})] + \rho[c(b^{k, \bar m, \eta},b) - c(a^{k,\bar m, \eta}, a)] =0 \\
       & (iii) \quad \eta [ I(\gamma_{a^{\bar m, k, \rho} \rightarrow b^{\bar m , k, \rho}})-I(\gamma_{a \rightarrow b})] + \rho[I(\gamma_{a^{k,\bar m , \eta} \rightarrow b^{k, \bar m , \eta}})- I(\gamma_{a \rightarrow b} )]=0 \\
       & \text{ where } \gamma_{a^{k,\bar m , \eta} \rightarrow b^{k, \bar m , \eta}}, \gamma_{a^{k,\bar m , \eta} \rightarrow b^{k, \bar m , \eta}}, \text{and   } \gamma_{a \rightarrow b} \text{ consist of the same transitions}.
    \end{align*}
\end{lem}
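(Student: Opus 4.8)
The plan is to reduce all three identities to one structural fact: throughout the basin $D^{(n)}(e_{\bar m})$ the one-step cost is \emph{affine} in the state. Indeed, by \eqref{eq:cost-boa} we have $c^{(n)}(x,x^{i,j})=\pi(\bar m,x)-\pi(j,x)$, and each $\pi(\cdot,x)$ is linear in $x$. The whole proof is then driven by the observation that, along a fixed pair of strategies, the \emph{slope} of the relevant payoff gap is independent of the base point, so that all base-point dependence enters only through the endpoints.

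For (i), I would record the $\rho$ consecutive $\bar m\to k$ switches from $a$ as passing through $a_s=a+\frac{s}{n}(e_k-e_{\bar m})$, $s=0,\dots,\rho$, so that $c(a,a^{\bar m,k,\rho})=\sum_{s=0}^{\rho-1}[\pi(\bar m,a_s)-\pi(k,a_s)]$. Linearity gives $\pi(\bar m,a_s)-\pi(k,a_s)=[\pi(\bar m,a)-\pi(k,a)]+\tfrac{s}{n}\Delta_{\bar m k}$, where
\begin{equation*}
\Delta_{\bar m k}:=(A_{\bar m k}-A_{kk})-(A_{\bar m\bar m}-A_{k\bar m})
\end{equation*}
depends only on $\bar m,k$ and not on the base point. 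Summing, $c(a,a^{\bar m,k,\rho})=\rho[\pi(\bar m,a)-\pi(k,a)]+\tfrac{\rho(\rho-1)}{2n}\Delta_{\bar m k}$, with the identical formula holding for $b$. Subtracting, the quadratic correction $\tfrac{\rho(\rho-1)}{2n}\Delta_{\bar m k}$ cancels and (i) follows.

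For (ii), I would first identify the ``reverse'' paths: since $b^{k,\bar m,\eta}=b+\frac{\eta}{n}(e_{\bar m}-e_k)$, we have $b=(b^{k,\bar m,\eta})^{\bar m,k,\eta}$, so $c(b^{k,\bar m,\eta},b)$ is the cost of $\eta$ consecutive $\bar m\to k$ switches based at $b^{k,\bar m,\eta}$, and (i) applies with base points $b^{k,\bar m,\eta}$ and $a^{k,\bar m,\eta}$. Evaluating the gap $\pi(\bar m,\cdot)-\pi(k,\cdot)$ at these shifted base points by linearity, the shift contributions $\mp\frac{\eta}{n}\Delta_{\bar m k}$ cancel in the difference, leaving
\begin{equation*}
c(b^{k,\bar m,\eta},b)-c(a^{k,\bar m,\eta},a)=\eta\,\big[(\pi(\bar m,b)-\pi(k,b))-(\pi(\bar m,a)-\pi(k,a))\big].
\end{equation*}
This is exactly the negative of the bracket in (i) scaled by $\eta$ in place of $\rho$, so forming $\eta\times(\text{i})+\rho\times(\text{this})$ produces $\eta\rho X-\rho\eta X=0$, where $X:=(\pi(\bar m,a)-\pi(k,a))-(\pi(\bar m,b)-\pi(k,b))$.

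For (iii), the three paths consist of the \emph{same} ordered transition sequence $(i_t,j_t)$ but are based at points differing by $v_+=\frac{\rho}{n}(e_k-e_{\bar m})$ and $v_-=-\frac{\eta}{n}(e_k-e_{\bar m})$. By affineness, translating every state of a path by a fixed $v$ changes the cost of the transition into $j_t$ by the base-point-independent amount $\pi(\bar m,v)-\pi(j_t,v)$; hence $I(\gamma_{a^{\bar m,k,\rho}\to b^{\bar m,k,\rho}})-I(\gamma_{a\to b})=\frac{\rho}{n}\sum_t\delta_{j_t}$ and $I(\gamma_{a^{k,\bar m,\eta}\to b^{k,\bar m,\eta}})-I(\gamma_{a\to b})=-\frac{\eta}{n}\sum_t\delta_{j_t}$, with the common slopes $\delta_{j_t}:=\pi(\bar m,e_k-e_{\bar m})-\pi(j_t,e_k-e_{\bar m})$; the $\eta,\rho$ weighting makes the combination vanish. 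The only point requiring care is that every state visited by the shifted paths stay in $D^{(n)}(e_{\bar m})$, so that \eqref{eq:cost-boa} is valid at each step (this holds for the paths of Propositions \ref{prop:straight} and \ref{prop:red}). The main, though mild, obstacle is the bookkeeping in (i): verifying that the quadratic correction term is genuinely base-point independent and cancels, since once that cancellation is secured, (ii) and (iii) reduce to weighted combinations of the same affine computation.
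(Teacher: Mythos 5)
Your proposal is correct and follows essentially the same route as the paper's proof: part (i) by summing the affine payoff gaps along the straight-line path so the quadratic correction $\frac{\rho(\rho-1)}{2n}(-A_{\bar m\bar m}+A_{\bar m k}+A_{k\bar m}-A_{kk})$ is base-point independent and cancels, part (ii) by recognizing $c(b^{k,\bar m,\eta},b)$ as a reversed instance of (i) whose shift terms cancel by linearity, and part (iii) by showing the weighted combination vanishes transition-by-transition since translating the path changes each one-step cost by a state-independent amount. Your explicit remark that all shifted states must remain in $D^{(n)}(e_{\bar m})$ for \eqref{eq:cost-boa} to apply is a point the paper leaves implicit, but it does not change the argument.
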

\begin{proof}
  For (i), we have
  \begin{align*}
      c(a, a^{\bar m, k, \rho}) & = \pi(\bar m, x) - \pi(k, x) +\pi(\bar m, x^{\bar m ,k}) - \pi(k, x^{\bar m , k})+\cdots +\pi(\bar m, x^{\bar m ,k,\rho-1}) - \pi(k, x^{\bar m , k, \rho-1}) \\
     & =\rho( \pi(\bar m , x) - \pi(k,x)) + \frac{\rho(\rho-1)}{2} \frac{1}{n}(-A_{\bar m \bar m} + A_{\bar m k} + A_{k  \bar m } - A_{k k}).
  \end{align*}
  For (ii), first using (i) (by setting $b^{k, \bar m, \eta} =a$), we first find that
  \[
    c(b^{k, \bar m, \eta},b) - c(a^{k,\bar m, \eta}, a)= \eta[(\pi(\bar m, b^{k, \bar m ,\eta}) -\pi(k, b^{k, \bar m ,\eta})-(\pi(\bar{m}, a^{k,\bar m, \eta})-\pi(k,a^{k,\bar m, \eta}))].
 \]
 Then we have
 \begin{align*}
 	&\eta[c(a,a^{\bar m, k, \rho}) - c(b, b^{\bar m, k, \rho})] + \rho[c(b^{k, \bar m, \eta},b) - c(a^{k,\bar m, \eta}, a)]  \\
	= & \eta \rho [ (\pi(\bar m, a) -\pi(k, a)) - (\pi(\bar m, b)-\pi(k,b))]
	   + \eta \rho [(\pi(\bar m, b^{k, \bar m ,\eta}) -\pi(k, b^{k, \bar m ,\eta})-(\pi(\bar{m}, a^{k,\bar m, \eta})-\pi(k,a^{k,\bar m, \eta}))]  = 0	
 \end{align*}
 For (iii), suppose that $(a, b) = (a_1, a_2, \cdots, a_T)$ where $a_T= b$. Then $a_{t+1} = (a_t)^{i_t, l_t}$ for some $i_t, l_t$.  First we find
\begin{align*}
	& \eta[c( {a_t}^{\bar m, k, \rho}, {({a_t}^{\bar m, k, \rho}})^{i_t, l_t}) -c(a_t, {a_t}^{i_t, l_t})]  + \rho[c({a_t}^{k,\bar m , \eta}, ({a_t}^{k, \bar m , \eta})^{i_t, l_t})- c(a_t, {a_t}^{i_t, l_t})] \\
	=& \eta[ \pi(\bar m, {a_t}^{\bar m, k, \rho} -a_t) - \pi(l_t, {a_t}^{\bar m, k, \rho} - a_t)] +\rho[ \pi(\bar m, {a_t}^{k, \bar m, \eta} -a_t) - \pi(l_t, {a_t}^{ k, \bar m, \eta} -a_t)] \\
	=& \frac{1}{n} \eta[ \rho(- A_{\bar m \bar m} + A_{\bar m k}) - \rho( - A_{l_t \bar m} + A_{l_t k})] + \rho[ \eta (-A_{\bar m k} + A_{\bar m \bar m}) - \eta ( - A_{l_t k}+A_{l_t \bar m})] = 0
\end{align*}
We thus find that
	\begin{align*}
		& \eta [ c(a^{\bar m, k, \rho}, b^{\bar m , k, \rho})-c(a,b)] + \rho[c(a^{k,\bar m , \eta}, b^{k, \bar m , \eta})- c(a,b)] \\
		= & \sum_{t=1}^{T-1} \eta[c( {a_t}^{\bar m, k, \rho}, {({a_t}^{\bar m, k, \rho}})^{\bar m, l_t}) -c(a_t, {a_t}^{\bar m, l_t})]  + \rho[c({a_t}^{k,\bar m , \eta}, ({a_t}^{k, \bar m , \eta})^{\bar m, l_t})- c(a_t, {a_t}^{\bar m, l_t})] =  0
	\end{align*}
\end{proof}
\noindent Next, we show the following extended version of comparison principle 2, where we e denote by $(\bar m, k; \eta)$ $\eta$-times consecutive transitions from $\bar m$ to $k$. Also, let $x^{\bar m, k, \eta}$ be a new state induced by the agents' $\eta$-times consecutive switches from $\bar m$ to $k$ from an old state, $x$.
\begin{lem}
\label{prop:straight} Consider the following paths (see Panel C, Figure \ref{fig:comparison-2}):
\[
\begin{array}{ccccccccccc}
\gamma & : & x & \xrightarrow[(\bar{m},k;\eta)]{} & x^{\bar{m},k, \eta} & \xrightarrow[\,\,\,\,\,\,\,\,\,\,\,\,\,\,\,\,]{} & y & \cdots & z & \xrightarrow[(\bar{m},k; \rho)]{} & z^{\bar{m},k, \rho}\\
\gamma' & : & x & \xrightarrow[(\bar{m},k; \eta)]{} & x^{\bar{m},k,\eta} & \xrightarrow[(\bar{m},k; \rho)]{} & x^{(\bar{m},k, \eta)(\bar{m},k, \rho)} & \xrightarrow[\,\,\,\,\,\,\,\,\,\,\,\,\,\,\,\,]{} & y^{\bar{m},k,\rho} & \cdots & z^{\bar{m},k, \rho}\\
\gamma'' & : & x & \xrightarrow[\,\,\,\,\,\,\,\,\,\,\,\,\,\,\,\,\,\,\,]{} & y^{k,\bar{m},\eta} & \cdots & z^{k,\bar{m}, \eta} & \xrightarrow[(\bar{m},k ;\eta)]{} & z & \xrightarrow[(\bar{m},k ;\rho)]{} & z^{\bar{m},k, \rho}
\end{array}
\]
where $\cdots$ denotes the same transitions. Then the following holds:
\[
    \eta [ I^{(n)}(\gamma)- I^{(n)} (\gamma')  ] + \rho [I^{(n)} (\gamma)-I^{(n)}(\gamma'')] =0.
\]
Thus, either
\[
I^{(n)}(\gamma) \geq I^{(n)}(\gamma')\,\,\textrm{ or }\,\, I^{(n)}(\gamma) \geq I^{(n)}(\gamma'')
\]
holds.\end{lem}

\begin{proof}
    We find that
\begin{align*}
 	   & \eta [I(\gamma')-I(\gamma)]+\rho [I(\gamma'')-I(\gamma)] \\
	  = & \underbrace{\eta [c(x^{\bar m, k, \eta}, x^{({\bar m, k, \eta})({\bar m, k, \rho})}) - c(z, z^{\bar m, k , \rho})] + \rho[c(z^{k, \bar m, \eta}, z) - c( x, x^{\bar m, k , \eta})] }_{\text{(i)}}\\
	 + & \underbrace{\eta [c(x^{({\bar m, k, \eta})({\bar m, k, \rho})}, y^{\bar m, k, \rho} ) - c(x^{\bar m, k, \eta}, y)] + \rho[c(x, y^{k, \bar m, \eta}) - c( x^{\bar m, k , \eta},y)]}_{\text{(ii)}} \\
	 + & \underbrace{\eta[ I(\gamma_{y^{\bar m k, \rho} \rightarrow z^{\bar m , k, \rho}}) - I(\gamma_{y \rightarrow z})] + \rho[ I(\gamma_{y^{k, \bar m, \eta} \rightarrow z^{k, \bar m , \eta}}) - I(\gamma_{y\rightarrow z})] }_{\text{(iii)}}
\end{align*}
Then for (i), if we let $a= x^{\bar m, k, \eta}$ and $b=z$ in Lemma \ref{lem:1st} (ii), we have (i)$ =0$. For (ii), if we let $a=x^{(\bar m, k, \eta)}$ and $b=y$ in Lemma \ref{lem:1st} (ii), we have (ii)$=0$. For (iii), if we let $a=y$ and $b=z$ in Lemma \ref{lem:1st} (iii), we have (iii)$=0$.
\end{proof}

\noindent Then, \textbf{Part (ii)} follows from Lemma \ref{prop:straight}. Suppose that $\gamma \in \mathcal{K}_{\bar m}$. Then, by applying Lemma \ref{prop:straight} repeatedly, we collect the same transitions and find $\tilde \gamma \in  \mathcal{K}_{\bar{m}}$ such that $I(\tilde \gamma) \leq I(\gamma) $. Thus we obtain the desired result.
\end{proof}


\noindent \begin{proof}[\textbf{Proof of Proposition \ref{prop:main-approx}}]

Recall that
\begin{align*}
D^{(n)}(e_{\bar{m}}): & =\{x \in \Delta^{(n)}:\,\pi(\bar{m},x)\geq\pi(l,x) \textrm{ for all } l \}
\end{align*}
and let
\begin{equation}\label{eq:cont-basin}
  \bar{D}(e_{\bar{m}}):=\{p\in\Delta:\, \pi(\bar{m},p)\ge \pi(l,p)\,\,\textrm{for\, all \,\,}l\}
\end{equation}
and $\partial \bar D(e_{\bar m})$ be the boundary of $\bar D (e_{\bar m})$.
The following lemma serves to find the continuous version of the cost function, $c(x,x^{i,j})$. Suppose that  $p,q \in \Delta$  with $q=p + \alpha(e_i-e_j)$ for some $\alpha >0$.  If
$p,q \in \bar D(e_{\bar{m}})$,
we define
\begin{equation}
\bar{c}(p,q):=\frac{1}{2}(p_{j}-q_{j})(\pi (\bar m,p+q)- \pi(i,p+q)).
\label{eq:con-cost}
\end{equation}

\begin{lem}\label{lem:con_cost} Let $\gamma=\gamma_{x \to y}$ be a straight-line path between $x^{(n)}$ and $y^{(n)}$  in $D(e_{\bar{m}}) \subset \Delta^{(n)}$
with  $y^{(n)} = x^{(n)} + \frac{M^{(n)}}{n} ( e_i - e_{j})$. Suppose that $x^{(n)} \rightarrow p$ and $y^{(n)} \rightarrow q$ for $p, q \in \Delta$ as $n \rightarrow \infty$.  Then,
\begin{align}\label{eq:cost-n}
\lim_{n \rightarrow \infty} \frac{1}{n} I^{(n)}(\gamma_{x \to y}) =&  \frac{1}{2} (p_j-q_j) ( \pi\left(\bar{m}, p +q \right) - \pi\left(i, p + q\right) ) \nonumber \end{align}
\end{lem}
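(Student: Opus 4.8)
The plan is to evaluate the discrete cost sum explicitly and recognize its limit as a Riemann-type sum together with one quadratic correction term. First I would observe that the straight-line path $\gamma_{x \to y}$ consists of $M^{(n)}$ single-agent transitions, each an agent switching \emph{into} strategy $i$ (and out of $j$), so the intermediate states are $x_k = x^{(n)} + \frac{k}{n}(e_i - e_j)$ for $k = 0, 1, \ldots, M^{(n)}-1$. Because $D(e_{\bar m})$ is cut out by the linear inequalities $\pi(\bar m, \cdot) \geq \pi(l, \cdot)$ it is convex, and since each $x_k$ lies on the segment joining $x^{(n)}$ and $y^{(n)}$---both assumed to be in the basin---every $x_k$ belongs to $D(e_{\bar m})$. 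Hence by \eqref{eq:cost-boa} the cost of the $k$-th transition is $\pi(\bar m, x_k) - \pi(i, x_k)$, the payoff loss from adopting the new strategy $i$ at $x_k$ (note this is independent of the old strategy $j$).

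Next I would set $g(x) := \pi(\bar m, x) - \pi(i, x)$, which is linear in $x$ since each $\pi(k, \cdot)$ is. Linearity gives $g(x_k) = g(x^{(n)}) + \frac{k}{n}\bigl(g(e_i) - g(e_j)\bigr)$, so the total cost reduces to an arithmetic series:
\[
  I^{(n)}(\gamma_{x \to y}) = \sum_{k=0}^{M^{(n)}-1} g(x_k) = M^{(n)} g(x^{(n)}) + \frac{g(e_i) - g(e_j)}{n} \cdot \frac{M^{(n)}(M^{(n)}-1)}{2}.
\]
Dividing by $n$ and letting $n \to \infty$, I would use that $q = p + \alpha(e_i - e_j)$ forces $\tfrac{M^{(n)}}{n} \to \alpha = p_j - q_j$ (read off the $j$-th coordinate), together with $\tfrac{M^{(n)}(M^{(n)}-1)}{n^2} \to \alpha^2$ and $g(x^{(n)}) \to g(p)$. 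This yields
\[
  \lim_{n \to \infty} \frac{1}{n} I^{(n)}(\gamma_{x \to y}) = \alpha\, g(p) + \frac{\alpha^2}{2}\bigl(g(e_i) - g(e_j)\bigr).
\]

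Finally I would match this against the claimed expression $\tfrac{1}{2}(p_j - q_j)\, g(p + q)$. Using linearity of $g$ once more, $g(p+q) = g(p) + g(q)$ and $g(q) = g(p) + \alpha\bigl(g(e_i) - g(e_j)\bigr)$, so $\tfrac{1}{2}\alpha\, g(p+q) = \alpha\, g(p) + \tfrac{\alpha^2}{2}(g(e_i) - g(e_j))$, which is exactly the limit just obtained. I do not anticipate a genuine obstacle: the argument is an elementary summation and a limit. The only points requiring care are bookkeeping---keeping the transition direction straight so that the per-step cost is $g = \pi(\bar m, \cdot) - \pi(i, \cdot)$---and checking that the quadratic correction $\tfrac{\alpha^2}{2}(g(e_i)-g(e_j))$ produced by the arithmetic series is precisely what the ``midpoint'' evaluation $\pi(\cdot, p+q)$ in \eqref{eq:con-cost} encodes.
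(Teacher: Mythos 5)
Your proposal is correct and follows essentially the same route as the paper's proof: both evaluate the per-step cost $\pi(\bar m,\cdot)-\pi(i,\cdot)$ along the intermediate states, sum the resulting arithmetic series via $1+2+\cdots+(K-1)=K(K-1)/2$, and pass to the limit using $\frac{M^{(n)}}{n}\to p_j-q_j$. Your reformulation through the linear function $g$ and the final midpoint identity $\tfrac{1}{2}\alpha\, g(p+q)=\alpha g(p)+\tfrac{\alpha^2}{2}\bigl(g(e_i)-g(e_j)\bigr)$ is just a scalar repackaging of the paper's computation, in which the states themselves are summed and rewritten as $M^{(n)}\frac{x^{(n)}+y^{(n)}}{2}$ plus a vanishing correction.
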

\begin{proof}  Since the path lies in $D(e_{\bar{m}})$ we have
\begin{equation}\label{eq000}
I^{(n)}(\gamma_{x \to y}) = \sum_{\iota=0}^{M^{(n)}-1}  \left[ \pi\left(\bar{m},x^{(n)} + \frac{\iota}{n} (e_i-e_j)\right)- \pi\left(i ,x^{(n)} + \frac{\iota}{n} (e_i-e_j)\right) \right] \,.
\end{equation}
Now using that $1 + 2+\cdots + K-1= (K-1)K/2$, we obtain
\begin{equation}\label{eq001}
\sum_{t=0}^{M^{(n)}-1} (x^{(n)} + \frac{\iota}{n} (e_i-e_j)) \,=\,  M^{(n)} x^{(n)} + \frac{M^{(n)}(M^{(n)}-1)}{2} \frac{1}{n} (e_i-e_j) \,=\,  M^{(n)} \frac{x^{(n)}+y^{(n)}}{2} - \frac{M^{(n)}}{2} \frac{1}{n}(e_i-e_j) \,.
\end{equation}
By combining equations \eqref{eq000} and \eqref{eq001} and noting that $\frac{M^{(n)}}{n} \rightarrow p_j - q_j$ as $n \rightarrow \infty$, we obtain the desired result.
\end{proof}
\com{
From \eqref{eq:cost-n}, we see that the first term of the cost of a straight-line path is essentially independent of $n$, and by construction if $x^{(n)}, y^{(n)} \in \Delta^{(n)}$ converge to $p$ and $q$ as $n \to \infty$ then $\frac{1}{n} I^{(n)}( \gamma_{x^{(n)}\to y^{(n)}})$ converges to $\bar{c}(p,q)$ in equation \eqref{eq:con-cost}.}
\noindent The expression of costs for continuous paths in Lemma 2 in \citet{SS2014}  is the same as the cost expression in Lemma \ref{lem:con_cost}, since  continuous paths in Lemma 2 in \citet{SS2014}  belong to the special class of paths obtained by comparison principles. Next, we prove the following lemma.
\begin{lem}
\label{lem:con-min}Suppose that $X^{(n)} \subset X $ and  $f:X\rightarrow\mathbb{R}$
is a continuous function that admits a minimum and $f^{(n)}:X\rightarrow\mathbb{R}$. Suppose also that for all $x \in X$, there exists $\{x^{(n)}\}$ such that $x^{(n)}\in X^{(n)}$, $x^{(n)} \rightarrow x$, and $f^{(n)}(x^{(n)}) \rightarrow f(x)$. Then, we have
\[
\min_{x\in X^{(n)}}f^{(n)}(x)\rightarrow\min_{x\in X}f(x)
\]

\begin{proof}
Let $\{x^{(n)}\}_{n}$ be the sequence of minimizers of $\min_{x\in X^{(n)}}f^{(n)}(x)$
and $x^{*}$ be the minimizer of $\min_{x\in X}f(x)$. Suppose that
$f^{(n)}(x^{(n)})$ does not converge to $f(x^{*})$. Then there exist
$\epsilon_{0}>0$ and $\{n_{k}\}$ such that
\begin{equation}
f^{(n_{k})}(x^{(n_{k})}) \geq f(x^{*})+ \epsilon_{0}.
\label{eq:lim1}
\end{equation}
Further, from the hypothesis, we choose $y^{(n)}$ such $y^{(n)} \rightarrow x^*$.  Since $\{ x^{(n)} \}$ is the sequence of minimizers, we have
\begin{equation}
	f^{(n^k)}(y^{(n^k)}) \geq f^{(n^k)}(x^{(n^k)})
\label{eq:lim2}
\end{equation}
Now, by taking $k \rightarrow \infty$ in equations \eqref{eq:lim1} and \eqref{eq:lim2}, we find that  $f(x^*) \geq f(x^*) + \epsilon_0$, which is a contradiction.
\end{proof}
\end{lem}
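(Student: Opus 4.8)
The plan is to prove the two matching one-sided bounds on $m^{(n)} := \min_{x \in X^{(n)}} f^{(n)}(x)$ relative to $m^{*} := \min_{x \in X} f(x)$, namely $\limsup_{n} m^{(n)} \le m^{*}$ and $\liminf_{n} m^{(n)} \ge m^{*}$. The first bound is immediate from the recovery hypothesis: I would fix a minimizer $x^{*}$ of $f$ on $X$, use the assumption to pick $y^{(n)} \in X^{(n)}$ with $y^{(n)} \to x^{*}$ and $f^{(n)}(y^{(n)}) \to f(x^{*}) = m^{*}$, and then pass to the limit in $m^{(n)} \le f^{(n)}(y^{(n)})$. (Equivalently one runs the short contradiction argument: if $m^{(n)}$ stayed $\epsilon_{0}$ above $m^{*}$ along a subsequence, it would contradict $f^{(n)}(y^{(n)}) \ge m^{(n)}$ together with $f^{(n)}(y^{(n)}) \to m^{*}$.)

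For the reverse bound I would let $x^{(n)} \in X^{(n)}$ be a minimizer of $f^{(n)}$, which exists here because each $X^{(n)}$ is finite, so that $m^{(n)} = f^{(n)}(x^{(n)})$. Using compactness of $X$---in the application $X$ is the parameter set of $\mathcal{K}_{\bar m}$, a finite product of compact intervals carrying the transition data $t$---I would extract a subsequence $x^{(n_k)} \to \bar x \in X$. A liminf (lower-semicontinuity) estimate $\liminf_{k} f^{(n_k)}(x^{(n_k)}) \ge f(\bar x)$, combined with $f(\bar x) \ge m^{*}$, then yields $\liminf_{n} m^{(n)} \ge m^{*}$, and the two inequalities give $m^{(n)} \to m^{*}$.

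The hard part will be the liminf estimate, because the stated hypothesis supplies only a recovery sequence for each $x$ (an upper, ``limsup''-type condition) and controls $f^{(n)}$ nowhere else; in the abstract form this matching lower inequality would have to be assumed as the second half of a $\Gamma$-convergence pair. In the setting where the lemma is applied the estimate is genuinely available: $f^{(n)} = \frac{1}{n} I^{(n)}$ and $f = \omega$ are assembled from the explicit straight-line cost $\bar c$ of Definition \ref{def:c-st}, and Lemma \ref{lem:con_cost} shows that the discrete cost of each segment converges to its continuous value with the endpoints' limits as the only input. I would therefore close the gap by bounding, uniformly over the transition data, the difference between the cost of a discrete path in $\mathcal{K}^{(n)}_{\bar m}$ and the cost of the continuous path in $\mathcal{K}_{\bar m}$ with the same data by a term of order $1/n$; this two-sided control of the discretization error, rather than the one-sided recovery argument alone, is what actually forces the minima to converge.
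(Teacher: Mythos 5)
Your proposal is correct, and it is in fact more careful than the paper's own proof. Your $\limsup$ half---fix a minimizer $x^*$ of $f$, take the recovery sequence $y^{(n)}\rightarrow x^*$ with $f^{(n)}(y^{(n)})\rightarrow f(x^*)$, and pass to the limit in $\min_{X^{(n)}}f^{(n)}\le f^{(n)}(y^{(n)})$---is exactly the argument the paper gives: the paper's contradiction hypothesis \eqref{eq:lim1} only allows subsequences with $f^{(n_k)}(x^{(n_k)})\ge f(x^*)+\epsilon_0$, and the contradiction it derives rules out precisely that case, i.e.\ it proves $\limsup_n \min_{X^{(n)}}f^{(n)}\le \min_X f$ and nothing more.

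What you call ``the hard part'' is a genuine omission in the paper: non-convergence equally allows a subsequence with $f^{(n_k)}(x^{(n_k)})\le f(x^*)-\epsilon_0$, and the stated hypotheses cannot exclude it, since they constrain $f^{(n)}$ only along recovery sequences. Indeed the abstract lemma is false as written: take $X=[0,1]$, $X^{(n)}=\{k/n: 0\le k\le n\}$, $f\equiv 0$, and $f^{(n)}\equiv 0$ except $f^{(n)}(0)=-1$; every $x\in X$ has a recovery sequence avoiding the point $0$, yet $\min_{X^{(n)}}f^{(n)}=-1\not\rightarrow 0$. So the $\liminf$ inequality---the lower half of a $\Gamma$-convergence pair, as you say---must either be added as a hypothesis or verified in the application, and your repair is the right one. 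In Proposition \ref{prop:approx} the objects are $f^{(n)}=\frac{1}{n}I^{(n)}$ and $f=\omega$ on the compact parameter set of $\mathcal{K}_{\bar m}$ (after identifying each discrete path in $\mathcal{K}^{(n)}_{\bar m}$ with its transition data $t$, which also makes the inclusion $X^{(n)}\subset X$ literal), and the arithmetic-series computation behind Lemma \ref{lem:con_cost} gives a two-sided bound of order $1/n$, uniform in $t$, between the discrete cost of the path with data $t$ and $\omega(t)$; summing over at most $|S|-1$ segments preserves uniformity, which yields both one-sided inequalities and hence convergence of the minima. This uniform discretization estimate, rather than the recovery-sequence argument alone, is what actually justifies the paper's use of the lemma.
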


Now we let $X^{(n)}:=\mathcal{K}^{(n)}_{\bar m}$ and $X =\mathcal{K}_{\bar m}$ and $f^{(n)}=\frac{1}{n} I^{(n)}$ and $f=\bar I$. Then Lemmas \ref{lem:con_cost} and \ref{lem:con-min} show that
    \[
        \lim_{n \rightarrow \infty} \frac{1}{n} \min \{ I^{(n)}(\gamma): \gamma \in \mathcal{K}^{(n)}_{\bar m } \} =  \min \{ \bar I (\zeta): \zeta \in \mathcal{K}_{\bar m } \} = \min \{\omega(t): \zeta(t) \in \mathcal{K}_{\bar m} \}
    \]
\end{proof}

%
%

\begin{proof}[\textbf{Proof of Proposition \ref{prop:main-f-red}}]
The proof of Proposition \ref{prop:main-f-red} follows from Lemmas \ref{lem:mid} and \ref{lem:exit}.
\begin{lem} \label{lem:mid}
    Let $r \in \bar D(e_{\bar m })$. Suppose that
    \[
    w = r + \alpha ( e_k - e_{\bar m}),  \pi(\bar m, w) = \pi(k, w),\text{ and } w \not \in \bar D(e_{\bar m}).
    \]
    Then there exists $j \neq k, \bar m$ and $\beta < \alpha$ such that
    \[
    z:= r+ \beta (e_j - e_{\bar m}), \pi(\bar m, z) = \pi (j, z), \text{ and } \pi(j, r) > \pi(k,r)
    \]
\end{lem}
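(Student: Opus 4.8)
The plan is to reduce everything to the affineness of the payoff maps $p \mapsto \pi(i,p)$ together with the two sign conditions already recorded, namely the \textbf{MBP} inequality \eqref{eq:mbp} and the coordination inequality \eqref{eq:tbp}. First I would produce the candidate index $j$. Since $w \notin \bar D(e_{\bar m})$, the definition of the basin gives a strategy strictly beating $\bar m$ at $w$; call it $j$. Because the hypothesis gives $\pi(k,w)=\pi(\bar m,w)$, this same $j$ satisfies $\pi(j,w) > \pi(k,w)$, so automatically $j \neq k$ and $j \neq \bar m$. This is the index that will appear in the conclusion, and the triples $(j,\bar m,k)$, $(k,j,\bar m)$ are then legitimate (pairwise distinct) inputs for \eqref{eq:mbp}.

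Next I would transport the strict inequality $\pi(j,w)>\pi(k,w)$ from $w$ back to $r$ along the segment $w=r+\alpha(e_k-e_{\bar m})$ to obtain $\pi(j,r)>\pi(k,r)$. By linearity,
\[
\pi(j,w)-\pi(k,w) = \big[\pi(j,r)-\pi(k,r)\big] - \alpha\,\pi(k-j,\,e_k-e_{\bar m}).
\]
The \textbf{MBP} \eqref{eq:mbp} applied to $(k,j,\bar m)$ gives $\pi(k-j,e_k-e_{\bar m})>0$, so solving for $\pi(j,r)-\pi(k,r)$ exhibits it as the sum of the positive quantity $\pi(j,w)-\pi(k,w)$ and the positive quantity $\alpha\,\pi(k-j,e_k-e_{\bar m})$; hence $\pi(j,r)>\pi(k,r)$.

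The main step is to produce $\beta<\alpha$ with $z=r+\beta(e_j-e_{\bar m})$ on the $\bar m$--$j$ indifference surface. I would study the affine function $h(s):=\pi(j-\bar m,\,r+s(e_j-e_{\bar m}))$, whose slope is $\pi(j-\bar m,e_j-e_{\bar m})>0$ by the coordination condition \eqref{eq:tbp}, so $h$ is strictly increasing. At $s=0$ we have $h(0)=\pi(j,r)-\pi(\bar m,r)\le 0$ since $r\in\bar D(e_{\bar m})$. The crux is to show $h(\alpha)>0$, which then forces the unique zero of $h$ to lie strictly below $\alpha$. Here the trick is to rewrite the test point as $r+\alpha(e_j-e_{\bar m})=w+\alpha(e_j-e_k)$, giving
\[
h(\alpha)=\big[\pi(j,w)-\pi(\bar m,w)\big] + \alpha\,\pi(j-\bar m,\,e_j-e_k),
\]
where the bracket is positive by the choice of $j$ and the second term is positive by \eqref{eq:mbp} applied to $(j,\bar m,k)$. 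Thus $h(\alpha)>0\ge h(0)$, and strict monotonicity yields a unique $\beta\in[0,\alpha)$ with $h(\beta)=0$, i.e. $\pi(\bar m,z)=\pi(j,z)$, completing all three conclusions. I would close by noting $z\in\Delta$: since $w\in\Delta$ forces $r_{\bar m}\ge\alpha>\beta$, the $\bar m$-coordinate $z_{\bar m}=r_{\bar m}-\beta$ stays nonnegative.

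I expect the only real obstacle to be the comparison of the two rays out of $r$ --- the direction $e_k-e_{\bar m}$ that defines $(w,\alpha)$ versus the direction $e_j-e_{\bar m}$ that defines $(z,\beta)$ --- since the natural affine estimates live on different segments. The device that makes $\beta<\alpha$ fall out is precisely the rewriting $r+\alpha(e_j-e_{\bar m})=w+\alpha(e_j-e_k)$, which converts the comparison into a single \textbf{MBP} sign check on the difference vector $e_j-e_k$; everything else is bookkeeping with the bilinear notation $\pi(\cdot-\cdot,\cdot)$.
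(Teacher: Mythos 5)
Your proof is correct: the identity $\pi(j,w)-\pi(k,w)=[\pi(j,r)-\pi(k,r)]-\alpha\,\pi(k-j,e_k-e_{\bar m})$ checks out, both \textbf{MBP} invocations involve pairwise distinct triples (guaranteed because $\pi(k,w)=\pi(\bar m,w)$ forces $j\neq k,\bar m$), and the strictly increasing affine function $h$ with $h(0)\le 0<h(\alpha)$ delivers the required $\beta\in[0,\alpha)$. Your route shares the paper's skeleton---the same choice of $j$ (any strategy strictly beating $\bar m$ at $w$) and the same intermediate-value argument along the ray $r+s(e_j-e_{\bar m})$---but both key estimates are obtained by genuinely different, and cleaner, decompositions. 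For the crux $h(\alpha)>0$, the paper first manufactures an auxiliary indifference point $\nu=r+\alpha'(e_k-e_{\bar m})$ with $\pi(\bar m,\nu)=\pi(j,\nu)$ and $\alpha'<\alpha$ (a separate intermediate-value step), then decomposes $o'=r+\alpha(e_j-e_{\bar m})$ through $\nu$, which requires two \textbf{MBP} sign checks; you instead decompose through $w$ itself via $r+\alpha(e_j-e_{\bar m})=w+\alpha(e_j-e_k)$, so positivity follows from $\pi(j,w)>\pi(\bar m,w)$ plus a single \textbf{MBP} term on $e_j-e_k$. For the claim $\pi(j,r)>\pi(k,r)$, the paper argues by contradiction, transporting the assumed reverse inequality along $u(s)=r+s(e_k-e_{\bar m})$ through the four-part derivative comparison \eqref{eq:jk-1}--\eqref{eq:jk-4}; your one-line identity is precisely that computation run forward rather than as a contradiction, using the same \textbf{MBP} instance $\pi(k-j,e_k-e_{\bar m})>0$ that underlies \eqref{eq:jk-4}. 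What your version buys: it eliminates one intermediate-value application and the contradiction scaffolding, and it quietly handles an edge case the paper glosses over (the paper asserts $0<\alpha'<\alpha$, which fails when $\pi(\bar m,r)=\pi(j,r)$; your argument needs only $h(0)\le 0$, so $\beta=0$ is admissible automatically). What the paper's version buys is mainly continuity with its path-comparison machinery elsewhere---tracking how indifference points move along segments---but as a self-contained proof of this lemma, yours is the leaner of the two.
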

\begin{proof}
    Since $w \not \in \bar D(e_{\bar m})$, there exists $j \neq k, \bar m$ such that $\pi(j, w) > \pi(\bar m, w)$. Since $\pi(\bar m, r) \geq \pi(j,r)$, there exists $0<\alpha' < \alpha$ such that $\nu=r+\alpha' (e_k - e_{\bar m})$ and
    \[
        \pi(\bar m, \nu) = \pi(j,\nu).
    \]
    Let $o' = r + \alpha (e_{j} - e_{\bar m})$. Note that $o' = \nu-\alpha' (e_k-e_{\bar m}) +\alpha (e_j - e_{\bar m})$. Then
    \begin{align*}
    	\pi(j -\bar m , o')& =  \pi(j -\bar m, -\alpha' (e_k-e_{\bar m}) +\alpha (e_j - e_{\bar m})) \\
	& = - \alpha' \pi(\bar m - j, e_{\bar m} - e_k) + \alpha \pi( \bar m- j,  e_{\bar m}-e_j)  \\
	& > \alpha(\pi(j - \bar m, e_j - e_{\bar m})-\pi(\bar m- j,  e_{\bar m}-e_k) )\\
	& >0
    \end{align*}
Thus since $\pi(\bar m , r) \geq \pi(j, r)$, there exists $z=r + \beta (e_j - e_{\bar m})$ such that $\pi(\bar m, z) = \pi(j, z)$ and $\beta < \alpha$. Next, we show that $\pi(j, r) > \pi(k ,r )$. Suppose that $\pi(k,r) \geq \pi(j,r)$.  Then we find
\[
  \pi(\bar m - j, w) =  \pi(\bar m - j, w) - \pi(\bar m - k, w) = \pi(k, w)- \pi(j, w) = \pi(k-j, r) + \alpha \pi(k-j, e_k - e_{\bar m}) >0
\]
which is a contradiction to the fact that $\pi(\bar m-j, \nu )= \pi(\bar m-j, r + \alpha'(e_k - e_{\bar m})) =0 $ for $\alpha' < \alpha$. Thus, we have $\pi(j,r) > \pi(k,r)$.

\end{proof}

\begin{lem}\label{lem:exit}
  Let $r \in \bar D(e_{\bar m})$ and $q \in \partial \bar D(e_{\bar m})$ and $ q= r + t_L(e_l-e_{\bar m} )$. Suppose that
  \begin{equation} \label{eq:con-k1k2}
    \pi(\bar m, q) = \pi(k_1, q) \text{  and  } \pi(\bar m, q)= \pi(k_2, q).
  \end{equation}
  where $k_1 \neq k_2$. Then there exists $p \in \partial D(e_{\bar m})$ such that $j \neq l, \bar m$ and $p= r + \beta (e_j - e_{\bar m})$, where $0< \beta < t_L$,
  \[
    \pi(\bar m, p) = \pi( j, p )\,\,\text{ and } \,\, c(r,p) < c(r,q).
  \]
\end{lem}
\begin{proof}
  From the condition, $t_L$ is the length of transition from $\bar m$ to $l$, leading to $q$. Because of \eqref{eq:con-k1k2}, we can choose $k \neq l$ such that
  \[
    \pi(\bar m , q) = \pi (k, q).
  \]
  Let $    o:=r + t_L (e_{k} - e_{\bar m}) $. That is, $o$ is the point obtained from $r$ by $t_L$ transitions from $\bar m $ to $k$). Since
\begin{align*}
     & \pi(k -\bar m, r + t_L (e_k - e_{\bar m})) = \pi(k - \bar m , q + t_L (e_{\bar m} - e_{l}) + t_L (e_{k} - e_{\bar m})) \\
    = & t_L \pi (k - \bar m , e_{k}- e_{l}) > 0
  \end{align*}
  hold from the \textbf{MBP}, we have
 \[
    \pi(\bar m, r) \geq \pi( k, r) \text{ and } \pi(\bar m, o) < \pi( k, o)
  \]
  and since the payoff function is linear and the game is a coordination game, there exists $p$ such that $p = r + \alpha (e_{k} - e_{\bar m})$, where $\alpha >0$ and $\pi(\bar m, p) = \pi( k, p)$. Then $o=p+ (t_L - \alpha) (e_k - e_{\bar m})$. Thus
  \begin{align*}
    0 < \pi(k, o) - \pi(\bar m, o) & = \pi(k-\bar m, p+ (t_L - \alpha) (e_k - e_{\bar m}))  \\
    & \leq (t_L - \alpha)\pi(k - \bar m ,e_k - e_{\bar m})
 \end{align*}
  Thus from the \textbf{MBP}, we find $t_L >\alpha$ which implies that $p_k - r_k < q_l - r_l$.
  We divide cases. \\
  \textbf{Step 1}.  Suppose that $p \in \bar D(e_{\bar m})$.
  We also find
  \begin{align*}
        c(r,q) - c(r, p) = & \frac{1}{2} t_L \pi (\bar m - l, r +q )  -\frac{1}{2} (p_{k} -r_{k}) \pi (\bar m - k, r +p )  \\
     \geq & \frac{1}{2} t_L (\pi (\bar m - l, r +q ) -\pi (\bar m - k, r +p ) ) =\frac{1}{2} t_L ( \pi(k, r) - \pi(l, r))
     \\
     = & \frac{1}{2} t_L  \pi(k - l, q + t_L (e_{\bar m} - e_{l}))
     =  \frac{1}{2} t_L \pi(\bar m - l, q) +\frac{1}{2} t_L^2  \pi(k-l,  e_{\bar m} - e_{l} )  > 0
\end{align*}
where we used $\pi(\bar m - l, q) \geq 0$, $\pi(k,q) = \pi(\bar m, q) $, and the \textbf{MBP}. Thus we take $\beta  := \alpha$ and $j:=k$ and obtain the desired result. \\

\noindent \textbf{Step 2}. Suppose that $p \not \in \bar D(e_{\bar m})$. We use Lemma \ref{lem:mid}. By taking $w=p$ and using Lemma \ref{lem:mid}, we find $z$. If $z \in \bar D(e_{\bar m})$, then we set $p'=z$. Otherwise, we apply the same argument using Lemma \ref{lem:mid} and to find $z$ closer to $r$. In this way, we can find $j_1, j_2, \cdots$. Note that no two indices, $j_1, j_2$, are the same since if $j=j_1 =j_2$ then $\pi(\bar m - j, r + \beta_1 (e_{j} -e_{\bar m}))= \pi(\bar m - j_1, r + \beta_1 (e_{j_{1}} -e_{\bar m}))  =\pi(\bar m - j_2, r + \beta_2(e_{j_{2}} - e_{\bar m}) = \pi(\bar m - j, r + \beta_2(e_{j} - e_{\bar m})$. Thus we find $\beta_1 = \beta_2$ which is a contradiction. Since the number of strategies is finite, we can find $z \in  \bar D(e_{\bar m})$. Next, we show that $ j \neq l$. If $j =  l$, $\pi(\bar m, z)= \pi(l, z)$.  Thus, we find that
\begin{align*}
    0  \leq & \pi (\bar m - l, r + t_L (e_l - e_{\bar m})) - \pi(\bar m -l, r + \beta (e_l - e_{\bar m}))  \\
    = & \pi (\bar m - l , (t_L - \beta)(e_l - e_{\bar m})) = (t_L - \beta) (-A_{\bar m \bar m} + A_{\bar m l } + A_{l \bar m} - A_{l l})
\end{align*}
and thus we find $t_L \leq \beta$ which is a contradiction. So we have $j \neq l$.
Then observe that $p_j' - r_j < \beta < t_L$. Then, we compute as follows:
\begin{align*}
	c(r, q) - c(r, p')&=  \frac{1}{2} t_L  \pi(\bar m - l, r +q) - \frac{1}{2} (p_j' -r_j) \pi(\bar m - j, r + p') \\
	&  \geq \frac{1}{2} t_L (\pi(\bar m - l, r +q) -\pi(\bar m - j, r + p') ) =  \frac{1}{2} t_L ( \pi (j, r) - \pi (l, r))  \\
	& >   \frac{1}{2} t_L ( \pi (k, r) - \pi (l, r))  > 0
\end{align*}
Thus, we can take $p=p'$.

\end{proof}
  \noindent Now, let $t^*=((t_1,t_2, \cdots, t_L);(i_1, i_2, \cdots, i_L))$ be the solution to the minimization problem and $(\bar m \rightarrow i_1, \bar m \rightarrow i_2, \cdots, \bar m \rightarrow i_L)$ be the corresponding transitions. Suppose that \eqref{eq:binding-con} does not hold. Then there exists $k_1$ and $k_2$, $k_1 \neq k_2$, such that
  \[
    \pi(\bar m, q(t^*)) = \pi(k_1, q(t^*)) \text{ and }\pi(\bar m, q(t^*)) = \pi(k_2, q(t^*))
  \]
We apply Lemma \ref{lem:exit} and can obtain a lower cost exit path, $s^*$ such that $\omega(s^*) < \omega(t^*)$, which is a contradiction to optimality of $t^*$.
\end{proof}

\begin{proof}[\textbf{Proof of Proposition \ref{prop:main-binding-con}}]
   Suppose that $t_l^* >0$ for some $l \neq k$. To simplify notation, let $q=q(t^*)$ and $t^*=(t_1^*, \cdots, t_K^*)$ and define
  \[
     t^+_{\epsilon} = t^* + \epsilon_k(e_k - e_{\bar m}) - \epsilon_l(e_l - e_{\bar m}), \,\,\,
    t^-_{\epsilon} = t^* - \epsilon_k(e_k - e_{\bar m}) + \epsilon_l(e_l - e_{\bar m})
  \]
  Then, we have
  \begin{align*}
    \pi(\bar m, q(t_\epsilon^+)) - \pi( k, q(t_\epsilon^+)) = & \epsilon_k  \pi(\bar m, k- \bar m) - \epsilon_l \pi(\bar m, l - \bar m ) - \epsilon_k \pi(k, k -  \bar m) + \epsilon_l \pi(k, l- \bar m) \\
    = & - \epsilon_k (A_{\bar m \bar m} - A_{\bar m k} + A_{kk} - A_{k \bar m}) + \epsilon_l (A_{\bar m \bar m} - A_{k \bar m} + A_{\bar m l} - A_{k l}) \\
    \pi(\bar m, q(t_\epsilon^-)) - \pi( k, q(t_\epsilon^-)) = &  \epsilon_k (A_{\bar m \bar m} - A_{\bar m k} + A_{kk} - A_{k \bar m}) - \epsilon_l (A_{\bar m \bar m} - A_{k \bar m} + A_{\bar m l} - A_{k l})
  \end{align*}
  and similarly, for $j \neq k$, we find that
  \begin{align*}
    \pi(\bar m, q(t_\epsilon^+)) - \pi( j, q(t_\epsilon^+)) = & \pi(\bar m, q) - \pi(j , q) \\
     & + \epsilon_k  \pi(\bar m, k- \bar m) - \epsilon_l \pi(\bar m, l - \bar m ) - \epsilon_k \pi(j, k -  \bar m) + \epsilon_l \pi(j, l- \bar m) \\
    = &\pi(\bar m, q) - \pi(j , q)\\
             & - \epsilon_k (A_{\bar m \bar m} - A_{\bar m k} + A_{jk} - A_{j \bar m}) + \epsilon_l (A_{\bar m \bar m} - A_{j \bar m} + A_{\bar m l} - A_{j l}) \\
    \pi(\bar m, q(t_\epsilon^-)) - \pi( j, q(t_\epsilon^-)) = &  \pi(\bar m, q) - \pi(j ,q ) \\
    &   \epsilon_k (A_{\bar m \bar m} - A_{\bar m k} + A_{jk} - A_{j \bar m}) - \epsilon_l (A_{\bar m \bar m} - A_{j \bar m} + A_{\bar m l} - A_{j l})
  \end{align*}
  Thus, we can choose small $\epsilon_k, \epsilon_l>0$ such that
  \begin{align*}
    & \pi(\bar m, q(t_\epsilon^+)) = \pi(k, q(t_\epsilon^+)), \text{ and } \pi(\bar m, q(t_\epsilon^+)) > \pi(j, q(t_\epsilon^+)) \text{ for all } l \neq k\\
    & \pi(\bar m, q(t_\epsilon^-)) = \pi(k, q(t_\epsilon^-)), \text{ and } \pi(\bar m, q(t_\epsilon^-)) > \pi(j, q(t_\epsilon^-)) \text{ for all } l \neq k,
  \end{align*}
  which show that $t^+_{\epsilon}$ and $t^-_{\epsilon}$ both satisfy the constraints. Recall
  \[
    H_{i,j:k} := (A_{ii}- A_{ji}) - (A_{ik} - A_{jk}).
  \]
Then we find that
  \begin{align*}
     \text{  If $t_l$ is ahead of  $t_k$}, \,\,\, &(\omega(t^+_\epsilon) - \omega(t))-(\omega(t) - \omega(t^-_\epsilon))       = - \epsilon_l ^2 \pi(\bar m - l, \bar m -l) + 2 \epsilon_l \epsilon_k \pi(\bar m - k, \bar m-l) - \epsilon_k^2 \pi(\bar m - k , \bar m - k ) \\
     \text{  If $t_k$ is ahead of  $t_l$}, \,\,\,  &(\omega(t^+_\epsilon) - \omega(t))-(\omega(t) - \omega(t^-_\epsilon))       = - \epsilon_l ^2 \pi(\bar m - l, \bar m -l) + 2 \epsilon_l \epsilon_k \pi(\bar m - l, \bar m-k) - \epsilon_k^2 \pi(\bar m - k , \bar m - k )
  \end{align*}
  Thus, we find that
  \begin{align*}
      & (\omega(t^+_\epsilon) - \omega(t))-(\omega(t) - \omega(t^-_\epsilon))
     =       - H_{\bar m k: k} \epsilon_k^2 + 2  \max \{ H_{\bar m k: l}, H_{\bar m l: k}\}  \epsilon_k \epsilon_l- H_{\bar m l: l} \epsilon_l^2 \\
      & \leq - H_{\bar m k: k} \epsilon_k^2 + 2 \sqrt{H_{\bar m k: k} }\sqrt{H_{\bar m l: l} } \epsilon_k \epsilon_l  - H_{\bar m l: l}  \epsilon_l^2 \leq - (\sqrt{H_{\bar m k: k}}\epsilon_k -
     \sqrt{H_{\bar m l: l}} \epsilon_l)^2 < 0
  \end{align*}
  where we use
  \[
      \max \{ H_{\bar m k: l}, H_{\bar m l: k}\}  < H_{\bar m k: k},\,\,\, \max \{ H_{\bar m k: l}, H_{\bar m l: k}\}  < H_{\bar m l: l}.
  \]
  from \textbf{MBP}.
  This shows that either $\omega(t^+_\epsilon) < \omega(t)$ or $\omega(t) > \omega(t^-_\epsilon)$ holds, a contradiction to the optimality of $t$.
\end{proof}

\com{
\textbf{Need to change this: From Proposition \ref{prop:binding-con}, only one binding constraint exists in \eqref{eq:var-constraints}. Proposition \ref{prop:f-red} further shows that only one kind of transition to a specific strategy occurs in the minimal escape path.}
}

\begin{proof}[\textbf{Proof of Theorem \ref{thm:escape}}]

Let $t^*$ be the solution to the minimization problem:
  \[
    \min\{ \omega(t): \zeta(t) \in \bar{\mathcal{K}}_{\bar m} \}.
  \]
Propositions \ref{prop:main-binding-con} and \ref{prop:main-f-red}  show that there exists $k$ such that $t_k^* > 0$ and $t^*_l=0$ for all $l \neq k$ and Theorem  \ref{thm:escape}  follows immediately from this and Proposition \ref{prop:main-approx}.
\end{proof}

\renewcommand{\thesection}{B}

\section{Exit problem: two-population models} \label{appen:2p-exit}

The following lemma is analogous to Lemma \ref{lem:Pos-feedbacks}, which shows that it always costs less (or the same) to first switch from strategy $\bar m$, than from other strategies.
\begin{lem}
\label{lem:2p-pos1}Suppose that the \textbf{WBP} holds.
\begin{align*}
c^{(n)}(x^{\beta,\bar{m},k},x^{(\beta,\bar{m},k)(\alpha,j,h)})-c^{(n)}(x^{\beta,i,k},x^{(\beta,i,k)(\alpha,j,h)}) & = -A_{\bar{m}\bar{m}}^{\alpha}+A_{h\bar{m}}^{\alpha}+A_{\bar{m}i}^{\alpha}-A_{hi}^{\alpha} \leq 0\\
c^{(n)}(x^{\alpha,\bar{m},k},x^{(\alpha,\bar{m},k)(\beta,j,h)})-c^{(n)}(x^{\alpha,i,k},x^{(\alpha,i,k)(\beta,j,h)}) & =- A_{\bar{m}\bar{m}}^{\beta}+A_{\bar{m}h}^{\beta}+A_{i\bar{m}}^{\beta}-A_{ih}^{\beta}\leq 0.
\end{align*}
\end{lem}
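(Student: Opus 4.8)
The plan is to prove this by direct computation, exactly paralleling the proof of Proposition~\ref{prop:Pos-feedbacks}. The essential structural fact to exploit is the cross-population dependence of the payoffs: $\pi_\alpha(\cdot,x)$ depends only on the $\beta$-component $x_\beta$, while $\pi_\beta(\cdot,x)$ depends only on $x_\alpha$. Consequently a $\beta$-agent's switch from $\bar m$ to $k$ (resp.\ from $i$ to $k$) shifts the $\alpha$-transition costs, and these are precisely the quantities compared in the first identity. I would first assume, as in Proposition~\ref{prop:Pos-feedbacks}, that the four states involved lie in $D^{(n)}(e_{\bar m})$, so that the cost formula \eqref{eq:2p-cost} applies, i.e.\ each $\alpha$-transition cost into $h$ equals $\pi_\alpha(\bar m,\cdot)-\pi_\alpha(h,\cdot)$.

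For the first identity, expand both $\alpha$-costs using linearity of $\pi_\alpha$ in $x_\beta$. Writing $y=x^{\beta,\bar m,k}$ and $y'=x^{\beta,i,k}$, the $\beta$-state of $y$ is $x_\beta+\tfrac1n(e_k-e_{\bar m})$ and that of $y'$ is $x_\beta+\tfrac1n(e_k-e_i)$, so one finds
\begin{align*}
c^{(n)}(y,y^{(\alpha,j,h)}) &= \pi_\alpha(\bar m,x)-\pi_\alpha(h,x)+\tfrac1n\big(A^\alpha_{\bar m k}-A^\alpha_{\bar m\bar m}-A^\alpha_{hk}+A^\alpha_{h\bar m}\big),\\
c^{(n)}(y',{y'}^{(\alpha,j,h)}) &= \pi_\alpha(\bar m,x)-\pi_\alpha(h,x)+\tfrac1n\big(A^\alpha_{\bar m k}-A^\alpha_{\bar m i}-A^\alpha_{hk}+A^\alpha_{hi}\big).
\end{align*}
Subtracting, the unperturbed term $\pi_\alpha(\bar m,x)-\pi_\alpha(h,x)$ and the $k$-dependent entries $A^\alpha_{\bar m k},A^\alpha_{hk}$ cancel, and the source index $j$ never enters, leaving $\tfrac1n(-A^\alpha_{\bar m\bar m}+A^\alpha_{h\bar m}+A^\alpha_{\bar m i}-A^\alpha_{hi})$, i.e.\ the stated difference up to the $1/n$ normalization used in Proposition~\ref{prop:Pos-feedbacks}. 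The second identity follows by the same computation with $\alpha$ and $\beta$ interchanged; the only change in bookkeeping is that $\pi_\beta$ sums against the \emph{first} index of $A^\beta$, which swaps the roles of rows and columns and produces $\tfrac1n(-A^\beta_{\bar m\bar m}+A^\beta_{\bar m h}+A^\beta_{i\bar m}-A^\beta_{ih})$.

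It remains to check the sign, and this is where the \textbf{WBP} enters. Rewriting the first expression as $-\big[(A^\alpha_{\bar m\bar m}-A^\alpha_{h\bar m})-(A^\alpha_{\bar m i}-A^\alpha_{hi})\big]$, the bracket is nonnegative by the first inequality of \eqref{eq:wmbp} with the \textbf{WBP} indices $(i,j)$ set to $(h,i)$; hence the difference is $\le 0$. Symmetrically, the second expression equals $-\big[(A^\beta_{\bar m\bar m}-A^\beta_{\bar m h})-(A^\beta_{i\bar m}-A^\beta_{ih})\big]$, which is $\le 0$ by the second inequality of \eqref{eq:wmbp} with its indices $(i,j)$ set to $(h,i)$. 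I do not expect a genuine obstacle here: the whole content is the cross-population payoff bookkeeping together with the recognition that the residual four-term differences are exactly the two halves of the \textbf{WBP}. The only points needing care are keeping straight which population's state is perturbed (and hence which payoff function and which matrix move) and confirming that the relevant states remain in $D^{(n)}(e_{\bar m})$ so that \eqref{eq:2p-cost} is indeed the correct cost formula.
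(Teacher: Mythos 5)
Your proof is correct and is essentially the paper's own argument: the paper's proof consists of the single line ``These are immediate from the definition,'' and your computation—expanding the costs in \eqref{eq:2p-cost} using the cross-population linearity of $\pi_\alpha$ and $\pi_\beta$, cancelling the $k$-dependent terms, and matching the residual four-term bracket to the two halves of \eqref{eq:wmbp}—is exactly the verification that claim leaves implicit. Your two side remarks (that the paper's stated difference omits the $\tfrac1n$ normalization present in Proposition \ref{prop:Pos-feedbacks}, and that the states must lie in $D^{(n)}(e_{\bar m})$ for the cost formula to apply) are both accurate.
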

\begin{proof}
These are immediate from the definition.
\end{proof}
Proposition \ref{prop:2p-pos2} shows that Lemma \ref{lem:2p-pos1} can be extended to arbitrary paths. We use Proposition \ref{prop:2p-pos2} to show how to remove the transitions from $i\neq\bar{m}$ in a given path to achieve a lower cost. In Proposition \ref{prop:2p-pos2}, $(\beta, i, k)$, for example, refers to a transition by a $\beta$-agent from strategy $i$ to $k$.
\begin{prop}
\label{prop:2p-pos2}Suppose that the \textbf{WBP} holds. We consider
two paths:
\begin{align*}
\gamma_{1}: & x\xrightarrow[(\beta,i,k)]{}x^{(1)}\xrightarrow[(\alpha,j_{1},k_{1})]{}x^{(2)}\xrightarrow[(\alpha,j_{2},k_{2})]{}x^{(3)}\cdots x^{(L-1)}\xrightarrow[(\alpha,j_{L},k_{L})]{}x^{(L)}\xrightarrow[(\beta,\bar{m},l)]{}y\\
\gamma_{2}: & x\xrightarrow[(\beta,\bar{m},k)]{}y^{(1)}\xrightarrow[(\alpha,j_{1},k_{1})]{}y^{(2)}\xrightarrow[(\alpha,j_{2},k_{2})]{}y^{(3)}\cdots y^{(L-1)}\xrightarrow[(\alpha,j_{L},k_{L})]{}y^{(L)}\xrightarrow[(\beta,i,l)]{}y
\end{align*}
Then, we have $I^{(n)}(\gamma_{1})\geq I^{(n)}(\gamma_{2})$ and a similar statement
holds for a path with transitions of $\alpha$ agents from $i$
to $k$ and $\bar{m}$ to $l$ and transitions of $\alpha$ agents from $\bar m $ to $k$ and from $i$ to $l$.\end{prop}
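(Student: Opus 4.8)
The plan is to split the cost of each path into its two flanking $\beta$-switches and its block of $L$ interior $\alpha$-switches, and to exploit the structural feature of the cost \eqref{eq:2p-cost}: inside $D^{(n)}(e_{\bar m})$ the cost of a single switch depends only on the \emph{destination} strategy and on the state of the \emph{opposing} population, and not at all on the origin strategy. Since an $\alpha$-switch changes only the $\alpha$-component of the state while its cost reads off the $\beta$-component (and symmetrically for $\beta$-switches), changing the origin of a $\beta$-switch from $i$ to $\bar m$ leaves that switch's cost unchanged but alters the $\beta$-component seen by the subsequent $\alpha$-block. The whole discrepancy between $\gamma_1$ and $\gamma_2$ will therefore be funnelled into the $\alpha$-block, where it is controlled by the \textbf{WBP}.

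First I would check that the two paths genuinely share the endpoint $y$: the interior switches $(\alpha,j_s,k_s)$ are identical and fix the $\beta$-component, while the two flanking $\beta$-switches move the $\beta$-component by $\tfrac1n(e_k-e_i)+\tfrac1n(e_l-e_{\bar m})$ in $\gamma_1$ and by $\tfrac1n(e_k-e_{\bar m})+\tfrac1n(e_l-e_i)$ in $\gamma_2$, and these coincide. Next I would cancel the $\beta$-contributions. The first switch has destination $k$ in both paths and is taken at the common $\alpha$-state $x_\alpha$ (no $\alpha$-switch has yet occurred), so by \eqref{eq:2p-cost} its cost $\pi_\beta(\bar m,x_\alpha)-\pi_\beta(k,x_\alpha)$ is identical; the last switch has destination $l$ in both paths and is taken at the common terminal $\alpha$-state (the $\alpha$-block is identical), so its cost is identical too. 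Hence the two $\beta$-costs drop out of $I^{(n)}(\gamma_1)-I^{(n)}(\gamma_2)$.

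It then remains to compare the $\alpha$-block term by term. In $\gamma_1$ each $\alpha$-switch is evaluated at the $\beta$-component $x_\beta+\tfrac1n(e_k-e_i)$, and in $\gamma_2$ at $x_\beta+\tfrac1n(e_k-e_{\bar m})$, the two differing by the fixed vector $\tfrac1n(e_{\bar m}-e_i)$. By linearity of $\pi_\alpha$ in the $\beta$-component, the cost difference of the $s$-th switch (destination $k_s$) is
\[
\tfrac1n\big[(A_{\bar{m}\bar{m}}^{\alpha}-A_{k_s\bar{m}}^{\alpha})-(A_{\bar{m}i}^{\alpha}-A_{k_si}^{\alpha})\big],
\]
which is precisely the single-switch comparison of Lemma \ref{lem:2p-pos1}. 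This is nonnegative by the \textbf{WBP} \eqref{eq:wmbp} applied to the distinct indices $k_s,i,\bar m$; in the degenerate cases $k_s=\bar m$ the term vanishes, and in $k_s=i$ it equals $(A_{\bar{m}\bar{m}}^{\alpha}-A_{i\bar{m}}^{\alpha})+(A_{ii}^{\alpha}-A_{\bar{m}i}^{\alpha})>0$ by the coordination-game inequalities, so nonnegativity holds throughout. Summing over $s=1,\dots,L$ gives $I^{(n)}(\gamma_1)-I^{(n)}(\gamma_2)\ge 0$, and the symmetric claim (flanking $\alpha$-switches, interior $\beta$-block) follows verbatim after interchanging the two populations and invoking the second half of \eqref{eq:wmbp}. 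The only real care needed, and hence the main (if modest) obstacle, is the bookkeeping ensuring the endpoints coincide and that the two flanking $\beta$-switches are evaluated at identical states; once this is established, the inequality is a term-by-term application of Lemma \ref{lem:2p-pos1}.
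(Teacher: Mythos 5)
Your proof is correct and follows essentially the same route as the paper's: cancel the two flanking $\beta$-switch costs (which depend only on the destination strategy and the unchanged $\alpha$-component), then compare the interior $\alpha$-block term by term, where each difference reduces to the single-switch inequality of Lemma \ref{lem:2p-pos1} and hence to the \textbf{WBP}. Your explicit verification that the endpoints coincide and your treatment of the degenerate cases $k_s=\bar m$ and $k_s=i$ are small additions of rigor that the paper leaves implicit, but the decomposition and the key lemma are identical.
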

\begin{proof}
We find that
\begin{align*}
I^{(n)}(\gamma_{1}) & =c^{(n)}(x,x^{\beta,i,k})+c^{(n)}(x^{\beta,i,k},x^{(\beta,i,k)(\alpha,j_{1},k_{1})})+c^{(n)}(x^{\beta,i,k},x^{(\beta,i,k)(\alpha,j_{2},k_{2})}) \\
 & +\cdots c^{(n)}(x^{\beta,i,k},x^{(\beta,i,k)(\alpha,j_{L},k_{L})})
  +c^{(n)}(x^{(L)},(x^{(L)})^{(\beta,\bar{m},l)}). \\
 I^{(n)}(\gamma_{2}) & =c^{(n)}(x,x^{\beta,\bar{m},k})+c^{(n)}(x^{\beta,\bar{m},k},x^{(\beta,\bar{m},k)(\alpha,j_{1},k_{1})})+c^{(n)}(x^{\beta,\bar{m},k},x^{(\beta,\bar{m},k)(\alpha,j_{2},k_{2})})\\
 & +\cdots c^{(n)}(x^{\beta,\bar{m},k},x^{(\beta,\bar{m},k)(\alpha,j_{L},k_{L})}) +c^{(n)}(x^{(L)},(x^{(L)})^{(\beta,i,l)})
\end{align*}
from the fact that $c^{(n)}(x^{(l)}, (x^{(l)})^{\alpha, j_l, k_l})=c^{(n)}(x^{\beta,i,k},x^{(\beta,i,k)(\alpha,j_{l},k_{l})})$ for $l=2,\cdots, L-1$ and $c(y^{(l)}, (y^{(l)})^{\alpha, j_l, k_l})=c^{(n)}(y^{\beta,\bar m ,k},x^{(\beta,\bar m ,k)(\alpha,j_{l},k_{l})})$ for $l=2,\cdots, L-1$ (see Lemma \ref{lem:irrelev}).
Observe that $c^{(n)}(x,x^{\beta,\bar{m},k})=c^{(n)}(x,x^{\beta,i,k})$ and $c^{(n)}(x^{(L)},(x^{(L)})^{(\beta,i,l)})=c^{(n)}(x^{(L)},(x^{(L)})^{(\beta,\bar{m},l)})$.
Then by applying Lemma 2 successively, we obtain the desired result.
\end{proof}
We can also collect the same transitions as follows, analogously to Proposition \ref{prop:straight}. We also denote by $(\beta, \bar m, k; \eta)$ the consecutive transitions of $\beta$-agent from $\bar m$ to $k$ $\eta$-times.
\begin{prop}
\label{prop:2p-straight} Consider the
following paths:
\[
\begin{array}{ccccccccccc}
\gamma & : & x & \xrightarrow[(\beta,\bar{m},k; \eta)]{} & x^{\beta,\bar{m},k; \eta} & \xrightarrow[\,\,\,\,\,\,\,\,\,\,\,\,\,\,\,\,]{} & y & \cdots & z & \xrightarrow[(\beta,\bar{m},k; \rho)]{} & z^{\beta,\bar{m},k; \rho}\\
\gamma' & : & x & \xrightarrow[(\beta,\bar{m},k; \eta)]{} & x^{\beta,\bar{m},k;\eta} & \xrightarrow[(\beta,\bar{m},k; \rho)]{} & x^{(\beta,\bar{m},k;\eta)(\beta,\bar{m},k;\rho)} & \xrightarrow[\,\,\,\,\,\,\,\,\,\,\,\,\,\,\,\,]{} & y^{\beta,\bar{m},k; \rho} & \cdots & z^{\beta,\bar{m},k; \rho}\\
\gamma'' & : & x & \xrightarrow[\,\,\,\,\,\,\,\,\,\,\,\,\,\,\,\,\,\,\,]{} & y^{\beta,k\bar{,m};\eta} & \cdots & z^{\beta,k,\bar{m};\eta} & \xrightarrow[(\beta,\bar{m},k; \eta) ]{} & z & \xrightarrow[(\beta,\bar{m},k; \rho)]{} & z^{\beta,\bar{m},k; \rho}
\end{array}
\]
where $\cdots$ denotes the same transitions. Then either
\[
I^{(n)}(\gamma) \geq I^{(n)}(\gamma'), \,\,\text{ or } I^{(n)}(\gamma) \geq I^{(n)}(\gamma'')
\]
holds. A similar statement
holds for a path involving transitions of $\alpha$ agents' transitions.\end{prop}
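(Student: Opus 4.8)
The plan is to transcribe the one-population proof of Proposition~\ref{prop:straight} to the two-population setting, the only genuinely new ingredient being a two-population analogue of Lemma~\ref{lem:1st} for $\beta$-transitions from $\bar m$ to $k$ (the $\alpha$-version then follows by interchanging the two populations). The decisive structural fact, read off from the cost formula~\eqref{eq:2p-cost}, is that the cost $c^{(n)}(x,x^{\beta,i,j})=\pi_\beta(\bar m,x)-\pi_\beta(j,x)$ of a $\beta$-transition depends on the state only through the $\alpha$-component $x_\alpha$, whereas a $\beta$-transition alters only $x_\beta$. Consequently, along $\rho$ consecutive $\beta$-transitions from $\bar m$ to $k$ the component $x_\alpha$ is frozen, so each of the $\rho$ steps carries the identical cost and
\[
  c^{(n)}(a,a^{\beta,\bar m,k;\rho}) = \rho\,[\pi_\beta(\bar m,a)-\pi_\beta(k,a)];
\]
in particular there is \emph{no} quadratic self-interaction term, in contrast with Lemma~\ref{lem:1st}(i). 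Subtracting the same expression at $b$ gives the analogue of part~(i):
\[
  c^{(n)}(a,a^{\beta,\bar m,k;\rho}) - c^{(n)}(b,b^{\beta,\bar m,k;\rho}) = \rho\,\big[(\pi_\beta(\bar m,a)-\pi_\beta(k,a)) - (\pi_\beta(\bar m,b)-\pi_\beta(k,b))\big].
\]

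From here I would establish the analogues of parts~(ii) and~(iii). Part~(ii) is immediate: since $\pi_\beta$ is unchanged by the $x_\beta$-shift, $c^{(n)}(b^{\beta,k,\bar m;\eta},b)=\eta\,[\pi_\beta(\bar m,b)-\pi_\beta(k,b)]$, and the $\eta\rho$-weighted combination telescopes to zero. Part~(iii) is the heart of the matter, and the only place where the two-population structure forces a new computation: the intermediate ``$\cdots$'' block is now a \emph{mixed} path that may contain both $\alpha$- and $\beta$-transitions, and I must track how the cost of each responds to shifting $x_\beta$ by $\tfrac{\rho}{n}(e_k-e_{\bar m})$ (the $\gamma'$ shift) or by $\tfrac{\eta}{n}(e_{\bar m}-e_k)$ (the $\gamma''$ shift). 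A $\beta$-transition in the block has cost depending only on $x_\alpha$, hence is invariant under either shift and contributes $\eta\cdot 0+\rho\cdot 0=0$; an $\alpha$-transition to $l_t$ has cost $\pi_\alpha(\bar m,x_\beta)-\pi_\alpha(l_t,x_\beta)$, which changes by $\tfrac{\rho}{n}[(A^\alpha_{\bar m k}-A^\alpha_{\bar m\bar m})-(A^\alpha_{l_t k}-A^\alpha_{l_t\bar m})]$ under the $\gamma'$ shift and by the exact negative of this (rescaled by $\eta/\rho$) under the $\gamma''$ shift. The $\eta$- and $\rho$-weighted contributions therefore cancel transition-by-transition, purely algebraically and with no appeal to the \textbf{WBP}; summing over the block gives part~(iii)$\,=0$.

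With the lemma in hand I would decompose
\[
  \eta\,[I^{(n)}(\gamma') - I^{(n)}(\gamma)] + \rho\,[I^{(n)}(\gamma'') - I^{(n)}(\gamma)]
\]
exactly as in the proof of Proposition~\ref{prop:straight}. After the block costs common to pairs of the three paths cancel, the remainder splits into two groups, both evaluated with $a=x^{\beta,\bar m,k;\eta}$ and $b=z$: the four surviving ``corner'' $\beta$-block costs assemble into an instance of part~(ii) of the lemma, and the interior block (the ``$\cdots$'' portion, shifted in $\gamma'$ and $\gamma''$) assembles into an instance of part~(iii). In the one-population exposition the interior piece is further split into its first transition and the remainder, but that subdivision is purely cosmetic. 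Each group vanishes, so the whole expression is zero; since $\eta,\rho>0$, the differences $I^{(n)}(\gamma')-I^{(n)}(\gamma)$ and $I^{(n)}(\gamma'')-I^{(n)}(\gamma)$ cannot be both strictly positive nor both strictly negative, which is exactly the stated trichotomy. The $\alpha$-agent version is obtained verbatim after interchanging $\alpha$ and $\beta$ and using that $\alpha$-transition costs depend on the state only through $x_\beta$.

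The one real obstacle, relative to the one-population argument, is the bookkeeping in part~(iii) caused by the interior block being a mixed $\alpha/\beta$ path: one must separate the two transition types and verify that the $\beta$-transitions drop out by invariance while the $\alpha$-transitions cancel by the $\eta\leftrightarrow\rho$ antisymmetry. Everything else is a transcription of the one-population computation, in fact made simpler by the absence of the quadratic correction term.
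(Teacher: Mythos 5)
Your proposal is correct and follows essentially the same route as the paper: the paper's proof likewise transcribes the one-population argument, using exactly your two structural facts---that $\beta$-transition costs depend on the state only through $x_\alpha$ (its Lemma \ref{lem:irrelev}) and that $\alpha$-transition costs shift antisymmetrically in $\eta$ and $\rho$ under the two $\beta$-shifts (its Lemma \ref{lem:bet-comp})---and then assembles $\eta[I^{(n)}(\gamma')-I^{(n)}(\gamma)]+\rho[I^{(n)}(\gamma'')-I^{(n)}(\gamma)]=0$ from the same corner-plus-interior decomposition. Your only departures are cosmetic: you treat the interior ``$\cdots$'' block in one piece rather than splitting off its first segment, and you make explicit the simplification (no quadratic self-interaction term) that the paper leaves implicit.
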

\noindent \emph{Proof.}
We start with the following lemma.
\begin{lem} \label{lem:irrelev}
We have the following results:
\begin{align*}
    c^{(n)}(x, x^{\alpha, i, j}) = c^{(n)}(z, z^{\alpha, i, j}) \text{ for all } x_\beta = z_\beta \\
    c^{(n)}(x, x^{\beta, i, j}) = c^{(n)}(z, z^{\beta, i, j}) \text{ for all } x_\alpha = z_\alpha
\end{align*}
\begin{proof}
  This is immediate from the definition.
\end{proof}
\end{lem}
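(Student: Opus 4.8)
The plan is to reduce both identities to the single structural feature of two-population (bimatrix) games: the expected payoff to an agent of one population is determined entirely by the \emph{other} population's mixed state. Concretely, the expected payoff to an $\alpha$-agent was defined by $\pi_\alpha(i,x) = \pi_\alpha(i,x_\beta) = \sum_j x_\beta(j) A^\alpha_{ij}$, which manifestly does not involve $x_\alpha$; symmetrically $\pi_\beta(j,x)=\pi_\beta(j,x_\alpha)$ involves only $x_\alpha$. This decoupling is essentially the entire content of the lemma, so the argument will be a short definitional unfolding.

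First I would record the cost of an $\alpha$-transition in the form that exposes its dependence. For $x \in D^{(n)}(e_{\bar m})$ equation \eqref{eq:2p-cost} gives $c^{(n)}(x,x^{\alpha,i,j}) = \pi_\alpha(\bar m, x) - \pi_\alpha(j,x)$, and by the identity $\pi_\alpha(\cdot,x)=\pi_\alpha(\cdot,x_\beta)$ this equals $\pi_\alpha(\bar m, x_\beta) - \pi_\alpha(j, x_\beta)$, a function of $x_\beta$ alone. (If one wishes to avoid the basin hypothesis, the same conclusion follows from the general cost form $\max_l \pi_\alpha(l,x) - \pi_\alpha(j,x)$, since every term there also depends only on $x_\beta$.) Now, assuming $x_\beta = z_\beta$, I substitute directly to obtain $c^{(n)}(x,x^{\alpha,i,j}) = \pi_\alpha(\bar m, z_\beta) - \pi_\alpha(j, z_\beta) = c^{(n)}(z, z^{\alpha,i,j})$, which is the first identity. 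The second identity follows from the identical argument with the roles of $\alpha$ and $\beta$ interchanged, using $\pi_\beta(\cdot,x)=\pi_\beta(\cdot,x_\alpha)$ and $c^{(n)}(x,x^{\beta,i,j})=\pi_\beta(\bar m,x)-\pi_\beta(j,x)$, so that the substitution $x_\alpha = z_\alpha$ collapses the two sides.

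There is no substantive obstacle here, in keeping with the text's one-line proof (``immediate from the definition''). The only point worth flagging is bookkeeping about the domain: one should ensure that both $x$ and $z$ lie in the basin $D^{(n)}(e_{\bar m})$ so that the simplified expression \eqref{eq:2p-cost} applies to each, or else invoke the max-based definition, which is valid everywhere and makes the domain issue moot. Either way, the payoff decoupling guarantees that each cost is a function only of the opposing population's state, and equality of that state forces equality of the costs.
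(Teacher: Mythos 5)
Your proof is correct and is exactly the definitional unfolding the paper intends by ``immediate from the definition'': since $\pi_\alpha(\cdot,x)=\pi_\alpha(\cdot,x_\beta)$ and $\pi_\beta(\cdot,x)=\pi_\beta(\cdot,x_\alpha)$, each cost in equation \eqref{eq:2p-cost} (or in the max-based form) is a function of the opposing population's state alone, so equality of that state forces equality of costs. Your remark that the max-based cost sidesteps the basin hypothesis is a sensible bit of bookkeeping but changes nothing of substance.
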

Next we show the following lemma.

\begin{lem} \label{lem:bet-comp}
We have the following results:\\
\[
    \eta[ c^{(n)} (a^{\beta, \bar m, k, \rho} , b^{\beta, \bar m, k,\rho}) -c^{(n)}(a,b)] + \rho [ c^{(n)}(a^{\beta, k,\bar m, \eta}, b^{\beta, k, \bar m, \eta}) -c^{(n)}(a, b)]=0
\]
\end{lem}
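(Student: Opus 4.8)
The plan is to follow the single-population argument of Lemma~\ref{lem:1st}(iii) almost verbatim, replacing the appeal to the marginal bandwagon property by the cross-population decoupling already recorded in Lemma~\ref{lem:irrelev}. I read $c^{(n)}(a,b)$ as the cost $I^{(n)}(\gamma_{a\to b})$ of the fixed path joining $a$ to $b$, and I take the three paths appearing in the statement to share one ordered list of transitions. Writing $a=a_{1},a_{2},\dots,a_{T}=b$ with $a_{t+1}=(a_{t})^{\kappa_{t},i_{t},l_{t}}$, and noting that translating every state by a common vector preserves the increments $e^{\kappa}_{l}-e^{\kappa}_{i}$ (so the shifted paths genuinely use the same transitions), I expand each path cost as $\sum_{t}c^{(n)}(a_{t},(a_{t})^{\kappa_{t},i_{t},l_{t}})$, and likewise for the two shifted paths. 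All intermediate states are assumed to lie in $D^{(n)}(e_{\bar m})$, so that the cost formula \eqref{eq:2p-cost} applies at every step.

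With this the left-hand side becomes $\sum_{t=1}^{T-1}\Theta_{t}$, where $\Theta_{t}$ collects the $\eta$- and $\rho$-weighted cost differences coming from the $t$-th transition under the two shifts, and it suffices to prove $\Theta_{t}=0$ for each $t$. I would split on the population performing the $t$-th transition. If $\kappa_{t}=\beta$, then by \eqref{eq:2p-cost} its cost equals $\pi_{\beta}(\bar m,\cdot)-\pi_{\beta}(l_{t},\cdot)$, which depends only on the $\alpha$-marginal; since both shifts $(\beta,\bar m,k,\rho)$ and $(\beta,k,\bar m,\eta)$ move only the $\beta$-marginal, Lemma~\ref{lem:irrelev} forces each of the two cost differences to vanish, so $\Theta_{t}=0$ outright.

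The only arithmetic is the case $\kappa_{t}=\alpha$. There the transition cost $\pi_{\alpha}(\bar m,\cdot)-\pi_{\alpha}(l_{t},\cdot)$ is linear in the $\beta$-marginal, and using $a_{t}^{\beta,\bar m,k,\rho}=a_{t}+\tfrac{\rho}{n}(e^{\beta}_{k}-e^{\beta}_{\bar m})$ and $a_{t}^{\beta,k,\bar m,\eta}=a_{t}+\tfrac{\eta}{n}(e^{\beta}_{\bar m}-e^{\beta}_{k})$ the two cost differences evaluate to $\tfrac{\rho}{n}\Phi_{t}$ and $\tfrac{\eta}{n}(-\Phi_{t})$ respectively, where $\Phi_{t}:=(A^{\alpha}_{\bar m k}-A^{\alpha}_{\bar m\bar m})-(A^{\alpha}_{l_{t} k}-A^{\alpha}_{l_{t}\bar m})$. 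Hence $\Theta_{t}=\eta\,\tfrac{\rho}{n}\Phi_{t}+\rho\,\tfrac{\eta}{n}(-\Phi_{t})=0$, and summing over $t$ yields the claimed identity. The analogous statement for $\alpha$-driven blocks follows by exchanging the roles of the two populations.

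The substantive content is entirely the decoupling in Lemma~\ref{lem:irrelev}: a $\beta$-shift leaves every $\beta$-transition untouched and perturbs each $\alpha$-transition cost by an amount linear in the shift, so the opposite shifts weighted by $\eta$ and $\rho$ cancel term by term for any game; the marginal bandwagon property is not needed for this particular identity. I expect the only delicate point to be the bookkeeping, namely checking that the three paths really share their transition list after the shift and that all intermediate states remain in $D^{(n)}(e_{\bar m})$ so that \eqref{eq:2p-cost} is legitimately applied at every transition. Once that is secured, the cancellation is forced.
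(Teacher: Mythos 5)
Your proof is correct and follows essentially the same route as the paper's: decompose the path into its transitions, invoke Lemma~\ref{lem:irrelev} to kill the $\beta$-transition terms outright, and for $\alpha$-transitions exploit the linearity of $\pi_{\alpha}(\bar m,\cdot)-\pi_{\alpha}(l_{t},\cdot)$ in the $\beta$-marginal so that the $\eta$- and $\rho$-weighted differences cancel term by term. Your explicit computation of $\Phi_{t}$ just spells out the algebra the paper leaves implicit, and your observation that no bandwagon property is needed for this identity is also consistent with the paper's argument.
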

\begin{proof}
  Suppose that $(a,b)=(a_1, a_2, \cdots, a_T)$ where $a_T = b$.  Suppose that $a_{t+1} = (a_t)^{\beta, i_t, l_t}$.
  Then by applying Lemma \ref{lem:irrelev}, we obtain
  \[
    \eta[ c^{(n)} (a_t^{\beta, \bar m, k, \rho} , (a_t^{\beta, \bar m, k,\rho})^{\beta, i_t, l_t}) -c^{(n)}(a_t,a_t^{\beta, i_t, l_t})] + \rho [ c^{(n)}(a_t^{\beta, k,\bar m, \eta}, (a_t^{\beta, k, \bar m, \eta})^{\beta, i_t, l_t} ) -c^{(n)}(a_t, a_t^{\beta, i_t, l_t})]=0
  \]
  We next suppose that $a_{t+1} = (a_t)^{\alpha, i_t, l_t}$.
  \begin{align*}
     & \eta[ c^{(n)} (a_t^{\beta, \bar m, k, \rho} , (a_t^{\beta, \bar m, k,\rho})^{\alpha, i_t, l_t}) -c^{(n)}(a_t,a_t^{\alpha, i_t, l_t})] + \rho [ c^{(n)}(a_t^{\beta, k,\bar m, \eta}, (a_t^{\beta, k, \bar m, \eta})^{\alpha, i_t, l_t} ) -c^{(n)}(a_t, a_t^{\alpha, i_t, l_t}) \\
     = & \eta[ \pi_\alpha(\bar m, a_t^{\beta, \bar m, k, \rho}) - \pi_\alpha(l_t, a_t^{\beta, \bar m, k, \rho}) -\pi_\alpha(\bar m, a_t) +\pi_\alpha(l_t, a_t)] \\
      & +\rho[ \pi_\alpha(\bar m, a_t^{\beta, k, \bar m,  \eta}) - \pi_\alpha(l_t, a_t^{\beta, k, \bar m,  \eta}) -\pi_\alpha(\bar m, a_t) +\pi_\alpha(l_t, a_t)] \\
      = &0
    \end{align*}
    Thus we find
    \begin{align*}
		& \eta [ I^{(n)}(\gamma_{a^{\beta, \bar m, k, \rho} \rightarrow b^{\beta, \bar m , k, \rho}})-I^{(n)}(\gamma_{a \rightarrow b})] + \rho[I^{(n)}(\gamma_{a^{\beta, k,\bar m , \eta} \rightarrow b^{\beta, k, \bar m , \eta}})- I(\gamma_{a \rightarrow b})] \\
		= & \sum_{t=1}^{T-1} \eta[c^{(n)}( {a_t}^{\beta, \bar m, k, \rho}, {({a_t}^{\beta, \bar m, k, \rho}})^{i_t, l_t}) -c^{(n)}(a_t, {a_t}^{\beta, i_t, l_t})]  + \rho[c^{(n)}({a_t}^{\beta, k,\bar m , \eta}, ({a_t}^{\beta, k, \bar m , \eta})^{i_t, l_t})- c^{(n)}(a_t, {a_t}^{i_t, l_t})] \\
		= & 0
	\end{align*}
\end{proof}

\begin{lem} \label{lem:bet-comp}
We have the following results:\\
(i) $\eta [ c^{(n)}(x^{\beta, \bar m, k; \eta}, x^{(\beta, \bar m, k; \eta),(\beta, \bar m, k; \rho) }) -c^{(n)}(z, z^{(\beta, \bar m, k; \rho)})] + \rho[ c^{(n)}(z^{(\beta, k, \bar m;  \eta)}, z)-c^{(n)}(x, x^{\beta, \bar m , k; \eta})]=0$ \\
(ii) $\eta[ c^{(n)}(x^{(\beta, \bar m, k; \eta),(\beta, \bar m, k; \rho) }, y^{\beta, \bar m , k; \rho}) - c^{(n)} (x^{\beta, \bar m ,k ; \eta}, y)] + \rho [c^{(n)}(x, y^{\beta, k, \bar m; \eta}) - c^{(n)}(x^{\beta, \bar m, k , \eta}, y)] = 0$  \\
(iii) $\eta[ I^{(n)}(\gamma_{y^{\beta, \bar m, k; \rho} \rightarrow z^{\beta, \bar m, k \rho}}) - I^{(n)}(\gamma_{y \rightarrow z})] + \rho [ I^{(n)}(\gamma_{y^{\beta, k, \bar m, \eta} \rightarrow z^{\beta, k, \bar m, \eta}})-I^{(n)}(\gamma_{y \rightarrow z})] =0$
\end{lem}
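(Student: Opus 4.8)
The plan is to prove the three identities (i)--(iii) separately, and the single structural fact driving all of them is the one recorded in Lemma \ref{lem:irrelev}: in the two-population model the cost of a $\beta$-transition depends only on the $\alpha$-component of the state (and symmetrically for $\alpha$-transitions). Consequently a $\beta$-transition never alters the cost of any later $\beta$-transition of the same kind, so a block of identical consecutive $\beta$-transitions accumulates cost \emph{linearly}, with none of the quadratic correction that appears in the one-population Lemma \ref{lem:1st}(i). This is the only place where the two-population structure genuinely differs from the one-population argument.

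First I would record the preliminary identity $c^{(n)}(s,s^{\beta,\bar m,k;\rho})=\rho\,(\pi_\beta(\bar m,s)-\pi_\beta(k,s))$: each of the $\rho$ steps issues from a state whose $\alpha$-component equals $s_\alpha$, so by Lemma \ref{lem:irrelev} each step costs the same $\pi_\beta(\bar m,s)-\pi_\beta(k,s)$. Part (i) then follows immediately. Writing $g(s):=\pi_\beta(\bar m,s)-\pi_\beta(k,s)$ and using that $x$ and $x^{\beta,\bar m,k;\eta}$ share the same $\alpha$-component (only $\beta$-transitions separate them), and likewise $z$ and $z^{\beta,k,\bar m;\eta}$, the four blocks in (i) evaluate to $\rho g(x),\ \rho g(z),\ \eta g(z),\ \eta g(x)$ respectively, so the left-hand side collapses to $\eta[\rho g(x)-\rho g(z)]+\rho[\eta g(z)-\eta g(x)]=0$.

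For parts (ii) and (iii) I would recompute nothing and instead quote the preceding path-comparison lemma, namely the identity $\eta[c^{(n)}(a^{\beta,\bar m,k,\rho},b^{\beta,\bar m,k,\rho})-c^{(n)}(a,b)]+\rho[c^{(n)}(a^{\beta,k,\bar m,\eta},b^{\beta,k,\bar m,\eta})-c^{(n)}(a,b)]=0$, valid for any path from $a$ to $b$ lying in $D^{(n)}(e_{\bar m})$. Part (iii) is this identity verbatim with $a=y$ and $b=z$. Part (ii) is the instance $a=x^{\beta,\bar m,k;\eta}$, $b=y$, where I would only need to check the superscripts: $a^{\beta,\bar m,k,\rho}=x^{(\beta,\bar m,k;\eta),(\beta,\bar m,k;\rho)}$, while $a^{\beta,k,\bar m,\eta}=x$ because the $\eta$ reverse transitions undo the opening block, and $b^{\beta,\bar m,k,\rho}=y^{\beta,\bar m,k;\rho}$, $b^{\beta,k,\bar m,\eta}=y^{\beta,k,\bar m;\eta}$, which reproduces (ii) exactly.

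The mathematics is light; the main obstacle will be purely clerical. The superscript calculus carries four independent conventions at once --- population label ($\alpha$ versus $\beta$), transition direction ($\bar m\to k$ versus $k\to\bar m$), block length ($\eta$ versus $\rho$), and the shifted endpoints --- and the identities only close if the substitutions into the preceding lemma land on precisely the states named in (ii) and (iii). The check that makes everything fit is that applying $\eta$ reverse transitions to $x^{\beta,\bar m,k;\eta}$ returns $x$, which is exactly what forces the $\rho$-weighted bracket to telescope. Beyond the Condition \textbf{B} setup and the tacit assumption that every intermediate state remains in $D^{(n)}(e_{\bar m})$ (so that the linear cost formula \eqref{eq:2p-cost} applies), no further hypotheses are needed.
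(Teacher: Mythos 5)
Your proof is correct and follows essentially the same route as the paper's: part (i) via Lemma \ref{lem:irrelev} and the linear accumulation of identical $\beta$-block costs, and parts (ii)--(iii) by substituting $a:=x^{\beta,\bar m,k;\eta}$, $b:=y$ (respectively $a:=y$, $b:=z$) into the preceding path-comparison identity, exactly the substitutions the paper uses. Your explicit check that $a^{\beta,k,\bar m,\eta}=x$ (the reverse block undoing the opening block) is precisely the point the paper leaves implicit, so nothing is missing.
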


\begin{proof}
    (i) By applying Lemma \ref{lem:irrelev}, we find that
    \begin{align*}
       & \eta [ c^{(n)}(x^{\beta, \bar m, k; \eta}, x^{(\beta, \bar m, k; \eta),(\beta, \bar m, k; \rho) }) -c^{(n)}(z, z^{(\beta, \bar m, k; \rho)})] + \rho[ c^{(n)}(z^{(\beta, k, \bar m;  \eta)}, z)-c^{(n)}(x, x^{\beta \bar m , k; \eta})] \\
      = & \eta [ c^{(n)}(x, x^{(\beta, \bar m, k; \rho) } -c^{(n)}(z, z^{(\beta, \bar m, k; \rho)})] + \rho[ c^{(n)}(z, z^{(\beta, \bar m, k;  \eta)})-c^{(n)}(x, x^{\beta, \bar m , k; \eta})] \\
      = & \eta \rho [ \pi_\beta (\bar m, x) - \pi_\beta (k, x) - \pi_\beta (\bar m, z) + \pi_\beta (k, z)] + \rho \eta[ \pi_\beta(\bar m, z) - \pi_\beta(k, z) - \pi_\beta(\bar m ,x) + \pi_\beta (k, x)] \\
      = & 0
    \end{align*}
       (ii) follows from by letting $a:=x^{\beta, \bar m ,k ;\eta}$ and $b:=y$ in Lemma \ref{lem:bet-comp} and (iii) follows from by letting $a:=y$ and $b:=z$ in Lemma \ref{lem:bet-comp}.
\end{proof}

\begin{proof}[Proof of Proposition \ref{prop:2p-straight}]
We find that
\begin{align*}
 	   & \eta [I^{(n)}(\gamma')-I^{(n)}(\gamma)]+\rho [I^{(n)}(\gamma'')-I^{(n)}(\gamma)] \\
	  = & \underbrace{\eta [ c^{(n)}(x^{\beta, \bar m, k; \eta}, x^{(\beta, \bar m, k; \eta),(\beta, \bar m, k; \rho) }) -c^{(n)}(z, z^{(\beta, \bar m, k; \rho)})] + \rho[ c^{(n)}(z^{(\beta, k, \bar m;  \eta)}, z)-c^{(n)}(x, x^{\beta, \bar m , k; \eta})]}_{\text{(i)}}\\
	 + & \underbrace{\eta[ c^{(n)}(x^{(\beta, \bar m, k; \eta),(\beta, \bar m, k; \rho) }, y^{\beta, \bar m , k; \rho}) - c^{(n)} (x^{\beta, \bar m ,k ; \eta}, y)] + \rho [c^{(n)}(x, y^{\beta, k, \bar m; \eta}) - c^{(n)}(x^{\beta, \bar m, k , \eta}, y)]}_{\text{(ii)}} \\
	 + & \underbrace{\eta[ I^{(n)}(\gamma_{y^{\beta, \bar m, k; \rho} \rightarrow z^{\beta, \bar m, k \rho}}) - I^{(n)}(\gamma_{y \rightarrow z})] + \rho [ I^{(n)}(\gamma_{y^{\beta, k, \bar m, \eta} \rightarrow z^{\beta, k, \bar m, \eta}})-I^{(n)}(\gamma_{y\rightarrow z})] }_{\text{(iii)}}
\end{align*}
and Lemma \ref{lem:bet-comp} (i), (ii), and (iii) show the desired result.
\end{proof}

We also define $\mathcal{J}_{\bar m}^{(n)}$ and $\mathcal{K}_{\bar{m}}^{(n)}$ analogously to equations \eqref{eq:path-J} and \eqref{eq:path-K}. That is, $\mathcal{J}_{\bar m}^{(n)}$ is the set of all paths in which all the transitions are from strategy $\bar m$ and $\mathcal{K}_{\bar m}^{(n)}$ is the set of all paths consisting of consecutive transitions from $\bar m$ to some other strategy. From Propositions \ref{prop:2p-pos2} and \ref{prop:2p-straight}, we next show that the minimum transition cost path $\gamma$ involves only transitions from $\bar{m}$.
\begin{prop}
\label{prop:2p-fromm}Suppose that the \textbf{WBP} holds.

(i) We have
\[
\min\{I^{(n)}(\gamma):\gamma\in\mathcal{G}_{\bar{m}}^{(n)}\}=\min\{I^{(n)}(\gamma):\gamma\in\mathcal{J}_{\bar{m}}^{(n)}\}.
\]

(ii) We have
\[
\min\{I^{(n)}(\gamma):\gamma\in\mathcal{G}_{\bar{m}}^{(n)}\}=\min\{I^{(n)}(\gamma):\gamma\in\mathcal{K}_{\bar{m}}^{(n)}\}.
\]
\end{prop}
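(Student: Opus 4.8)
The plan is to mirror the one-population argument for Proposition \ref{prop:red}, replacing the roles of Proposition \ref{prop:Pos-feedbacks} and Proposition \ref{prop:straight} by their two-population counterparts, namely Lemma \ref{lem:2p-pos1} together with Proposition \ref{prop:2p-pos2}, and Proposition \ref{prop:2p-straight}. Since $\mathcal{K}_{\bar m}^{(n)} \subset \mathcal{J}_{\bar m}^{(n)} \subset \mathcal{G}_{\bar m}^{(n)}$, in both parts the inequality ``$\leq$'' is automatic, so it suffices to produce, from an arbitrary $\gamma \in \mathcal{G}_{\bar m}^{(n)}$ (resp.\ $\gamma \in \mathcal{J}_{\bar m}^{(n)}$), a path in the smaller class whose cost does not exceed $I^{(n)}(\gamma)$.

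For Part (i) I would first record the escape-preservation fact: if $z \notin D^{(n)}(e_{\bar m})$ then $z^{\kappa,\bar m,i} \notin D^{(n)}(e_{\bar m})$ for $\kappa = \alpha,\beta$. A $\beta$-transition leaves $x_\alpha$ (hence the $\beta$-best-response condition) untouched, so the only case to check is when $z$ violates the $\alpha$-condition; there the gap $\pi_\alpha(k, z^{\beta,\bar m,i}) - \pi_\alpha(\bar m, z^{\beta,\bar m,i})$ equals $\pi_\alpha(k,z) - \pi_\alpha(\bar m, z)$ plus $\tfrac{1}{n}(A^\alpha_{\bar m \bar m} - A^\alpha_{k\bar m} - A^\alpha_{\bar m i} + A^\alpha_{ki})$, which is nonnegative by the $\alpha$-part of the \textbf{WBP}, so the gap stays strictly positive; the $\alpha$-case is symmetric via the $\beta$-part of the \textbf{WBP}. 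Combined with the source-independence of the cost, $c^{(n)}(x, x^{\kappa,i,l}) = \pi_\kappa(\bar m, x) - \pi_\kappa(l, x)$ (independent of $i$), this shows the terminal transition of any escape path can be replaced by the corresponding from-$\bar m$ transition at equal cost while still exiting the basin.

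I would then recurse on the position $t$ of the last transition of $\gamma$ with source $\neq \bar m$, say a $\kappa$-transition $(\kappa,i,l)$. If $t = T-1$ this is exactly the terminal replacement above. If $t < T-1$, every later transition is from $\bar m$; I locate the first subsequent $\kappa$-transition (necessarily from $\bar m$, since $t$ is the last non-$\bar m$ index), observe that only other-population transitions intervene, and apply Proposition \ref{prop:2p-pos2} to commute $(\kappa,i,l)$ past this block without raising the cost, pushing the last non-$\bar m$ transition strictly later; iterating drives it to position $T-1$. If no later $\kappa$-transition from $\bar m$ exists, $(\kappa,i,l)$ is the last $\kappa$-move and I reduce it locally as in Cases 1--4 of the proof of Proposition \ref{prop:red}(i), with \textbf{MBP} replaced by \textbf{WBP} and Proposition \ref{prop:Pos-feedbacks} replaced by Lemma \ref{lem:2p-pos1}. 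Each modification alters only intermediate states controlled by the escape fact and by Lemma \ref{lem:irrelev}; whenever a modified state leaves $D^{(n)}(e_{\bar m})$ I truncate there, which lowers the (nonnegative) cost and still yields a valid escape path of strictly fewer transitions, so the recursion is well founded and terminates in $\mathcal{J}_{\bar m}^{(n)}$ with cost $\leq I^{(n)}(\gamma)$. Part (ii) is then immediate from Part (i) and Proposition \ref{prop:2p-straight}: given $\gamma \in \mathcal{J}_{\bar m}^{(n)}$, I repeatedly invoke the trichotomy of Proposition \ref{prop:2p-straight} to group consecutive identical transitions (same population, same target), each step replacing $\gamma$ by whichever of $\gamma'$, $\gamma''$ costs no more; these regroupings keep all transitions from $\bar m$ and enlarge the maximal identical blocks, so the process stops at a path in $\mathcal{K}_{\bar m}^{(n)}$ of cost $\leq I^{(n)}(\gamma)$.

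I expect the main obstacle to be the bookkeeping in Part (i): confirming after each commutation or local swap that the resulting path still lies in $\mathcal{G}_{\bar m}^{(n)}$ (all intermediate states in the basin, or else truncatable), and verifying that the recursion strictly decreases a suitable complexity measure (the number of non-$\bar m$ transitions together with the displacement of the last one from the end). The comparison inequalities themselves are supplied verbatim by Lemma \ref{lem:2p-pos1}, Proposition \ref{prop:2p-pos2}, and Proposition \ref{prop:2p-straight}; the delicate part is purely the combinatorial control of the path surgery, together with the basin-membership checks powered by the \textbf{WBP}.
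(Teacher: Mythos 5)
Your proposal is correct and follows essentially the same route as the paper's own proof: Part (i) by exploiting source-independence of the cost together with Proposition \ref{prop:2p-pos2} to push the last transition from a strategy $i \neq \bar m$ to the end of the path and then replace it, and Part (ii) directly from Proposition \ref{prop:2p-straight}. The additional bookkeeping you supply (the \textbf{WBP}-based escape-preservation fact, the truncation step, and the termination measure) only makes explicit what the paper's terser argument leaves implicit.
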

\begin{proof}
For the proof, we suppress the superscript $(n)$. Part (i). Let $\gamma\in\mathcal{G}_{\bar{m}}\backslash\mathcal{J}_{\bar{m}}$.
Let the last transition of $\gamma$ be from $z$ to $z^{\beta,i,l}$
for some $i\neq\bar{m}.$ Since $c^{(n)}(z,z^{\beta,i,l})=c^{(n)}(z,z^{\beta,\bar{m},l})$,
by modifying the last transition from $z^{\beta, i, l}$ to $z^{\beta,\bar{m},l}$
the cost will not be changed. Now, suppose that $x$ is the last state from which a transition occurs from $i\not=\bar{m}$ in the modified path (see $\gamma_1$ in Proposition \ref{prop:2p-pos2}). Then, by applying Proposition \ref{prop:2p-pos2}, we obtain the new path whose last transition is from $i\not=\bar{m}$ (see $\gamma_2$ in Proposition \ref{prop:2p-pos2}). By changing this last transition again, we can obtain a new modified path. In this way, we can remove all $\beta$-agents'
transitions from $i\not=\bar{m}$. Similarly, we can also remove all $\alpha$-agents'
transitions from $i\neq\bar{m}$ using the corresponding part for
 $\alpha$ agents in Proposition \ref{prop:2p-pos2}. Thus, we can
obtain the desired results. Part (ii) immediately follows from Proposition
\ref{prop:2p-straight}.
\end{proof}

Next, we consider the continuous limit. For this, we define
a cost function $\bar{c}(\mathbf{p},\mathbf{q})$, for $\mathbf{p}=(p_\alpha,p_\beta), \mathbf{q}=(q_\alpha, q_\beta) \in \Delta_\alpha \times \Delta_\beta$. Let $\mathbf{q}=\mathbf{p}+(\rho(e_{i}^{\alpha}-e_{j}^{\alpha}),0)$
or $\mathbf{q}=\mathbf{p}+(0, \rho(e_{i}^{\beta}-e_{j}^{\beta}))$ for some $\rho>0$. If
$\mathbf{p},\mathbf{q}\in\bar{D}(e_{\bar{m}})$,
\[
\bar c(\mathbf{p},\mathbf{q})=(p_{\alpha, j}-q_{\alpha, j})(\pi_{\alpha}(\bar{m},p)-\pi_{\alpha}(j,p)) \text{  or  } \bar c(\mathbf{p},\mathbf{q})=(p_{\beta, j}-q_{\beta, j})(\pi_{\beta}(\bar{m},p)-\pi_{\beta}(j,p)).
\]

We similarly define $\bar{\mathcal{K}}_{\bar{m}}$ as in the one population model  and from $\zeta=\zeta(t)\in\mathcal{K}_{\bar{m}}$, where $t=((t^{\alpha},t^{\beta});(i^\alpha, j^\beta))=(((t_{1}^{\alpha},\cdots,t_{K}^{\alpha}),(t_{1}^{\beta},\cdots,t_{K}^{\beta}));(i^\alpha_1,\cdots, i^\alpha_K);(j^\beta_1,\cdots, j^\beta_K))$
and define $\omega(t)=\sum_{s=0}^{K-1}\bar{c}(\mathbf{p}^{(s)},\mathbf{p}^{(s+1)}).$
Then, we have the following lemma.
\begin{lem}
\label{lem:2p-affine}Let $\bar{t}^{\beta}$, $i^\alpha$, $j^\alpha$ be fixed. Then $\omega(\cdot,\bar{t}^{\beta})$
is affine. A similar statement holds for the case where $\bar{t}^{\alpha}$
is fixed.\end{lem}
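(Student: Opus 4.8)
The plan is to expand $\omega(\,\cdot\,,\bar t^{\beta})$ as the sum of its segment costs and to verify that, with $\bar t^{\beta}$ and the index data $i^{\alpha},j^{\beta}$ held fixed, each summand is an affine function of $t^{\alpha}=(t_{1}^{\alpha},\ldots,t_{K}^{\alpha})$; since a finite sum of affine functions is affine, the conclusion follows at once. The structural fact I would lean on is the bilinearity of the two-population game: $\pi_{\alpha}(i,x)=\pi_{\alpha}(i,x_{\beta})$ depends only on the $\beta$-component of the state and $\pi_{\beta}(j,x)=\pi_{\beta}(j,x_{\alpha})$ only on the $\alpha$-component. This is exactly why the two-population cost $\bar c$ evaluates the payoff gap at $\mathbf{p}$ rather than at the midpoint $\tfrac{1}{2}(\mathbf{p}+\mathbf{q})$ used in the one-population Definition \ref{def:c-st}: along a move of $\alpha$-agents the $\alpha$-payoff gap is literally constant, so the factor $\tfrac12$ and the averaging disappear.

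First I would record how the intermediate states $\mathbf{p}^{(s)}=(p_{\alpha}^{(s)},p_{\beta}^{(s)})$ of the path $\zeta(t)$ depend on $t^{\alpha}$. Because the order of transitions is fixed by the index data, $p_{\beta}^{(s)}$ equals $e_{\bar m}^{\beta}$ plus the sum of the $\beta$-transitions completed before stage $s$; their lengths are entries of the fixed vector $\bar t^{\beta}$, so $p_{\beta}^{(s)}$ is a constant (independent of $t^{\alpha}$). By contrast $p_{\alpha}^{(s)}$ equals $e_{\bar m}^{\alpha}$ plus a partial sum of the vectors $t_{l}^{\alpha}(e_{i_{l}^{\alpha}}^{\alpha}-e_{\bar m}^{\alpha})$ and is therefore an affine function of $t^{\alpha}$. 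I would then split the segments into two types. An $\alpha$-segment, the $l$-th move of $\alpha$-agents, contributes $t_{l}^{\alpha}\,\bigl(\pi_{\alpha}(\bar m,\mathbf{p}^{(s)})-\pi_{\alpha}(i_{l}^{\alpha},\mathbf{p}^{(s)})\bigr)$; since this $\alpha$-payoff gap depends only on $p_{\beta}^{(s)}$, a constant, the term is \emph{linear} in the single variable $t_{l}^{\alpha}$. A $\beta$-segment of fixed length $\bar t_{l}^{\beta}$ contributes $\bar t_{l}^{\beta}\,\bigl(\pi_{\beta}(\bar m,\mathbf{p}^{(s)})-\pi_{\beta}(j_{l}^{\beta},\mathbf{p}^{(s)})\bigr)$; the $\beta$-payoff gap is a linear form in $p_{\alpha}^{(s)}$, which is affine in $t^{\alpha}$, so this term is affine in $t^{\alpha}$ as well. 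Summing both families gives an affine function of $t^{\alpha}$, and the statement with $\bar t^{\alpha}$ fixed follows by interchanging the roles of $\alpha$ and $\beta$.

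The computation is routine; the only real content is the bookkeeping that isolates the $t^{\alpha}$-dependence, and the one point I would check with care is that no segment can generate a term quadratic in $t^{\alpha}$. Such a term could arise only from a payoff gap that varies with $p_{\alpha}$ multiplied by a length that is itself a free variable $t_{l}^{\alpha}$ — that is, only within an $\alpha$-segment. But an $\alpha$-segment carries an $\alpha$-payoff gap, which by bilinearity does not depend on $p_{\alpha}$ at all; hence no cross term survives and the potential obstacle evaporates. I therefore anticipate no genuine difficulty beyond stating this dichotomy cleanly.
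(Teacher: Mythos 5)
Your proof is correct and follows essentially the same route as the paper: both decompose $\omega$ into segment costs and exploit the two-population bilinearity ($\pi_\alpha$ depends only on $x_\beta$, $\pi_\beta$ only on $x_\alpha$) to rule out any term quadratic in $t^\alpha$. The paper packages this as an explicit computation showing $\partial\omega/\partial t_i^\alpha$ is a constant determined by $\bar t^\beta$ (the segment's own gap plus cross-terms from the fixed-length $\beta$-segments), which is just the derivative form of your term-by-term affineness check.
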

\begin{proof}
Suppose that $t_{i}^{\alpha}$
is associated with $\alpha$ agents' transitions from $\bar{m}$ to
$i$. Similarly, $t_{j}^{\beta}$ is associated with $\beta$ agents'
transitions from $\bar{m}$ to $j.$ Let $\mathbf{p}$ be the state from which the
transitions represented by $t_{i}^{\alpha}$ start.
{} Then we find that
\begin{align*}
\frac{\partial\omega}{\partial t_{i}^{\alpha}}=(\pi_{\alpha}(\bar{m},p_{\beta})-\pi_{\alpha}(i,p_{\beta})) & +\bar{t}_{i}^{\beta}(-A_{ii}^{\beta}+A_{i\bar{m}}^{\beta}-A_{\bar{m}\bar{m}}^{\beta}+A_{\bar{m}i}^{\beta})\\
 & +\sum_{j\neq i}\bar{t}_{j}^{\beta}(-A_{ij}^{\beta}+A_{i\bar{m}}^{\beta}-A_{\bar{m}\bar{m}}^{\beta}+A_{\bar{m}j}^{\beta})
\end{align*}
and observe that $\pi_{\alpha}(\bar{m},p_{\beta})-\pi_{\alpha}(i,p_{\beta})$
depends only on $\bar{t}^{\beta}$; this shows that $\omega(\cdot,\bar{t}^{\beta})$
is affine.
\end{proof}
Thus, we similarly consider
\[
\min\{\omega(t):\zeta(t)\in\mathcal{K}_{\bar{m}}\}.
\]

Using the characterization that $\omega$ is affine, we show that
if $t_{i}^{\alpha^{*}}>0$ in an optimal path, then $\pi_{\beta}(\bar{m},\mathbf{q}^{*}(t^{*}))=\pi_{\beta}(i,\mathbf{q}^{*}(t^{*}))$
at the exit point $\mathbf{q}^{*}(t^{*})$, where $t_i^{\alpha^*}$ denotes the transition by an $\alpha$-agent from strategy $\bar m$ to $i$.
\begin{prop}
\label{prop:2p-binding}Suppose that Condition \textbf{B} holds. Then, there
exists $\zeta(t^{*})\in\mathcal{K}_{\bar{m}}$ such that $\omega(t^{*})=\min\{\omega(t):\zeta(t)\in\mathcal{K}_{\bar{m}}\}$
and if $t_{i}^{\alpha^{*}}>0$, then $\pi_{\beta}(\bar{m},\mathbf{q}(t^{*}))=\pi_{\beta}(i,\mathbf{q}(t^{*}))$
and if $t_{j}^{\beta^{*}}>0$, then $\pi_{\alpha}(\bar{m},\mathbf{q}(t^{*}))=\pi_{\alpha}(j,\mathbf{q}(t^{*}))$,
where $\mathbf{q}(t^{*})$ is the end state of $\zeta(t^{*})$. \end{prop}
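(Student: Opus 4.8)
The plan is to transplant the perturbation argument of Proposition~\ref{prop:f-red} to the two--population setting, using the biaffine structure of $\omega$ recorded in Lemma~\ref{lem:2p-affine} in place of the one--population linearity. Two structural observations drive everything. First, there is a \emph{decoupling} of the exit constraints: since $\pi_\beta(l,\mathbf q)$ depends only on the $\alpha$--marginal $q_\alpha$ (hence only on $t^\alpha$) while $\pi_\alpha(l,\mathbf q)$ depends only on $t^\beta$, the boundary $\partial\bar D(e_{\bar m})$ decomposes into ``$\beta$--faces'' $\{\pi_\beta(\bar m,\cdot)=\pi_\beta(k,\cdot)\}$ cut out by $t^\alpha$ alone and ``$\alpha$--faces'' cut out by $t^\beta$ alone. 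Thus the assertion about $\alpha$--transitions (those with $t_i^{\alpha*}>0$) and $\beta$--faces, and the assertion about $\beta$--transitions and $\alpha$--faces, are handled symmetrically and independently, and I treat only the former. Second, fixing the strategy labels and $t^\beta$, Lemma~\ref{lem:2p-affine} makes $\omega(\cdot,t^\beta)$ affine in $t^\alpha$, so the cost is linear along any segment in $t^\alpha$--space on which the ordering is unchanged.

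For existence I would select, among all minimizers, a $t^*$ that minimizes the total misdirected $\alpha$--mass $\sum_i t_i^{\alpha}\,\mathbf{1}\{\pi_\beta(\bar m,\mathbf q(t))>\pi_\beta(i,\mathbf q(t))\}$ (and, symmetrically, the misdirected $\beta$--mass), and argue this quantity is zero. Suppose not, witnessed by some $i$ with $t_i^{\alpha*}>0$ and the $\beta$--face for $i$ slack. If no $\beta$--face binds at $\mathbf q(t^*)$, so the exit is realized only on $\alpha$--faces (insensitive to $t^\alpha$), then increasing $t_i^\alpha$ slightly keeps the path in $\bar D(e_{\bar m})$ and preserves the exit; optimality forces $\partial\omega/\partial t_i^\alpha\ge 0$, and lowering $t_i^\alpha$ to $0$ removes the misdirected mass without raising the cost. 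Otherwise the exit lies on a $\beta$--face of some $k\neq i$, and I move along the direction that decreases $t_i^\alpha$ by $\epsilon$ and increases $t_k^\alpha$ by the unique $\delta$ keeping the $k$--face binding. The ratio $\delta/\epsilon$ is fixed by the signs of $\pi_\beta(\bar m,e_k-e_{\bar m})-\pi_\beta(k,e_k-e_{\bar m})$, which the $\beta$--coordination inequality makes negative, and of $\pi_\beta(\bar m,e_i-e_{\bar m})-\pi_\beta(k,e_i-e_{\bar m})$, which the \textbf{WBP} for $A^\beta$ makes nonpositive, so $\delta\ge 0$ and the direction is admissible. As $\omega(\cdot,t^\beta)$ is affine, the cost is linear along the whole segment reaching $t_i^\alpha=0$: if it strictly decreases I contradict optimality, and if it stays constant I obtain another minimizer with strictly smaller misdirected mass (the $\alpha\!\to\! i$ mass has become correctly directed $\alpha\!\to\! k$ mass), again a contradiction. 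Iterating, and then repeating the symmetric argument in $t^\beta$, yields the desired $t^*$.

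The main obstacle is the sign of the cost change along this trade--off, which is exactly the two--population echo of the negative--definiteness computation behind Proposition~\ref{prop:f-red}. One cannot read it off from the direct $\alpha$--transition costs alone: shifting $q_\alpha$ also reprices every subsequent $\beta$--transition, so the relevant derivative carries the correction terms $\sum_{j}\bar t_j^\beta(-A^\beta_{ij}+A^\beta_{i\bar m}-A^\beta_{\bar m\bar m}+A^\beta_{\bar m j})$ appearing in Lemma~\ref{lem:2p-affine}. I would assemble these, together with the direct terms, into a quadratic form in $(\epsilon,\delta)$ analogous to the $H$--matrix of Proposition~\ref{prop:f-red} and show, via the \textbf{WBP} and coordination inequalities for $A^\beta$, that moving toward the binding face is cost nonincreasing. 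The remaining points are routine: that the truncated perturbed path still lies in $\bar D(e_{\bar m})$ and leaves the non--binding faces slack (convexity of $\bar D(e_{\bar m})$ and continuity), and that the perturbation preserves the consecutive--transition form of $\mathcal K_{\bar m}$, so that Lemma~\ref{lem:2p-affine} and Propositions~\ref{prop:2p-straight} and~\ref{prop:2p-fromm} continue to apply.
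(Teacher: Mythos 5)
Your overall strategy is the same family as the paper's own proof---exploit the decoupling of $\beta$-faces from $t^{\beta}$, use the biaffine structure of Lemma~\ref{lem:2p-affine}, and perturb a minimizer while keeping a binding face binding---and several ingredients are correct: the sign computation giving $\delta/\epsilon\ge 0$ from the \textbf{WBP} and the coordination inequalities, and the subcase where no $\beta$-face binds. The gap is in the central trade-off step. Along your direction $v$ (decrease $t_i^{\alpha}$ by $\epsilon$, increase $t_k^{\alpha}$ by $\delta$) the cost is affine, so it is strictly decreasing, constant, or strictly increasing; you dispose of only the first two. At a minimizer the strictly decreasing case is automatically impossible, so the case your argument must actually confront is exactly the one you omit: strictly increasing along $v$, which is fully consistent with optimality of $t^{*}$ and produces neither a contradiction nor a reduction of misdirected mass. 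The natural patch---apply optimality along $-v$ to force constancy---requires $t_k^{\alpha^{*}}>0$, which is not guaranteed: the $k$-face can bind at $\mathbf{q}(t^{*})$ with $t_k^{\alpha^{*}}=0$, because the weak inequalities in \eqref{eq:wmbp} can hold strictly and the cross-effects $t_l^{\alpha^{*}}\bigl[(A^{\beta}_{i_l\bar m}-A^{\beta}_{i_l k})-(A^{\beta}_{\bar m\bar m}-A^{\beta}_{\bar m k})\bigr]$ of the other coordinates can by themselves drive the $k$-gap to zero. A second, related defect: since you keep only the single face $k$ binding, your slide to $t_i^{\alpha}=0$ can un-bind other faces that were binding, so mass on those coordinates becomes misdirected and your induction quantity need not decrease.

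The repair you propose for the sign problem---assembling a quadratic form in $(\epsilon,\delta)$ analogous to $H_{i,j;k}$ in Proposition~\ref{prop:f-red} and proving the move toward the binding face is cost-nonincreasing---is both unavailable and unnecessary, and this is the substantive misreading. By Lemma~\ref{lem:2p-affine}, $\omega(\cdot,\bar t^{\beta})$ is affine in $t^{\alpha}$: the quadratic terms in the one-population cost come from the $\tfrac12(p+q)$ factor in Definition~\ref{def:c-st}, and they vanish here precisely because $\pi_{\alpha}$ does not move along an $\alpha$-segment. There is no second-order term to make definite, and the first-order sign along $v$ is genuinely configuration-dependent (it trades $A^{\alpha}_{\bar m\bar m}-A^{\alpha}_{i\bar m}$ against $(\delta/\epsilon)\bigl(A^{\alpha}_{\bar m\bar m}-A^{\alpha}_{k\bar m}\bigr)$ plus the repricing terms); it cannot be pinned down by the \textbf{WBP} and coordination inequalities. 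What eliminates the bad sign is optimality applied to a two-sided perturbation, and that is exactly the paper's device: treat the slack coordinate $t_i^{\alpha}$ as the free variable, adjust \emph{all} binding-face coordinates $t_{j_1},\dots,t_{j_L}$ simultaneously as affine functions of $t_i^{\alpha}$ so that every binding face stays binding, perturb $t_i^{\alpha}$ in both directions, conclude from affineness that $\phi'=0$ (otherwise one of the two directions strictly lowers the cost), and then slide $t_i^{\alpha}$ \emph{upward} until its own face binds. If you replace your one-sided, single-face trade with this two-sided, all-faces adjustment, your misdirected-mass selection argument does go through and is a legitimate alternative to the paper's bookkeeping; as written, however, the proof does not close.
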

\begin{proof}
Let $t^{*}$ be given such that $\omega(t^{*})=\min\{\omega(t):\zeta(t)\in\mathcal{K}_{\bar{m}}\}$.
Suppose that $t_{i}^{\alpha^{*}}>0$. The other case follows similarly.
Let $\bar t^\alpha_i$ such that
\[
    \pi_\beta( \bar m, (1- \bar t_i^\alpha) e^\alpha_{\bar m} + \bar t_i^\alpha e^\alpha_i) = \pi_\beta (i, (1-\bar t_i^\alpha) e^\alpha_{\bar m}+\bar t_i^\alpha e^\alpha_{\bar m})
\]
Then, we have
\begin{align}\label{eq:prof-m1}
   & \pi_\beta(i, q_\alpha(t^{\alpha^{*}})) - \pi_\beta(\bar m, q_\alpha(t^{\alpha^{*}}))  \nonumber \\
  = & \pi_\beta (i-\bar m, (1-t_i^{\alpha^{*}})e_{\bar m}^\alpha + t^{\alpha^{*}}_i e_i^\alpha) + \sum_{l \neq i} t^{\alpha^{*}}_l \pi_\beta ( i - \bar m, e_l^\alpha - e_{\bar m}^\alpha ) \nonumber \\
  = & \pi_\beta (i-\bar m, (1-t_i^{\alpha^{*}})e_{\bar m}^\alpha + t^{\alpha^{*}}_i e_i^\alpha) + \sum_{l \neq i} t^{\alpha^{*}}_l ( A^\beta_{li} - A^\beta_{l \bar m} - A^\beta_{\bar m i} + A^\beta_{\bar m \bar m }).
\end{align}
Now, we have two cases: \\
\smallskip

\noindent \textbf{Case 1}: $t^{\alpha^*}_i = \bar t^\alpha_i$. \\
 Since $t^* \in \mathcal{K}_{\bar{m}}$, $\pi_\beta(i, q_\alpha(t^{\alpha^{*}})) - \pi_\beta(\bar m, q_\alpha(t^{\alpha^{*}})) \leq 0$, the second term in \eqref{eq:prof-m1} ($\sum_{l \neq i} t^{\alpha^{*}}_l ( A^\beta_{li} - A^\beta_{l \bar m} - A^\beta_{\bar m i} + A^\beta_{\bar m \bar m })$) is non-positive. Also, the \textbf{WBP} implies that the same term is non-negative, and hence zero. Thus, we have $\pi_\beta(i, q_\alpha(t^{\alpha^{*}})) = \pi_\beta(\bar m, q_\alpha(t^{\alpha^{*}}))$, which is the desired result. \\
\smallskip

\noindent  \textbf{Case 2}:  $0< t^{\alpha^*}_i < \bar t^\alpha_i$. \\
Suppose that
\begin{equation}\label{eq:contra}
  \pi_{\beta}(\bar{m},q_{\alpha}(t^{\alpha^{*}}))>\pi_{\beta}(i,q_{\alpha}(t^{\alpha^{*}})).
\end{equation}
and
\begin{equation}
\pi_{\beta}(\bar{m},q_{\alpha}(t^{\alpha^{*}}))=\pi_{\beta}(j_{1},q_{\alpha}(t^{\alpha^{*}})),\pi_{\beta}(\bar{m},q_{\alpha}(t^{\alpha^{*}}))=\pi_{\beta}(j_{2},q_{\alpha}(t^{\alpha^{*}})),\cdots,\pi_{\beta}(\bar{m},q_{\alpha}(t^{\alpha^{*}}))=\pi_{\beta}(j_{L},q_{\alpha}(t^{\alpha^{*}})),
\label{eq:2p-binding-const}
\end{equation}
where the other constraints for $\pi_{\beta}$ are non-binding. To reach $q_\alpha(t^{\alpha^*})$, there are transitions, $\bar m \rightarrow j_1, \bar m \rightarrow j_2, \cdots, \bar m \rightarrow j_L$ and thus
\[
    q_\alpha(t^{\alpha^*}) = e_{\bar m}^\alpha + \sum_{l=1}^{L}t_{j_l} (e^\alpha_{j_l} - e^\alpha_{\bar m }) + t_i^{\alpha^*}(e^\alpha_i - e^\alpha_{\bar m }) + \sum_k s_k(e^\alpha_k- e_{\bar m})
\]
And we find that
\begin{align*}
   \pi_\beta(j_1 - \bar m, q_\alpha(t^{\alpha^*})= & \sum_{l=1}^L \pi_\beta (j_1 -\bar m, e_{j_l}^\alpha - e_{\bar m}^\alpha) t_{j_l} + \pi(j_1 - \bar m, e^\alpha_i - e^\alpha_{\bar m }) t_i^{\alpha^*} + \pi(j_1 - \bar m, e^\alpha_{\bar m} +  \sum_k s_k(e^\alpha_k- e_{\bar m})) \\
   = &  \sum_{l=1}^L (A^\beta_{\bar m \bar m} - A^\beta_{j_1 \bar m})-(A_{\bar m j_l}-A_{j_1 j_l})t_{j_l} + \pi(j_1 - \bar m, e^\alpha_i - e^\alpha_{\bar m }) t_i^{\alpha^*} + \pi(j_1 - \bar m, e^\alpha_{\bar m} +  \sum_k s_k(e^\alpha_k- e_{\bar m}))\\
   \cdots \\
   \pi_\beta(j_L - \bar m, q_\alpha(t^{\alpha^*})= & \sum_{l=1}^L \pi_\beta (j_L -\bar m, e_{j_l}^\alpha - e_{\bar m}^\alpha) t_{j_l} + \pi(j_L - \bar m, e^\alpha_i - e^\alpha_{\bar m }) t_i^{\alpha^*} + \pi(j_L - \bar m, e^\alpha_{\bar m} +  \sum_k s_k(e^\alpha_k- e_{\bar m})) \\
   = &  \sum_{l=1}^L (A^\beta_{\bar m \bar m} - A^\beta_{j_L \bar m})-(A_{\bar m j_l}-A_{j_L j_l})t_{j_l} + \pi(j_L - \bar m, e^\alpha_i - e^\alpha_{\bar m }) t_i^{\alpha^*} + \pi(j_L- \bar m, e^\alpha_{\bar m} +  \sum_k s_k(e^\alpha_k- e_{\bar m}))
\end{align*}
Thus we can regard equations in \eqref{eq:2p-binding-const} as a set of linear equations in variables,
$t_{j_{1}},t_{j_{2}},\cdots,t_{j_{L}}.$ Then, from the implicit function theorem and Lemma \ref{lem:det} (Condition B)
we can find functions $t_{j_{1}}^{*}(t_{i})$, $t_{j_{2}}^{*}(t_{i}),\cdots$$,\,t_{j_{L}}^{*}(t_{i})$
satisfying \eqref{eq:contra} and \eqref{eq:2p-binding-const} for all $t_{i}\in[t_{i}^{\alpha^*}-\epsilon,t_{i}^{\alpha^*}+\epsilon]$
for some $\epsilon>0$. Observe that $t_{j_{1}}^{*}(t_{i})$, $t_{j_{2}}^{*}(t_{i}),\cdots$$,\,t_{j_{L}}^{*}(t_{i})$
are affine in $t_{i}$. Then, we define $\phi(t_{i})=\omega((t_{i},t_{j_{1}}^{*}(t_{i}),t_{j_{2}}^{*}(t_{i}),\cdots,t_{j_{L}}^{*}(t_{i}),\bar{t}_{i_{1}},\bar{t}_{i_{2}},\cdots,\bar{t}_{i_{L'}}),\bar{t}^{\beta})$.
From Lemma \ref{lem:2p-affine}, we see that $\phi(t_{i})$ is
affine with respect to $t_{i}$. We then find $\phi'$ and again have
two cases. \\
Case 2-1. Suppose that $\phi'=0$. Then, by increasing $t_{i}$
up to $\pi_{\beta}(\bar{m},\mathbf{q}(t_\alpha))=\pi_{\beta}(i,\mathbf{q}(t_\alpha))$, we
can find $t^{**}$ which satisfies $\omega(t^{**})=\omega(t^{*})$
and obtain the desired properties in the proposition. \\
Case 2-2. Suppose that
$\phi'\ne0$.
Then, we have either $\phi(t_{i}^{\alpha^{*}}-\epsilon)>\phi(t_{i}^{\alpha^{*}})>\phi(t_{i}^{\alpha^{*}}+\epsilon)$
or $\phi(t_{i}^{\alpha^{*}}-\epsilon)<\phi(t_{i}^{\alpha^{*}})<\phi(t_{i}^{\alpha^{*}}+\epsilon)$,
in contradiction to the optimality of $t^{*}$.
\end{proof}

\begin{lem}
\label{lem:det} The following statement holds:
\[
\pi_\kappa(\bar{m},r) = \pi_\kappa(i_{1},r),\cdots, \pi_\kappa(\bar{m},r)=\pi_\kappa(i_{K},r),\,r_{\bar m} + \sum_{i=1}^{K}r_{i_l}=1, \,\, \Sigma_r= \{\bar m,  i_1,  \cdots, i_K\}
\]
have a unique solution.\\
$\iff$
$det(D) \neq 0$ where
\[
D=\begin{pmatrix}A^\kappa_{\bar{m}\bar{m}}-A^\kappa_{i_{1}\bar{m}}-(A^\kappa_{\bar{m}i_{1}}-A^\kappa{i_{1}i_{1}})  & \cdots & A^\kappa_{\bar{m}\bar{m}}-A^\kappa_{i_{1}\bar{m}}-(A^\kappa_{\bar{m}i_{K}}-A^\kappa_{i_{1}i_{K}})\\
A^\kappa_{\bar{m}\bar{m}}-A^\kappa_{i_{2}\bar{m}}-(A^\kappa_{\bar{m}i_{1}}-A^\kappa_{i_{2}i_{1}}) & \cdots & A^\kappa_{\bar{m}\bar{m}}-A^\kappa_{i_{2}\bar{m}}-(A^\kappa_{\bar{m}i_{K}}-A_{i_{2}i_{K}})\\
\vdots & \ddots & \vdots\\
A^\kappa_{\bar{m}\bar{m}}-A^\kappa_{i_{K}\bar{m}}-(A^\kappa_{\bar{m}i_{1}}-A^\kappa_{i_{K}i_{1}}) & \cdots & A^\kappa_{\bar{m}\bar{m}}-A^\kappa_{i_{K}\bar{m}}-(A^\kappa_{\bar{m}i_{K}}-A^\kappa_{i_{K}i_{K}})
\end{pmatrix}
\]
\end{lem}
\begin{proof}
We have the following equivalence:
\[
\pi_\kappa(\bar{m},r) = \pi_\kappa(i_{1},r),\cdots, \pi_\kappa(\bar{m},r)=\pi_\kappa(i_{K},r),\,r_{\bar m} + \sum_{i=1}^{K}r_{i_l}=1, \,\, \Sigma_r= \{\bar m,  i_1,  \cdots, i_K\}
\]
have a unique solution if and only if
\begin{align*}
\pi_\kappa(\bar{m},(1-\sum_{l=1}^{K}r_{i_l}) e_{\bar m} + \sum_{l=1}^{K}r_{i_l} e_{i_l})-\pi_\kappa(i_{1},(1-\sum_{l=1}^{K}r_{i_l}) e_{\bar m} + \sum_{l=1}^{K}r_{i_l} e_{i_l}) & =0,\cdots,\\
\pi_\kappa(\bar{m},(1-\sum_{l=1}^{K}r_{i_l}) e_{\bar m} + \sum_{l=1}^{K}r_{i_l} e_{i_l})-\pi_\kappa(i_{K},(1-\sum_{l=1}^{K}r_{i_l}) e_{\bar m} + \sum_{l=1}^{K}r_{i_l} e_{i_l}) & =0
\end{align*}
have a unique solution. Let fix $k$. Then we have
\begin{align*}\label{eq:pi-payoff}
      & \pi_\kappa(i_{k},(1-\sum_{l=1}^{K}r_{i_l}) e_{\bar m} + \sum_{l=1}^{K}r_{i_l} e_{i_l})- \pi_\kappa(\bar{m},(1-\sum_{l=1}^{K}r_{i_l}) e_{\bar m} + \sum_{l=1}^{K}r_{i_l} e_{i_l})  \\
     =&   A^\kappa_{i_k \bar m}- A^\kappa_{\bar m \bar m} + \sum_{l=1}^K ((A^\kappa_{\bar m \bar m} - A^\kappa_{i_k \bar m})-(A^\kappa_{\bar m i_l}-A^\kappa_{i_k i_l})) r_{i_l}
\end{align*}
and from this, we obtain the desired result.

\end{proof}

Let $\mathcal{K}_{\bar{m}}^{*}$ be the set of all paths in $\mathcal{K}_{\bar{m}}$
that satisfy the conditions in Proposition \ref{prop:2p-binding}.
Then, we obviously have
\[
\min\{\omega(t):t\in\mathcal{K}_{\bar{m}}^{*}\}=\min\{\omega(t):t\in\mathcal{K}_{\bar{m}}\}
\]
Next, suppose that $\mathbf{q}^{*}$ is the exit point of the minimum escaping
path. If $\pi_{\beta}(\bar{m},\mathbf{q}^{*})=\pi_{\beta}(i,\mathbf{q}^{*})$ for some
$i$, then $\pi_{\alpha}(\bar{m},\mathbf{q}^{*})>\pi_{\alpha}(l,\mathbf{q}^{*})$ for
all $l$ and vice versa. This is because if $\pi_\beta (\bar m, \mathbf{q}^*) = \pi_\beta (i, \mathbf{q}^*)$ and $\pi_\alpha (\bar m, \mathbf{q}^*) = \pi_\alpha (l, \mathbf{q}^*)$, then we can always construct the escaping path with a smaller cost by removing $\alpha$-agents' (or $\beta$-agents') transitions. Thus, Proposition \ref{prop:2p-binding} implies
that if $\pi_{\beta}(\bar{m},\mathbf{q}^{*})=\pi_{\beta}(i,\mathbf{q}^{*})$ for some
$i$, $t_{j}^{\alpha^{*}}=0$ for all $j$.

\begin{prop}[One-population mistakes]\label{prop:2p-1p}
Suppose that Condition \textbf{B} holds.
Then there exists $t^{*}$ such that $\omega(t^{*})=\min\{\omega(t):t\in\mathcal{K}_{\bar{m}}^{*}\}$ and $t^*$ involves only mistakes of one population.\end{prop}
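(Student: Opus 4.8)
The plan is to start from an optimal straight-line path and to show that, whenever it activates both populations, one can extract an equally cheap escaping path that uses only one of them; the statement then falls out of Proposition \ref{prop:2p-binding}. Concretely, let $t^{*}$ realize $\min\{\omega(t):\zeta(t)\in\mathcal{K}^{*}_{\bar m}\}$, with exit point $\mathbf{q}^{*}=\mathbf{q}(t^{*})$, and suppose the optimal path uses both populations, i.e. $t_{i}^{\alpha*}>0$ and $t_{j}^{\beta*}>0$ for some $i,j$. By Proposition \ref{prop:2p-binding} this forces two indifferences to hold simultaneously at the exit point: $\pi_{\beta}(\bar m,\mathbf{q}^{*})=\pi_{\beta}(i,\mathbf{q}^{*})$ (triggered by the $\alpha$-moves) and $\pi_{\alpha}(\bar m,\mathbf{q}^{*})=\pi_{\alpha}(j,\mathbf{q}^{*})$ (triggered by the $\beta$-moves). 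My goal is to show that at least one of the two single-population truncations of $t^{*}$ is again optimal.

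First I would front-load one population. Since $\pi_{\beta}$ depends on the state only through $x_{\alpha}$ and $\pi_{\alpha}$ only through $x_{\beta}$ (Lemma \ref{lem:irrelev}), reordering $t^{*}$ so that all $\alpha$-transitions precede all $\beta$-transitions leaves the endpoint, and hence feasibility, unchanged; by the trichotomy of the comparison principle (Proposition \ref{prop:2p-straight}), applied repeatedly to swap adjacent blocks, one of the two fully sorted orderings ($\alpha$ first or $\beta$ first) has cost no larger than $\omega(t^{*})$. Say it is the $\alpha$-first ordering. After its $\alpha$-block the state equals $(q_{\alpha}(t^{*}),e^{\beta}_{\bar m})$, which already lies on $\partial\bar D(e_{\bar m})$: the $\beta$-indifference for $i$ binds there because it depends only on $x_{\alpha}=q_{\alpha}(t^{*})$, while $\alpha$ is strictly best-responding because $x_{\beta}=e^{\beta}_{\bar m}$ is the pure convention. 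Thus I may truncate the sorted path at this first contact with the boundary and discard the remaining $\beta$-block; the resulting $\alpha$-only path has cost equal to the $\alpha$-part of the sorted path, which is at most the sorted cost and hence at most $\omega(t^{*})$. Optimality then forces equality, producing a one-population minimizer.

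The delicate point, and the step I expect to be the main obstacle, is the cost bookkeeping that makes the sorting and truncation go the right way for an arbitrary number of strategies and several simultaneous targets. Here the weak marginal bandwagon property is essential: moving one population away from $\bar m$ can only lower the opponent's mistake costs, which is exactly the sign appearing in Lemma \ref{lem:2p-pos1} and in the bilinear correction computed in Lemma \ref{lem:2p-affine}, so the coupling between the two populations' transitions has a controlled sign, and the affine dependence of $\omega$ on each block of transitions lets me locate the minimum over orderings at an extreme. Assembling these estimates for a general $t^{*}$ requires tracking the bilinear cross terms carefully, but it yields the single binding constraint asserted in the discussion preceding the proposition. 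Finally, with only the $\beta$-indifference binding at the (relabeled) optimal exit point and every $\alpha$-constraint strict, the contrapositive of Proposition \ref{prop:2p-binding} gives $t_{j}^{\beta*}=0$ for all $j$, so the optimal path consists solely of $\alpha$-mistakes; the symmetric argument yields the case of solely $\beta$-mistakes.
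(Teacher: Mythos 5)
Your proposal has the same skeleton as the paper's argument: fix $t^{*}$ optimal in $\mathcal{K}^{*}_{\bar m}$, observe that activity of both populations would force a $\beta$-indifference and an $\alpha$-indifference to bind simultaneously at the exit point $\mathbf{q}(t^{*})$, and then excise one population's transitions, finishing with the contrapositive of Proposition \ref{prop:2p-binding}. The paper performs the excision by simply deleting one population's transitions in place (this is the remark immediately preceding the proposition). Note that your ``sort then truncate'' construction produces exactly the same candidate path as that deletion: after sorting, the $\alpha$-transitions all occur at $x_{\beta}=e^{\beta}_{\bar m}$, so the truncated path costs $C^{0}_{\alpha}:=\sum_{i}t^{\alpha*}_{i}(A^{\alpha}_{\bar m\bar m}-A^{\alpha}_{i\bar m})$, which is precisely the cost of the deleted path.

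The genuine gap is the sorting step itself. Proposition \ref{prop:2p-straight} compares paths that differ only in the placement of two blocks of the \emph{same} kind of transition by the \emph{same} population ($\beta$-agents from $\bar m$ to $k$); it gives no trichotomy for interchanging an $\alpha$-block with a $\beta$-block. The correct cross-population computation shows that swapping adjacent blocks of $\alpha$-kind $i$ (mass $s_{i}$) and $\beta$-kind $j$ (mass $u_{j}$) changes the cost by
$s_{i}u_{j}\bigl[(A^{\alpha}_{\bar m\bar m}-A^{\alpha}_{i\bar m}-A^{\alpha}_{\bar mj}+A^{\alpha}_{ij})-(A^{\beta}_{\bar m\bar m}-A^{\beta}_{\bar mj}-A^{\beta}_{i\bar m}+A^{\beta}_{ij})\bigr]$;
under the \textbf{WBP} both bracketed gaps are nonnegative, but their difference can take either sign, and the sign depends on the pair $(i,j)$. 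Hence when $t^{*}$ activates several kinds the pairwise-preferred precedences can conflict ($\beta$-$j$ ``wants'' to precede $\alpha$-$i_{1}$ but to follow $\alpha$-$i_{2}$), cost-reducing adjacent swaps need not terminate at either fully sorted ordering, and the claim that one of the two sortings costs at most $\omega(t^{*})$ is exactly the assertion that needs proof; you flag it (``requires tracking the bilinear cross terms carefully'') but never supply it. It is also stronger than necessary: the sorted-$\alpha$-first path costs $C^{0}_{\alpha}+\sum_{j}t^{\beta*}_{j}\bigl(\pi_{\beta}(\bar m,q_{\alpha})-\pi_{\beta}(j,q_{\alpha})\bigr)\ge C^{0}_{\alpha}$, where $q_{\alpha}$ is the $\alpha$-component of $\mathbf{q}(t^{*})$, so all you need is the direct inequality $\min(C^{0}_{\alpha},C^{0}_{\beta})\le\omega(t^{*})$ --- i.e.\ the paper's removal claim --- whose proof should exploit feasibility (every vertex of $\zeta(t^{*})$ lies in $\bar D(e_{\bar m})$, so every per-unit cost is nonnegative and the end state satisfies all basin constraints, which bounds the interleaving savings), not a sorting argument. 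A secondary slip: reordering does not preserve feasibility ``because the endpoint is unchanged''; feasibility constrains every intermediate vertex. It does hold for the $\alpha$-block of the sorted path, but because each $\alpha$-partial sum already appears as the $\alpha$-component of some vertex of the original feasible path and $\pi_{\beta}$ depends only on that component, while the $\alpha$-constraints hold strictly at $x_{\beta}=e^{\beta}_{\bar m}$ --- not for the reason you give.
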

\begin{proof}
Let $t^*$ that satisfies Proposition \ref{prop:2p-binding} be given. Suppose that $t_i^{\alpha^*} >0$. The other case follows similarly.  Then, by Proposition \ref{prop:2p-binding}, $\pi_{\beta}(\bar{m},\mathbf{q}^{*})=\pi_{\beta}(i,\mathbf{q}^{*})$ for some
$i$. From the remarks before the proposition, we have $\pi_\alpha(\bar m, \mathbf{q}^*) > \pi_\alpha ( l, \mathbf{q}^*)$ for all $l$. Again, Proposition \ref{prop:2p-binding} implies that $t_l^{\beta^{*}}=0$ for all $l$.
\end{proof}
Finally, we have the following result.
\begin{prop}
\label{prop:reduction}Suppose that Condition \textbf{B} holds. Then there exists
$t^{*}$ such that $\min\{\omega(t):\zeta(t)\in\mathcal{K}_{\bar{m}}^{*}\}$ and
\[
t_k^{\alpha^*} >0 \,\,\text{for some } k \,\,\text{and }\,\, t_k^{\alpha^*}=0 \,\, \text{ for all }  k \neq l
\]
or
\[
t_k^{\beta^*} >0 \,\, \text{for some } k \,\,\text{and }\,\, t_k^{\beta^*}=0 \,\,\text{ for all }  k \neq l
\]
\end{prop}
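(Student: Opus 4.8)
The plan is to combine the single-population reduction already in hand with an exchange argument that exploits the affine structure of the cost, mirroring the role played by the negative-definite quadratic form in the one-population Proposition \ref{prop:f-red}. By Proposition \ref{prop:2p-1p} we may start from an optimizer $t^*$ of $\min\{\omega(t):\zeta(t)\in\mathcal{K}_{\bar{m}}^*\}$ whose mistakes come from a single population; assume without loss of generality that this is the $\alpha$-population, so that $t_j^{\beta^*}=0$ for all $j$ and only the components $t^\alpha=(t_1^\alpha,\dots,t_K^\alpha)$ may be positive. The $\beta$-case is entirely symmetric, using the corresponding halves of Lemma \ref{lem:2p-affine} and Proposition \ref{prop:2p-binding}.

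The key structural input is Lemma \ref{lem:2p-affine}: with $t^\beta$ frozen at $0$, the map $t^\alpha\mapsto\omega(t^\alpha,0)$ is affine. This is exactly what replaces the quadratic computation of Proposition \ref{prop:f-red}, and it makes the cost flat along any mass-exchange between two transition directions. Suppose, toward a reduction, that two components $t_k^{\alpha^*}>0$ and $t_l^{\alpha^*}>0$ are both positive with $k\ne l$. By Proposition \ref{prop:2p-binding} the opposite-population constraints are then binding at the exit point, $\pi_\beta(\bar{m},\mathbf{q}(t^*))=\pi_\beta(k,\mathbf{q}(t^*))=\pi_\beta(l,\mathbf{q}(t^*))$. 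I would then form the two perturbations $t^+_\epsilon$ and $t^-_\epsilon$ obtained from $t^*$ by transferring mass between the $k$- and $l$-directions, choosing the ratio $\epsilon_k:\epsilon_l$ (one linear relation, two unknowns) so that the $k$-constraint remains binding at the new endpoint; for $\epsilon$ small all remaining constraints stay strict and $t_k^{\alpha^*}\pm\epsilon_k$, $t_l^{\alpha^*}\mp\epsilon_l\ge 0$, so both perturbed paths lie in $\mathcal{K}_{\bar{m}}^*$.

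Affineness then gives $(\omega(t^+_\epsilon)-\omega(t^*))+(\omega(t^-_\epsilon)-\omega(t^*))=0$, and since $t^*$ is optimal each summand is nonnegative, forcing $\omega(t^+_\epsilon)=\omega(t^-_\epsilon)=\omega(t^*)$. Thus $\omega$ is constant along the exchange direction, and I can push that direction all the way until $t_l^\alpha$ reaches $0$ while keeping the $k$-constraint binding and adjusting $t_k^\alpha$ accordingly; the resulting path has the same (optimal) cost but one fewer positive transition. Iterating eliminates all but one positive $\alpha$-component, which is the claim. The main obstacle I anticipate is the feasibility bookkeeping during the extension to $t_l^\alpha=0$: I must check that the moving endpoint stays on $\partial\bar D(e_{\bar{m}})$ through the correct ($k$-)face, that no other $\beta$-constraint is violated earlier so that the point remains the first-exit point, and that the intervening states stay in $\bar D(e_{\bar{m}})$. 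All of these follow from the \textbf{WBP} and the linearity of $\pi_\beta$ in $x_\alpha$, exactly as in the proof of Proposition \ref{prop:2p-binding}, but they require care to state cleanly rather than any new idea.
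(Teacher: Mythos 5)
Your high-level route---reduce to a single population via Proposition \ref{prop:2p-1p}, then use the affine structure of $\omega$ (Lemma \ref{lem:2p-affine}) to kill all but one transition direction---is the same as the paper's, but the exchange step contains a genuine gap. Having invoked Proposition \ref{prop:2p-binding}, you know that \emph{both} constraints $\pi_\beta(\bar m,\mathbf{q}(t^*))=\pi_\beta(k,\mathbf{q}(t^*))$ and $\pi_\beta(\bar m,\mathbf{q}(t^*))=\pi_\beta(l,\mathbf{q}(t^*))$ bind; you then fix $\epsilon_k:\epsilon_l$ so the $k$-constraint stays binding and claim that ``all remaining constraints stay strict,'' so that both $t^+_\epsilon$ and $t^-_\epsilon$ lie in $\mathcal{K}^*_{\bar m}$. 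The $l$-constraint is not strict at $t^*$, so this cannot hold for it. Writing $g_i(t)=\pi_\beta(i,q_\alpha(t))-\pi_\beta(\bar m,q_\alpha(t))=-b_i+\sum_j a_{ij}t_j$, where the \textbf{WBP} gives $a_{ij}\ge 0$ and the coordination property gives $a_{ii}>0$, $b_i>0$, your perturbation changes $g_l$ by $\pm s\,(a_{lk}a_{kl}-a_{ll}a_{kk})$; unless this determinant happens to vanish, exactly one of the two perturbed endpoints satisfies $g_l>0$ and therefore lies strictly outside $\bar D(e_{\bar m})$. That perturbed path is not in $\mathcal{K}_{\bar m}$ at all (its declared endpoint is past the boundary, so it is not a first-exit path), hence the inequality $\omega(\cdot)\ge\omega(t^*)$ for that perturbation---the step that converts the affine identity $(\omega(t^+_\epsilon)-\omega(t^*))+(\omega(t^-_\epsilon)-\omega(t^*))=0$ into flatness---cannot be deduced from optimality. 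Concretely, your argument cannot rule out that the cost rate along the single \emph{feasible} exchange direction is strictly positive. (A lesser issue: even the feasible perturbation leaves $\mathcal{K}^*_{\bar m}$, since $t_l>0$ while $g_l$ has become slack; that is harmless only because $\min_{\mathcal{K}^*_{\bar m}}\omega=\min_{\mathcal{K}_{\bar m}}\omega$.)

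The missing ingredient is not the bookkeeping you flagged but a monotonicity/truncation observation. Once only $\alpha$-agents move, $x_\beta\equiv e^\beta_{\bar m}$ along the entire path, so each transition $\bar m\to j$ has the \emph{constant, strictly positive} per-unit cost $c_j:=A^\alpha_{\bar m\bar m}-A^\alpha_{j\bar m}$; consequently truncating any path at its first crossing of $\partial\bar D(e_{\bar m})$ only lowers its cost and produces a member of $\mathcal{K}_{\bar m}$. With this, the bad case is dispatched: if the cost rate along the feasible direction were positive, the truncation of $\zeta(t^-_\epsilon)$ would be feasible with cost strictly below $\omega(t^*)$, a contradiction. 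Alternatively one can dispense with the two-sided exchange altogether: if constraint $k$ binds at $t^*$, then $\sum_j c_j t_j^*\ge\min_{j}\left(c_j/a_{kj}\right)b_k$, and the single-transition path in the minimizing direction, truncated at its first exit, is feasible with cost no larger than this bound, so an edge path is always weakly optimal. This linear-programming-style step, which rests on the state-independence of the costs and the sign structure the \textbf{WBP} imposes on the exit constraints, is what the paper's (admittedly terse) proof means by ``the costs of intermediate states are the same; the \textbf{WBP} implies that the minimum cost escaping path lies at the boundary of the simplex,'' and it is precisely the piece your proposal lacks.
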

\begin{proof}
Suppose that the minimum cost escaping path involves only one population, say $\alpha$-population, by Proposition \ref{prop:2p-1p}. Then, $x_\beta = e^{\bar m}_\beta$ for all $x$ in the minimum cost escaping path. Thus we have $\pi_\alpha (i, x)=\pi_\alpha(j,x)$ for all $i,j \neq \bar m$ and for all $x$ in the minimum cost escaping path. The costs of intermediate states in the minimum cost escaping path are the same; the \textbf{WBP} implies that the minimum cost escaping path lies in at the boundary of the simplex, yielding the desired result.
\end{proof}
Now the proof for Theorem \ref{thm:main-2p}  follows from Proposition \ref{prop:reduction}.

\renewcommand{\thesection}{C}

\section{Stochastic stability: the maximin criterion} \label{appen:suff}


In this section, we examine the problem of finding a stochastically stable state \citep{FPY1}. When $\beta=\infty$, the strategy updating dynamic is called an unperturbed process, where each convention becomes an absorbing state for the dynamic. For all $\beta < \infty$, since the dynamic is irreducible, there exists a unique invariant measure. As the noise level becomes negligible ($\beta \rightarrow \infty$), the invariant measure converges to a point mass on one of the absorbing states, called a stochastically stable state. One popular way to identify a stochastically stable state is the so-called ``maxmin criterion''\footnote{\label{ft}See \citet{Young93JET, Young98, Kandori98, BSY1, HLNN2016}}; when some sufficient conditions are satisfied, this method, along with our results on the exit problem (Theorems \ref{thm:escape} and \ref{thm:main-2p}), provides the characterization of stochastic stability.


To study stochastic stability, we have to find a minimum cost path from one convention to another. More precisely, we fix conventions $i$ and $j$. For one-population models, we let the set of all paths from convention $i$ to $j$ be
\begin{align*}
\mathcal{L}^{(n)}_{i, j}: & =\{\gamma:\gamma=(x_{0},\cdots,x_{T})\:\text{and\,}x_{0}=e_{i},\, x_{t+1}=(x_{t})^{k,l},\text{ for some \ensuremath{k, l}, \,\ for all }t<T-1,\,\\
 & x_{T}\in D(e_{j})\,\,\text{for some}\,\, T>0\}.
\end{align*}
We define a similar set for two-population models. We then consider the following problem:
\begin{equation}
C^{(n)}_{i j} := \min \{I^{({n})} (\gamma): \gamma \in \mathcal{L}^{(n)}_{i, j} \}.
\label{eq:sto-prob}
\end{equation}
Again, when $n$ is finite, $C^{(n)}_{ij} $ is complicated, involving many negligible terms; we thus study the stochastic stability problem at $n=\infty$, which again provides the asymptotics of the invariant measure and stochastic stability when $n$ is large. We let
\begin{equation}
	C_{ij} = \lim_{n \rightarrow \infty} \frac{1}{n} C^{(n)}_{ij}
	\label{eq:c-cost}
\end{equation}
and $C$ be a $|S| \times |S|$ matrix whose elements are given by $C_{ij}$ for $i \neq j$ (we set an arbitrary number if $i=j$).
Having solved the problems in equation \eqref{eq:sto-prob} (and \eqref{eq:c-cost}), the standard method to find a stochastically stable state is to construct an $i-$ rooted tree with vertices consisting of the absorbing states and whose cost is defined as the sum of all costs between the absorbing states connected by edges. Then, the stochastic stable state is precisely the root of the minimal cost tree from among all possible rooted trees (see \citet{Young98} for more details). In principle, to find a minimal cost tree (hence a stochastically stable state), we need to explicitly solve the problem in equation \eqref{eq:sto-prob}. However, in many interesting applications such as bargaining problems, the minimum cost estimates of the escaping path in Theorem \ref{thm:escape} are sufficient to determine stochastic stability without knowing the true costs of transition between conventions; this method is called the ``maxmin'' criterion (see the papers cited in footnote \ref{ft}; see also Proposition \ref{prop:BSY} below).
More precisely, we define the incidence matrix of matrix $C$, $\mathbf{Inc}(C)$, as follows:
\begin{align*}
	(\mathbf{Inc}(C))_{ij} := \begin{cases}
		1 \textrm{  if  } j = \arg \min_{l \neq i} C_{i l} \\
		0 \textrm{ otherwise }
		\end{cases}
\end{align*}
In words, the incidence matrix of $C$ has $1$ at the $i$-th and $j$-th position if the minimum of elements in the $i$th row achieves at the $i$-th and $j$-th position, and  $0$ otherwise. We also say that the incidence matrix of $C$ contains a cycle, $(i, i_1, i_2, \cdots, i_{t-1}, i)$, if
\[
	\mathbf{Inc}(C)_{i i_1}	\mathbf{Inc}(C)_{i_1 i_2} \cdots \mathbf{Inc}(C)_{i_{t-1} i} >0
\]
for $t \ge 2$.
Observe that we can obtain a graph by connecting the vertices of conventions $i, j$ whose $(\mathbf{Inc}(C))_{i j}$ is 1. Also, $\mathbf{Inc}(C)$ always contains a cycle and hence the graph contains the corresponding cycle. If this cycle is unique, by removing an edge from the cycle, we can obtain a tree; this is a candidate tree to the problem of finding a minimal cost tree. Now, we are ready to state some known sufficient conditions to identify stochastic stable states.

\begin{prop}[\citet{BSY1}] Let $i^* \in \arg \max_i \min_{j \neq i} C_{i j}$. Suppose that either \\
(i) $ \max_{j \neq i} C_{j i^*} < \min_{j \neq i} C_{i^* j} $ \\
 or \\
(ii) $\mathbf{Inc}(C)$ has a unique cycle containing $i^*$. \\
Then $i^*$ is stochastically stable.
\label{prop:BSY}
\end{prop}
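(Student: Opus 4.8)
The plan is to reduce everything to the tree (stochastic potential) characterization recalled just above the statement: writing $\rho(r):=\min\{\sum_{(k\to l)\in\tau}C_{kl}:\tau\text{ an }r\text{-tree}\}$ for the cost of the cheapest $r$-rooted tree, a convention $r$ is stochastically stable precisely when it minimizes $\rho$ over all conventions (see \citet{Young98}). So it suffices to prove $\rho(i^*)\le\rho(r)$ for every $r\neq i^*$. Throughout I abbreviate the minimal exit resistance by $\mathrm{r}(i):=\min_{j\neq i}C_{ij}$, so that the hypothesis $i^*\in\arg\max_i\min_{j\neq i}C_{ij}$ reads $\mathrm{r}(i^*)\ge\mathrm{r}(r)$ for all $r$. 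The one structural fact I would record first, and then use in both cases, is the universal lower bound: in any $r$-tree each vertex $j\neq r$ has exactly one outgoing edge, of cost at least $\mathrm{r}(j)$, so summing over $j\neq r$ gives $\rho(r)\ge\sum_{j\neq r}\mathrm{r}(j)$.

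For case (ii) the plan is to exhibit an $i^*$-tree attaining this bound. Consider the greedy graph $G$ in which each vertex $i$ is given a single outgoing edge to a (fixed) minimizer in $\arg\min_{j\neq i}C_{ij}$; this edge has cost $\mathrm{r}(i)$, and $G$ is exactly the functional graph read off from $\mathbf{Inc}(C)$. A functional graph always contains a cycle, and a weakly connected functional graph on $m$ vertices has $m$ edges and hence exactly one cycle; so the hypothesis that $G$ has a unique cycle forces $G$ to be connected, and that cycle passes through $i^*$. Deleting the outgoing edge of $i^*$ breaks this cycle and, there being no other, leaves a connected acyclic graph with $n-1$ edges and unique sink $i^*$, i.e.\ an $i^*$-tree of cost $\sum_i\mathrm{r}(i)-\mathrm{r}(i^*)=\sum_{j\neq i^*}\mathrm{r}(j)$. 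With the lower bound this gives $\rho(i^*)=\sum_{j\neq i^*}\mathrm{r}(j)$, whence for any $r$
\[
\rho(r)\ \ge\ \sum_{j\neq r}\mathrm{r}(j)\ =\ \rho(i^*)+\mathrm{r}(i^*)-\mathrm{r}(r)\ \ge\ \rho(i^*),
\]
the last step because $i^*$ maximizes $\mathrm{r}$. Hence $i^*$ minimizes $\rho$.

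For case (i) I would argue instead by a single-edge tree surgery. Fix $r\neq i^*$ and let $\tau$ be a minimal $r$-tree. Since $r$ is the root, $i^*$ has exactly one outgoing edge in $\tau$, say $i^*\to w$ with $C_{i^*w}\ge\mathrm{r}(i^*)$. Deleting it splits $\tau$ into the subtree rooted at $i^*$ and the remaining component, which contains the root $r$; re-attaching them by adding the edge $r\to i^*$ produces a spanning directed tree $\tau'$ with unique sink $i^*$, hence an $i^*$-tree, of cost $\mathrm{cost}(\tau)-C_{i^*w}+C_{ri^*}$. By hypothesis $C_{ri^*}\le\max_{j\neq i^*}C_{ji^*}<\mathrm{r}(i^*)\le C_{i^*w}$, so $\mathrm{cost}(\tau')<\mathrm{cost}(\tau)=\rho(r)$, and therefore $\rho(i^*)\le\mathrm{cost}(\tau')<\rho(r)$. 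As $r$ was arbitrary, $i^*$ is the (unique) minimizer of $\rho$.

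The only genuinely delicate point is the tree-surgery bookkeeping: in each construction I must check that the graph obtained really is a spanning tree rooted at $i^*$ (correct edge count, acyclicity, a single vertex of out-degree zero). In (ii) this is exactly where uniqueness of the cycle is indispensable — with two or more cycles the greedy graph decomposes into several unicyclic components, and deleting one edge cannot reconnect them into a single tree, so the clean attainment $\rho(i^*)=\sum_{j\neq i^*}\mathrm{r}(j)$ would fail; imposing a unique cycle is precisely what rescues the argument. Everything else (the telescoping of the lower bound and the chain of inequalities $C_{ri^*}<\mathrm{r}(i^*)\le C_{i^*w}$) is routine.
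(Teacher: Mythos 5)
Your proof is correct, and it is worth noting that the paper gives no argument of its own for this proposition: its ``proof'' is literally the citation to \citet{BSY1}. Your write-up therefore supplies a self-contained derivation of both tests from Young's rooted-tree characterization of stochastic stability, which is indeed the machinery underlying the ``local resistance'' and ``naive minimization'' tests in that reference. The two tree surgeries are the right ones and your bookkeeping checks out: in (ii), $i^*$'s outgoing edge in the greedy graph necessarily lies on the unique cycle (a vertex on a cycle of a functional graph has its only out-edge on that cycle), so deleting it leaves a connected, acyclic graph whose unique out-degree-zero vertex is $i^*$, attaining the lower bound $\sum_{j\neq i^*}\mathrm{r}(j)$; in (i), acyclicity of $\tau'$ holds because after deleting $i^*\to w$ no directed path can leave $i^*$, so adding $r\to i^*$ cannot close a cycle, and the strict chain $C_{ri^*}<\mathrm{r}(i^*)\le C_{i^*w}$ makes $i^*$ the unique minimizer of $\rho$. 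One imprecision should be fixed: you assert that the greedy graph $G$ ``is exactly the functional graph read off from $\mathbf{Inc}(C)$,'' but the paper explicitly allows $\arg\min_{l\neq i}C_{il}$ to be non-singleton, in which case a row of $\mathbf{Inc}(C)$ contains several $1$'s and $G$ is only a selection from it. The argument survives --- every cycle of $G$ is a cycle of $\mathbf{Inc}(C)$, and $G$, being functional, has at least one cycle, so the uniqueness hypothesis forces $G$'s single cycle to be the one through $i^*$, which is precisely what your connectivity step needs --- but that one-line justification should replace the ``exactly'' claim. A final, cosmetic caveat: you apply the tree characterization directly to the limiting costs $C_{ij}=\lim_n C^{(n)}_{ij}/n$; the paper does the same implicitly, so this is consistent with its framing, though strictly speaking the characterization lives at finite $n$ and it is the strict inequalities among limiting tree costs that carry the conclusion over for all large $n$.
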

\proof
See \citet{BSY1}
\qed

The sufficient conditions (i) and (ii) for stochastic stability in Proposition \ref{prop:BSY} are called the ``local resistance test'' and ``naive minimization test,'' respectively \citep{BSY1}. If strategy $i$ pairwisely risk-dominates strategy $j$ (i.e., $A_{ii}- A_{ji} > A_{jj}-A_{ji}$), then under the uniform mistake model, $C_{i j} > 1/2$ and $C_{j i} < 1/2$ hold.
Thus, if strategy $i^*$ pairwisely risk-dominates all strategies (called a globally pairwise risk-dominant strategy), then $C_{i^*j} > 1/2$ for all $j \neq i$ and $C_{ji^*} < 1/2$ for all $j \neq i$. Thus condition (i) in Proposition \ref{prop:BSY} holds and $i^*$ is stochastically stable (see Theorem 1 in \citet{Kandori98} and Corollary 1 in \citet{Ellison2000}).

The number $\min_{j \neq i} C_{ij}$ in Proposition \ref{prop:BSY} is, as mentioned, often called the ``radius'' of convention $i$; this measures how difficult it is  to escape from convention $i$ \citep{Ellison2000}.
Proposition \ref{prop:BSY} shows that if either (i) or (ii) holds, the state with the greatest radius (and hence the state most difficult to escape) is stochastically stable. To check whether either condition (i) or (ii) holds, clearly it is enough to know that $\min_{j \neq i} C_{i j}, \max_{j \neq i} C_{j i}$ etc.

An important consequence of our main theorem on the exit problem (Theorem \ref{thm:escape}) is that it provides the lower  and upper bounds of the radius of convention $i$, $\min_{j \neq i} C_{ij}$, as follows. On  the one hand,  a path escaping from convention $i$ to $j$ (in $\mathcal{L}^{(n)}_{i,j}$) by definition exits the basin of attraction of convention $i$ and thus $\mathcal{L}^{(n)}_{i,j} \subset \mathcal{G}^{(n)}_{i}$ in equation \eqref{eq:path-G}. Thus,
\begin{equation}
  		C^{(n)}_{ij} =\min \{ I^{(n)} (\gamma): \gamma \in \mathcal{L}^{(n)}_{i,j} \} \geq	\min \{ I^{(n)} (\gamma): \gamma \in \mathcal{G}^{(n)}_{i} \},
	\label{eq:lower1}
\end{equation}
and Theorem \ref{thm:escape} shows that
\begin{equation}
 	\lim_{n \rightarrow \infty} \frac{1}{n} \min \{ I^{(n)} (\gamma): \gamma \in \mathcal{G}^{(n)}_{i} \} = \min_{j \neq i}   R_{ij}.
	\label{eq:lower2}
\end{equation}
Then equations \eqref{eq:lower1} and \eqref{eq:lower2} together give a lower bound for $\min_{j \neq i} C_{ij}$. On the other hand,  if $\gamma_{i \rightarrow j}$ is the straight line path from convention $i$ to $j$ ending at the mixed strategy  Nash equilibrium involving $i$ and $j$, we have
\begin{equation}
	I^{(n)}(\gamma_{i \rightarrow j}) \geq \min \{ I^{(n)} (\gamma): \gamma \in \mathcal{L}^{(n)}_{i,j} \} = C^{(n)}_{ij}
	\label{eq:upper1}
\end{equation}
and
\begin{equation}
	\lim_{n \rightarrow \infty} \frac{1}{n} I^{(n)}(\gamma_{i \rightarrow j}) = R_{ij}.
	\label{eq:upper2}
\end{equation}
Thus, equations \eqref{eq:upper1} and \eqref{eq:upper2} give an upper bound for $\min_{j \neq i} C_{ij}$. These are the main contents of the following proposition.


\begin{prop} Suppose Condition \textbf{A} or Condition \textbf{B} holds. Then \\ \smallskip
(i) $C_{i j} \leq R_{i j}$  for all $i, j$. \\ \smallskip
(ii) $  \min_{j \neq i} C_{ij} =  \min_{j \neq i} R_{i j} $. \\ \smallskip
(iii) $ \arg \min_{j \neq i} R_{i j} \subset \arg \min_{j \neq i} C_{i j}  $ for all $i$.
\label{prop:SSE}
\end{prop}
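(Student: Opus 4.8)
The plan is to deduce all three statements from two facts already established, together with a single explicit cost computation. The first fact is the set inclusion $\mathcal{L}^{(n)}_{i,j}\subset\mathcal{G}^{(n)}_{i}$: any path that starts at convention $i$ and reaches $D(e_j)$ must first leave the basin $D(e_i)$, so it is automatically an escape path from $i$. The second is the exit theorem, namely Theorem \ref{thm:escape} under Condition \textbf{A} and Theorem \ref{thm:main-2p} under Condition \textbf{B}, which identifies $\lim_{n\to\infty}\frac{1}{n}\min\{I^{(n)}(\gamma):\gamma\in\mathcal{G}^{(n)}_i\}$ with $\min_{k\neq i}R_{ik}$ (the radius being $R^U_{ik}$ or $R^I_{ik}$ from \eqref{eq:cost-min} in the two-population case). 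Everything below is structurally identical in the one- and two-population settings, so I would prove it once using the generic radius $R_{ij}$.

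For part (i) I would exhibit an explicit competitor in $\mathcal{L}^{(n)}_{i,j}$: the straight-line path $\gamma_{i\to j}$ running along the edge of the simplex from $e_i$ to the mixed Nash equilibrium $p^*$ supported on $\{i,j\}$, i.e. the point with $\pi(i,p^*)=\pi(j,p^*)$, and then continued into $D(e_j)$ at zero cost (once $p^*$ is crossed, $j$ is a best response, so the tail transitions cost nothing). Minimality gives $I^{(n)}(\gamma_{i\to j})\geq C^{(n)}_{ij}$. The key computation is the limiting cost, which I would evaluate with the continuous cost of Definition \ref{def:c-st} (justified in the limit by Lemma \ref{lem:con_cost}), taking the status-quo convention to be $\bar m=i$:
\[
\lim_{n\to\infty}\tfrac{1}{n}I^{(n)}(\gamma_{i\to j})=\bar c(e_i,p^*)=\tfrac{1}{2}\,t^*\,(A_{ii}-A_{ji})=R_{ij},
\]
where $t^*=\frac{A_{ii}-A_{ji}}{(A_{ii}-A_{ji})+(A_{jj}-A_{ij})}$ is the threshold fraction solving $\pi(i,p^*)=\pi(j,p^*)$, and the cross term $\pi(i,p^*)-\pi(j,p^*)$ vanishes because $p^*$ is a mixed Nash equilibrium. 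Dividing by $n$ and letting $n\to\infty$ gives $C_{ij}\leq R_{ij}$.

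For part (ii) I would combine a lower and an upper bound. The inclusion $\mathcal{L}^{(n)}_{i,j}\subset\mathcal{G}^{(n)}_i$ yields $C^{(n)}_{ij}\geq\min\{I^{(n)}(\gamma):\gamma\in\mathcal{G}^{(n)}_i\}$ for every $j$; dividing by $n$, passing to the limit, and applying the exit theorem gives $C_{ij}\geq\min_{k\neq i}R_{ik}$ for each $j$, hence $\min_{j\neq i}C_{ij}\geq\min_{k\neq i}R_{ik}$. Conversely, part (i) gives $C_{ij}\leq R_{ij}$ for each $j$, so $\min_{j\neq i}C_{ij}\leq\min_{j\neq i}R_{ij}$. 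The two inequalities force $\min_{j\neq i}C_{ij}=\min_{j\neq i}R_{ij}$. Part (iii) is then immediate: for $j^*\in\arg\min_{j\neq i}R_{ij}$ we have $C_{ij^*}\leq R_{ij^*}=\min_{j\neq i}R_{ij}=\min_{j\neq i}C_{ij}$ by (i) and (ii), which forces $C_{ij^*}=\min_{j\neq i}C_{ij}$, i.e. $j^*\in\arg\min_{j\neq i}C_{ij}$.

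The substantive analytic content is the exit theorem, which is already available, so the only genuine work here is the upper-bound construction. Accordingly, I expect the main obstacle to be the cost computation $\lim_{n\to\infty}\frac{1}{n}I^{(n)}(\gamma_{i\to j})=R_{ij}$ and the accompanying bookkeeping verifying that $\gamma_{i\to j}$ really lands in $D(e_j)$ with no additional cost beyond $R_{ij}$. In the two-population case the same single-edge path is used, and the only extra checks are that it is admissible under Condition \textbf{B} and that its limiting cost equals the relevant radius $R^U_{ij}$ (respectively $R^I_{ij}$) of \eqref{eq:cost-min}; the logical skeleton of the argument is unchanged.
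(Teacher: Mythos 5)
Your proposal is correct and follows essentially the same route as the paper: the lower bound $\min_{j\neq i}C_{ij}\geq\min_{j\neq i}R_{ij}$ via the inclusion $\mathcal{L}^{(n)}_{i,j}\subset\mathcal{G}^{(n)}_{i}$ and the exit theorem, the upper bound $C_{ij}\leq R_{ij}$ via the straight-line edge path to the mixed Nash equilibrium of $\{i,j\}$, and parts (ii)--(iii) by combining the two inequalities exactly as in the paper's proof. Your explicit computation that $\bar c(e_i,p^*)=\tfrac{1}{2}t^*(A_{ii}-A_{ji})=R_{ij}$ simply makes precise what the paper asserts in its equation \eqref{eq:upper2}.
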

\proof
We obtain (i) by dividing equation \eqref{eq:upper1} by $n$ , taking the limit, and using \eqref{eq:upper2}. For (ii), from equations \eqref{eq:lower1} and \eqref{eq:lower2},  $\lim_{n \rightarrow \infty} \frac{1}{n} C_{ij}^{(n)}  \geq \min_{j \neq i} R_{ij}$, implying that $ \min_{j \neq i} C_{ij}  \geq \min_{j \neq i} R_{ij}$. Also from (i), we have $\min_{j \neq i} C_{ij}  \leq \min_{j \neq i} R_{ij}$. Thus, (ii) follows. We next prove (iii). Suppose that $j^{**} \in \arg \min_{j \neq i} R_{ij}$ and $j^* \in \arg \min_{j \neq i} C_{ij}$. Then from (i) and (ii), $R_{i j^{**}} =  C_{i j^*} \leq  C_{i j^{**}} \leq R_{i j^{**}}$. Thus $j^{**} \in \arg \min_{j \neq i} C_{ij}$ and we have $ \arg \min_{j \neq i} R_{i j} \subset \arg \min_{j \neq i} C_{ij}$.
\qed

The immediate consequence of Proposition \ref{prop:SSE} is that $\arg \max_i \min_{j \neq i} C_{i j} =  \arg \max_i \min_{j \neq i} R_{i j}$ and $\max_{j \neq i} C_{j i}  \leq \max_{j \neq i} R_{j i}$.  Further, if $\arg \min_{j \neq i} C_{ij}$ is unique for all $i$, from Proposition \ref{prop:SSE}, the incidence matrices of $C$ and $R$ are the same.  In general,  $\arg \min_{j \neq i} C_{ij}$ may not be unique for some $i$. In this case, Proposition \ref{prop:SSE} (iii) implies that if $R_{ij} =1$, then $C_{ij}=1$, which, in turn, implies that whenever $R$ yields a graph containing a unique cycle, $C$  yields the same graph containing the unique cycle. These facts enable us to replace $C$ in Proposition \ref{prop:BSY} by  $R$---a $|S| \times |S|$ matrix consisting of $R_{ij}$s (again, we assign arbitrary numbers at the diagonal positions). This is our main result on stochastic stability.


\begin{thm}[\textbf{Stochastic Stability}]
Suppose that Condition \textbf{A} or Condition \textbf{B} holds.
Let $i^* \in \arg \max_i \min_{j \neq i} R_{i j}$. Suppose also that either \\ \smallskip
(i) $ \max_{j \neq i} R_{j i^*} < \min_{j \neq i} R_{i^* j} $  \\ or \\
(ii) $\mathbf{Inc}(R)$ has a unique cycle containing $i^*$. \\
Then, $i^*$ is stochastically stable.
\label{thm:SSE}
\end{thm}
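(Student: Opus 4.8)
The plan is to reduce Theorem \ref{thm:SSE}, whose hypotheses are stated for the radius matrix $R$, to the sufficient conditions of Proposition \ref{prop:BSY}, which are stated for the true inter-convention cost matrix $C$; Proposition \ref{prop:SSE} is the dictionary that translates statements about $R$ into statements about $C$. The common first step is to verify that $i^{*}$ remains a maximin state for $C$. By Proposition \ref{prop:SSE}(ii) we have $\min_{j\neq i}C_{ij}=\min_{j\neq i}R_{ij}$ for every $i$, so the two row-minimum functions coincide and therefore
\[
\arg\max_i\min_{j\neq i}C_{ij}=\arg\max_i\min_{j\neq i}R_{ij}.
\]
In particular $i^{*}\in\arg\max_i\min_{j\neq i}R_{ij}$ forces $i^{*}\in\arg\max_i\min_{j\neq i}C_{ij}$, which is the standing requirement of Proposition \ref{prop:BSY}.

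Under hypothesis (i) I would transfer the local-resistance inequality directly. Proposition \ref{prop:SSE}(i) gives $C_{j i^{*}}\le R_{j i^{*}}$ for every $j$, hence $\max_{j\neq i^{*}}C_{j i^{*}}\le\max_{j\neq i^{*}}R_{j i^{*}}$, while Proposition \ref{prop:SSE}(ii) gives $\min_{j\neq i^{*}}C_{i^{*}j}=\min_{j\neq i^{*}}R_{i^{*}j}$. Combining these with the assumed strict inequality for $R$ yields
\[
\max_{j\neq i^{*}}C_{j i^{*}}\ \le\ \max_{j\neq i^{*}}R_{j i^{*}}\ <\ \min_{j\neq i^{*}}R_{i^{*}j}\ =\ \min_{j\neq i^{*}}C_{i^{*}j},
\]
which is exactly condition (i) of Proposition \ref{prop:BSY}. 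That proposition then delivers the stochastic stability of $i^{*}$, so this case is immediate.

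Under hypothesis (ii) the translation is more delicate, and this is where I expect the main difficulty. By Proposition \ref{prop:SSE}(iii), $\arg\min_{j\neq i}R_{ij}\subseteq\arg\min_{j\neq i}C_{ij}$, so every edge of the incidence graph $\mathbf{Inc}(R)$ is also an edge of $\mathbf{Inc}(C)$, and in particular the unique $\mathbf{Inc}(R)$-cycle through $i^{*}$ persists in $\mathbf{Inc}(C)$. When each row-minimizer is unique one has $\mathbf{Inc}(R)=\mathbf{Inc}(C)$ and condition (ii) of Proposition \ref{prop:BSY} holds verbatim. The obstacle is that $C$ may possess strictly more minimizers in some row than $R$ does: this occurs precisely when $C_{ij}=\min_k R_{ik}$ although $R_{ij}$ exceeds that common value, a configuration Proposition \ref{prop:SSE} does not exclude. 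Such surplus edges can create a second cycle in $\mathbf{Inc}(C)$ and thereby defeat a verbatim appeal to Proposition \ref{prop:BSY}(ii).

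To close this I would bypass the incidence condition and argue directly with the Freidlin--Wentzell minimal-tree characterization recalled in Appendix \ref{appen:suff}, writing $m_k:=\min_{j\neq k}C_{kj}$. For any convention $i$, every $i$-rooted tree has cost at least $\sum_{k\neq i}m_k$, since each non-root vertex contributes the cost of its one outgoing edge. Selecting, at each vertex, the outgoing edge prescribed by $\mathbf{Inc}(R)$ --- which is admissible because such edges are also $C$-minimal --- produces a functional subgraph of $\mathbf{Inc}(C)$ whose unique cycle passes through $i^{*}$ by hypothesis (ii); deleting the outgoing edge of $i^{*}$ from that cycle yields an $i^{*}$-tree of cost exactly $\sum_{k\neq i^{*}}m_k$. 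Since $i^{*}$ is a maximin state, $m_{i^{*}}\ge m_i$ for every $i$, whence
\[
\min_{i^{*}\text{-trees}}I=\sum_{k\neq i^{*}}m_k=\sum_k m_k-m_{i^{*}}\le\sum_k m_k-m_i=\sum_{k\neq i}m_k\le\min_{i\text{-trees}}I
\]
for all $i$, so $i^{*}$ is stochastically stable. The delicate point, which I expect to absorb most of the work, is precisely this reconciliation of the possible multiplicity of $C$-minimizers with the uniqueness hypothesis on $\mathbf{Inc}(R)$: the surplus $C$-edges are simply never used, and the maximin identity of Proposition \ref{prop:SSE}(ii) guarantees they cannot yield a strictly cheaper competing tree.
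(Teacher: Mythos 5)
Your proof is correct. Case (i) is exactly the paper's argument: the paper chains Propositions \ref{prop:SSE}(i) and \ref{prop:SSE}(ii) into $\max_{j \neq i^*} C_{j i^*} \leq \max_{j \neq i^*} R_{j i^*} < \min_{j \neq i^*} R_{i^* j} = \min_{j \neq i^*} C_{i^* j}$ and invokes Proposition \ref{prop:BSY}(i), just as you do. Case (ii) is where you genuinely depart. The paper stays inside Proposition \ref{prop:BSY}: citing Proposition \ref{prop:SSE}(iii) and the remarks preceding the theorem, it asserts that $\mathbf{Inc}(C)$ also has a unique cycle containing $i^*$ and then applies Proposition \ref{prop:BSY}(ii) verbatim. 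The obstruction you identify is real: Proposition \ref{prop:SSE}(iii) gives only the inclusion $\arg\min_{j\neq i} R_{ij} \subseteq \arg\min_{j\neq i} C_{ij}$, so $\mathbf{Inc}(C)$ may carry surplus edges (rows where $C_{ij}$ attains the row minimum while $R_{ij}$ does not), and edge-set inclusion shows that the $R$-cycle survives in $\mathbf{Inc}(C)$, not that it is the unique cycle there; the paper's preceding remark glosses over exactly this point. Your workaround is sound: writing $m_k=\min_{j\neq k}C_{kj}$, every $i$-rooted tree costs at least $\sum_{k\neq i}m_k$; any selection of one $\mathbf{Inc}(R)$-edge per vertex is a functional graph whose every cycle is a cycle of $\mathbf{Inc}(R)$, hence by hypothesis (ii) its unique cycle passes through $i^*$, and since $R$-minimizers are $C$-minimizers the edges all have cost $m_k$, so breaking that cycle at $i^*$ yields an $i^*$-tree of cost exactly $\sum_{k\neq i^*}m_k$; the maximin relation $m_{i^*}\geq m_i$ (from Proposition \ref{prop:SSE}(ii)) then makes $i^*$ a root of a minimal-cost tree, hence stochastically stable. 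In effect you re-prove the naive-minimization test of Proposition \ref{prop:BSY}(ii) directly from the Freidlin--Wentzell tree characterization, in a form immune to multiplicity of $C$-minimizers; this is more work than the paper's one-line appeal, but it closes a looseness in the paper's own justification rather than inheriting it.
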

\proof Let $i^* \in \arg \max_i \min_{j \neq i} R_{i j}$. From Proposition \ref{prop:SSE} (iii),  $i^* \in  \arg \max_i \min_{j \neq i} C_{i j}$. We first suppose that (i) holds. Now, Propositions \ref{prop:SSE} (i) and \ref{prop:SSE} (ii) imply that
\[
\max_{j \neq i^*} C_{j i^*} \leq \max_{j \neq i^*} R_{j i^*} < \min_{j \neq i^*}  R_{i^* j} = \min_{j \neq i^*} C_{i^* j}.
\]
Thus, Proposition \ref{prop:BSY} implies that $i^*$ is stochastically stable. Now, suppose that (ii) holds. From Proposition \ref{prop:SSE} (iii) and the remarks before Theorem \ref{thm:SSE}, $\mathbf{Inc}(C)$ contains a unique cycle containing $i^*$, too. Thus, Proposition \ref{prop:BSY} again implies that $i^*$ is stochastically stable.
\qed

Note that two-strategy games trivially satisfy both conditions (i) and (ii) in Theorem \ref{thm:SSE}. Here, we can easily check that the stochastic stable state is the risk-dominant equilibrium. In particular, \citet{Kandori98} show that when a coordination game exhibits positive feedback (the marginal bandwagon property), a ``globally pairwise risk-dominant equilibrium'' is stochastically stable under the uniform mistake model  (see also \citet{BSY1}). However, when the number of strategies exceeds two, Theorem \ref{thm:SSE} shows that stochastically stable states
under the logit choice rule do not necessary satisfy
the criterion of pairwise risk dominance. To summarize, Theorem \ref{thm:SSE} asserts that when either condition (i) or condition (ii) is satisfied, the state with the largest radius (and hence the most difficult state to escape) is stochastically stable, in line with the existing results for uniform interaction models. However, the  radius now depends on the opportunity cost of individuals' mistakes as well as the threshold number of agents inducing others to play a new best-response.

\com{Appendix \ref{sec:con} provides an explicit example of a four-strategy game satisfying the conditions in Theorem \ref{thm:SSE}, for which the stochastically stable state under the logit choice rule differs from that under the uniform mistake model.}

\renewcommand{\thesection}{D}
\section{Stochastic  stable states for Nash demand games \label{appen:sss-NDG} }

We first show that Nash demand game,
\begin{equation}\label{eq:ndg}
  (A^\alpha_{ij}, A^\beta_{ij}) :=
  \begin{cases}
    (\delta i, f(\delta j)), & \mbox{if } i \leq j \\
    (0, 0) , & \mbox{if } i > j,
  \end{cases}
\end{equation}
satisfies \textbf{Condition B}.

\noindent{\textbf{Condition B} (i)}. \\
We divide cases as follows: \\
\noindent (1) $\bar m > i > j$.
\[
    A^\alpha_{\bar m \bar m } -A^{\alpha}_{i \bar m } - (A_{\bar m j}^\alpha - A_{ij}^\alpha) = \delta m - \delta i >0, \,\,\, A^\beta_{\bar m \bar m } -A^{\beta}_{\bar m i } - (A^\beta_{j \bar m } - A_{ij}^\beta)= f(\delta \bar m) - (f(\delta \bar m) - f(\delta i)) >0
\]
\noindent (2) $\bar m > j > i$.
\[
    A^\alpha_{\bar m \bar m } -A^{\alpha}_{i \bar m } - (A_{\bar m j}^\alpha - A_{ij}^\alpha) = \delta m - \delta i + \delta_j>0, \,\,\, A^\beta_{\bar m \bar m } -A^{\beta}_{\bar m i } - (A^\beta_{j \bar m } - A_{ij}^\beta)= f(\delta \bar m) - f(\delta \bar m)  >0
\]
\noindent (3) $i > \bar m > j $.
\[
    A^\alpha_{\bar m \bar m } -A^{\alpha}_{i \bar m } - (A_{\bar m j}^\alpha - A_{ij}^\alpha) = \delta m >0, \,\,\, A^\beta_{\bar m \bar m } -A^{\beta}_{\bar m i } - (A^\beta_{j \bar m } - A_{ij}^\beta)= f(\delta \bar m) -f(\delta i) - (f(\delta \bar m) - f(\delta i)) =0
\]
\noindent (4) $j > \bar m > i$.
\[
    A^\alpha_{\bar m \bar m } -A^{\alpha}_{i \bar m } - (A_{\bar m j}^\alpha - A_{ij}^\alpha) = \delta \bar m - \delta i -(\delta \bar m - \delta i )=0, \,\,\, A^\beta_{\bar m \bar m } -A^{\beta}_{\bar m i } - (A^\beta_{j \bar m } - A_{ij}^\beta)= f(\delta \bar m) >0
\]
\noindent (5) $i> j > \bar m $.
\[
    A^\alpha_{\bar m \bar m } -A^{\alpha}_{i \bar m } - (A_{\bar m j}^\alpha - A_{ij}^\alpha) = \delta \bar m - \delta \bar m =0, \,\,\, A^\beta_{\bar m \bar m } -A^{\beta}_{\bar m i } - (A^\beta_{j \bar m } - A_{ij}^\beta)= f(\delta \bar m) -f(\delta i)-(-f(\delta i)) >0
\]
\noindent (6) $j> i > \bar m $.
\[
    A^\alpha_{\bar m \bar m } -A^{\alpha}_{i \bar m } - (A_{\bar m j}^\alpha - A_{ij}^\alpha) = \delta \bar m - (\delta \bar m -\delta i)> 0, \,\,\, A^\beta_{\bar m \bar m } -A^{\beta}_{\bar m i } - (A^\beta_{j \bar m } - A_{ij}^\beta)= f(\delta \bar m) -f(\delta i)>0
\]

\noindent{\textbf{Condition B} (ii)}. \\
We first show the following lemma.
\begin{lem} \label{lem:mat-sol}
Suppose that $A$ is a $n \times n$ matrix such that
\[
     A_{ij} = a_i \text{ if } i \leq j ,\,\, = 0 \text{ if } i > j, \,\, a_i < a_{i+1} \text{ for all } i=1, \cdots, n-1
\]
Then there exists a unique $x \gg 0$ such that $Ax = \mathbf{1}$ where $\mathbf{1}$ is the column vector consisting all 1's.
\end{lem}
\begin{proof}
Let $x$ be
\[
    x^T =\left( \frac{1}{a_1} - \frac{1}{a_2}, \cdots, \frac{1}{a_{n-1}} - \frac{1}{a_n}, \frac{1}{a_n} \right)
\]
Note that by the assumption, we have $x \gg 0$.
Then we have
\[
    (Ax)_k = \sum_{i=1}^n A_{ki} x_i = \sum_{i=1}^k a_k x_i =  a_{k} \sum_{i=k}^n x_i =  a_{k} \frac{1}{a_k} =1
\]
Suppose that there exists $y$ such that $Ay =\mathbf{1}$. Then, since $det(A) \neq 0$, $y=A^{-1}\mathbf{1}=x$. Thus $x \gg 0$ is unique.
\end{proof}
\noindent Now let $i_1, \cdots, i_K$. We rearrange $i_k$'s such that $i_1 < \cdots < i_K$. Let $A$ be a matrix whose rows and columns consist of $i_1, \cdots, i_K$. Then from \eqref{eq:ndg}, the hypothesis of Lemma \ref{lem:mat-sol} is satisfied. Thus, by normalizing $x$, we can find a unique $q \in \Delta_\beta$ which satisfies the desired property.

Recall that
\com{
\[
    R^U_{\bar{m} j}:=\min\{(A_{\bar{m}\bar{m}}^{\beta}
    -A_{\bar{m}j}^{\beta})\frac{(A_{\bar{m}\bar{m}}^{\alpha}
-A_{j\bar{m}}^{\alpha})}{(A_{\bar{m}\bar{m}}^{\alpha}
-A_{j\bar{m}}^{\alpha})+(A_{jj}^{\alpha}-A_{\bar{m}j}^{\alpha})},
(A_{\bar{m}\bar{m}}^{\alpha}-A_{\bar{j}m}^{\alpha})
\frac{(A_{\bar{m}\bar{m}}^{\beta}-A_{\bar{m}j}^{\beta})}
{(A_{\bar{m}\bar{m}}^{\beta}-A_{\bar{m}j}^{\beta})
+(A_{jj}^{\beta}-A_{j\bar{m}}^{\beta})}\}.
\]
}
\[
    R^U_{m j}:=\min\{(A_{mm}^{\beta}
    -A_{mj}^{\beta})\frac{(A_{mm}^{\alpha}
-A_{jm}^{\alpha})}{(A_{mm}^{\alpha}
-A_{jm}^{\alpha})+(A_{jj}^{\alpha}-A_{mj}^{\alpha})},
(A_{mm}^{\alpha}-A_{\bar{j}m}^{\alpha})
\frac{(A_{mm}^{\beta}-A_{mj}^{\beta})}
{(A_{mm}^{\beta}-A_{mj}^{\beta})
+(A_{jj}^{\beta}-A_{jm}^{\beta})}\}
\]
and
\[
    (A_{ij}^\alpha, A_{ij}^\beta) := \begin{cases}
                                       (\delta i, f(\delta j)), & \mbox{if } i \leq j\\
                                       (0, 0) , & \mbox{if } i > j
                                     \end{cases}
\]
Then we divide cases:

\noindent \textbf{(i) $m< j$}.
We find that
\[
    A_{mm}^\beta = f(\delta m), A_{mj}^\beta = f(\delta j), A^\alpha_{mm} = \delta m, A^\alpha_{jm} =0, A^\alpha_{jj}=\delta j,
    A^\alpha_{mj} = \delta m
\]
and
\[
    A_{mm}^\alpha = \delta m, A_{jm}^\alpha  = 0, A^\beta_{mm} = f(\delta m), A^\beta_{mj} =f(\delta j), A^\beta_{jm}=0,
    A^\beta_{jj} = f(\delta j)
\]
Using these, we find that
\[
    (A_{mm}^\beta - A_{mj}^\beta) \frac{A_{mm}^\alpha-A_{jm}^\alpha}{A_{mm}^\alpha-A_{jm}^\alpha+(A_{jj}^\alpha-A_{mj}^\alpha)} =
    (f(\delta m) - f(\delta j)) \frac{\delta m}{\delta j}
\]
and
\[
    (A_{mm}^\alpha - A_{jm}^\alpha) \frac{A_{mm}^\beta-A_{mj}^\beta}{A_{mm}^\beta-A_{mj}^\beta+(A_{jj}^\beta-A_{jm}^\beta)} =
    \delta m \frac{f(\delta m)-f(\delta j)}{f(\delta m)}.
\]

\noindent \textbf{(ii) $m > j$}.
We find that
\[
    A_{mm}^\beta = f(\delta m), A_{mj}^\beta = 0, A^\alpha_{mm} = \delta m, A^\alpha_{jm} =\delta j, A^\alpha_{jj}=\delta j,
    A^\alpha_{mj} = 0
\]
and
\[
    A_{mm}^\alpha = \delta m, A_{jm}^\alpha  = \delta j, A^\beta_{mm} = f(\delta m), A^\beta_{mj} =0, A^\beta_{jm}=f(\delta m),
    A^\beta_{jj} = f(\delta j)
\]
Using these, we find that
\[
    (A_{mm}^\beta - A_{mj}^\beta) \frac{A_{mm}^\alpha-A_{jm}^\alpha}{A_{mm}^\alpha-A_{jm}^\alpha+(A_{jj}^\alpha-A_{mj}^\alpha)} =
    f(\delta m) \frac{\delta m - \delta j}{\delta m}
\]
and
\[
    (A_{mm}^\alpha - A_{jm}^\alpha) \frac{A_{mm}^\beta-A_{mj}^\beta}{A_{mm}^\beta-A_{mj}^\beta+(A_{jj}^\beta-A_{mj}^\beta)} =
    (\delta m -\delta j) \frac{f(\delta m)}{f(\delta j)}.
\]

Thus we have
\[
    R^U_{mj} = \begin{cases}
               (f(\delta m) - f(\delta j)) \frac{\delta m}{\delta j} \wedge \delta m \frac{f(\delta m)-f(\delta j)}{f(\delta m)}
               & \mbox{if } m<j \\
               f(\delta m) \frac{\delta m - \delta j}{\delta m} \wedge  (\delta m -\delta j) \frac{f(\delta m)}{f(\delta j)} &
               \mbox{if } m > j
             \end{cases}
\]
Or
\[
    R^U_{mj} =  \min_{m<j} \{ (f(\delta m) - f(\delta j)) \frac{\delta m}{\delta j} \wedge \delta m \frac{f(\delta m)-f(\delta j)}{f(\delta m)} \} \wedge \min_{m > j} \{f(\delta m) \frac{\delta m - \delta j}{\delta m} \wedge  (\delta m -\delta j) \frac{f(\delta m)}{f(\delta j)} \}.
\]
Note that we have
\[
    R^I_{mj} =  \min_{m<j} \{ \delta m \frac{f(\delta m)-f(\delta j)}{f(\delta m)} \} \wedge \min_{m > j} \{f(\delta m) \frac{\delta m - \delta j}{\delta m} \}.
\]
Then we would like to find $\min_j R^U_{mj}$.
To do this, we first have the following lemma.
\begin{lem} \label{lem:der}
    Suppose that $f(x) \geq 0$, $f'(x)<0$ and $f''(x)<0$ for all $x$. Let $y$ be given.\\
    (i) $\frac{f'(x)}{f(x)}$ is decreasing in $x$. \\
    (ii) $x f'(x) - f(x)$ is decreasing in $x$. \\
    (iii) $ f'(x) + \frac{f(x)}{x}$ is decreasing in $x$. \\
    (iv) $f'(x) + (\frac{f(x)}{x})^2$ is decreasing in $x$. \\
    (v) $(f(y)-f(x)) \frac{y}{x}$ is increasing in $x$  \\
    (vi) $(y-x) \frac{f(y)}{f(x)}$ is decreasing in $x$.
\end{lem}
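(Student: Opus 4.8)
The plan is to prove each of the six monotonicity statements by differentiating the relevant expression in $x$ (holding $y$ fixed in (v) and (vi)) and reading off the sign of the derivative from the three standing hypotheses $f>0$, $f'<0$, $f''<0$, together with $x>0$. I work on the range where $f>0$, which is the relevant regime for the bargaining frontier (all the quotients $f'/f$, $f(y)/f(x)$, $f(x)/x$ in the statement implicitly require $f>0$ and $x>0$). Parts (i)--(iv) are then pure sign chases, while (v) and (vi) need one extra structural input beyond factor-by-factor sign counting, namely the supporting-line inequality for concave functions.

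For the first four, the derivatives are immediate. For (i), $\frac{d}{dx}\frac{f'}{f}=\frac{f''f-(f')^2}{f^2}$, whose numerator is negative since $f''f<0$ and $-(f')^2<0$, so $f'/f$ is decreasing. For (ii), $\frac{d}{dx}(xf'-f)=f'+xf''-f'=xf''<0$. For (iii), $\frac{d}{dx}\bigl(f'+\tfrac{f}{x}\bigr)=f''+\frac{xf'-f}{x^2}$; here $f''<0$ and, because $f'<0$, $f>0$, $x>0$, we also have $xf'-f<0$, so the whole expression is negative. For (iv), $\frac{d}{dx}\bigl(f'+(\tfrac{f}{x})^2\bigr)=f''+\frac{2f}{x}\cdot\frac{xf'-f}{x^2}$, and the same observation $xf'-f<0$ together with $f\ge 0$ makes the second term nonpositive, so the derivative is negative. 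Each claim follows.

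For (v) and (vi) the naive approach of signing each factor separately fails, and this is where the one genuine idea enters: since $f''<0$, $f$ lies below each of its tangent lines, i.e. $f(y)\le f(x)+f'(x)(y-x)$ for all $x,y$, with strict inequality for $y\neq x$. For (v) I would compute $\frac{d}{dx}\bigl[(f(y)-f(x))\tfrac{y}{x}\bigr]=\frac{y}{x^2}\bigl[f(x)-xf'(x)-f(y)\bigr]$; the tangent inequality rearranges to $f(x)-xf'(x)-f(y)\ge -yf'(x)>0$, so the derivative is positive and the expression is increasing. For (vi) I would compute $\frac{d}{dx}\bigl[(y-x)\tfrac{f(y)}{f(x)}\bigr]=-\frac{f(y)}{f(x)^2}\bigl[f(x)+(y-x)f'(x)\bigr]$, and the same tangent inequality gives $f(x)+(y-x)f'(x)\ge f(y)>0$, so the derivative is negative and the expression is decreasing.

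The only nontrivial step, and hence the main obstacle, is recognizing that (v) and (vi) cannot be settled by signing the individual factors of the derivative: the bracketed quantities $f(x)-xf'(x)-f(y)$ and $f(x)+(y-x)f'(x)$ have no a priori sign from $f>0$, $f'<0$, $f''<0$ alone, and one must instead invoke concavity in its supporting-line form to pin them down. Everything else is routine. A minor technical caveat to flag is the positivity of $f$ (and of $y$): the hypothesis only states $f\ge 0$, so the argument, and indeed the definedness of the quotients, is understood on the interior of the relevant domain where $f>0$, which is exactly where the lemma is applied to the Nash demand game resistances $R^U_{mj}$ and $R^I_{mj}$.
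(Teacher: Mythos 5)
Your proof is correct, and its skeleton matches the paper's: differentiate each expression and sign the result, with (i)--(iv) being routine and (v)--(vi) requiring one extra concavity input. The only real divergence is in how the critical bracket is signed. For (v) the paper reuses its own part (ii): since $x f'(x)-f(x)$ is decreasing, $f(x)-xf'(x)$ is increasing, hence $f(x)-xf'(x)\ge f(0)\ge f(y)$ (the last step because $f$ is decreasing and $y\ge 0$), which makes the derivative $\frac{y}{x^2}\bigl[f(x)-xf'(x)-f(y)\bigr]$ nonnegative; it then dismisses (vi) with ``follows similarly.'' You instead invoke the supporting-line inequality $f(y)\le f(x)+f'(x)(y-x)$ directly, which yields the sharper bound $f(x)-xf'(x)-f(y)\ge -y f'(x)>0$ and hence strict monotonicity, and the same inequality settles (vi) via $f(x)+(y-x)f'(x)\ge f(y)$. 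Both arguments are concavity in disguise, but yours is self-contained within each part (no chaining through (ii), no evaluation at the boundary point $x=0$) and actually supplies the proof of (vi) that the paper omits; the paper's version buys a slightly shorter write-up by recycling (ii). Your caveats about needing $f>0$, $x>0$, $y>0$ for the quotients and for strictness are also apposite, since the paper's hypothesis only states $f\ge 0$ and its own proof implicitly uses $f(0)$ and $y\ge 0$ in the same way.
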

\begin{proof}
(i)-(iv) are easily verified by taking derivatives. We show (v). (vi) follows similarly.
Let $\varphi(x):=(f(y)-f(x)) \frac{y}{x}$. We find that
\[
    \varphi'(x)=y \frac{- f'(x)x +f(x) - f(y)}{x^2}
\]
Then since $- f'(x)x +f(x)$ is increasing in $x$, we have
\[
    -f'(x) x + f(x) - f(y) \geq  f(0) - f(y) \geq 0
\]
since $f$ is decreasing. Thus $\varphi'(x)>0$.
\end{proof}

Thus using Lemma \eqref{lem:der}, we find that
\[
    \min_{j} R^U_{mj} =
    \min \{
   (f(\delta m) - f(\delta (m+1))) \frac{\delta m}{\delta (m+1)}, \delta m \frac{f(\delta m)-f(\delta
               (m+1))}{f(\delta (m))},
               f(\delta m) \frac{\delta }{\delta m},   \delta  \frac{f(\delta m)}{f(\delta (m-1))}
    \}
\]
We let
\[
    r_1(m):= (f(\delta m) - f(\delta (m+1))) \frac{\delta m}{\delta (m+1)},\,\, r_2(m):=\delta m \frac{f(\delta m)-f(\delta
               (m+1))}{f(\delta (m))}
\]
and
\[
    l_1(m):= f(\delta m) \frac{\delta }{\delta m},\,\, l_2(m):=\delta  \frac{f(\delta m)}{f(\delta (m-1))}.
\]

\begin{lem}
We have the following results:\\
(i) $r_1$ and $r_2$ are increasing in $m$.\\
(ii)  $l_1$ and $l_2$ are decreasing in $m$.
\end{lem}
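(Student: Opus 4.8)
The plan is to write each of $r_1, r_2, l_1, l_2$ as a function of the continuous variable $x = \delta m$ (with the step $\delta$ held fixed) and to prove monotonicity in $x$ on the relevant range $x>0$; since $m \mapsto \delta m$ merely samples this range, monotonicity in $x$ immediately yields monotonicity in the integer variable $m$. The recurring device will be the elementary principle that a product of two strictly positive, increasing functions is increasing (and dually for decreasing factors), which reduces each claim to checking the sign and monotonicity of two simpler factors.

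For the two ``$l$'' terms, I would first rewrite $l_1(m) = \delta\, f(x)/x$. Since $f > 0$ and $f' < 0$, the quotient has derivative $\big(x f'(x) - f(x)\big)/x^2 < 0$, so $l_1$ is decreasing. For $l_2(m) = \delta\, f(x)/f(x-\delta)$, I would show the ratio $f(x)/f(x-\delta)$ is decreasing by differentiating its logarithm: $\frac{d}{dx}\big[\log f(x) - \log f(x-\delta)\big] = \frac{f'(x)}{f(x)} - \frac{f'(x-\delta)}{f(x-\delta)}$, which is negative precisely because $f'/f$ is decreasing by Lemma \ref{lem:der}(i). Hence $l_2$ is decreasing.

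For the two ``$r$'' terms, I would exhibit each as a product of positive increasing functions. Write $r_1(m) = \big(f(x) - f(x+\delta)\big)\cdot \frac{x}{x+\delta}$. The factor $\frac{x}{x+\delta} = 1 - \frac{\delta}{x+\delta}$ is positive and increasing; the factor $f(x) - f(x+\delta)$ is positive (as $f$ is decreasing) and increasing, since its derivative $f'(x) - f'(x+\delta)$ is positive by the concavity $f'' < 0$ (so $f'$ is decreasing). Therefore $r_1$ is increasing. Similarly write $r_2(m) = x\big(1 - f(x+\delta)/f(x)\big)$; the factor $x$ is positive and increasing, and $1 - f(x+\delta)/f(x)$ is positive (as $f(x+\delta) < f(x)$) and increasing because $f(x+\delta)/f(x)$ is decreasing, a fact that again follows from Lemma \ref{lem:der}(i) exactly as for $l_2$. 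Hence $r_2$ is increasing.

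The computations are elementary; the only genuine subtlety — and the only place where $f'' < 0$ is used beyond plain monotonicity of $f$ — is the monotonicity of the ratios $f(x)/f(x-\delta)$ and $f(x+\delta)/f(x)$ appearing in $l_2$ and $r_2$. These encode the log-concavity of $f$, and the clean way to extract them is through the already-established fact that $f'/f$ is decreasing (Lemma \ref{lem:der}(i)), rather than by manipulating difference quotients directly. I would take care to verify the positivity of both factors in each product before invoking the ``product of monotone factors'' principle, since that sign condition is exactly what makes the argument valid.
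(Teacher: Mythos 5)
Your proof is correct and follows essentially the same approach as the paper's: factor each expression into positive monotone pieces, using $f''<0$ to make the increment $f(x)-f(x+\delta)$ increasing and log-concavity of $f$ (i.e., $f'/f$ decreasing, Lemma~\ref{lem:der}(i)) to handle the ratio terms. If anything, your write-up is more complete than the paper's own proof, which treats only part (i) and merely asserts that $f(\delta(m+1))/f(\delta m)$ is decreasing ``since $f''<0$''; your log-derivative argument for that step and your explicit verification of the monotonicity of $l_1$ and $l_2$ supply details the paper omits.
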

\begin{proof}
  (i). Since $f''<0$, $f(\delta m) - f(\delta(m+1))$ is increasing. Since $\frac{\delta m}{\delta (m+1)}$ is increasing, two terms in $r_1$ are both positive and increasing, hence $r_1$ is increasing. Also since $f''<0$, $\frac{f(\delta(m+1))}{f(\delta m)}$ is decreasing in $m$. Thus $r_2$ is increasing.
\end{proof}

Then $r_1$ and $r_2$ are increasing in $m$ and $l_1$ and $l_2$ are decreasing in $m$.
\begin{lem} \label{lem:sss}
Suppose that
\[
    m^* \in \arg \max_m \min_{j} R^U_{mj}
\]
Then for all $m < m^*$, $\min_j R^U_{m j} = R^U_{m, m+1}$ and for all $m> m^*$,  $\min_j R^U_{m j} = R^U_{m, m-1}$
\end{lem}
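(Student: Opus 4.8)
The plan is to reduce the statement to the elementary fact that the pointwise minimum of a strictly increasing function and a strictly decreasing function is unimodal, and then to contradict the maximality of $m^*$. First I would collect the identifications already supplied by the discussion preceding the lemma. Setting
\[
    g(m) := R^U_{m,m+1} = \min\{r_1(m), r_2(m)\}, \qquad h(m) := R^U_{m,m-1} = \min\{l_1(m), l_2(m)\},
\]
the reduction $\min_j R^U_{mj} = \min\{r_1(m), r_2(m), l_1(m), l_2(m)\}$ (obtained via Lemma \ref{lem:der}) becomes
\[
    \Phi(m) := \min_j R^U_{mj} = \min\{g(m), h(m)\}.
\]
Here the first equality in the definition of $g$ and $h$ is just the case $j=m+1$ (respectively $j=m-1$) of the two-line formula for $R^U_{mj}$.

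Next I would upgrade the monotonicity statement of the preceding lemma to \emph{strict} monotonicity, which is where the hypotheses $f''<0$, $f'<0$, $f>0$ are genuinely used: strict concavity makes $f(\delta m)-f(\delta(m+1))$ strictly increasing, strict log-concavity (i.e. $f'/f$ strictly decreasing) makes $f(\delta(m+1))/f(\delta m)$ strictly decreasing, and $x f'(x)-f(x)<-f(0)\le 0$ makes $f(x)/x$ strictly decreasing. Consequently $r_1,r_2$ are strictly increasing and $l_1,l_2$ strictly decreasing. Since the pointwise minimum of two strictly increasing (respectively strictly decreasing) functions is again strictly increasing (respectively strictly decreasing), I conclude that $g$ is strictly increasing and $h$ is strictly decreasing.

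Then comes the core argument. Fix $m<m^*$ and suppose, for contradiction, that $\Phi(m)\neq R^U_{m,m+1}=g(m)$; this forces $h(m)\le g(m)$, so $\Phi(m)=h(m)$. Because $h$ is strictly decreasing and $m<m^*$, we have $h(m)>h(m^*)\ge \Phi(m^*)$, whence $\Phi(m)>\Phi(m^*)$, contradicting $m^*\in\arg\max_m\Phi$. Therefore $g(m)<h(m)$ and $\Phi(m)=g(m)=R^U_{m,m+1}$. The case $m>m^*$ is symmetric: assuming $\Phi(m)=g(m)$ gives $g(m)>g(m^*)\ge\Phi(m^*)$ by strict monotonicity of $g$, again contradicting maximality, so $\Phi(m)=h(m)=R^U_{m,m-1}$.

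The argument itself is routine; the only delicate point, and the one I expect to be the main obstacle, is the \emph{strictness} of the four monotonicities. With merely weak monotonicity the chain $\Phi(m)=h(m)\ge h(m^*)\ge\Phi(m^*)$ collapses to equalities rather than a contradiction, and the conclusion could fail at a tie. So the substantive step is verifying that $f''<0$ (with $f>0$, $f'<0$) truly forces \emph{strict} decrease of $f'/f$, of $f(\delta(m+1))/f(\delta m)$, and of $f(x)/x$, and hence strict monotonicity of $g$ and $h$; everything else follows from the one-line contradiction against maximality.
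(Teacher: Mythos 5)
Your proof is correct and takes essentially the same route as the paper's: the paper likewise reduces $\min_j R^U_{mj}$ to $\min\{r_1(m),r_2(m),l_1(m),l_2(m)\}$ and derives a contradiction with the maximality of $m^*$ from the monotonicity of these four functions, with your $g$ and $h$ being just $r_1\wedge r_2$ and $l_1\wedge l_2$. Your insistence on \emph{strict} monotonicity is warranted and correctly verified: the paper's preceding monotonicity lemma states only ``increasing/decreasing,'' yet its proof of this lemma uses the strict inequality $l_1(m)>l_1(m^*)$, which—as you check—does follow from $f>0$, $f'<0$, $f''<0$.
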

\begin{proof}
  Let $\hat R(m):=\min_j R^U_{mj}$. We show that
  \begin{align*}
     & \text{ If } m < m^*, \text{ then } \hat R(m) = r_1(m) \,\,\, \text{or } r_2(m) \\
     & \text{ If } m > m^*, \text{ then } \hat R(m) = l_1(m) \,\,\, \text{or } l_2(m)
  \end{align*}
and then the desired results follow. We show the first claim. (the second claim follows similarly). Let $m < m^*$ and $\hat R(m) = l_1(m)$. Then since $l_1(m)$ is decreasing in $m$, $l_1(m)>l_1(m^*)$ and by definition, we have $\hat R(m^*) \leq l_1(m^*)$. Thus we have
\[
    \min_j R^U_{mj} =\hat R(m)  = l_1(m)  > l_1(m^*) \geq \hat R(m^*)
\]
which is contradiction to $m^* \in \arg \max_m \min_{j} R^U_{mj}$ If $\hat R(m) = l_2(m)$, the exactly same argument leads to a contradiction. Thus if $m < m^*$, then $\hat R(m) = r_1(m)$ or $r_2(m)$.
\end{proof}

Let $s^*$ and $s^I$ such that
\[
    -f'(s^*) = \frac{f(s^*)}{s^*} \text{ and } - f'(s^I) = (\frac{f(s^I)}{s^I})^2
\]
and for $\mu \in [0, \frac{\bar s_\alpha}{\delta}] \cap \mathbb{R}$, let $\mu^I=\mu^I(\delta)$, $\mu^*=\mu^*(\delta)$, and  $\mu^{**}=\mu^{**}(\delta)$ such that
\begin{equation}\label{eq:def-mus}
  r_1 (\mu^*) = l_1(\mu^*), r_2(\mu^{**}) = l_2(\mu^{**}) \text{ and } r_2(\mu^I) = l_1(\mu^I).
\end{equation}
\begin{lem} \label{lem:conv}
We have the following results. As $\delta \rightarrow 0$,
\com{
\[
    \delta \mu^*(\delta) \rightarrow x^*, \,\,  \frac{r_2(\mu^*)}{l_1(\mu^*)} \rightarrow \frac{f'(x^*)}{\frac{f(x^*)}{x^*}}, \qquad \delta \mu^I(\delta) \rightarrow x^I,\frac{r_2(\mu^I)}{l_1(\mu^I)} \rightarrow \frac{f'(x^I)}{(\frac{f(x^I)}{x^I})^2}.
\]
}
\[
    \delta \mu^*(\delta) \rightarrow s^*, \qquad \delta \mu^{**}(\delta) \rightarrow s^{*}, \qquad \delta \mu^I(\delta) \rightarrow s^I.
\]
\end{lem}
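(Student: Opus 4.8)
The plan is to pass from the discrete index $\mu$ to the continuum variable $x=\delta\mu$ and to show that, after dividing by $\delta$, each of the four quantities $r_1,r_2,l_1,l_2$ converges to an explicit continuous function of $x$; the three defining relations $r_1(\mu^*)=l_1(\mu^*)$, $r_2(\mu^{**})=l_2(\mu^{**})$, $r_2(\mu^I)=l_1(\mu^I)$ then degenerate, in the limit, exactly to the algebraic equations characterizing $s^*$ and $s^I$. Concretely, whenever $\delta\to 0$ with $\delta\mu\to x>0$, a first-order expansion of the finite differences $f(\delta\mu)-f(\delta(\mu+1))=f(x)-f(x+\delta)$ and $f(\delta(\mu-1))=f(x-\delta)$ gives
\[
\frac{r_1(\mu)}{\delta}\to -f'(x),\quad \frac{r_2(\mu)}{\delta}\to -\frac{x f'(x)}{f(x)},\quad \frac{l_1(\mu)}{\delta}\to \frac{f(x)}{x},\quad \frac{l_2(\mu)}{\delta}\to 1 .
\]
Equating the limits for the three pairs yields, respectively, $-f'(x)=f(x)/x$ (for $\mu^*$), $-xf'(x)/f(x)=1$, i.e. again $-f'(x)=f(x)/x$ (for $\mu^{**}$), and $-xf'(x)/f(x)=f(x)/x$, i.e. $-f'(x)=(f(x)/x)^2$ (for $\mu^I$). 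These are precisely the equations defining $s^*$ and $s^I$.

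To turn this into genuine convergence I would run a squeeze argument driven by monotonicity. Recall that $r_1,r_2$ are increasing and $l_1,l_2$ decreasing in $\mu$ (established above), so each of $r_1-l_1$, $r_2-l_2$, $r_2-l_1$ is strictly increasing in $\mu$ and hence has the unique root $\mu^*,\mu^{**},\mu^I$. For the intentional case set $h^I(x):=-xf'(x)/f(x)-f(x)/x$; a short rearrangement gives $h^I(x)=-\tfrac{x}{f(x)}\bigl(f'(x)+(f(x)/x)^2\bigr)$, so by Lemma \ref{lem:der}(iv) (strict since $f''<0$) $h^I$ is strictly negative on $(0,s^I)$ and strictly positive on $(s^I,\infty)$. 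Fix $\epsilon>0$ and evaluate $(r_2-l_1)/\delta$ at the \emph{real} indices $\mu_\pm:=(s^I\pm\epsilon)/\delta$, for which $\delta\mu_\pm=s^I\pm\epsilon$ is a point independent of $\delta$; by the limits above $(r_2-l_1)(\mu_-)/\delta\to h^I(s^I-\epsilon)<0$ and $(r_2-l_1)(\mu_+)/\delta\to h^I(s^I+\epsilon)>0$, so for all small $\delta$ the sign pattern and the strict monotonicity force $\mu_-<\mu^I<\mu_+$, i.e. $s^I-\epsilon<\delta\mu^I<s^I+\epsilon$. Letting $\delta\to0$ and then $\epsilon\to0$ gives $\delta\mu^I\to s^I$. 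The cases of $\mu^*,\mu^{**}$ are identical: the limiting functions $g^*(x):=-f'(x)-f(x)/x$ and $h^{**}(x):=-xf'(x)/f(x)-1=\bigl(-(xf'(x)+f(x))\bigr)/f(x)$ both vanish only at $s^*$ and change sign strictly there — the first by Lemma \ref{lem:der}(iii), the second because $xf'(x)+f(x)$ has derivative $2f'(x)+xf''(x)<0$ — so the same sandwich yields $\delta\mu^*\to s^*$ and $\delta\mu^{**}\to s^*$.

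The genuinely delicate point is not the convergence of the difference quotients: because the sandwich is anchored at the \emph{fixed} points $x=s\pm\epsilon$ (rather than at moving points), the limits $\tfrac{f(x)-f(x+\delta)}{\delta}\to -f'(x)$ are just the definition of the derivative and need no uniformity. The real issue is ensuring that each limiting function changes sign \emph{strictly} at its root, which is exactly what the squeeze requires; this is where the strict concavity $f''<0$ is indispensable, entering through the strict monotonicity of Lemma \ref{lem:der}(iii)--(iv). Were those combinations only weakly monotone the sign could vanish on a whole interval and the bracketing would fail. A secondary matter is to confirm that $\mu_\pm=(s\pm\epsilon)/\delta$ remains in the admissible range $[0,\bar s_\alpha/\delta]$ (true for small $\epsilon$ since $s^*,s^I\in(0,\bar s_\alpha)$) and that $f$ is bounded away from $0$ at the evaluation points (true since $f>0$ there), so that the denominators appearing in $r_2,l_1,l_2$ cause no difficulty.
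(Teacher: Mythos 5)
Your proof is correct, and it takes a route that genuinely differs from the paper's in a useful way. The paper handles only the first limit explicitly: it defines $\varphi_\delta(x) := \frac{f(x)-f(x+\delta)}{\delta}\,\frac{x^2}{(x+\delta)f(x)}$, notes that $\varphi_\delta(\delta\mu^*(\delta)) = r_1(\mu^*)/l_1(\mu^*) = 1 = \varphi(s^*)$ with $\varphi(x) = -f'(x)\,x/f(x)$, asserts that $\varphi_\delta \to \varphi$ uniformly, concludes $\delta\mu^*(\delta)\to s^*$, and declares the remaining two limits ``similar.'' You instead work with the scaled differences $(r_i - l_j)/\delta$ rather than ratios, evaluate them only at the fixed anchors $(s\pm\epsilon)/\delta$ --- where convergence is just the definition of $f'(x)$ and needs no uniformity --- and trap each root $\mu^*,\mu^{**},\mu^I$ between the anchors using the strict monotonicity in $\mu$ of $r_1-l_1$, $r_2-l_2$, $r_2-l_1$ together with the strict sign change of the limit functions $g^*$, $h^{**}$, $h^I$ at $s^*$ (resp.\ $s^I$). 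Your version buys three things: (i) it avoids the uniform-convergence assertion, which the paper does not prove and which is delicate near $x=\bar s$ where $f$ may vanish (the factor $1/f(x)$ in $\varphi_\delta$ blows up there), whereas anchored pointwise limits at interior points are immune to this; (ii) it makes explicit the monotonicity-plus-sign-change step that the paper's final ``implies'' leaves implicit; and (iii) it writes out all three cases, including the observation that $r_2=l_2$ degenerates to the same equation $-f'(x)=f(x)/x$ as $r_1=l_1$, which is exactly why $\delta\mu^{**}\to s^*$ rather than to a third value. What the paper's route buys is brevity. One point to flag: the strict decrease of $xf'(x)+f(x)$ that you use for $h^{**}$ is not among the items of Lemma \ref{lem:der} (item (ii) there concerns $xf'(x)-f(x)$), but your one-line computation of its derivative, $2f'(x)+xf''(x)<0$, correctly supplies it under the standing assumptions.
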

\begin{proof}
  For $\delta \mu^*(\delta) \rightarrow s^*$, let
   \[
    \varphi_\delta(x):= \frac{(f(x) - f(x + \delta) )}{\delta}  \frac{x^2}{(x + \delta)f(x)},\,\,\, \varphi (x):= - f'(x) \frac{x}{f(x)}.
  \]
  Then $\varphi_\delta$ converge uniformly to $\varphi$ and $\varphi_\delta(\delta \mu^*(\delta))= \frac{r_1(\mu^*)}{l_1(\mu^*)}=1$ and $\varphi(x^*)=1$. Then the uniform convergence of $\varphi_\delta$ to $\varphi$ implies that  $\delta \mu^*(\delta) \rightarrow s^*$. The second and third parts follow similarly.
\end{proof}
Next we show that

\begin{lem} \label{lem:comp}
    We have the following result.\\
    (i) If $s^* > s^E$, then $s^* > s^I > s^E$ and $-f'(s^I) \frac{s^I}{f(s^I)} <1$ and $-f'(s^*)   <1$ \\
    (ii) If $s^* < s^E$, then $s^* < s^I < s^E$ and $-f'(s^I) \frac{s^I}{f(s^I)} >1$ $-f'(s^*)   >1$

\end{lem}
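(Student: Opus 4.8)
The plan is to convert the three defining relations into statements about two strictly monotone functions and then read off the ordering of $s^*$ (the Nash bargaining solution), $s^I$, and $s^E$ purely from sign comparisons. First I would introduce $\phi(x):=f(x)/x$ together with
\[
  g(x):=f'(x)+\phi(x),\qquad h(x):=f'(x)+\phi(x)^2 .
\]
By the defining equations, $s^*$ is a zero of $g$ and $s^I$ is a zero of $h$, while $f(s^E)=s^E$ gives $\phi(s^E)=1$. Lemma~\ref{lem:der}(iii) and (iv) say exactly that $g$ and $h$ are strictly decreasing, so each has a unique zero, and the sign of $g$ (resp.\ $h$) at a point decides whether that point lies to the left or right of $s^*$ (resp.\ $s^I$). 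I would also record that $\phi$ is strictly decreasing, since $\phi'(x)=(xf'(x)-f(x))/x^2<0$ because $f'(x)<0$ and $f(x)\ge 0$ force $xf'(x)-f(x)<0$ for $x>0$; consequently $\phi(x)<1\iff x>s^E$.

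The heart of the argument is to evaluate $h$ at the two reference points $s^E$ and $s^*$. At $s^E$, $\phi(s^E)=1$ gives $h(s^E)=f'(s^E)+1$, and at $s^*$ the relation $-f'(s^*)=\phi(s^*)$ yields the key identity
\[
  h(s^*)=-\phi(s^*)+\phi(s^*)^2=\phi(s^*)\bigl(\phi(s^*)-1\bigr).
\]
For case (i) I assume $s^*>s^E$. Monotonicity of $\phi$ gives $\phi(s^*)<\phi(s^E)=1$, hence $h(s^*)<0$, and since $h$ is decreasing with zero $s^I$ this forces $s^*>s^I$. For the lower bound, $s^*>s^E$ together with $g(s^*)=0$ and $g$ decreasing give $g(s^E)>0$, i.e.\ $-f'(s^E)<\phi(s^E)=1$; therefore $h(s^E)=f'(s^E)+1>0$, which places $s^E<s^I$. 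Combining gives $s^*>s^I>s^E$. The two remaining inequalities follow from the same monotonicity: $-f'(s^*)=\phi(s^*)<1$, and using $-f'(s^I)=\phi(s^I)^2$ one has $-f'(s^I)\,s^I/f(s^I)=\phi(s^I)$, which is $<1$ because $s^I>s^E$.

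Case (ii) is the mirror image: assuming $s^*<s^E$ gives $\phi(s^*)>1$, hence $h(s^*)>0$ and $s^*<s^I$; likewise $g(s^E)<0$ gives $-f'(s^E)>1$, hence $h(s^E)<0$ and $s^I<s^E$, so $s^*<s^I<s^E$, with $-f'(s^*)=\phi(s^*)>1$ and $-f'(s^I)\,s^I/f(s^I)=\phi(s^I)>1$. The only genuinely delicate step is the sign bookkeeping around the identity for $h(s^*)$: one must be careful that the equivalences $\phi(s^*)\lessgtr 1\iff s^*\gtrless s^E$ and the passage from ``$h$ has a given sign at a point'' to ``$s^I$ lies on the corresponding side of that point'' both rest on the \emph{strict} monotonicity of $\phi$ and $h$, the latter supplied by Lemma~\ref{lem:der}; no further estimates are required.
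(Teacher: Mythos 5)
Your proof is correct and uses essentially the same ingredients as the paper's: the strict monotonicity of $f'(x)+\frac{f(x)}{x}$ and $f'(x)+\bigl(\frac{f(x)}{x}\bigr)^2$ from Lemma \ref{lem:der}, the fact that $\phi(x)=f(x)/x$ is decreasing, and the factorization $\phi\cdot(\phi-1)$ obtained by substituting one defining equation into another. The only difference is organizational: you argue directly by computing the signs of $h$ at $s^E$ and $s^*$ and reading off on which side of $s^I$ each lies, whereas the paper assumes $s^I\ge s^*$ (resp.\ $s^I\le s^E$) and derives contradictions by evaluating $-f'-\phi$ at $s^I$ --- the mathematical content is identical.
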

\begin{proof}
  We show (i) and (ii) follows similarly. Suppose that $s^* > s^E$. Let $s^I \geq s^*$. Since from Lemma \ref{lem:der}  $-f'(x) -\frac{f(x)}{x}$ is increasing, we have
  \[
  -f'(s^I) - \frac{f(s^I)}{s^I} \geq - f'(s^*) - \frac{f(s^*)}{s^*}=0= -f'(s^I) - (\frac{f(s^I)}{s^I})^2
\]
which implies that
\[
    \frac{f(s^I)}{s^I} \geq 1 = \frac{f(s^E)}{s^E}
\]
Since $\frac{f(s)}{s}$ is decreasing in $s$, we have
\[
    s^E \geq s^I \geq s^* > s^E
\]
which is a contradiction. Now suppose that $s^I \leq s^E$. Then since $s^E < s^*$,
\[
-f'(s^I) - \frac{f(s^I)}{s^I} < - f'(s^*) - \frac{f(s^*)}{s^*}=0= -f'(s^I) - (\frac{f(s^I)}{s^I})^2
\]
which implies that
\[
   \frac{f(s^I)}{s^I} < 1.
\]
which is a contradiction to $\frac{f(s^I)}{s^I} \geq \frac{f(s^E)}{s^E} =1$ from  $s^I \leq s^E$.  Now from $s^*> s^I$ and $s^*> s^E$, respectively we have
\[
    -f'(s^I) \frac{s^I}{f(s^I)} <1 \text{ and }  -f'(s^*) <1.
\]
\end{proof}

\begin{lem}\label{lem:com} We have the following results. \\
(i) If $s^* > s^E$, then there exists $\underline \delta$ such that for all $\delta < \underline \delta$, $\mu^* > \mu^I$ and
\[
     r_1(\mu^I) < r_2(\mu^I) = l_1(\mu^I) \text{ and } r_1(\mu^*) < l_2(\mu^*)
\]
where $\mu^I=\mu^I(\delta)$ and $\mu^*=\mu^*(\delta)$ are defined in \eqref{eq:def-mus}. \\
(ii) If $s^* < s^E$, then there exists $\underline \delta$ such that for all $\delta < \underline \delta$, $\mu^{**} < \mu^I$ and
\[
     l_2(\mu^I) < r_2(\mu^I) = l_1(\mu^I)  \text{ and }  l_2(\mu^{**}) < r_1(\mu^{**})
\]
where $\mu^I=\mu^I(\delta)$ and $\mu^*=\mu^*(\delta)$ are defined in \eqref{eq:def-mus}.
\end{lem}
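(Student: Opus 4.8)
The plan is to reduce each of the four strict inequalities to a statement about the limiting points $s^*$, $s^I$, $s^E$, using the convergences $\delta\mu^*(\delta)\to s^*$, $\delta\mu^{**}(\delta)\to s^*$ and $\delta\mu^I(\delta)\to s^I$ from Lemma~\ref{lem:conv} together with the strict orderings and bounds recorded in Lemma~\ref{lem:comp}. Since every claimed inequality is strict and compares two quantities that are both $O(\delta)$ (the finite-difference terms $r_1,r_2$ and the prefactor terms $l_1,l_2$), it suffices to show that the appropriate ratio converges, as $\delta\to0$, to a limit strictly below $1$; such a limit forces the inequality for all sufficiently small $\delta$, which is exactly the assertion ``there exists $\underline\delta$.''

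For part (i) assume $s^*>s^E$, so Lemma~\ref{lem:comp}(i) gives $s^*>s^I>s^E$, the bound $-f'(s^*)<1$, and (using the definition $-f'(s^I)=(f(s^I)/s^I)^2$ together with $-f'(s^I)\,s^I/f(s^I)<1$) the bound $f(s^I)/s^I<1$. The ordering $\mu^*>\mu^I$ for small $\delta$ is immediate from $\delta\mu^*\to s^*$, $\delta\mu^I\to s^I$ and $s^*>s^I$. For $r_1(\mu^I)<r_2(\mu^I)$ I would use the exact identity
\[
\frac{r_1(m)}{r_2(m)}=\frac{f(\delta m)}{\delta(m+1)},
\]
whose right-hand side tends to $f(s^I)/s^I<1$ at $m=\mu^I$; combined with the defining equality $r_2(\mu^I)=l_1(\mu^I)$ this yields the first inequality displayed in (i). For $r_1(\mu^*)<l_2(\mu^*)$ I would write
\[
\frac{r_1(m)}{l_2(m)}=\frac{f(\delta m)-f(\delta(m+1))}{\delta}\cdot\frac{m}{m+1}\cdot\frac{f(\delta(m-1))}{f(\delta m)},
\]
which tends to $-f'(s^*)<1$ at $m=\mu^*$.

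Part (ii) is symmetric: assume $s^*<s^E$, so Lemma~\ref{lem:comp}(ii) gives $s^*<s^I<s^E$, $-f'(s^*)>1$, and $f(s^I)/s^I>1$. Then $\mu^{**}<\mu^I$ for small $\delta$ follows from $\delta\mu^{**}\to s^*$, $\delta\mu^I\to s^I$ and $s^*<s^I$. For $l_2(\mu^I)<l_1(\mu^I)$ I would use the exact identity
\[
\frac{l_2(m)}{l_1(m)}=\frac{\delta m}{f(\delta(m-1))}\longrightarrow\frac{s^I}{f(s^I)}<1
\]
at $m=\mu^I$, together with $r_2(\mu^I)=l_1(\mu^I)$; and for $l_2(\mu^{**})<r_1(\mu^{**})$ I would invert the second ratio of part (i) to obtain $l_2/r_1\to 1/(-f'(s^*))<1$ at $m=\mu^{**}$.

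The only delicate point is the passage to the limit in the two inequalities that pit the finite-difference quantity $r_1$ against the prefactor quantity $l_2$: there one must verify that $\bigl(f(\delta m)-f(\delta(m+1))\bigr)/\delta\to -f'(\cdot)$ uniformly on a neighbourhood of the limit point, exactly as the difference quotient $\varphi_\delta\to\varphi$ is controlled in the proof of Lemma~\ref{lem:conv}, the differentiability of $f$ making this routine. The ratios entering the other two inequalities are exact algebraic identities in which the finite difference cancels, so their convergence is immediate. I expect no further obstacle, since Lemma~\ref{lem:conv} supplies precisely the convergent sequences $\delta\mu^{\cdot}(\delta)$ at which $f$ and $f'$ are to be evaluated.
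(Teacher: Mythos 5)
Your proof is correct and takes essentially the same route as the paper's: each strict inequality is reduced, via the convergences $\delta\mu^{\cdot}(\delta)\to s^{\cdot}$ of Lemma \ref{lem:conv} and the bounds of Lemma \ref{lem:comp}, to a ratio whose limit as $\delta\to 0$ is strictly below one. The only (cosmetic) difference is that you evaluate the exact algebraic ratios $r_1/r_2$ and $l_2/l_1$, in which the finite differences cancel, where the paper evaluates $r_1/l_1$ and $l_2/r_2$; since $r_2(\mu^I)=l_1(\mu^I)$ by the definition of $\mu^I$, these are the same quantities at $\mu^I$ and the limits coincide, as $f(s^I)/s^I = -f'(s^I)\,s^I/f(s^I)$ by the defining equation of $s^I$.
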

\begin{proof}
We first prove (i). Suppose that $s^* > s^E$.
From Lemma \ref{lem:comp}, we have
\begin{equation} \label{eq:con2}
    -f'(s^I) \frac{s^I}{f(s^I)} <1 \text{ and  } -f'(s^*)  <1
\end{equation}
Since  $\delta \mu^I \rightarrow s^I$ (Lemma \ref{lem:conv}) and $s^I < s^*$ and from \eqref{eq:con2}
\[
    \frac{r_1(\mu^I)}{l_1(\mu^I)} \rightarrow - f'(s^I) \frac{s^I}{f(s^I)} < 1,
\]
there exists $\underline \delta $ such that for all $\delta < \underline \delta$, $ r_1(\mu^I)< l_1(\mu^I) $ and $\mu^I < \mu^*$. For the second inequality $r_1(\mu^*) < l_2(\mu^*)$ similarly follows from
\[
    \frac{r_1(\mu^*)}{l_2(\mu^*)} < 1 \iff \frac{f(\delta \mu^*) - f(\delta(\mu^*+1))}{\delta } \frac{\delta \mu^*}{\delta (\mu^*+1)} \frac{f(\delta(\mu^* -1)) }{f(\delta \mu^*)} < 1
\]
and
\[
    \frac{r_1(\mu^*)}{l_2(\mu^*)} \rightarrow - f'(s^*) <1
\]
from \eqref{eq:con2}.

Next we show (ii).  Similarly to (i), from Lemma \ref{lem:comp}, we have we have
\[
    -f'(s^I) \frac{s^I}{f(s^I)} > 1 \text{  and   } f'(s^*) >1
\]
Then we have
\[
    \frac{l_2(\mu^I)}{r_2(\mu^I) } < 1 \iff \frac{\delta}{f(\delta \mu^I) - f(\delta (\mu^I + 1))} \frac{f(\delta \mu^I)}{f(\delta (\mu^I-1))} \frac{f(\delta \mu)}{\delta \mu } < 1
\]
and
\[
    \frac{l_2(\mu^{**})}{r_1(\mu^{**})} <1 \iff \frac{\delta}{f(\delta \mu^{**}) - f(\delta(\mu^{**}+1)) } \frac{\delta (\mu^{**}+1)}{\delta \mu^{**} } \frac{f(\delta \mu^{**}) }{f(\delta(\mu^{**} -1))} < 1
\]
and from these, (ii) follows.
\end{proof}

\begin{figure}
  \centering
  \includegraphics[scale=0.7]{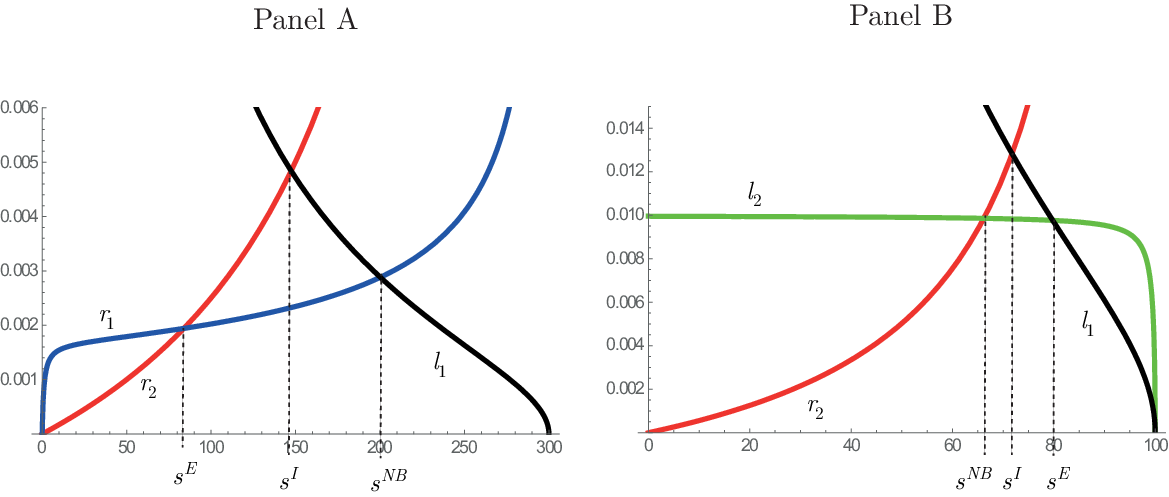}
  \caption{\textbf{Determinations of stochastically stable states.} For Panel A, $f(x)=\sqrt{1- \frac{x}{3}}$ for $x \in [0,3]$, $\delta=0.01$. For Panel B, $f(x)=\sqrt{3(1- x)}$, for $x \in [0, 1]$,  $\delta=0.01$.}\label{fig:appen-sse-det}
\end{figure}

\begin{lem} \label{lem:s-fin}
    Suppose that $\mu^*$ is given by \eqref{eq:def-mus}.\\
    (i) If $s^*>s^E$, then
    \[
        \mu^* \in \arg \max_{\mu \in [0, \frac{\bar s}{\delta}]} \min  \{ r_1(\mu), r_2 (\mu), l_1(\mu), l_2(\mu) \}
    \]
    (ii) If $s^{*}<s^E$, then
    \[
        \mu^{**} \in \arg \max_{\mu \in [0, \frac{\bar s}{\delta}]} \min  \{ r_1(\mu), r_2 (\mu), l_1(\mu), l_2(\mu) \}
    \]
\end{lem}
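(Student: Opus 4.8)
The plan is to set $g(\mu) := \min\{r_1(\mu), r_2(\mu), l_1(\mu), l_2(\mu)\}$ and exploit the monotonicity already established: $r_1, r_2$ are nondecreasing and $l_1, l_2$ are nonincreasing in $\mu$. Thus $g$ is a minimum of two increasing and two decreasing functions, so its maximum must occur where an increasing branch meets a decreasing branch. The heart of the argument is to identify \emph{which} crossover is binding, and this is exactly what is governed by the sign of $s^* - s^E$ through Lemma \ref{lem:com}.

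For part (i), assume $s^* > s^E$. I would first record the elementary comparison $r_1(\mu) \le r_2(\mu) \iff f(\delta\mu) \le \delta(\mu+1)$, obtained by dividing out the common positive factor. Since $s^* > s^E$ forces $f(s^*) < s^*$ and $\delta\mu^*(\delta) \to s^*$ (Lemma \ref{lem:conv}), for all sufficiently small $\delta$ we have $f(\delta\mu^*) < \delta(\mu^*+1)$, hence $r_1(\mu^*) \le r_2(\mu^*)$. Combined with $r_1(\mu^*) < l_2(\mu^*)$ from Lemma \ref{lem:com}(i) and the defining identity $r_1(\mu^*) = l_1(\mu^*)$, this shows $g(\mu^*) = r_1(\mu^*) = \min\{r_1(\mu^*), l_1(\mu^*)\}$. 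Now the pointwise bound $g(\mu) \le \min\{r_1(\mu), l_1(\mu)\} =: h(\mu)$ holds for every $\mu$, and since $r_1$ is increasing, $l_1$ is decreasing, and they cross at $\mu^*$, a one-line monotonicity argument gives $h(\mu) \le h(\mu^*)$ for all $\mu$ (for $\mu \le \mu^*$ one has $h(\mu) = r_1(\mu) \le r_1(\mu^*)$; for $\mu \ge \mu^*$ one has $h(\mu) = l_1(\mu) \le l_1(\mu^*)$). Therefore $g(\mu) \le h(\mu^*) = g(\mu^*)$ for all $\mu$, i.e. $\mu^* \in \arg\max g$.

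Part (ii) is the mirror image with $s^* < s^E$, where the binding pair is $\{r_2, l_2\}$ and the relevant crossover is $\mu^{**}$ (defined by $r_2(\mu^{**}) = l_2(\mu^{**})$). Here the elementary comparison is $l_1(\mu) \ge l_2(\mu) \iff f(\delta(\mu-1)) \ge \delta\mu$; since $s^* < s^E$ forces $f(s^*) > s^*$ and $\delta\mu^{**} \to s^*$, this holds for small $\delta$, giving $l_1(\mu^{**}) \ge l_2(\mu^{**})$. Together with $l_2(\mu^{**}) < r_1(\mu^{**})$ from Lemma \ref{lem:com}(ii) and the identity $r_2(\mu^{**}) = l_2(\mu^{**})$, this yields $g(\mu^{**}) = r_2(\mu^{**}) = \min\{r_2(\mu^{**}), l_2(\mu^{**})\}$, and the same monotonicity argument applied to $\min\{r_2, l_2\}$ identifies $\mu^{**}$ as a maximizer of $g$.

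The routine parts are the two elementary comparisons and the ``$\min$ of an increasing and a decreasing function is maximized at the crossing'' observation. The main obstacle, and the only place where genuine content enters, is verifying at the crossover point that the two \emph{non}-binding branches lie weakly above the binding value, so that $g$ actually equals $\min\{r_1, l_1\}$ (resp. $\min\{r_2, l_2\}$) there rather than dipping below; this is precisely the role of Lemma \ref{lem:com}, and it is also where one must take $\delta$ small enough, via Lemma \ref{lem:conv}, that $\delta\mu^*$ (resp. $\delta\mu^{**}$) is close enough to $s^*$ for the sign comparisons to hold and that the crossover lies in the admissible range $[0, \bar s/\delta]$.
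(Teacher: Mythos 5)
Your proof is correct, and its skeleton is the paper's: establish that at the crossover the binding branch equals the full four-way minimum, then finish with the monotone-crossing bound (for $\mu\le\mu^*$, $g(\mu)\le r_1(\mu)\le r_1(\mu^*)$; for $\mu\ge\mu^*$, $g(\mu)\le l_1(\mu)\le l_1(\mu^*)=r_1(\mu^*)$, and symmetrically with $r_2,l_2,\mu^{**}$ in case (ii)). The one step you do genuinely differently is the verification that the remaining monotone branch lies weakly above the binding value at the crossover. The paper routes this through $\mu^I$: from Lemma \ref{lem:com}, $\mu^I<\mu^*$ in case (i), so $r_1(\mu^*)=l_1(\mu^*)<l_1(\mu^I)=r_2(\mu^I)<r_2(\mu^*)$ by monotonicity of $l_1$ and $r_2$, and in case (ii) the intended chain is $r_2(\mu^{**})<r_2(\mu^I)=l_1(\mu^I)<l_1(\mu^{**})$. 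You instead use the exact ratio identities $r_1(\mu)/r_2(\mu)=f(\delta\mu)/(\delta(\mu+1))$ and $l_1(\mu)/l_2(\mu)=f(\delta(\mu-1))/(\delta\mu)$, combined with $\delta\mu^*\rightarrow s^*$, $\delta\mu^{**}\rightarrow s^*$ (Lemma \ref{lem:conv}) and the sign of $f(s^*)-s^*$ (which follows from monotonicity of $f$ and $f(s^E)=s^E$). Both are valid once $\underline\delta$ is taken small enough that Lemma \ref{lem:com} and your limit-based inequalities hold simultaneously, which you note. Your route costs one extra appeal to Lemma \ref{lem:conv} and continuity of $f$, but it buys robustness: it bypasses the paper's chain, which as printed contains a slip (the undefined symbol $\mu_0$ in case (i), and in case (ii) the assertion $r_1(\mu^I)=l_1(\mu^I)$, which is not the definition of $\mu^I$ in \eqref{eq:def-mus}); your version of this step is self-contained and does not depend on the ordering of $\mu^*$ (resp.\ $\mu^{**}$) relative to $\mu^I$ at all.
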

\begin{proof} Let $s^* > s^E$. Choose $\underline \delta$ satisfying Lemma \ref{lem:com}. Then for all $\delta<  \underline \delta$, we have
  \[
    r_1(\mu^*) = l_1(\mu^*) < l_1 (\mu_0) = r_2 (\mu_0) < r_2 (\mu^*)
  \]
  and thus $r_1(\mu^*) \leq \min \{r_2(\mu^*), l_1(\mu^*), l_2(\mu^*) \}$. Now, if $\mu< \mu^*$ then  $r_1(\mu^*) > r_1(\mu)$ since $r_1(\cdot)$ is increasing. If $\mu> \mu^*$, then $r_1(\mu^*) = l_1(\mu^*) > l_1(\mu)$ since $l_1(\cdot)$ is decreasing. Thus we have
  \[
    r_1(\mu^*) \geq \min \{ r_1(\mu), r_2 (\mu), l_1(\mu), l_2(\mu) \}
  \]
  for all $\mu \in [0, \frac{\bar s}{\delta}]$. This shows that
  \[
        \mu^* \in \arg \max_{\mu \in [0, \frac{\bar s}{\delta}]} \min  \{ r_1(\mu), r_2 (\mu), l_1(\mu), l_2(\mu) \}
  \]
  Now let $s^* < s^E$.  Again choose $\underline \delta$ satisfying Lemma \ref{lem:com}. Then for all $\delta < \underline \delta$, we have
  \[
    r_2(\mu^{**}) = l_2 (\mu^{**}) < r_1( \mu^{**}) < r_1(\mu^I) = l_1 (\mu^I) < l_1(\mu^{**})
  \]
  and similarly since $r_2$ is increasing and $l_2$ is decreasing, we obtain the desired result.
\end{proof}

\begin{table}
  \centering
  \scalefont{0.8}
  {\renewcommand{\arraystretch}{1.5}
  \begin{tabular}{c|c|c|c|c|c}
    \toprule

      &  \multicolumn{2}{c|}{$\alpha$ favored transition } & \multicolumn{2}{c|}{$\beta$ favored transition } & \multicolumn{1}{c}{$\substack{\text{Stochastic} \\ \text{stability}}$} \\
     \midrule

     & $\beta$ mistake ($A$) & $\alpha$ mistake ($B$) & $\beta$ mistake ($C$) & $\alpha$ mistake ($D$)\\
     \midrule
    Uniform & $\frac{\delta m}{\bar s - \alpha}$  & $\frac{\Delta f(\delta m)}{f(\delta m)}$ & $\frac{\delta}{\delta m }$& $\frac{f(\delta m)}{f (\delta)}$ \\
     $\substack{\text{Unintentional} \\ \text{Intentional}}$& & $ \bigcirc $ & $\bigcirc$&   & $ \frac{\Delta f(\delta m)}{f(\delta m)} \approx \frac{\delta}{\delta m }$ \\
         \midrule
    $\substack{\text{Logit} \\ \text{Unintentional}}$  & $\Delta f(\delta m)\frac{\delta m}{\delta (m+1)}$ & $\delta m \frac{\Delta f(\delta m)}{f(\delta (m))}$ & $f(\delta m) \frac{\delta }{\delta m}$ & $\delta  \frac{f(\delta m)}{f(\delta (m-1))}$ \\
   $s^{NB} > s^E$   & $\bigcirc $ & $\bigtriangleup$ & $\bigcirc$ & $ $ & $ \Delta f(\delta m)\frac{\delta m}{\delta (m+1)} \approx f(\delta m) \frac{\delta }{\delta m}$ \\
     $s^{NB} < s^E$ & $ $ & $\bigcirc $ & $\bigtriangleup$ & $ \bigcirc$ & $\delta m \frac{\Delta f(\delta m)}{f(\delta (m))} \approx \delta  \frac{f(\delta m)}{f(\delta (m-1))} $  \\
     $\substack{\text{Logit} \\ \text{Intentional}}$ &  & $\bigcirc $ & $\bigcirc $ & $  $ &  $\delta m \frac{\Delta f(\delta m)}{f(\delta (m))} \approx f(\delta m) \frac{\delta }{\delta m}$ \\

    \bottomrule
  \end{tabular}
  }
  \caption{Comparison of solutions under various mistake models. $\Delta f(\delta m) :=f(\delta m)-f(\delta (m+1))$. Resistances are determined by the minimum of $A,B,C$, and $D$. In the rows tilted with ``unintentional'', ``intentional'', $s^{NB} > s^E$, $s^{NB} < s^E$, and  ``logit intentional'' show the smaller ones. Thus under the logit unintentional dynamic, when $s^{NB} > s^E$, the transition always occurs by $\beta$ population, while $s^{NB} < s^E$, the transition always occurs by $\alpha$ population.  Entries marked by $\bigtriangleup$ and $\bigcirc$ occurs in the minimal tree, but entries marked by $\bigcirc$ are only binding and hence determining the stochastic stable convention. }\label{tab:appen-comp}
\end{table}

Thus we have the following result.
\begin{thm}
  Consider the logit choice rule. There exists $\underline \delta$ such that for all $\delta < \underline \delta$,  the stochastic stable state $m^{st}(\delta)$ converges to $s^{NB}$: i.e.,
  \[
    \delta m^{st}(\delta) \rightarrow s^{NB}
  \]
  where
  \[
    - f'(s^{NB}) = \frac{f(s^{NB})}{s^{NB}}.
  \]
\end{thm}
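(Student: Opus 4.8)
The plan is to reduce the statement to the discrete maximin optimization already set up above and then pass to the $\delta\to0$ limit using the convergence lemmas of this appendix. First I would invoke Theorem~\ref{thm:main-2p} in its unintentional form to identify $\min_{j}R^U_{mj}$ with the asymptotic exit cost---the radius---of convention $m$, and then Theorem~\ref{thm:SSE}, which asserts that $m^{st}\in\arg\max_{m}\min_{j\neq m}R^U_{mj}$ is stochastically stable once one of its two sufficient conditions holds. Thus the whole question reduces to solving $\max_{m}\min_{j}R^U_{mj}$ and to checking that the maximin criterion is applicable.

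For the inner minimization I would use the explicit payoffs in \eqref{eq:ndg} and the two cases $m<j$ and $m>j$ computed above. The monotonicity facts in Lemma~\ref{lem:der}(v)--(vi), which govern how $(f(\delta m)-f(\delta j))\tfrac{\delta m}{\delta j}$ and $(\delta m-\delta j)\tfrac{f(\delta m)}{f(\delta j)}$ behave as $j$ moves, together with $f''<0$ for the remaining two expressions, show that the inner minimum is always attained at a nearest neighbour $j=m\pm1$. This gives $\min_j R^U_{mj}=\min\{r_1(m),r_2(m),l_1(m),l_2(m)\}$. Since $r_1,r_2$ increase and $l_1,l_2$ decrease in $m$, the continuous envelope $\hat R_c(\mu):=\min\{r_1(\mu),r_2(\mu),l_1(\mu),l_2(\mu)\}$ is single-peaked, and Lemma~\ref{lem:s-fin} locates its real maximizer at the crossing $\mu^*$ of $r_1$ and $l_1$ when $s^*>s^E$, and at the crossing $\mu^{**}$ of $r_2$ and $l_2$ when $s^*<s^E$.

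To take the limit I would exploit single-peakedness to bracket the integer maximizer: $m^{st}$ is one of the two integers adjacent to the real peak, so $|\delta m^{st}-\delta\mu^*|\le\delta$ (resp.\ $|\delta m^{st}-\delta\mu^{**}|\le\delta$), which tends to $0$. Lemma~\ref{lem:conv} supplies $\delta\mu^*(\delta)\to s^*$ and $\delta\mu^{**}(\delta)\to s^*$, so in both regimes $\delta m^{st}(\delta)\to s^*$; and $s^*$ is exactly $s^{NB}$, since both are defined by $-f'(s)=f(s)/s$. Finally I would verify the hypotheses of Theorem~\ref{thm:SSE}: Lemma~\ref{lem:sss} shows $\arg\min_j R^U_{mj}=\{m+1\}$ for $m<m^{st}$ and $\{m-1\}$ for $m>m^{st}$, so every arrow of $\mathbf{Inc}(R)$ points toward $m^{st}$ and the incidence graph carries a single $2$-cycle straddling $m^{st}$; this unique cycle contains $i^*=m^{st}$, so condition (ii) is satisfied.

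I expect the discrete-to-continuous bridge to be the main obstacle: the conclusion hinges on the integer maximizer genuinely tracking the moving crossing point $\mu^*(\delta)$, which requires the strict single-peakedness of $\hat R_c$ near its peak (from the strict monotonicities of Lemma~\ref{lem:der}, not from any accidental equality of the four curves) and the uniform convergence behind Lemma~\ref{lem:conv} to prevent the maximizer from drifting as $\delta\to0$. A secondary difficulty is the clean confirmation of the unique-cycle condition for $\mathbf{Inc}(R)$; here it is precisely the sign of the drift of $\arg\min_j R^U_{mj}$ on either side of $m^{st}$, established in Lemma~\ref{lem:sss}, that rules out the incidence graph having more than one cycle.
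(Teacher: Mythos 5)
Your proposal is correct and follows essentially the same route as the paper's own proof: reduction to the maximin criterion via Theorem~\ref{thm:main-2p} and Theorem~\ref{thm:SSE}, localization of the inner minimum to nearest neighbors $j=m\pm1$ via Lemma~\ref{lem:der}, the case split $s^*>s^E$ versus $s^*<s^E$ with the crossings $\mu^*$ and $\mu^{**}$ (Lemmas~\ref{lem:com} and~\ref{lem:s-fin}), the limit $\delta\mu^*(\delta),\delta\mu^{**}(\delta)\to s^*=s^{NB}$ from Lemma~\ref{lem:conv}, and the cycle/incidence structure from Lemma~\ref{lem:sss}. Your explicit bracketing of the integer maximizer within $\delta$ of the real peak is a slightly cleaner way of phrasing the paper's step of taking the integer closest to the crossing point, but the argument is the same.
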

\begin{proof}
  Choose $\underline \delta$ satisfying Lemma \ref{lem:com}. Let $\delta < \underline \delta$. If $s^*> s^E$, then pick $m^{st}(\delta)$ to be the integer closest to $\mu^{*}(\delta)$ in \eqref{eq:def-mus}. If $s^* < s^E$, the pick $m^{st}(\delta)$ to be the integer closest to $\mu^{**}(\delta)$. Then Lemma \ref{lem:sss}, Lemma \ref{lem:s-fin} and Theorem \ref{thm:SSE} show that $m^{st}(\delta)$ is a stochastically stable state. Since $\mu^{*}(\delta), \mu^{*}(\delta)  \rightarrow s^*$, we have $\delta m^{st}(\delta) \rightarrow s^*=s^{NB}$ and obtain the desired result.
\end{proof}

\begin{thm}
  Consider the intentional logit choice rule. There exists $\underline \delta$ such that for all $\delta < \underline \delta$,  the stochastic stable state $m^{st}(\delta)$ converges to $s^{I}$: i.e.,
  \[
    \delta m^{st}(\delta) \rightarrow s^{I}
  \]
  where
  \[
    - f'(s^{I}) = (\frac{f(s^{I})}{s^{I}})^2.
  \]
\end{thm}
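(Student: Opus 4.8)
The plan is to follow the template of the unintentional theorem just proved, but to exploit the fact that under the intentional rule the radius $\min_j R^I_{mj}$ collapses to a minimum of only \emph{two} competing terms rather than four. First I would specialize the intentional cost $R^I_{mj}$ in \eqref{eq:cost-min} to the Nash demand game \eqref{eq:ndg}. Since the intentional restriction forces $\tilde S_m(\alpha)=\{j:j\ge m\}$ (the $\alpha$-population only idiosyncratically raises its index) and $\tilde S_m(\beta)=\{j:j\le m\}$, the two cases $j>m$ and $j<m$ reproduce exactly the expressions \eqref{eq:r-bargain2} and \eqref{eq:r-bargain1}; the entries $A^\kappa_{ij}$ required are already tabulated in the case analysis of this appendix. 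Minimizing over $j$ then selects $j=m+1$ on the $\alpha$-favored side (because $f$ is decreasing, $f(\delta m)-f(\delta j)$ is smallest at $j=m+1$) and $j=m-1$ on the $\beta$-favored side, yielding $\min_j R^I_{mj}=\min\{r_2(m),\,l_1(m)\}$ with the arg min recorded in \eqref{eq:min} and the radius displayed in \eqref{eq:bargain-res}.

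Second, I would locate the maximin convention. By the monotonicity already established ($r_2$ increasing and $l_1$ decreasing in $m$, via Lemma \ref{lem:der}), the map $m\mapsto\min\{r_2(m),l_1(m)\}$ is single-peaked, so its real-valued maximizer is the crossing point $\mu^I(\delta)$ defined by $r_2(\mu^I)=l_1(\mu^I)$ in \eqref{eq:def-mus}. Taking $m^{st}(\delta)$ to be the maximin integer (equivalently, the integer nearest $\mu^I(\delta)$) gives $m^{st}\in\arg\max_m\min_j R^I_{mj}$, exactly as Lemma \ref{lem:s-fin} does in the unintentional case; here the verification is strictly simpler because only $r_2$ and $l_1$ enter, so the full strength of Lemmas \ref{lem:com} and \ref{lem:comp} is unnecessary.

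Third, I would check the hypotheses of Theorem \ref{thm:SSE}. From \eqref{eq:min}, $\arg\min_j R^I_{mj}=m+1$ whenever $\delta m<s^I$ and $\arg\min_j R^I_{mj}=m-1$ whenever $\delta m>s^I$; hence in the functional graph $\mathbf{Inc}(R^I)$ every convention below $m^{st}$ emits an edge pointing upward and every one above emits an edge pointing downward, so all trajectories are driven toward the pair of integers straddling $\mu^I$. This forces a \emph{unique} two-cycle, and that cycle contains the maximin state $m^{st}$, so condition (ii) of Theorem \ref{thm:SSE} applies and $m^{st}(\delta)$ is stochastically stable. Finally, Lemma \ref{lem:conv} gives $\delta\mu^I(\delta)\to s^I$, and since $|m^{st}(\delta)-\mu^I(\delta)|\le 1$ we obtain $\delta m^{st}(\delta)\to s^I$, with $s^I$ characterized by $-f'(s^I)=(f(s^I)/s^I)^2$ as in \eqref{eq:l-solution}, completing the proof.

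The main obstacle I anticipate is Step 3: making the incidence bookkeeping rigorous uniformly in small $\delta$, in particular confirming that no spurious cycle arises away from the peak so that the unique-cycle branch of Theorem \ref{thm:SSE} genuinely applies. The monotonicity of $r_2$ and $l_1$ is precisely what rules these out, but one must track the discrete arg min carefully near the crossing, where $r_2$ and $l_1$ nearly coincide and a naive comparison could be ambiguous; this is the same delicacy handled by Lemma \ref{lem:sss} in the unintentional argument, and I would reuse that reasoning to pin down the behavior of $\arg\min_j R^I_{mj}$ on both sides of $m^{st}$.
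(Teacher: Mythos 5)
Your proposal is correct and takes essentially the same route as the paper: the paper's own proof consists of observing that under the intentional rule $\min_j R^I_{mj}=\min\{\delta m\,\frac{f(\delta m)-f(\delta(m+1))}{f(\delta m)},\,f(\delta m)\frac{\delta}{\delta m}\}$ and then invoking verbatim the unintentional machinery (monotonicity of $r_2$ and $l_1$, the crossing point $\mu^I$ from \eqref{eq:def-mus}, Lemma \ref{lem:conv}, the Lemma \ref{lem:sss}/\ref{lem:s-fin}-type maximin identification, and Theorem \ref{thm:SSE}), which is exactly the skeleton you reconstruct. Your added remark that Lemmas \ref{lem:comp} and \ref{lem:com} become unnecessary because only two terms compete is a correct reading of why the paper can dispose of the intentional case in one line.
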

\begin{proof}
Under the intentional logit choice rule, we have

\[
\min_j R^I_{m j} = \min \{  \delta m \frac{ f(\delta m ) - f(\delta (m+1))}{f(\delta m)}, f(m \delta) \frac{\delta }{m \delta}  \}
\]
Then the exactly same argument as for the unintentional logit choice rule shows the desired result.
\end{proof}

\com{
\renewcommand{\thesection}{D}

\section{An example: exit and stochastic stability problems of four strategy games \label{appen:com}}

Consider a symmetric game, $A$, given by
\[
 	A =	
 	\begin{pmatrix}
		16 & 4 & 8 & 2 \\
		6 & 16 & 8 & -2 \\
		4 & -2 & 19 & 2  \\
		2 & 4 & 6 & 13
	\end{pmatrix}
\]
Then, it can be checked whether game $A$ satisfies Conditions \textbf{A}. In this example, we find that
\[
	R =
\renewcommand\arraystretch{1.1}
	\begin{pmatrix}
		        * & \frac {25}{11}   & \frac{72}{23} & \frac{98}{25}  \\
    \frac{36}{11} & *             &  \frac{162}{29}     & \frac{8}{3} \\
    \frac{121}{46} & \frac{121}{58}&  *             & \frac{169}{48} \\
	 \frac{121}{50}  & \frac{25}{6}  & \frac{121}{48}    &    *
	\end{pmatrix}
,\quad
\textbf{Inc}(R)=
	\begin{pmatrix}
		        0 & 1   & 0 & 0  \\
		        0 & 0   &  0 & 1 \\
              0 & 1  &  0 & 0 \\
              1 & 0 & 0 &  0
	\end{pmatrix}
\]
Thus, \textbf{Inc}($R$) contains a cycle $(1, 2, 4, 1)$ and
\[
\min_j R_{1j} =\frac{25}{11},\, \min_j R_{2j} =\frac{8}{3},\, \min_j R_{3j} = \frac{121}{58},\, \min_j R_{4j}= \frac{121}{50}
\]
and
\[
 	2 = \arg\max_i \min_j R_{ij}
\]
which is contained in the cycle $(1, 2, 4, 1)$. Thus, condition (ii) in Theorem \ref{thm:SSE} is satisfied and convention 2 is stochastically stable.
\com{
Also the expected first exit time from the stochastically stable convention is
\[
     \mathbb{E}(\tau_{i}) \approx e^{n \frac{8}{3} \beta} \approx e^{n 2.2666 \beta}.
\]
}
We can check that the stochastically stable state under the uniform mistake model is different from convention 2. In fact, we find $R^U$ (matrix $R$ under the uniform mistake model) and \textbf{Inc}$(R^U)$ as follows:
\[
	R^U =
\renewcommand\arraystretch{1.1}
	\begin{pmatrix}
		        * & \frac {5}{11}   & \frac{12}{23} & \frac{14}{25}  \\
    \frac{6}{11} & *             &  \frac{18}{29}     & \frac{4}{9} \\
    \frac{11}{23} & \frac{11}{29}&  *             & \frac{13}{24} \\
	 \frac{11}{25}  & \frac{5}{9}  & \frac{11}{24}    &    *
	\end{pmatrix}
,\quad
\textbf{Inc}(R^U)=
	\begin{pmatrix}
		        0 & 1   & 0 & 0  \\
		        0 & 0   &  0 & 1 \\
              0 & 1  &  0 & 0 \\
              1 & 0 & 0 &  0
	\end{pmatrix}
\]
Thus, \textbf{Inc}$(R^U)$ again contains a cycle $(1, 2, 4, 1)$; however, we have
\[
     1 = \arg\max_i \min_j R^U_{ij}
\]
Thus, convention 1 is stochastically stable and this example shows that, in general, the prediction of a long-run equilibrium under the logit choice model is different from the one under a uniform mistake model.
}

\end{document}